\definecolor{Gred}{RGB}{219, 50, 54}
\definecolor{Ggreen}{RGB}{60, 186, 84}
\definecolor{Gblue}{RGB}{72, 133, 237}
\definecolor{Gyellow}{RGB}{247, 178, 16}
\definecolor{ToCgreen}{RGB}{0, 128, 0}
\definecolor{myGold}{RGB}{231,141,20}
\definecolor{myBlue}{rgb}{0.19,0.41,.65}
\definecolor{myPurple}{RGB}{175,0,124}
\colorlet{Changes@Color}{purple}
\let\C\undefined
\DeclarePairedDelimiter{\abs}{\lvert}{\rvert}
\DeclarePairedDelimiter{\iprod}{\langle}{\rangle}
\DeclarePairedDelimiter{\brk}{[}{]}
\DeclarePairedDelimiter{\brc}{\{}{\}}
\def\Var{\@ifnextchar[{\@withc}{\@withoutc}}
\def\@withc[#1]{\mathop{\mathbb{V}}_{#1}\brk}
\def\@withoutc{\mathop{\mathbb{V}}\brk}
\def\bone{\@ifnextchar[{\@withd}{\@withoutd}}
\def\@withd[#1]{\mathop{\mathds{1}}_{#1}\brk}
\def\@withoutd{\mathop{\mathds{1}}\brk}
\def\psE{\@ifnextchar[{\@withe}{\@withoute}}
\def\@withe[#1]{\td{\mathop{\mathbb{E}}}_{#1}\brk}
\def\@withoute{\td{\mathop{\mathbb{E}}}\brk}
\DeclarePairedDelimiterX{\expectarg}[1]{[}{]}{%
  \ifnum\currentgrouptype=16 \else\begingroup\fi
  \activatebar#1
  \ifnum\currentgrouptype=16 \else\endgroup\fi
}
\newcommand{\innermid}{\nonscript\;\delimsize\vert\nonscript\;}
\newcommand{\activatebar}{%
  \begingroup\lccode`\~=`\|
  \lowercase{\endgroup\let~}\innermid 
  \mathcode`|=\string"8000
}
\newcommand{\diag}{\text{diag}}
\renewcommand{\Re}{\text{Re}}
\renewcommand{\co}{\mathbb{C}}
\newcommand{\vx}{\vec{x}}
\newcommand{\vmu}{\vec{\mu}}
\newcommand{\NA}{\text{NA}}
\newcommand{\calD}{\mathcal{D}}
\newcommand{\calJ}{\mathcal{J}}
\renewcommand{\d}{{\mathrm{d}}}
\newcommand{\Poi}{\text{Poi}}
\renewcommand{\E}{\mathbb{E}}
\newcommand{\radius}{\mathcal{R}}
\renewcommand{\O}{\mathcal{O}}
\newcommand{\lbound}{\underline{\gamma}}
\newcommand{\factor}{\overline{\gamma}}
\newcommand{\Id}{\text{Id}}
\newcommand{\sgn}{\mathop{\text{sgn}}}
\title{Algorithmic Foundations for the Diffraction Limit}
\author{Sitan Chen\thanks{This work was supported in part by a Paul and Daisy Soros Fellowship, NSF CAREER Award CCF-1453261, and NSF Large CCF-1565235 and was done in part while S.C. was an intern at Microsoft Research AI.} \\MIT  \and Ankur Moitra\thanks{This work was supported in part by NSF CAREER Award CCF-1453261, NSF Large CCF-1565235, a David and Lucile Packard Fellowship, and an Alfred P. Sloan Fellowship.}\\ MIT}
\newtheorem{Alg}{Algorithm}
\begin{document}

\maketitle

\begin{abstract}
For more than a century and a half it has been widely-believed (but was never rigorously shown) that the physics of diffraction imposes certain fundamental limits on the resolution of an optical system. However our understanding of what exactly can and cannot be resolved has never risen above heuristic arguments which, even worse, appear contradictory. In this work we remedy this gap by studying the diffraction limit as a statistical inverse problem and, based on connections to provable algorithms for learning mixture models, we rigorously prove upper and lower bounds on the statistical and algorithmic complexity needed to resolve closely spaced point sources. In particular we show that there is a phase transition where the sample complexity goes from polynomial to exponential. Surprisingly, we show that this does {\em not} occur at the Abbe limit, which has long been presumed to be the true diffraction limit.



\end{abstract}


\section{Introduction}

For more than a century and a half it has been widely believed (but was never rigorously shown) that the physics of diffraction imposes certain fundamental limits on the resolution of an optical system. In the standard physical setup, we observe incoherent illumination from far-away point sources through a perfectly circular aperture (see Figure~\ref{fig:fraunhofer}). Each point source produces a two-dimensional image, computed explicitly by Sir George Biddell Airy in 1835\cite{airy1835diffraction} and now called an {\em Airy disk}. For a point source of light whose angular displacement from the optical axis is $\vmu\in\R^2$, the normalized intensity at a point $\vec{x}$ on the observation plane is given by \begin{equation}I(\vec{x}) = \frac{1}{\pi \sigma^2} \left(\frac{2J_1(\norm{\vec{x} - \vmu}_2/\sigma)}{\norm{\vec{x} - \vmu}_2/\sigma}\right)^2\end{equation} where $J_1$ is a Bessel function of the first kind. Under Feynman's path integral formalism, $I(x)$ is precisely the pdf of the distribution over where the photon is detected (see Appendix~\ref{sec:science}). The physical properties of the optical system, namely its numerical aperture and the wavelength of light being observed, determine $\sigma$ which governs the amount by which each point source gets blurred.

\begin{figure}
	\centering
	\includegraphics[width=0.55\textwidth]{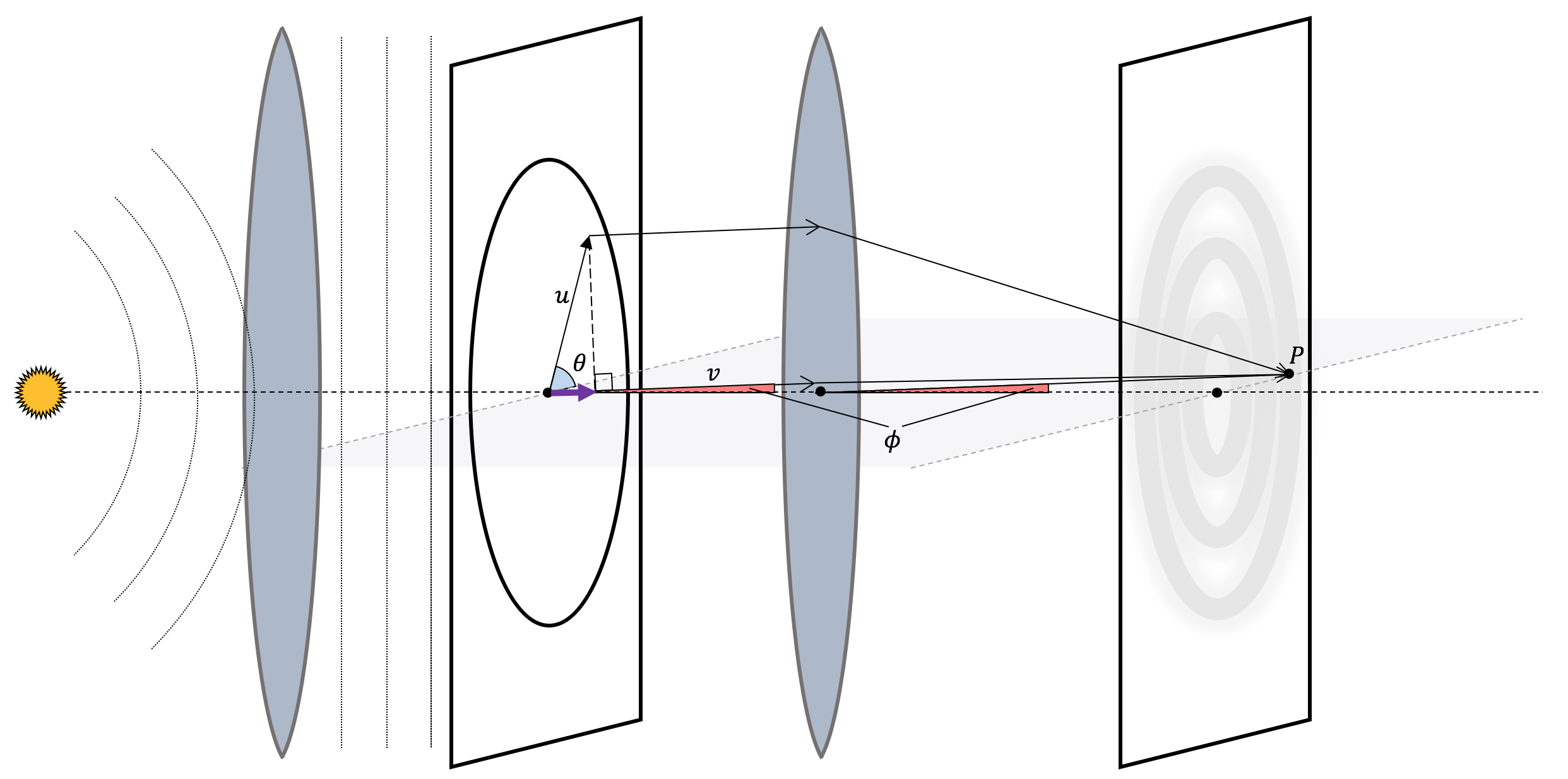}
	\caption{Fraunhofer diffraction of incoherent illumination from point source through aperture onto observation plane}
	\label{fig:fraunhofer}
\end{figure}

Intuitively, when point sources are closer together it seems harder to resolve them. However, despite considerable interest over the years \cite{abbe1873beitrage,rayleigh1879xxxi,schuster1904introduction,sparrow1916spectroscopic,houston1927compound,buxton1937xli}, our understanding of what exactly can and cannot be resolved has never risen above heuristic arguments. In 1879 Lord Rayleigh \cite{rayleigh1879xxxi} proposed a criterion for assessing the resolving power of an optical system, which is still widely-used today, of which he wrote:

\begin{quoting}
	\noindent ``This rule is convenient on account of its simplicity and it is sufficiently accurate in view of the necessary uncertainty as to what exactly is meant by resolution.''
\end{quoting}

\noindent Over the years, many researchers have proposed alternative criteria and offered arguments about why some are more appropriate than others. For example, in 1916 Carroll Sparrow proposed a new criterion \cite{sparrow1916spectroscopic} that bears his name, which he justified as follows:

\begin{quoting}
	\noindent ``It is obvious that the undulation condition should set an upper limit to the resolving power $\dots$ The effect is observable both in positives and in negatives, as well as by direct vision $\dots$ My own observations on this point have been checked by a number of my friends and colleagues.''
\end{quoting}

\noindent Even more resolution criteria were proposed, both before and after, by Ernst Abbe \cite{abbe1873beitrage}, Sir Arthur Schuster \cite{schuster1904introduction}, William Houston \cite{houston1927compound}, etc. Their popularity varies depending on the application area and research community. Many researchers have also pushed back on the idea that there is a fundamental diffraction limit at all. In his 1964 Lectures on Physics \cite[Section 30-4]{feynman2011feynman}, Richard Feynman writes:

\begin{quoting}
	\noindent ``$\dots$ it seems a little pedantic to put such precision into the resolving power formula. This is because Rayleigh's criterion is a rough idea in the first place. It tells you where it begins to get very hard to tell whether the image was made by one or by two stars. Actually, if sufficiently careful measurements of the exact intensity distribution over the diffracted image spot can be made, the fact that two sources make the spot can be proved even if $\theta$ is less than $\lambda/L$.''
\end{quoting}

\noindent Or as Toraldo di Francia \cite{di1955resolving} puts it:

\begin{quoting}
\noindent	``Mathematics cannot set any lower limit for the distance of two resolvable points.''
\end{quoting}

Our goal in this work is to remedy this gap in the literature and place the notion of the diffraction limit on rigorous statistical foundations by drawing new connections to recent work in theoretical computer science on provably learning mixture models, as we will describe next. First we remark that the way the diffraction limit is traditionally studied is in fact a mixture model. In particular we assume that, experimentally, we can measure photons that are sampled from the true diffracted image. However we only observe a finite number of them because our experiment has finite exposure time, and indeed as we will see in some settings the number of samples needed to resolve closely-spaced objects can explode and be essentially impossible just from statistical considerations. Moreover we may only be able to record the location of observed photons up to some finite accuracy, which can also be thought of as being related to sampling error. The main question we will be interested in is:

\begin{quote}
\emph{How many samples (i.e. photons) are needed to accurately estimate the centers and relative intensities of a mixture (i.e. superposition) of two or more Airy disks, as a function of their separation and the parameters of the optical system?}
\end{quote}


This is a central question in optics. Fortunately, there are many parallels between this question and the problem of provably learning mixture models that surprisingly seem to have gone undiscovered. In particular, let us revisit Sparrow's argument that resolution is impossible when the density function becomes unimodal. In fact there are already counter-examples to this claim, albeit not for mixtures of Airy disks. It is known that there are algorithms for learning mixtures of two Gaussians that take a polynomial number of samples and run in polynomial time. These algorithms work even when the density function is unimodal, and require just that the overlap between the components can be bounded away from one. Moreover when there are $k$ components it is known that there is a critical separation above which it is possible to learn the parameters accurately with a polynomial number of samples, and below which accurate learning requires a superpolynomial number of samples information-theoretically \cite{regev2017learning}. Thus a natural way to formulate what the diffraction limit is, so that it can be studied rigorously, is to ask:

\begin{quote}
\emph{At what critical separation does the sample complexity of learning mixtures of $k$ Airy disks go from polynomial to exponential?}
\end{quote}

In this work we will give algorithms whose running time and sample complexity are polynomial in $k$ above some critical separation, and prove that below some other critical separation the sample complexity is necessarily exponential in $k$. These bounds will be within a universal constant, and thus we approximately locate the true diffraction limit. There will also be some surprises along the way, such as the fact that the {\em Abbe limit}, which has long been postulated to be the true diffraction limit, is not actually the correct answer! 

Before we proceed, we also want to emphasize that there is an important conceptual message in our work. First, for mixtures of Gaussians the model was only ever supposed to be an {\em approximation} to the true data generating process. For example, Karl Pearson introduced mixtures of Gaussians in order to model various physical measurements of the Naples crabs. However mixtures of Gaussians always have some chance of producing samples with negative values, but Naples crabs certainly do not have negative forehead lengths! In contrast, for mixtures of Airy disks the model is an extremely accurate approximation to the observations in many experimental setups {\em because it comes from first principles.} It is particularly accurate in astronomy where for all intents and purposes the lens is spherical and the star is so far away that it is a point source, and the question itself is highly relevant because it arises when we want to locate double-stars \cite{falconi1967limits} . 

Furthermore we believe that there ought to be many more examples of inverse problems in science and engineering where tools and ideas from the literature on provably learning mixture models ought to be useful. Indeed both mixtures of Gaussians and mixtures of Airy disks can be thought of as inverse problems with respect to simple differential equations, for the heat equation and a modified Bessel equation respectively. While this is a well-studied topic in applied mathematics, usually one makes some sort of smoothness assumption on the initial data. What is crucial to both the literature on learning mixtures of Gaussians and our work is that we have a parametric assumption that there are few components. Thus we ask: Are there provable algorithms for other inverse problems, coming from differential equations, under parametric assumptions? Even better: Could techniques inspired by the method of moments play a key role in such a development?

\subsection{Overview of Results}
\label{subsec:overview}

It is often the case that heuristic arguments, despite being quite far from a rigorous proof, predict the correct thresholds for a wide range of statistical problems. However here there will be a surprise. In a seminal work in 1873, Ernest Abbe formulated what is now called the {\em Abble limit}. Since then it has been widely accepted in the optics literature as the critical distance below which diffraction makes resolution impossible for classical optical systems. In the mixture model formalism outlined above, it corresponds to a separation of $\pi\sigma$ between any pair of Airy disk centers $\vmu_i,\vmu_j$. This distance arises naturally because it corresponds to the radius of the support of the Fourier transform of the Airy disk kernel $A_{\sigma}: \vec{x}\mapsto \frac{1}{\pi \sigma^2}\left(\frac{J_1(\vec{x}/\sigma)}{\vec{x}/\sigma}\right)^2$ (see Appendix~\ref{subsec:menagerie} for further discussion). 


One of the main results of this work is to show that resolution is statistically hard even above the Abbe limit! Specifically, we show that even for mixtures of Airy disks whose centers have a pairwise separation that is a constant factor larger than the Abbe limit, the problem of recovering their locations can require $\exp(\Omega(\sqrt{k}))$ samples. The main challenge is that no configuration where the Airy disk centers are all on the same line can beat the Abbe limit. Instead we construct a new, natural lower bound instance. 

\begin{thm}[Informal, see Theorem~\ref{thm:mainlowerbound}]
	Let $\lbound \triangleq \sqrt{4/3}\approx 1.155$. For any $0<\epsilon<1$, there exist two superpositions of $k$ Airy disks $\rho,\rho'$ which are both $\lbound\cdot (1-\epsilon)\cdot \pi\sigma$-separated and such that 1) $\rho$ and $\rho'$ have noticably different sets of centers, and yet 2) it would take at least $\exp(\Omega(\epsilon \sqrt{k}))$ samples to distinguish whether the samples came from $\rho$ or from $\rho'$.\label{thm:lowerbound_informal}
\end{thm}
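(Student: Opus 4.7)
The plan is to apply Le Cam's two-point method: it suffices to exhibit $\rho,\rho'$ satisfying the constraints with $d_{\mathrm{TV}}(\rho,\rho') \le \exp(-\Omega(\epsilon\sqrt{k}))$, since then by subadditivity $d_{\mathrm{TV}}(\rho^{\otimes n},{\rho'}^{\otimes n})$ is bounded away from $1$ whenever $n \le \exp(\Omega(\epsilon\sqrt{k}))$, and no tester can succeed with constant probability.

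To bound the TV distance I use the convolutional structure $\rho = A_\sigma \conv \mu$, where $\mu$ is an atomic measure supported on the centers and weighted by their intensities; passing to the Fourier side gives $\widehat{\rho-\rho'} = \hat{A}_\sigma \cdot \widehat{\mu-\mu'}$. The crucial point is that $\hat{A}_\sigma$ is compactly supported on the disk $|\xi|\le R=2/\sigma$ (being the autocorrelation of the aperture indicator), so I only need $\hat\mu$ and $\hat{\mu'}$ to agree to precision $\exp(-\Omega(\epsilon\sqrt{k}))$ on this disk.

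For the construction I place the centers on a triangular lattice $\Lambda \subset \R^2$ with nearest-neighbor spacing $d=\lbound(1-\epsilon)\pi\sigma=\tfrac{2}{\sqrt{3}}(1-\epsilon)\pi\sigma$. The critical two-dimensional fact (which accounts for the numerical improvement from $\pi\sigma$ to $\lbound\pi\sigma$ over the Abbe limit, and which no one-dimensional configuration can match) is that the dual lattice $\Lambda^*$ is triangular with nearest-neighbor spacing $\tfrac{4\pi}{d\sqrt{3}}=\tfrac{2}{(1-\epsilon)\sigma}>R$, so the disk $|\xi|\le R$ sits strictly inside one fundamental cell of $\Lambda^*$ with relative margin $\epsilon$. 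I then take the signed measure $\nu=\mu-\mu'$ to be a truncation of the $m$-fold lattice convolution (with $m\asymp\sqrt{k}$) of a hexagonally symmetric nearest-neighbor second-difference kernel on $\Lambda$. Its Fourier transform is a trigonometric polynomial $F$, periodic modulo $\Lambda^*$, vanishing to high order at every point of $\Lambda^*$; since the nearest zero of $F$ on the disk $|\xi|\le R$ lies at distance at least $\epsilon/\sigma$, a standard estimate via the order of vanishing yields $|\hat\nu(\xi)|\le (1-c\epsilon)^m\le \exp(-\Omega(\epsilon\sqrt{k}))$ uniformly on the disk. The support of $\nu$ has cardinality $O(m^2)=O(k)$, furnishing the $k$ Airy disks. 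Splitting $\nu$ into its Jordan components $\nu^+-\nu^-$, which are supported on disjoint lattice cosets by the parity of the iterated convolution, and using $\hat\nu(0)=0$ to conclude $\nu^+(\R^2)=\nu^-(\R^2)$, I can normalize to probability measures $\mu,\mu'$ whose supports are $d$-separated and disjoint, giving the required ``noticeably different'' center sets.

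The main obstacle I anticipate is converting the $L^\infty$ Fourier bound $\|\hat{A}_\sigma\hat\nu\|_\infty\le\exp(-\Omega(\epsilon\sqrt{k}))$ into an $L^1$ bound on $\rho-\rho'$. Since $A_\sigma\conv\nu$ is spatially concentrated on a region of diameter $O(md)=O(\sigma\sqrt{k})$, combining the Fourier estimate with Cauchy--Schwarz and Plancherel on this effective support costs only a $\mathrm{poly}(k)$ prefactor, which is absorbed into the exponential. A secondary care is that the normalizing mass $M=\nu^+(\R^2)$ should not itself decay super-polynomially in $k$, as otherwise rescaling $\nu$ to a difference of probability measures would wipe out the exponential. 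Expanding the iterated-Laplacian kernel directly as a sum over convolution paths and counting cancellations shows $M=\Omega(1)$, completing the argument.
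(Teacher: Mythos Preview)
Your high-level strategy is sound: Le Cam's two-point method, the convolutional structure $\rho=A_\sigma\conv\mu$, and exploiting the compact Fourier support of $\hat A_\sigma$ are exactly what the paper uses, and the instinct that a triangular geometry is responsible for the factor $\sqrt{4/3}$ is also correct. But the specific construction of $\nu$ has a genuine gap.

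The problem is the claim that the iterated second-difference kernel has Fourier transform uniformly small on the disk $|\xi|\le R$. If $f$ is the symbol of the hexagonal nearest-neighbour Laplacian on $\Lambda$, then $f$ vanishes at each point of $\Lambda^*$ and is \emph{maximal} at the deep holes of $\Lambda^*$ (the barycenters of the dual triangles), which sit at distance $s/\sqrt{3}=\tfrac{2}{\sqrt{3}(1-\epsilon)\sigma}$ from the origin. For small $\epsilon$ this is strictly less than $R=2/\sigma$, so the disk \emph{contains} these deep holes, and $|f|^m$ attains its global maximum there rather than being $\le(1-c\epsilon)^m$. Relatedly, the disk does not sit inside any fundamental cell of $\Lambda^*$: its area $\pi R^2$ already exceeds the covolume $s^2\sqrt{3}/2$ of $\Lambda^*$ once $\epsilon$ is small. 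So ``order of vanishing at $\Lambda^*$'' gives you no control in the interior of the disk. There is a second, independent obstruction: the triangular lattice is \emph{not} bipartite (it has $3$-cycles), so the Jordan components $\nu^\pm$ of an iterated nearest-neighbour kernel are not supported on disjoint cosets, and the splitting of $\nu$ into two well-separated probability measures fails as described.

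The paper sidesteps both issues by working on the \emph{rectangular} lattice $\tfrac{\Delta}{2}(j_1,\sqrt{3}\,j_2)$, which is bipartite; its two color classes are exactly the pair of interleaved triangular lattices of spacing $\Delta$ that become the supports of $\rho$ and $\rho'$. On the Fourier side, instead of an iterated Laplacian the paper uses a \emph{shifted} product of high powers of the Fej\'er kernel, $H(x)=K_\ell^r(x_1/m-\tfrac12)\,K_\ell^r(\sqrt{3}\,x_2/m-\tfrac12)$: the half-shift moves the peaks of $K_\ell$ away from the origin, and a geometric argument (the image of the disk under the change of variables is an ellipse which, wrapped on the torus, avoids an $\epsilon$-neighbourhood of the Fej\'er peaks --- this is precisely where $\sqrt{4/3}$ enters) then forces $|H|\le\exp(-\Omega(\epsilon\sqrt{k}))$ on the whole disk. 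Finally, your Cauchy--Schwarz plan for the $L^2\to L^1$ step is also optimistic, since $A_\sigma$ has heavy $r^{-3}$ tails; the paper handles this conversion via an interpolation inequality combining a pointwise derivative bound with Bessel tail estimates rather than by naive spatial localization.
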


On the other hand, we also show that when the Airy disks have separation that is a small constant factor larger than this critical distance, there is an algorithm for recovering the centers that takes a polynomial number of samples and runs in polynomial time. 

\begin{thm}[Informal, see Theorem~\ref{thm:above}]
Define the absolute constant $\factor = \frac{2j_{0,1}}{\pi} = 1.530\ldots$, where $j_{0,1}$ is the first positive zero of the Bessel function $J_0$. Let $\rho$ be a $\factor\cdot \pi\sigma$-separated superposition of $k$ Airy disks where every disk has relative intensity at least $\lambda$. Then for any target error $\epsilon > 0$, there is an algorithm with time and sample complexity $N = \poly\left(k,1/\Delta,1/\lambda,1/\epsilon\right)$ which outputs an estimate for the centers and relative intensities of $\rho$ which incurs error $\epsilon$ with probability at least $9/10$. Furthermore, this holds even when there is granularity in the photon detector, as long as it is at most some inverse polynomial in all parameters.
\label{thm:above_informal}
\end{thm}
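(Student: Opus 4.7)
The plan is to reduce recovery of the Airy-disk mixture $\rho$ to a classical super-resolution problem for a sparse 2D spike train from bandlimited Fourier data, and then apply a matrix-pencil / Prony-type algorithm whose quantitative stability matches the separation $\factor\cdot\pi\sigma$.

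\textbf{Step 1 (From photons to bandlimited measurements).} The Airy-disk kernel $A_\sigma$ has Fourier transform---the optical transfer function---compactly supported on the disk of radius $2/\sigma$ and strictly positive in its interior. The characteristic function of $\rho$ therefore factors as $\widehat{\rho}(\xi)=\widehat{A_\sigma}(\xi)\cdot\sum_{j=1}^k w_j\,e^{-i\iprod*{\xi,\vmu_j}}$. I would first estimate $\widehat{\rho}(\xi)$ uniformly on a finite grid of frequencies inside the disk of radius $2/\sigma$ via the empirical characteristic function of the $N$ photon samples; Hoeffding plus a union bound gives a uniform error $\O(1/\sqrt{N})$. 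Dividing by the explicit, known $\widehat{A_\sigma}(\xi)$ on frequencies bounded away from its zero set (polynomial blowup in $\sigma$) yields noisy access to the pure exponential sum $P(\xi)=\sum_j w_j e^{-i\iprod*{\xi,\vmu_j}}$.

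\textbf{Step 2 (Moment recovery of the 2D spikes).} The task is now: recover the frequencies $\vmu_1,\ldots,\vmu_k$ of a $k$-sparse 2D exponential sum from noisy values of $P$. I would select a set of directions $\{\vu_t\}$ and, along each ray $\{\ell\vu_t\}$, form a Hankel matrix whose generalized eigenvalues are $\{e^{-i\iprod*{\vu_t,\vmu_j}}\}_{j\le k}$; one run of the matrix-pencil method per direction, plus a pairing step across directions, reconstructs the full 2D centers. Given the centers, the weights are recovered by solving a $k\times k$ Vandermonde-type linear system.

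\textbf{Step 3 (Conditioning at the critical separation).} The crux is to lower-bound the smallest singular value of the relevant 2D moment matrix by $\poly(\lambda,1/k,\sigma)$ under the $\factor\cdot\pi\sigma$-separation hypothesis. This is precisely where $\factor=2j_{0,1}/\pi$ enters: the natural radial kernel arising in the 2D moment problem has the form $J_0(\norm{\vmu_i-\vmu_j}/(2\sigma))$, whose off-diagonal entries vanish exactly when $\norm{\vmu_i-\vmu_j}=2j_{0,1}\sigma=\factor\cdot\pi\sigma$. For separation strictly above this threshold, a Gershgorin / Schur-test argument turns the near-vanishing off-diagonals into a quantitative diagonal dominance, yielding the required polynomial spectral gap. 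Error propagation then proceeds by standard matrix-pencil perturbation theory, and detector granularity (which rounds each photon position) is absorbed as an additive perturbation of the empirical characteristic function of comparable size.

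\textbf{Main obstacle.} Step 3 is the technical heart of the argument. A naive projection to 1D loses in the achievable constant and risks collapsing distinct centers onto the projection line, so the analysis must be genuinely 2D. Identifying $j_{0,1}$ (the first zero of $J_0$)---rather than $j_{1,1}$, as Rayleigh's criterion would suggest---as the correct threshold, and then converting the vanishing off-diagonal entries of the 2D moment kernel at this threshold into a polynomial-in-$k$ spectral gap (robust to the polynomial amplification introduced by dividing out $\widehat{A_\sigma}$), is the most delicate and novel part of the proof.
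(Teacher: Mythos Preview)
Your Step 1 matches the paper's reduction. Step 2, however, is the algorithm the paper uses for its \emph{exponential}-in-$k$ result (the ``no separation assumption'' theorem), not for the polynomial one: after projecting $\Delta$-separated 2D centers onto a random line, the 1D separation can drop to $\Theta(\Delta/k)$, and the resulting Vandermonde condition number is then exponential in $k$ (this is exactly the bound in the paper's Lemma on $\sigma_{\min}(\Delta')$). For the polynomial-in-$k$ result the paper instead samples random frequencies $\omega^{(1)},\ldots,\omega^{(m)}$ uniformly from a disk, forms the 3-tensor $T_{a,b,i}=\sum_j\lambda_j e^{-2\pi i\langle\mu_j,\omega^{(a)}+\omega^{(b)}+v^{(i)}\rangle}$, and runs Jennrich's algorithm; the condition number that matters is that of the $m\times k$ matrix $V_{a,j}=e^{-2\pi i\langle\mu_j,\omega^{(a)}\rangle}$, controlled via its expected Gram matrix $\int_{B}\bigl|\sum_j\lambda_je^{-2\pi i\langle\mu_j,\omega\rangle}\bigr|^2\,d\omega$.

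The real gap is Step 3. First, integrating $e^{-2\pi i\langle\mu_j-\mu_{j'},\omega\rangle}$ over a \emph{disk} gives a $J_1$-based kernel (the Fourier transform of a disk indicator), not $J_0$. Second, and decisively, no Gershgorin or Schur-test argument on any such Bessel kernel yields a $k$-independent spectral gap: $|J_\nu(r)|$ decays only like $r^{-1/2}$, while a $\Delta$-separated planar configuration can have $\Theta(R^2)$ centers within distance $R$, so off-diagonal row sums diverge; vanishing of the kernel at one particular radius buys you nothing at all the other pairwise distances. The paper's argument is fundamentally different: it invokes the existence of a \emph{ball minorant} $M\in L^1(\R^2)$ (Holt--Vaaler, Carneiro--Chandee--Littmann--Milinovich, Gon\c{c}alves) with $M(x)\le\bone{x\in B^2(1)}$ and $\widehat M$ supported in $B^2(r)$ for any $j_{0,1}/\pi<r<j_{1,1}/\pi$. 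Minorizing the disk indicator by $M$ and expanding, every off-diagonal term $\widehat M[\mu_j-\mu_{j'}]$ is \emph{identically zero} once the separation exceeds $2r$, so the quadratic form is at least $\widehat M[0]\,\|\lambda\|_2^2$ with no $k$-dependence at all. The constant $\factor=2j_{0,1}/\pi$ is the infimum of $2r$ over radii for which such a minorant exists in dimension two; this extremal-function fact, not a zero of any moment kernel, is where $j_{0,1}$ enters.
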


The main open question of our work is to prove matching upper and lower bounds that pin down the true diffraction limit. However, as we will discuss, this is a challenging problem in harmonic analysis, despite being connected to areas where there has been considerable recent progress. Moreover this phase transition for resolution is actually more dramatic than what happens for mixtures of Gaussians \cite{regev2017learning}. Even ignoring the issue of computational complexity, for spherical Gaussian mixtures it is known that at separation $o(\sqrt{\log k})$, super-polynomially many samples are needed, while at separation $\Omega(\sqrt{\log k})$, polynomially many suffice. 

We now say a word about the techniques that go into proving Theorem~\ref{thm:lowerbound_informal} and Theorem~\ref{thm:above_informal}. It turns out that both are closely related to proving a modified version of an \emph{Ingham-type estimate} \cite{komornik2005fourier}:

\begin{question}\label{question:harmonic}
	What is the smallest $\Delta$ for which the quantity 
	\begin{equation}
		\int_{B}\left|\sum^k_{j=1}\lambda_j e^{-2\pi i \langle \vmu_j, \omega\rangle} \right|^2\, d\omega \ge \frac{1}{\poly(k)}\norm{\lambda}^2_2\label{eq:estimate}
	\end{equation}
	for all vectors $\lambda\in\R^k$ and all sets of centers $\brc{\vmu_j}$ for which $\norm{\vmu_i - \vmu_j}_2 > \Delta$ for all $j\neq j'$, where the integration is over the origin-centered unit ball $B\subset \R^2$?
\end{question}

\noindent In particular, the main technical step for showing Theorem~\ref{thm:above_informal} is to show that the critical $\Delta$ in Question~\ref{question:harmonic} is at most $2j_{0,1}/\pi$. This can be obtained via the following extremal function. A \emph{ball minorant}, is a function $F$ satisfying the properties that 
\begin{enumerate}
\item[(1)] $F(x)\le \bone{x\in B}$ and 
\item[(2)] $\widehat{F}$ is supported on the ball of radius $\Delta$
\end{enumerate}
\noindent In \cite{holt1996beurling,carneiro2017hilbert,gonccalves2018note} it was shown that such a ball minorant exists for $\Delta = 2j_{0,1}/\pi$ (interestingly, this paved the way to some recent progress on Montgomery's famous pair correlation conjecture for the Riemann zeta function \cite{carneiro2017hilbert}). One can use property $(1)$ to pass from integrating against the function $\bone{x\in B}$ to integrating against $F$. And because by property $(2)$ $F$ is localized in the frequency domain, the latter integral is large.
In fact the one-dimensional analogue of Question~\ref{question:harmonic} was resolved in \cite{moitra2015super} using the univariate analogue of $F$, namely the \emph{Beurling-Selberg minorant}. However the algorithmic approach only made sense in one-dimension. In our case, we employ the tensor generalization of the matrix pencil method, originally introduced in \cite{huang2015super}.
We defer the details of this to Section~\ref{subsec:above}.

For the lower bound in Theorem~\ref{thm:lowerbound_informal}, we need to answer a variant of Question~\ref{question:harmonic}.

\begin{question}\label{question:harmonic2}
	What is the smallest $\Delta$ for which 
	\begin{equation}
		\int_{B}\left|\sum^k_{j=1}\lambda_j e^{-2\pi i \langle \vmu_j, \omega\rangle} - \sum^k_{j=1}\lambda'_j e^{-2\pi i\langle\vmu'_j, \omega\rangle} \right|^2\, d\omega \ge \frac{1}{\poly(k)}\label{eq:estimate2}
	\end{equation}
	for all \emph{nonnegative} $\lambda,\lambda'\in\R^k$ whose entries sum to one and $\brc{\vmu_j}, \brc{\vmu'_j}\in\R^2$ for which $\norm{\vmu_j - \vmu_{j'}}_2 > \Delta$ and $\norm{\vmu'_j - \vmu'_{j'}}_2 > \Delta$ for $j\neq j'$, where integration is over the origin-centered unit ball $B$?
\end{question}


The connection to Theorem~\ref{thm:lowerbound_informal} is straightforward: By Plancherel's and smoothness properties of $A_{\sigma}$, one can upper bound the $L_1$ distance between the mixture of Airy disks given by parameters $\brc{\lambda_j},\brc{\vmu_j}$ and the mixture given by $\brc{\lambda'_j}, \brc{\vmu'_j}$ in terms of the left-hand side of \eqref{eq:estimate2}. So if one can construct a set of $\Delta$-separated centers $\brc{\vmu_j}, \brc{\vmu'_j}$ for which \eqref{eq:estimate2} fails to hold but for which the collection $\brc{\vmu_j}$ is separated from $\brc{\vmu'_j}$ but the resulting mixtures of Airy disks are $o(1/\poly(k))$-close in total variation distance it implies that resolution is statistically impossible with a polynomial number of samples. This is the recipe used in known lower bounds \cite{moitra2010settling,hardt2015tight,regev2017learning} for learning mixtures of Gaussians.



For our purposes, it turns out that ``tensoring" one-dimensional lower bounds does not work because it would not beat the Abbe limit \cite{moitra2015super}. Morally, this is because tensoring the unit interval with itself would give us the unit square, which corresponds to separation in the $L_{\infty}$ distance rather than the $L_2$ distance, and the $L_2$ distance is the right distance in optics because it is rotationally invariant. The main technical contribution in our lower bound is to give a more sophisticated construction given by interleaving two triangular lattices and placing the centers at points on these lattices (see Figure~\ref{fig:lattice}). 
The analysis is rather delicate, and we defer the details to Section~\ref{sec:lowerboundprev} and Section~\ref{sec:lowerbound}.


To complete the picture, we show that there is no diffraction limit when the number of Airy disks is a constant. In particular we show that for any constant number of Airy disks there is an algorithm that takes a polynomial number of samples and runs in polynomial time that learns the parameters to any desired accuracy {\em regardless of the separation}. 

\begin{thm}[Informal, see Theorem~\ref{thm:main}]
	Let $\rho$ be a $\Delta$-separated superposition of $k$ Airy disks where every disk has relative intensity at least $\lambda$. Then for any target error $\epsilon > 0$ and failure probability $\delta > 0$, there is an algorithm which draws $N = \poly\left((k\sigma/\Delta)^{k^2},1/\lambda,1/\epsilon,\log(1/\delta)\right)$ samples from $\rho$, runs in time $O(N)$, and outputs an estimate for the centers and relative intensities of $\rho$ which incurs error $\epsilon$ with probability at least $1 - \delta$. Furthermore, this holds even when there is granularity in the photon detector, as long as it is at most some inverse polynomial in $N$.
	\label{thm:main_informal}
\end{thm}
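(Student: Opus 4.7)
The plan is to reuse the Fourier-domain reduction and tensor matrix-pencil machinery developed for Theorem~\ref{thm:above_informal}, but to replace the Ingham-type estimate of Question~\ref{question:harmonic}---which degrades sharply below the separation $\factor\pi\sigma$---with a direct bivariate Vandermonde condition-number bound that remains valid at arbitrary separation $\Delta > 0$, at the cost of paying $(k\sigma/\Delta)^{O(k^2)}$ in the sample complexity. For $k=O(1)$ this factor is polynomial in $1/\Delta$, which yields the promised absence of any diffraction limit in the constant-source regime.

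\textbf{Algorithm.} First I would draw $N$ samples $\vec{x}_1,\ldots,\vec{x}_N$ from $\rho$ and form the empirical characteristic function $\widehat{\rho}_N(\omega) = \frac{1}{N}\sum_{n=1}^N e^{-2\pi i \langle \vec{x}_n,\omega\rangle}$ at a grid $\{\omega_\ell\}$ of $O(k^2)$ frequencies inside the bulk $\{\norm{\omega}_2 \le c/\sigma\}$ where $\widehat{A_\sigma}$ is bounded below by a positive constant. By Hoeffding's inequality and a union bound, $|\widehat{\rho}_N(\omega_\ell) - \widehat{A_\sigma}(\omega_\ell)\cdot F(\omega_\ell)| \le O(\sqrt{\log(k/\delta)/N})$ for every $\ell$ simultaneously, where $F(\omega) \triangleq \sum_j \lambda_j e^{-2\pi i \langle \vmu_j,\omega\rangle}$. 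Dividing by $\widehat{A_\sigma}(\omega_\ell)$ produces noisy values of $F$, which I would feed into the two-dimensional matrix-pencil / tensor-decomposition routine of Section~\ref{subsec:above} to extract estimates of $\{\vmu_j\}$ and $\{\lambda_j\}$.

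\textbf{Error propagation.} The accuracy of the pencil step is governed by $\sigma_{\min}(V)$, where $V_{\ell j} = e^{-2\pi i \langle \vmu_j,\omega_\ell\rangle}$ is the bivariate Vandermonde matrix built from the sampled frequencies. A polynomial-interpolation argument---constructing for each $\vmu_j$ a bivariate trigonometric polynomial of degree $O(k)$ that vanishes on the other $k-1$ centers and equals $1$ at $\vmu_j$---gives a lower bound $\sigma_{\min}(V) \ge (\Delta/(k\sigma))^{O(k^2)}$, with the exponent reflecting the $\binom{k+2}{2} = O(k^2)$ bivariate monomials needed to separately interpolate $k$ generic nodes in $\R^2$. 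Threading this factor through the pencil analysis and choosing $N = \poly((k\sigma/\Delta)^{k^2},1/\lambda,1/\epsilon,\log(1/\delta))$ suffices to recover every $\vmu_j$ within $\epsilon$ and every $\lambda_j$ within $\lambda\epsilon$ with probability at least $1-\delta$; the $O(N)$ running time follows because the samples are only touched to accumulate the $O(k^2)$ Fourier estimates and the subsequent linear algebra is of constant size when $k=O(1)$.

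\textbf{Main obstacle and granularity.} The central difficulty is the two-dimensional Vandermonde lower bound: in one dimension this is the classical theorem of Gautschi, but in two dimensions pairwise $L_2$-separation does not imply any uniform separation along a single axis, so a tensorized 1D bound fails (and in any case would only control $L_\infty$-separation, cf.\ the discussion preceding Theorem~\ref{thm:lowerbound_informal}). The interpolation scheme must instead exploit that $k$ centers at pairwise $L_2$-distance $\Delta$ in $\R^2$ can be separated by $O(k^2)$ carefully chosen bivariate trigonometric monomials whose joint vanishing ideal is trivial on $\{\vmu_j\}$. Finally, if the detector has granularity $\eta$, each sample is perturbed by at most $\eta$, introducing a bias of $O(\eta/\sigma)$ per Fourier estimate; taking $\eta = 1/\poly(N)$ absorbs this into the existing error budget.
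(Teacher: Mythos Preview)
Your approach differs substantively from the paper's: the paper does \emph{not} reuse the two-dimensional tensor machinery of Section~\ref{subsec:above} for this theorem, but instead reduces to two \emph{one-dimensional} super-resolution problems. It picks a random direction $\omega_1\in\S^1$ and a nearby $\omega_2$, runs the univariate matrix pencil method (\textsc{ModifiedMPM}) along each ray $\{\ell\omega_i/(4k)\}_{0\le\ell<2k}$, and then reassembles the two-dimensional centers via a pairing lemma (Lemma~\ref{lem:ordering}) and a $2\times 2$ linear solve (\textsc{PreConsolidate}). The crucial gain is that the condition-number step becomes elementary: the only Vandermonde matrix that appears is \emph{univariate}, so $\sigma_{\min}(V)\ge(\Delta')^{\binom{k}{2}}/k^{k-1}$ follows directly from the determinant identity $|\det V|=\prod_{i<j}|\alpha_i-\alpha_j|$ together with the trivial bound $\sigma_{\max}(V)\le k$ (Lemma~\ref{lem:condnumber_general}), and the projected separation $\Delta'$ is controlled by anticoncentration of the random direction (Lemma~\ref{lem:rand_rotation}). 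This entirely sidesteps what you correctly flag as your main obstacle---a bivariate band-limited Vandermonde bound---which is genuinely more delicate than anything your sketch resolves. Your route may well be completable, but the paper's one-dimensional reduction is both simpler and the actual source of the $(k\sigma/\Delta)^{k^2}$ exponent: it comes from the $\binom{k}{2}$ pairwise factors in the one-dimensional Vandermonde determinant, not from counting $\binom{k+2}{2}$ bivariate monomials as you suggest.
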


This result turns out to be simple in retrospect, and comes from assembling a few standard tools from the literature on provably learning mixture models. Nevertheless it underscores an important point that existing tools can already have important implications for inverse problems the sciences. Our approach is to first estimate the Fourier transform $\widehat{\rho}$ from samples and then pointwise divide by $\widehat{A}_{\sigma}$. In this way we can simulate noisy access to the Fourier transform of the mixture of delta functions at $\vmu_1,\ldots,\vmu_k$. However $\widehat{A}_{\sigma}$ has compact support, so we can only access frequencies with bounded $L_2$ norm. Now we can reduce to the one-dimensional case  \cite{moitra2015super} by projecting $\rho$ along two nearby directions, solving each resulting univariate problem, and then solving an appropriate linear system to recover the centers and relative intensities. This method is reminiscent of \cite{kalai2010efficiently,moitra2010settling}, which gives algorithms for learning high-dimensional mixtures of Gaussians based on reducing to a series of one-dimensional problems and stitching together these estimates carefully. 


\subsection{Related Work}
\label{sec:related}

We have already mentioned that our work is closely related to the vast literature on learning mixture models and, in particular, on learning mixtures of Gaussians~\cite{dasgupta1999learning,DasguptaSchulman:00,AroraKannan:01,VempalaWang:02,AchlioptasMcSherry:05,brubaker2008isotropic,kalai2010efficiently,moitra2010settling,belkin2015polynomial,hardt2015tight,hsu2013learning,ge2015learning,regev2017learning,hopkins2018mixture,kothari2017outlier,diakonikolas2018list}. Here we mention some other connections to work on recovering spike trains from noisy, band-limited Fourier measurements.

\paragraph{Superresolution}

The seminal work of \cite{donoho1989uncertainty,donoho1992superresolution} was one of the first to put this question on rigorous footing. Donoho studied the modulus of continuity for this problem on a grid as the grid width goes to zero. Later Candes and Fernandez-Granda \cite{candes2014towards} gave a practical algorithm based on $L_1$ minimization over a continuous domain. There has been a long line of work on this problem which it would also be impossible to survey fully, so we refer the reader to \cite{candes2013super,tang2013compressed,fernandez2013support,liao2015music,moitra2015super,fernandez2016super,kunis2016multivariate,morgenshtern2016super} and references therein. 

We remark that essentially all works on super-resolution in high dimensions focus on the case where measurements are $L_{\infty}$-band-limited rather than $L_2$-band-limited. Given the prevalence of Airy disks and circular apertures in statistical optics, one upshot of our work is that, technical issues related to the so-called box (aka $L_{\infty}$ ball) minorant problem notwithstanding, the $L_2$ setting may be the more practically relevant one to consider anyways.

\paragraph{Sparse Fourier Transform} There are also connections to the extensive literature on the sparse Fourier transform, which can be interpreted in some sense as the ``agnostic'' version of the super-resolution problem where the goal is to compete with the error of the best $k$-sparse approximation to the discrete Fourier transform, even in the presence of noise, using few measurements \cite{gilbert2002near,gilbert2005improved,hassanieh2012nearly,gilbert2014recent,indyk2014nearly,kapralov2016sparse}. When the $k$ spikes need not be at discrete locations and the low-frequency measurements are randomly chosen, this is the problem of \emph{compressed sensing off the grid} introduced by \cite{tang2013compressed}, for which recovery is possible with far fewer measurements. This can be thought of as the one-dimensional case of the setting of \cite{huang2015super}. To our knowledge, the only work that addresses the continuous, high-dimensional version of the sparse Fourier transform is the very recent work of \cite{jin2020robust}. The emphasis in this literature is primarily on obtaining sample complexity near-linear in $k$, whereas our guarantees are only polynomial in $k$. Consequently the results in the sparse Fourier transform literature lose log factors in the level of separation they require, whereas in our setting the emphasis is primarily on the level of separation needed to get polynomial-time and -sample algorithms.

\subsection{Visualizing the Diffraction Limit}

In this short section we provide some figures to help conceptualize our results. Figure~\ref{fig:histogram} illustrates the basic notion that separation is information-theoretically unnecessary for parameter learning of superpositions of Airy disks. We compare the discretized empirical distribution of samples from two diffraction patterns whose components have separation well below the diffraction limit and thus well below what conventional wisdom in optics suggests is resolvable. While the differences in the diffraction patterns are minute, they do indeed become statistically significant with enough samples. Eventually it becomes possible to conclude that the gray diffraction pattern is generated by one point source and the red diffraction pattern is generated by two.

\begin{figure}
\centering
\includegraphics[width=\textwidth]{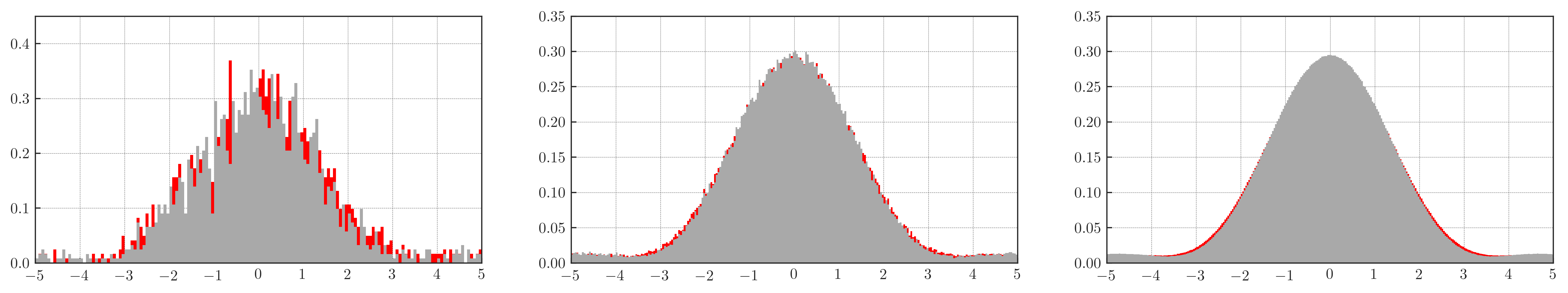}
\caption{\textbf{With enough samples, one can distinguish which of two superpositions the data comes from, even below the diffraction limit.} In each plot, a histogram of $x$-axis positions of photons sampled from a superposition of two equal-intensity Airy disks (red) centered on the $x$-axis with separation a tenth of the Abbe limit is overlaid with a histogram of $x$-axis positions of photons sampled from a single Airy disk at the origin (gray). As number of samples increases (left to right), minute differences between the two intensity profiles become clear.}
\label{fig:histogram}
\end{figure}

Next, we present a striking visual representation of the statistical barrier imposed by the diffraction limit when the number of components is large. Recall that the upshot of Theorems~\ref{thm:lowerbound_informal} and \ref{thm:above_informal} is that $k$ plays a leading role in determining when resolution is and is not feasible: slightly above the Abbe limit, the sample (and computational) complexity is polynomial in $k$, and anywhere beneath the Abbe limit, the sample complexity becomes exponential in $k$. This helps clarify why in some domains like astronomy, where there are only ever a few tightly spaced point sources, there is evidently no diffraction limit. Yet in other domains like microscopy where there are a large number of tightly spaced objects, the diffraction limit is indeed a fundamental barrier, at least in the classical physical setup. This helps explain why different communities have settled on different beliefs about whether there is or is not a diffraction limit.
 
In Figure~\ref{fig:tv} we experimentally investigate this phenomenon and illustrate how the total variation distance scales as we vary the number of disks and the separation in our earlier constructions. It is evident from these plots that for any superposition of a few Airy disks, there is no sharp dividing line between what is and is not possible to resolve. But when the number of Airy disks becomes large, with any reasonable number of samples, it is feasible to resolve the superposition if and only if their separation is at least as large as the Abbe limit.

\begin{figure}
	\centering
		\includegraphics[width=0.48\textwidth]{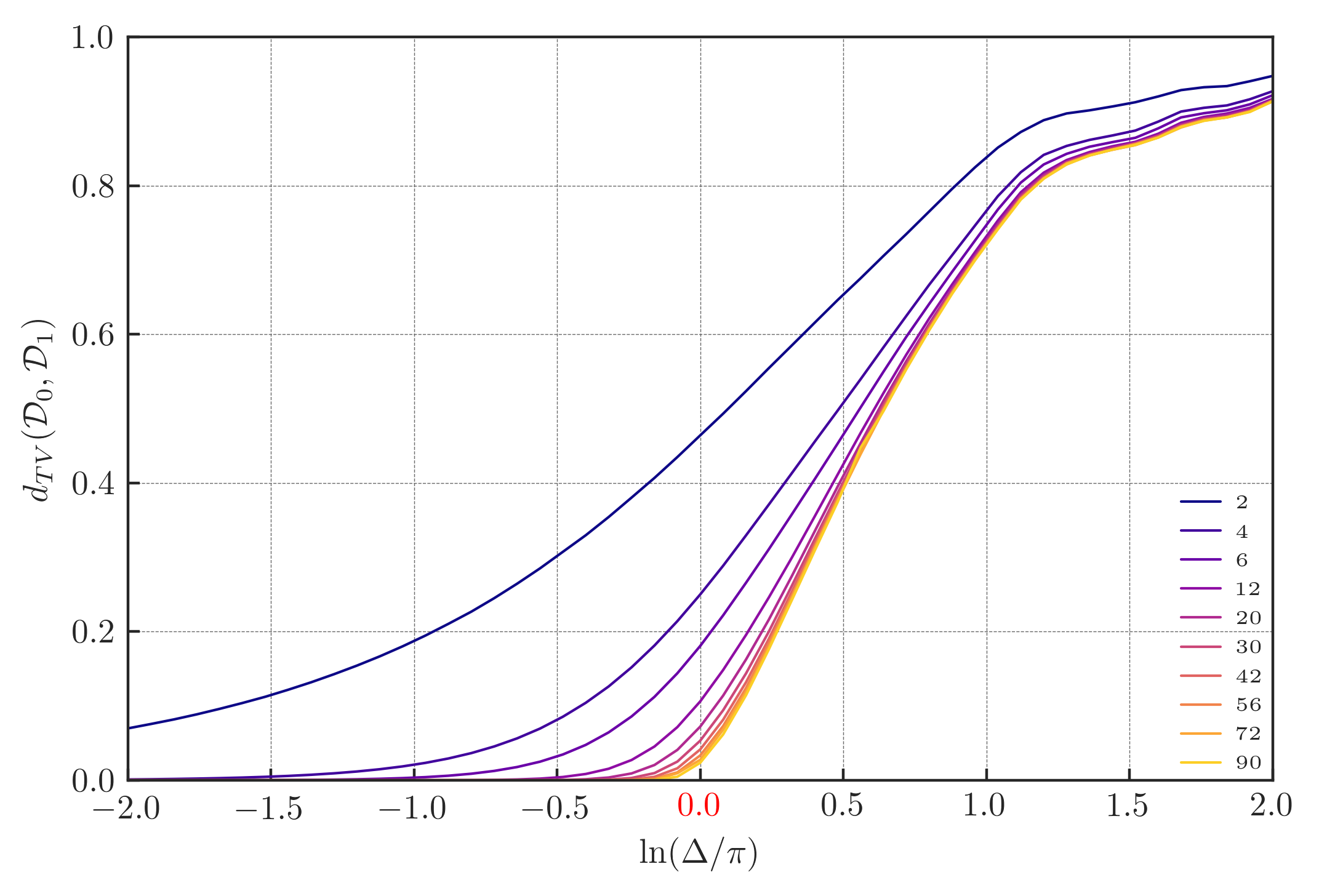}
		\includegraphics[width=0.48\linewidth]{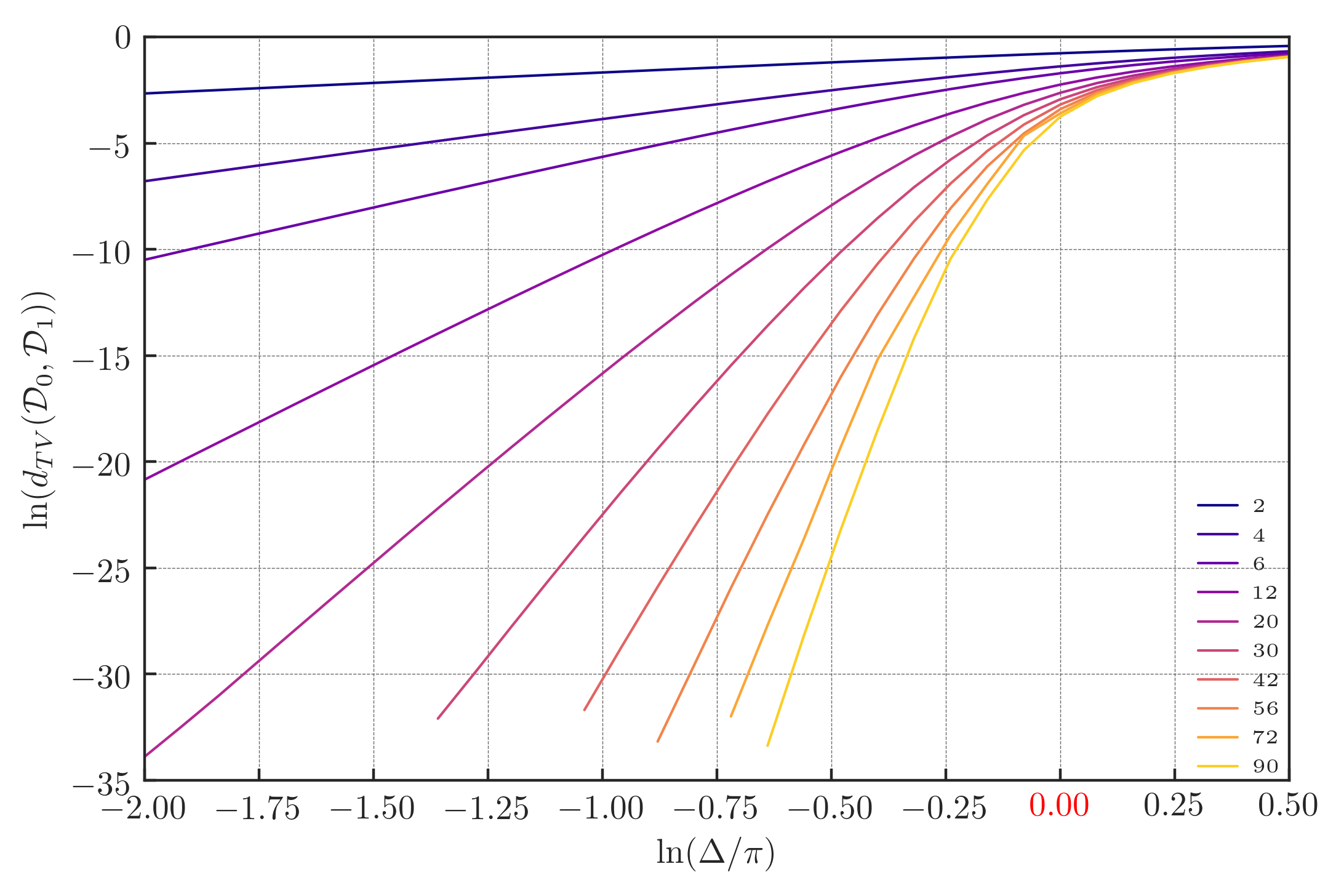}
	\caption{\textbf{The Abbe limit as a statistical phase transition.} For any level of separation $\Delta$ and number of disks $k$, we carefully construct a pair of hypotheses $\calD_0(\Delta,k), \calD_1(\Delta,k)$ which are each superpositions of $k/2$ Airy disks where the separation among its components is at least $\Delta$. The left figure plots total variation distance $d_{\text{TV}}(\calD_0(\Delta,k),\calD_1(\Delta,k))$ between the two distributions as a function of $\Delta$, for various choices of $k$, with the Abbe limit highlighted in red. The right figure plots total variation distance on a log-scale.}
	\label{fig:tv}
\end{figure}

We emphasize that in the instance constructed for Figure~\ref{fig:tv} (as well as the instance we construct and analyze for Theorem~\ref{thm:lowerbound_informal}), the centers are plotted on a line. For such instances, by projecting in the direction of the line and using our deconvolution techniques, one can actually reduce to the problem of one-dimensional super-resolution, for which polynomial-time algorithms exist for \emph{any} separation strictly greater than the diffraction limit \cite{moitra2015super}, and by adapting the lower bound in \cite{moitra2015super} to this specific instance, one can see that this is tight. In contrast, if the centers can be placed anywhere in $\R^2$, there is a constant factor gap ($\sqrt{4/3}$ versus $\frac{2j_{0,1}}{\pi}$) between the lower bound in Theorem~\ref{thm:lowerbound_informal} and the upper bound in Theorem~\ref{thm:above_informal}.

\subsection{Roadmap}
In Section~\ref{sec:lowerboundprev} we give a preview of our lower bound proof by providing a self-contained answer to Question~\ref{question:harmonic2}. In Section~\ref{sec:prelims}, we give an overview of our probabilistic model, some notation, and other mathematical preliminaries. In Section~\ref{sec:algo}, we prove the algorithmic results in Theorems~\ref{thm:main_informal} and \ref{thm:above_informal}. In Section~\ref{sec:lowerbound} we complete the proof of our lower bound from Theorem~\ref{thm:lowerbound_informal}.
In Section~\ref{sec:conclusion} we conclude with some directions for future work.
In Appendix~\ref{sec:science}, we overview previous attempts in the optics literature to put the diffraction limit on rigorous footing. In Appendix~\ref{sec:physical}, we describe and motivate our model and also define the various resolution criteria which have appeared in the literature. In Appendix~\ref{app:quotes}, we catalogue quotations from the literature that are representative of the points of view addressed in the introduction. In Appendix~\ref{app:jennrich}, we complete some deferred proofs. Lastly, in Appendix~\ref{sec:fig_details}, we give details on how Figure~\ref{fig:tv} was generated.

\section{Lower Bound Preview}
\label{sec:lowerboundprev}

In this section we give a self-contained proof of one of the main technical ingredients in the proof of our main result, Theorem~\ref{thm:lowerbound_informal}. Before proceeding, it will be convenient to introduce a bit of notation; any outstanding notation we will present Section~\ref{sec:prelims}, e.g. our convention for the Fourier transform. Recalling that $\lbound\triangleq \sqrt{4/3}$, define \begin{equation}m \triangleq \frac{2}{(1 - \epsilon)\lbound\pi\sigma}\label{eq:mdef}\end{equation} for any small constant $\epsilon > 0$ so that the critical level of separation for which Theorem~\ref{thm:lowerbound_informal} applies is $\Delta \triangleq 2/m = \lbound\cdot (1 - \epsilon)\cdot \pi\sigma$.\footnote{This $\pi\sigma$ scaling is not important to the result in this section but is the natural choice of scaling for Airy disks, so it will be convenient to work with this when we apply the results of this section to prove Theorem~\ref{thm:lowerbound_informal}.} Additionally, let $k$ be an odd square and define \begin{equation}\nu_{j_1,j_2} = \frac{\Delta}{2}\cdot(j_1,\sqrt{3}\cdot j_2), \; \; \; j_1,j_2 \in \calJ\triangleq\brk*{-\frac{\sqrt{k} - 1}{2},\ldots,\frac{\sqrt{k} - 1}{2}}.\label{eq:centersdef}\end{equation} This construction is illustrated in Figure~\ref{fig:lattice}: there, similarly colored points correspond to centers in the same mixture, and our choice of $\brc{\nu_{j_1,j_2}}$ ensures that the level of separation between any two points in a particular mixture is $\Delta$, which is slightly less than $\sqrt{4/3}$ times the Abbe limit of $\pi\sigma$. As such, the following tells us that the answer to Question~\ref{question:harmonic2} is surprisingly at least $\sqrt{4/3}$, rather than 1 as the Abbe limit would suggest:

\begin{lem}
	There exists a vector $u\triangleq (u_{j_1,j_2})_{j_1,j_2\in\calJ}\in\R^k$ for which \begin{equation}
		\abs*{\sum_{j_1,j_2\in\calJ} u_{j_1,j_2} e^{-2\pi i \iprod{\nu_{j_1,j_2},\vec{x}}}}^2 \le \exp\left({-\Omega(\epsilon\sqrt{k})}\right)
	\end{equation} for all $\norm{x}\le 1/\pi\sigma$. Furthermore, $\sgn(u_{j_1,j_2}) = (-1)^{j_1+j_2}$, and \begin{equation}\sum_{j_1+j_2 \ \text{even}}|u_{j_1,j_2}| = \sum_{j_1+j_2 \ \text{odd}}|u_{j_1,j_2}| = 1.\label{eq:alternating}\end{equation}\label{lem:l2bound}
\end{lem}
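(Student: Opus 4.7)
The plan is to factor via a tensor-product ansatz and reduce to two one-dimensional extremal problems. Take $u_{j_1,j_2} = v_{j_1}\, w_{j_2}$ with $\sgn(v_j) = \sgn(w_j) = (-1)^j$, so that $\sgn(u_{j_1,j_2}) = (-1)^{j_1+j_2}$ is automatic. The trigonometric sum then factors as $P(x_1)\, Q(x_2)$, where $P(t) = \sum_{j_1 \in \calJ} v_{j_1}\, e^{-\pi i \Delta j_1 t}$ has period $2/\Delta$ and $Q(t) = \sum_{j_2 \in \calJ} w_{j_2}\, e^{-\sqrt{3}\pi i \Delta j_2 t}$ has period $2/(\sqrt{3}\Delta)$. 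To arrange the normalization~\eqref{eq:alternating}, rescale $v,w$ so each has $\ell_1$-norm $\sqrt{2}$ with equal $\ell_1$-mass on even- and odd-indexed entries; then $\sum_{j_1+j_2\ \mathrm{even}}|u_{j_1,j_2}| = V_eW_e + V_oW_o = 1$, and similarly for the odd sum.

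The key geometric step is to set $L = (1-\epsilon)/\Delta$ and $L' = (1-\epsilon)/(\sqrt{3}\Delta)$. Plugging in $\Delta = (1-\epsilon)\lbound\pi\sigma$ with $\lbound = 2/\sqrt{3}$, one checks $L^2 + L'^2 = 4(1-\epsilon)^2/(3\Delta^2) = 1/(\pi\sigma)^2$ exactly. Hence the disk $\|\vec{x}\| \le 1/(\pi\sigma)$ is covered by the union of two strips $\{|x_1| \le L\} \cup \{|x_2| \le L'\}$, since $|x_1| > L$ inside the disk forces $|x_2| < L'$. Crucially, $2L$ is the fraction $(1-\epsilon)$ of $P$'s period and $2L'$ is the fraction $(1-\epsilon)$ of $Q$'s period---each strictly less than a full period, with slack exactly $\epsilon$. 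This is exactly where the constant $\lbound = \sqrt{4/3}$ enters: it is the smallest multiple of $\pi\sigma$ for which both 1D intervals can be simultaneously embedded strictly inside their respective half-periods.

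The alternating signs $\sgn(v_j) = (-1)^j$ correspond to a half-period shift, $P(t) = P_0(t + 1/\Delta)$, where $P_0(s) = \sum |v_j|\, e^{-\pi i \Delta j s}$ has nonnegative Fourier coefficients, and analogously for $Q_0$. Each 1D subproblem therefore reduces to the classical extremal question: construct a nonneg-coefficient trigonometric polynomial of degree $M = (\sqrt{k}-1)/2$, $\ell_1$-normalized, whose magnitude is at most $\exp(-\Omega(\epsilon\sqrt{k}))$ on an arc spanning the fraction $(1-\epsilon)$ of its period and centered at the antipode of $0$. This can be resolved by a Beurling--Selberg-type construction: a Chebyshev polynomial composed with the $\cos$ map sending the arc to $[-1,1]$ delivers the exponential decay rate, and one then enforces nonnegativity of the Fourier coefficients via a Fejér--Riesz-style sum-of-squares decomposition combined with an averaging step.

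Combining on the cover: in $\{|x_1| \le L\}$ we have $|P(x_1)Q(x_2)| \le \exp(-\Omega(\epsilon\sqrt{k}))\cdot \|w\|_1 = \exp(-\Omega(\epsilon\sqrt{k}))$, and symmetrically in the other strip, which gives the lemma after squaring. The main obstacle is the 1D extremal construction: a straightforward Fejér-kernel ansatz $|v_j| \propto \binom{2M}{M+j}$ has the correct sign pattern built in but only attains $\exp(-\Omega(\epsilon^2\sqrt{k}))$ decay (since $\cos^{2M}(\pi\epsilon/2) \approx e^{-\pi^2\epsilon^2 M/4}$), and sharpening this to the linear-in-$\epsilon$ rate while preserving nonnegativity of the $|v_j|$'s is the delicate technical point where most of the work lies.
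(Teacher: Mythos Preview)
Your outline is essentially the same route the paper takes: tensor-product ansatz $u_{j_1,j_2}=v_{j_1}w_{j_2}$, the half-period shift encoded by $(-1)^j$ signs, and the geometric covering of the disk $\norm{x}\le 1/\pi\sigma$ by two axis-aligned strips whose widths are each a $(1-\epsilon)$ fraction of the corresponding one-dimensional period. Your strip computation is correct and is equivalent to the paper's tiling/periodicity argument (Figure~\ref{fig:lowerbound}), which is phrased as showing that the folded ellipse misses an $L_\infty$-ball of radius $\Theta(\epsilon)$ around the origin.

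The one genuine gap is exactly where you say it is, the 1D construction, and the resolution is much simpler than Chebyshev plus Fej\'er--Riesz. The paper just takes the Fej\'er kernel $K_\ell$ with $\ell=4/\epsilon$ and raises it to the power $r=(\sqrt{k}-1)/(2\ell)=\Theta(\epsilon\sqrt{k})$, so that the degree is still $\Theta(\sqrt{k})$. From $K_\ell(x)\le 1/(4\ell^2 x^2)$ on $[-1/2,1/2]$ one gets $K_\ell^r(x)\le (\ell\epsilon)^{-2r}=16^{-r}$ whenever $|x|\ge \epsilon/2$, which is exactly the linear-in-$\epsilon$ exponent you were after, and the Fourier coefficients are nonnegative automatically since they are an $r$-fold convolution of the triangular sequence. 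Your $\binom{2M}{M+j}$ ansatz is the special case $\ell=2$, $r=M$; the fix is to trade powers for width of the base kernel.

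One more small point: the normalization $V_e=V_o$ does not come from rescaling alone. In the paper it follows from $K_\ell(1/2)=0$ when $\ell$ is even, which forces $\sum_j (-1)^j|v_j|=0$ and hence equal even/odd $\ell_1$-mass. You should note that whatever 1D kernel you use must be arranged to vanish at the half-period, or else handle the normalization differently.
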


We need the following ingredient from the proof of the one-dimensional lower bound in \cite{moitra2015super}.

\begin{defn}
	The Fejer kernel is given by \begin{equation}
		K_{\ell}(x) = \frac{1}{\ell^2}\sum^{\ell}_{j = -\ell}(\ell - |j|)e(jx) = \frac{1}{\ell^2}\left(\frac{\sin \ell\pi x}{\sin \pi x}\right)^2. \label{eq:fejer}
	\end{equation} We will denote the $r$-th power of $K_{\ell}(\cdot)$ by $K^r_{\ell}(\cdot)$.
\end{defn}

\begin{fact}
	$K_{\ell}$ is even and periodic with period 1. For $x\in[-1/2,1/2]$, $K_{\ell}(x)\le \frac{1}{4\ell^2 x^2}$.\label{fact:kell}
\end{fact}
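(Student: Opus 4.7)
The plan is to verify the three assertions separately, using whichever of the two equivalent representations of $K_\ell$ in \eqref{eq:fejer} is most convenient for each. (The equivalence of the Fourier-series form and the closed form is the standard squared-Dirichlet-kernel identity, which we may treat as built into the definition.)

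For evenness, I would work from the closed form: since $\sin$ is odd, $\sin(-\ell\pi x)/\sin(-\pi x) = \sin(\ell\pi x)/\sin(\pi x)$, and squaring this ratio preserves the equality, giving $K_\ell(-x) = K_\ell(x)$. (Alternatively, in the Fourier-series form the map $j \mapsto -j$ leaves the sum invariant, since the coefficient $\ell - |j|$ is an even function of $j$ and $e(jx) + e(-jx) = 2\cos(2\pi j x)$ is even in $x$.) For periodicity, the Fourier-series representation makes the claim immediate: each term $e(jx) = e^{2\pi i j x}$ with $j \in \mathbb{Z}$ satisfies $e(j(x+1)) = e(jx)\cdot e^{2\pi i j} = e(jx)$, so the finite linear combination $K_\ell$ inherits period $1$.

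For the pointwise bound on $[-1/2, 1/2]$, the key ingredient is the Jordan-type inequality $|\sin(\pi x)| \ge 2|x|$ on this interval. I would prove this from concavity of $\sin$ on $[0, \pi/2]$: the chord from $(0,0)$ to $(\pi/2, 1)$ lies below the graph, giving $\sin(y) \ge 2y/\pi$ for $y \in [0, \pi/2]$; taking $y = \pi|x|$ with $|x| \le 1/2$ and using oddness extends this to $|\sin(\pi x)| \ge 2|x|$ on the full interval. Combined with the trivial bound $|\sin(\ell\pi x)| \le 1$, the closed form then yields
\[
K_\ell(x) \;=\; \frac{1}{\ell^2}\cdot\frac{\sin^2(\ell\pi x)}{\sin^2(\pi x)} \;\le\; \frac{1}{\ell^2 (2|x|)^2} \;=\; \frac{1}{4\ell^2 x^2},
\]
with $x = 0$ vacuous since the right-hand side is $+\infty$ there. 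No step presents a genuine obstacle; this is a standard fact about the Fejer kernel, and the only real judgment call is to use the series representation for periodicity and the closed form for evenness and for the pointwise bound.
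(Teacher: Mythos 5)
Your proof is correct and takes essentially the same approach as the paper: the paper also derives evenness and periodicity directly from the representations in the definition and proves the bound via $\sin\pi x\ge 2x$ on $[0,1/2]$ together with $(\sin\ell\pi x)^2\le 1$. The only cosmetic difference is that you use the Fourier-series form for periodicity while the paper reads both evenness and periodicity off the closed form.
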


\begin{proof}
	That $K_{\ell}$ is even and periodic follow from the second definition of $K_{\ell}$ in \eqref{eq:fejer}. For the bound on $K_{\ell}$, we can use the elementary bounds $\sin\pi x\ge 2x$ for $x\in[0,1/2]$ and $(\sin \ell\pi x)^2\le 1$.
\end{proof}

\begin{figure}
	\centering
	\includegraphics[width=0.4\textwidth]{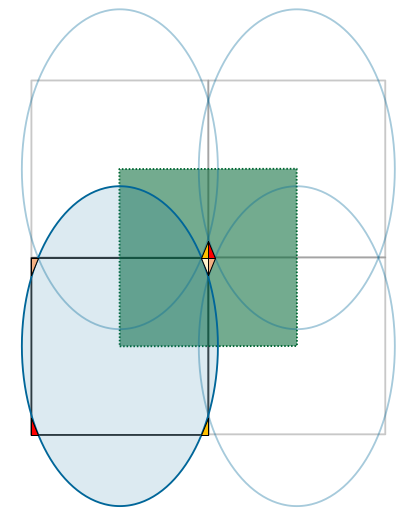}
	\caption{The squares correspond to periods of $K^r_{\ell}$, while the ellipses have major and minor axes of length $\lbound(1 - \epsilon)$ and $2(1 - \epsilon)$. The figure is centered around the origin, and the bottom-left ellipse $K$ is the set of points $\left(\frac{x_1}{m} - \frac{1}{2}, \frac{x_2\sqrt{3}}{m} - \frac{1}{2}\right)$ as $(x_1,x_2)$ ranges over the origin-centered $L_2$ ball of norm $1/\pi\sigma$. By appropriately translating the four quadrants of this ellipse by distances in $\Z^2$, we obtain overlapping regions whose union is given by $R\backslash S$, where $R = [-1/2,1/2]\times[-1/2,1/2]$ is given by the central square (green) and $S$ is the multi-colored set in the middle given by tranlates of the four connected components of $([-1,0]\times[-1,0])\backslash K$.}
	\label{fig:lowerbound}
\end{figure}

\begin{proof}[Proof of Lemma~\ref{lem:l2bound}]
	Let $\ell = 4/\epsilon$ and $r = (\sqrt{k} - 1)/2\ell = \Theta(\epsilon\sqrt{k})$, and assume without loss of generality that $\ell$ is even. Consider the function $H:\R^2\to\R$ given by \begin{equation}H(x_1,x_2) = K^r_{\ell}\left(\frac{x_1}{m} - \frac{1}{2}\right)\cdot K^r_{\ell}\left(\frac{x_2\sqrt{3}}{m} - \frac{1}{2}\right).\label{eq:H}\end{equation} We know that $\widehat{K}_{\ell}[t] = \frac{1}{\ell^2}\sum^{\ell}_{j = -\ell}(\ell - |j|)\delta(t - j)$, so $\widehat{K^r_{\ell}}[t] = \sum_{j\in\calJ}\alpha_j \delta(t - j)$ for nonnegative $\alpha_j$ which sum to 1. Assuming without loss of generality suppose that $m$ defined by \eqref{eq:mdef} is an odd integer, we conclude that for $\vec{t} = (t_1,t_2)\in\mathbb{C}^2$,
	\begin{align}
		\widehat{H}[\vec{t}] &= \sum_{j_1,j_2\in\calJ}h_{j_1,j_2} e^{-\pi i m(t_1+ t_2/\sqrt{3})} \delta(mt_1 - j_1)\cdot \delta(mt_2/\sqrt{3} - j_2) \\
		&= \sum_{j_1,j_2\in\calJ}h_{j_1,j_2} (-1)^{j_1+j_2}\delta(\vec{t} - \nu_{j_1,j_2}),
	\end{align} 
	where $h_{j_1,j_2} = \alpha'_{j_1}\alpha'_{j_2} \ge 0$, where $\alpha'_j\triangleq \alpha_j\cdot\bone{j = 0} + m\alpha_j\cdot \bone{j\neq 0}$. We will take \begin{equation}
		u_{j_1,j_2} \triangleq h_{j_1,j_2} (-1)^{j_1+j_2} \; \; \; \forall j_1,j_2\in\calJ.
	\end{equation} Observe that $\sgn(u_{j_1,j_2}) = (-1)^{j_1+j_2}$ as desired.

	By taking the inverse Fourier transform of $\hat{H}$, we get that \begin{equation}H(x_1,x_2) = \sum_{j_1,j_2}u_{j_1,j_2} e^{2\pi i \iprod{\nu_{j_1,j_2},\vec{x}}}.\label{eq:Huseful}\end{equation} To complete our proof, it therefore suffices to show that $H(\vec{x})\le \exp(-\Omega(\epsilon\sqrt{k}))$ for all $\norm{x} \le 1/\pi\sigma$.

	Let $R\subseteq \R^2$ denote the region $[-1/2,1/2]\times[-1/2,1/2]$. But as $x$ ranges over the $L_2$ ball of norm $1/\pi\sigma$, $\left(\frac{x_1}{m} - \frac{1}{2}, \frac{x_2\sqrt{3}}{m} - \frac{1}{2}\right)$ ranges over the interior of the ellipse centered at $(-1/2,1/2)$ with axes of length $\lbound(1-\epsilon)$ and $2(1 - \epsilon)$. For the subsequent discussion in this paragraph, we refer the reader to Figure~\ref{fig:lowerbound}. By periodicity of $K^r_{\ell}$, the image of this ellipse under $K^r_{\ell}$ is identical to the region $T\triangleq R\backslash S$, where $S$ is defined as follows. Denote the interior of the ellipse by $B_1$, and denote its translates along the vectors $(0,1)$, $(1,0)$, and $(1,1)$ by $B_2,B_3,B_4$. Define $S$ to be the set of points in $R$ that belong to none of $B_1,B_2,B_3,B_4$. 

	We claim that $S$ contains the origin-centered $L_{\infty}$ ball of radius $\epsilon/2\sqrt{2}$. Note that $S$ is given by translating the four connected components of $([-1,0]\times[-1,0])\backslash B_1$, which is nonempty because $B_1$ consists of points $(x_1,x_2)$ satisfying \begin{equation}
		\frac{4}{\lbound^2(1 - \epsilon)^2}(x_1 - 1/2)^2 + \frac{1}{(1 - \epsilon)^2}(x_2 - 1/2)^2 \le 1.\label{eq:ellipse}
	\end{equation} In particular, for $x_1,x_2\in[-1,0]$ satisfying $|x_1 - 1/2|, |x_2 - 1/2| > (1 - \epsilon)/2$, observe that the left-hand quantity in \eqref{eq:ellipse} satisfies \begin{equation}
		\frac{4}{\lbound^2(1 - \epsilon)^2}(x_1 - 1/2)^2 + \frac{1}{(1 - \epsilon)^2}(x_2 - 1/2)^2 > \left(\frac{4}{\lbound^2(1 - \epsilon)^2} + \frac{1}{(1-\epsilon)^2}\right)\cdot \frac{(1 - \epsilon)^2}{4} = 1,
	\end{equation}
	where the last step follows by our choice of $\lbound = \sqrt{4/3}$. We conclude that $S$ contains the origin-centered $L_{\infty}$ ball of radius $\epsilon/2$ as claimed.

	Now by Fact~\ref{fact:kell}, for any $(x_1,x_2)\in R$ we have that $K^r_{\ell}(x_1),K^r_{\ell}(x_2) \le \frac{1}{4^{2r}\ell^{4r}x^{4r}}$. So because $S$ contains the origin-centered $L_{\infty}$ ball of radius $\epsilon/2$, for $(x_1,x_2)\in T$ we conclude that $K^r_{\ell}(x_1), K^r_{\ell}(x_2) \le 1/4^{4r}$. We conclude that for $\norm{\vec{x}} \le 1/\pi\sigma$, $H(\vec{x}) \le \exp(-\Omega(r)) = \exp(-\Omega(\epsilon\sqrt{k}))$.


	The last step is just to scale $u$ so that \eqref{eq:alternating} holds. First note that by substituting $x = 0$ into \eqref{eq:Huseful}, we have that \begin{equation*}\sum_{j_1,j_2\in\calJ} u_{j_1,j_2} = H(0,0) = K^{2r}_{\ell}(-1/2) = \frac{1}{\ell^{4r}}\sin^{4r}(\ell\pi/2).\end{equation*} In particular, because we assumed at the outset that $\ell$ is even, $H(0,0) = 0$. Together with the fact that $\sgn(u_{j_1,j_2}) = (-1)^{j_1+j_2}$, we get the first equality in \eqref{eq:alternating}. Finally, note that $\sum |u_{j_1,j_2}| > 1$ because $\sum\alpha_j = 1$ and  $h_{j_1,j_2} \ge \alpha_{j_1}\alpha_{j_2}$ for all $j_1,j_2$. Thus, by multiplying $u$ by a factor of at most 2, we get the second equality in \eqref{eq:alternating}.
\end{proof}

\begin{figure}
	\centering
	\includegraphics[width=0.3\textwidth]{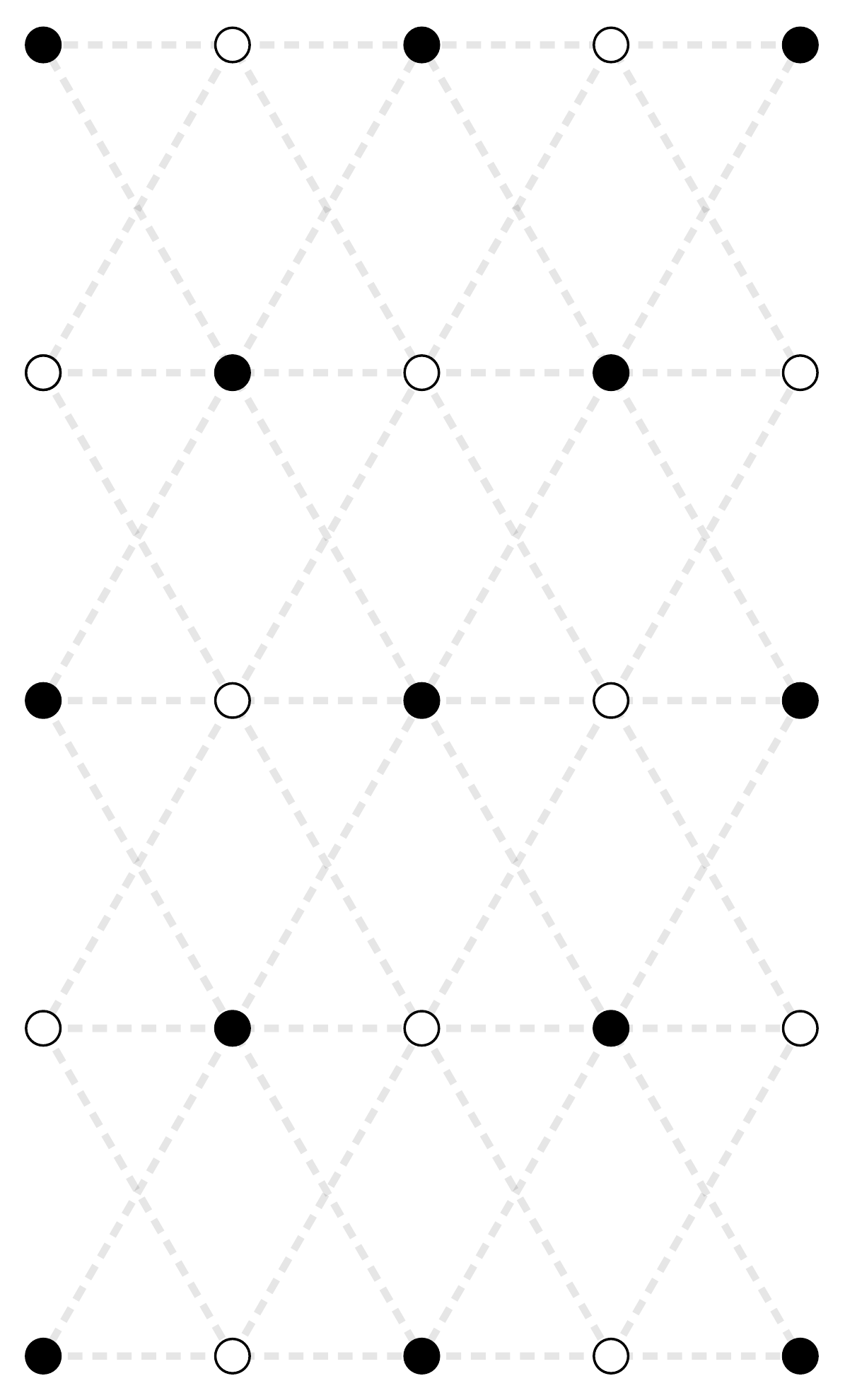}
	\caption{Locations of centers of Airy disks for the two mixtures in the lower bound instance of Theorem~\ref{thm:mainlowerbound} when $k = 25$. Black (resp. white) points correspond to centers for $\rho$ (resp. $\rho'$). The separation between any adjacent pair of identically colored points is $2/m = \Delta$, and the points of any particular color form a triangular lattice.}
	\label{fig:lattice}
\end{figure}


\section{Preliminaries}
\label{sec:prelims}

In this section we explain the terminology and notation that we will adopt in this work and also provide some technical preliminaries that will be useful later.

\paragraph{Generative Model}

We first formally define the family of distributions we study in this work.

\begin{restatable}{defn}{airy}[Superpositions of Airy Disks]
	A \emph{superposition of $k$ Airy disks} $\rho$ is a distribution over $\R^2$ specified by relative intensities $\lambda_1,...,\lambda_k\ge 0$ summing to 1, centers $\vmu_1,...,\vmu_k\in\R^2$, and an \emph{a priori} known ``spread parameter'' $\sigma> 0$. Its density is given by \begin{equation}\rho(\vec{x}) = \sum^k_{i=1}\lambda_i\cdot A_{\sigma}\left(\vec{x} - \vmu_i\right) \ \ \ \text{for} \ \ \ A_{\sigma}(\vec{z}) = \frac{1}{\pi \sigma^2}\left(\frac{J_1(\norm{\vec{z}}_2/\sigma)}{\norm{\vec{z}}_2/\sigma}\right)^2.\label{eq:modeldef}\end{equation} Note that the factor of $\frac{1}{\pi\sigma^2}$ in the definition of $A_{\sigma}$ is to ensure that $A_{\sigma}(\cdot)$ is a probability density.

	Also define \begin{equation}
		\Delta \triangleq \min_{i\neq j}\norm{\vmu_i - \vmu_j}_2 \ \ \ \text{and} \ \ \ \radius\triangleq \max_{i\in[k]}\norm{\vmu_i}_2.
	\end{equation}\label{defn:model}
\end{restatable}

It will be straightforward to extend the above model to take into account error stemming from the fact that the photon detector itself only has finite precision.

\begin{defn}[Discretization Error]
	Given discretization parameter $\varsigma>0$, we say $\vec{x}$ is a \emph{$\varsigma$-granular sample} from $\rho$ if it is produced via the following generative process: 1) a point $\vec{x}'$ is sampled from $\rho$, 2) $\vec{x}$ is obtained by moving $\vec{x}'$ an arbitrary distance of at most $\varsigma$.
\end{defn}

\paragraph{Fourier Transform} We will use the following convention in defining the Fourier transform. Given $f\in L_2(\R^d)$, \begin{equation}\widehat{f}(\omega) \triangleq \int_{\R^d} f(x) \cdot e^{-2\pi i\langle \omega,x\rangle} \, dx.\label{eq:fourier_def}\end{equation}

\paragraph{Optical Transfer Function}

The following is a standard calculation.

\begin{fact}
	$\hat{A}_{\sigma}[\omega] = \frac{2}{\pi}(\arccos(\pi \sigma \norm{\omega}) - \pi \sigma \norm{\omega}\sqrt{1 - \pi^2 \sigma^2 \norm{\omega}^2}$.\label{fact:fourier}
\end{fact}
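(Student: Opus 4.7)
The plan is to recognize $A_\sigma$ as (a constant times) the squared modulus of the Fourier transform of the indicator of a disk, and then use the convolution theorem to reduce the computation of $\widehat{A}_\sigma$ to computing the autocorrelation of a disk indicator, which is a classical lens-area calculation.

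First I would recall the standard jinc identity: if $B_r\subset \R^2$ is the origin-centered ball of radius $r$, then under the convention \eqref{eq:fourier_def}, $\widehat{\bone{B_r}}(\vec{z})= \frac{r\,J_1(2\pi r\norm{\vec{z}})}{\norm{\vec{z}}}$. Squaring and setting $r = \tfrac{1}{2\pi\sigma}$ gives
\begin{equation}
\bigl|\widehat{\bone{B_r}}(\vec{z})\bigr|^2 = \frac{1}{4\pi^2\sigma^2}\cdot \frac{J_1(\norm{\vec{z}}/\sigma)^2}{\norm{\vec{z}}^2},
\end{equation}
so that $A_\sigma(\vec{z}) = 4\pi\sigma^2\cdot \bigl|\widehat{\bone{B_r}}(\vec{z})\bigr|^2$ by a direct comparison with \eqref{eq:modeldef}. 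This rewriting is the key step since it reduces a Bessel-squared transform to a purely geometric computation.

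Next I would take the Fourier transform of both sides. Since $\bone{B_r}$ is real and even, $\widehat{\bone{B_r}}$ is real, so $|\widehat{\bone{B_r}}|^2 = \widehat{\bone{B_r}}\cdot \widehat{\bone{B_r}}$, and the convolution theorem (together with Fourier inversion applied to the even function $\bone{B_r}$) gives
\begin{equation}
\widehat{A}_\sigma(\omega) = 4\pi\sigma^2 \cdot \bigl(\bone{B_r}\star \bone{B_r}\bigr)(\omega).
\end{equation}
The right-hand convolution at $\omega$ equals the area of $B_r\cap (B_r+\omega)$, i.e., the area of the lens-shaped intersection of two disks of radius $r$ whose centers are at distance $\norm{\omega}$ apart.

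Finally I would invoke the classical formula for the area of such an intersection: for $\norm{\omega}\le 2r$, the area equals $2r^2\arccos(\norm{\omega}/(2r)) - \tfrac{\norm{\omega}}{2}\sqrt{4r^2-\norm{\omega}^2}$, and it is $0$ otherwise. Plugging in $r = \tfrac{1}{2\pi\sigma}$ gives $\norm{\omega}/(2r)=\pi\sigma\norm{\omega}$, $2r^2 = 1/(2\pi^2\sigma^2)$, and $\sqrt{4r^2-\norm{\omega}^2} = \tfrac{1}{\pi\sigma}\sqrt{1-\pi^2\sigma^2\norm{\omega}^2}$, so that multiplying by the prefactor $4\pi\sigma^2$ yields exactly $\tfrac{2}{\pi}\bigl(\arccos(\pi\sigma\norm{\omega}) - \pi\sigma\norm{\omega}\sqrt{1-\pi^2\sigma^2\norm{\omega}^2}\bigr)$, as claimed. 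The only real obstacle is bookkeeping the $2\pi$ factors in the Fourier convention and re-deriving (or citing) the lens-area formula; there is no analytic difficulty beyond that.
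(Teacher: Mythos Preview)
Your proposal is correct and takes essentially the same approach as the paper: both recognize $A_\sigma$ as a constant times the square of the Fourier transform of a disk indicator (the paper phrases this via the Hankel transform of $r\mapsto J_1(r)/r$), apply the convolution theorem to reduce $\widehat{A}_\sigma$ to the autocorrelation of a disk, and then invoke the lens-area formula. The only cosmetic difference is that the paper normalizes to $\sigma=1$ first, whereas you carry general $\sigma$ throughout.
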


\begin{proof}
	It is enough to show this for $\sigma = 1$. Let $G(\vec{x})\triangleq J_1(\norm{\vec{x}})/\norm{\vec{x}}$. It is a standard fact that the zeroeth-order Hankel transform of the function $r\mapsto J_1(r)/r$ is the indicator function of the interval $[0,1]$. Using our convention for the Fourier transform (see \eqref{eq:fourier_def}), this implies that $\hat{G}[\omega] = 2\pi\cdot\bone{\norm{\omega} \in [0,1/2\pi]}$. Because $A_1 = G^2/\pi$, by the convolution theorem we conclude that $\hat{A}_1$ is $\frac{1}{\pi}$ times the convolution of $\hat{G}$ with itself, which is just $4\pi^2$ times the convolution of the indicator function of the unit disk of radius $1/2\pi$ with itself. By elementary Euclidean geometry one can compute this latter function to be $\omega\mapsto \frac{1}{2\pi^2}\cdot \left(\arccos(\pi \norm{\omega}) - \pi \norm{\omega}\sqrt{1 - \pi^2 \norm{\omega}^2}\right)$, from which the claim follows.
\end{proof}

In optics, the two-dimensional Fourier transform of the point-spread function is called the \emph{optical transfer function}, a term we will occasionally use in the sequel.

Now note that by Fact~\ref{fact:fourier}, $\hat{A}_{\sigma}$ is supported only over the disk of radius $\frac{1}{\pi \sigma}$ centered at the origin in the frequency domain. In the spatial domain, this corresponds to a separation of $\pi \sigma$; this is the definition of the \emph{Abbe limit}.
We will need the following elementary estimate for $\widehat{A}[\omega]$:

\begin{fact}
	For all $\norm{\omega}_2 \le 1$, $\widehat{A}[\omega] \ge (1 - \norm{\omega}_2)^2$.
	\label{fact:ahatestimate}
\end{fact}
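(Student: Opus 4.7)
The plan is to reduce the two-dimensional inequality to a single-variable calculus problem. Writing $t = \pi\sigma\|\omega\|_2$ and using Fact~\ref{fact:fourier} (specialized to whatever normalization of $A$ is in force, which must be $\pi\sigma = 1$ so that $\widehat{A}$ is supported on the closed unit disk, making the hypothesis $\|\omega\|_2 \le 1$ the natural one), the claim becomes the one-dimensional statement
\[
  \phi(t) \;:=\; \tfrac{2}{\pi}\bigl(\arccos(t) - t\sqrt{1-t^2}\bigr) \;\ge\; (1-t)^2 \quad \text{for all } t \in [0,1].
\]
Direct substitution gives $\phi(0) = \tfrac{2}{\pi}\cdot \tfrac{\pi}{2} = 1 = (1-0)^2$ and $\phi(1) = 0 = (1-1)^2$, so equality already holds at both endpoints and it suffices to verify the inequality on the open interval.

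To do this I would study the difference $g(t) := \phi(t) - (1-t)^2$ via a single derivative computation. Using the identity $\frac{d}{dt}(\arccos(t) - t\sqrt{1-t^2}) = -2\sqrt{1-t^2}$ (which drops out after combining $(\arccos t)' = -1/\sqrt{1-t^2}$ with $(t\sqrt{1-t^2})' = (1-2t^2)/\sqrt{1-t^2}$), one obtains
\[
  g'(t) = 2(1-t) - \tfrac{4}{\pi}\sqrt{1-t^2} = 2\sqrt{1-t}\Bigl(\sqrt{1-t} - \tfrac{2}{\pi}\sqrt{1+t}\Bigr).
\]
The unique zero of $g'$ in $(0,1)$ is located by squaring and solving $\pi^2(1-t) = 4(1+t)$, which yields $t^\star = (\pi^2 - 4)/(\pi^2 + 4)$. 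Since $\sqrt{1-t} - \tfrac{2}{\pi}\sqrt{1+t}$ is strictly decreasing in $t$, $g'$ is positive on $(0, t^\star)$ and negative on $(t^\star, 1)$.

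Therefore $g$ rises strictly from $g(0) = 0$ to a positive maximum at $t^\star$ and then descends strictly to $g(1) = 0$, so $g \ge 0$ on $[0,1]$, which is the desired bound. There is no real obstacle in this proof: the computation is elementary calculus, and the only detail worth noting is the factorization of $g'$ that makes the sign analysis transparent and exhibits the critical point in closed form.
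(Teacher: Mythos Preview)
Your proof is correct. The paper itself states Fact~\ref{fact:ahatestimate} as an ``elementary estimate'' without supplying any argument, so there is no proof in the paper to compare against; your single-variable calculus reduction (computing $g'(t)=2\sqrt{1-t}\bigl(\sqrt{1-t}-\tfrac{2}{\pi}\sqrt{1+t}\bigr)$, locating the unique interior critical point, and using $g(0)=g(1)=0$) cleanly fills that gap.
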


\paragraph{Scaling} As the algorithms we give will be scale-invariant, we will assume that $\sigma = 1/\pi$ in the rest of this work and refer to $A_{1/\pi}$ as $A$.

\paragraph{Parameter Estimation Accuracy} The following terminology formalizes what it means for an algorithm to return an accurate estimate for the parameters of a superposition of Airy disks.

\begin{defn}
	$\left(\{\lambda^*_i\}_{i\in[k]}, \{\vmu^*_i\}_{i\in[k]}\right)$is an $(\epsilon_1,\epsilon_2)$-accurate estimate for the parameters of a superposition of $k$ Airy disks $\rho$ with centers $\{\vmu_i\}$ and relative intensities $\{\lambda_i\}$ if there exists a permutation $\tau$ for which \begin{equation}\norm{\vmu_i - \tilde{\vmu}_{\tau(i)}}_2\le \epsilon_1 \; \; \text{and} \; \; |\lambda_i - \tilde{\lambda}_{\tau(i)}|\le \epsilon_2\label{eq:define_accurate}\end{equation} for all $i\in[k]$.
\end{defn}

\paragraph{Generalized Eigenvalue Problems} Given matrices $M,N$, we will denote by $(M,N)$ the \emph{generalized eigenvalue problem} $Mx = \lambda Nx$. In any solution $(\lambda,x)$ to this, $\lambda$ is called a \emph{generalized eigenvalue} and $x$ is called a \emph{generalized eigenvector}.

\paragraph{Bessel Function Estimates} In Section~\ref{sec:lowerbound}, we need the following estimate for $J_{\nu}(z)$:

\begin{thm}[\cite{landau2000bessel}]
	For some absolute constant $\Cl[c]{landaubessel} = 0.7857...$, we have for all $\nu\ge 0$ and $r\in\R$ that $|J_{\nu}(r)|\le \Cr{landaubessel}|r|^{-1/3}$.
	\label{thm:besselbound}
\end{thm}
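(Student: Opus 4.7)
The plan is to combine three asymptotic regimes for $J_{\nu}(r)$, each governed by the relative sizes of $r$ and $\nu$, with the near-turning-point regime $r \approx \nu$ providing the binding $|r|^{-1/3}$ rate. By the absolute values in the statement we may assume $r > 0$, and by the trivial bound $|J_{\nu}(r)| \le 1$ for $\nu \ge 0$ we may further assume $r$ is bounded below by a constant (so that the claim is non-vacuous). In the regime $r \ge 2\nu + 1$ I would invoke the Hankel/Debye asymptotic $J_{\nu}(r) = \sqrt{2/(\pi r)}\cos(r - \nu\pi/2 - \pi/4) + O(r^{-3/2})$ with error bounds uniform in $\nu$, yielding $|J_{\nu}(r)| = O(r^{-1/2})$, which is strictly stronger than $r^{-1/3}$. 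In the regime $r \le \nu/2$, the series estimate $|J_{\nu}(r)| \le (r/2)^{\nu}/\Gamma(\nu+1)$ shows $J_{\nu}$ is exponentially small in $\nu$, and the claim holds with room to spare.

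The heart of the argument is the turning-point regime $\nu/2 \le r \le 2\nu$. Here I would use the Schl\"afli integral representation $J_{\nu}(r) = \frac{1}{\pi}\int_0^{\pi}\cos(r\sin\theta - \nu\theta)\,d\theta - \frac{\sin\nu\pi}{\pi}\int_0^{\infty}e^{-r\sinh t - \nu t}\,dt$, and note that the second term is $O(r^{-1})$. For the first integral, the phase $\phi(\theta) = r\sin\theta - \nu\theta$ has $\phi'''(\theta) = -r\cos\theta$, which is bounded below in magnitude by $r/2$ on any subinterval bounded away from $\theta = \pi/2$; van der Corput's lemma with $k = 3$ then yields an $O(r^{-1/3})$ bound on such a subinterval. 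On the complementary small arc around $\theta = \pi/2$, one has $|\phi''(\theta)| \approx r$, so van der Corput with $k = 2$ gives the stronger $O(r^{-1/2})$. Stitching together the pieces yields the uniform estimate $|J_{\nu}(r)| \le C r^{-1/3}$ for some absolute constant $C$.

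Extracting the sharp constant $0.7857\ldots$ is the main obstacle. The supremum of $r^{1/3}|J_{\nu}(r)|$ over $\nu, r > 0$ is attained in a scaling limit near the turning point, where Olver's uniform asymptotic expansion gives $J_{\nu}(\nu + \nu^{1/3}x) \to (2/\nu)^{1/3}\,\mathrm{Ai}(-2^{1/3}x)$ with controlled error. Substituting and optimizing over $x$ reduces the sharp constant to an explicit extremum involving the Airy function, whose numerical value matches $0.7857\ldots$. The delicate step is to rule out that a larger value is attained at some finite $\nu$; this requires monotonicity results on the envelope of $J_{\nu}$ combined with effective error bounds in Olver's expansion on any compact $\nu$-range. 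Since only the qualitative bound with \emph{some} absolute constant is needed for the lower-bound arguments in Section~\ref{sec:lowerbound}, one could alternatively drop the sharp constant and content oneself with the van der Corput estimate above.
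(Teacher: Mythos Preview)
The paper does not prove this statement at all: it is quoted as a black-box result from Landau~\cite{landau2000bessel} and used downstream only through the qualitative bound $|J_{\nu}(r)| = O(|r|^{-1/3})$. So there is no ``paper's own proof'' to compare against, and your attempt goes well beyond what the paper itself supplies.

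Your sketch is a reasonable route to the qualitative statement (with \emph{some} absolute constant in place of $0.7857\ldots$): the three-regime decomposition with van der Corput at the turning point is standard and would go through. One point to be careful about is your appeal to the Hankel asymptotic in the regime $r \ge 2\nu + 1$ ``with error bounds uniform in $\nu$''; the classical large-$r$ expansion is not uniform in $\nu$ without additional work, so you would either need to justify uniformity in that range or simply absorb this regime into the van der Corput argument as well. For the sharp constant, your plan via Olver's uniform Airy-type expansion is plausible in outline, but Landau's actual proof proceeds quite differently, using monotonicity properties derived directly from the Bessel differential equation rather than oscillatory-integral estimates; extracting the exact numerical value $0.7857\ldots$ by your stationary-phase route and then ruling out a larger extremum at finite $\nu$ would be delicate. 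Since, as you note, only the qualitative bound is needed for the $L^1$ estimates in Section~\ref{sec:lowerbound}, your van der Corput argument already suffices for the paper's purposes.
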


\paragraph{Matrices, Tensors, and Flattenings} Given a matrix $M\in\co^{a\times b}$, we will denote its $i$-th row vector by $M_i$, its $j$-th column vector by $M^j$, and its $(i,j)$-th entry by $M_{i,j}$.

Given a tensor $\vec{T}\in\co^{m_1\times m_2\times m_3}$ and matrices $M_1\in\co^{m_1\times m'_1}$, $M_2\in\co^{m_2\times m'_2}$, and $M_3\in\co^{m_3\times m'_3}$, define the flattening $\vec{T}(M_1,M_2,M_3)\in\co^{m'_1\times m'_2\times m'_3}$ by \begin{equation}
	\vec{T}(M_1,M_2,M_3)_{i_1,i_2,i_3} = \sum_{(j_1,j_2,j_3)\in[m_1]\times[m_2]\times[m_3]}\vec{T}_{m_1,m_2,m_3}\cdot (M_1)_{j_1,i_1}(M_2)_{j_2,i_2}(M_3)_{j_3,i_3}
\end{equation} for all $(i_1,i_2,i_3)\in[m'_1]\times[m'_2]\times[m'_3]$.

\paragraph{Miscellaneous Notation} Let $\S^{d-1}$ denote the Euclidean unit sphere. Given $r>0$, let $B^d(r)$ denote the Euclidean ball of radius $r$ centered at the origin in $\R^d$.


\section{Learning Superpositions of Airy Disks}
\label{sec:algo}

In this section we present the technical details of our algorithmic results. In Sections~\ref{subsec:learnwithotf} and \ref{subsec:kde}, we prove the following formal version of Theorem~\ref{thm:main_informal}.

\begin{thm}
	Let $\rho$ be a $\Delta$-separated superposition of $k$ Airy disks with minimum mixing weight $\lambda_{\min}$ and such that $\norm{\vmu_i}\le \radius$ for all $i\in[k]$.

	For any $\epsilon_1,\epsilon_2 > 0$, there is some $\alpha = \poly\left(\log 1/\delta,1/\lambda_{\min},1/\epsilon_1,1/\epsilon_2,\radius,(k\sigma/\Delta)^{k^2}\right)^{-1}$ for which there exists an algorithm with time and sample complexity $\poly(1/\alpha)$ which, given $\varsigma = \poly(\alpha)$-granular sample access to $\rho$, outputs an $(\epsilon_1,\epsilon_2)$-accurate estimate for the parameters of $\rho$ with probability at least $1 - \delta$.
	\label{thm:main}
\end{thm}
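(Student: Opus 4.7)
The plan is to follow the recipe sketched in Section~\ref{subsec:overview}: estimate the Fourier transform of $\rho$ from samples, deconvolve pointwise by the optical transfer function $\widehat{A}$, reduce to two one-dimensional super-resolution problems obtained by projecting along nearby directions, apply the univariate algorithm of \cite{moitra2015super} to each, and stitch the one-dimensional outputs together to recover the two-dimensional centers and intensities.

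First, since $|e^{-2\pi i\iprod{\omega,\vec{x}}}|\le 1$, standard empirical concentration applied to the empirical characteristic function yields, with $N=\poly(1/\alpha)$ samples, an $\alpha$-uniform approximation to $\widehat{\rho}(\omega)$ on any compact frequency set; a $\varsigma$-granular sample perturbs each summand multiplicatively by $1+O(\varsigma\norm{\omega})$, which is absorbed into $\alpha$ when $\varsigma=\poly(\alpha)$. Next, on the frequency disk of radius $1-\beta$ for some $\beta=1/\poly(\cdot)$ to be fixed later, Fact~\ref{fact:ahatestimate} gives $\widehat{A}(\omega)\ge \beta^2$, so pointwise division of our estimate of $\widehat{\rho}$ by $\widehat{A}$ provides uniform access, to accuracy $O(\alpha/\beta^2)$, to the function $g(\omega)\triangleq \sum_{j=1}^k \lambda_j e^{-2\pi i\iprod{\vmu_j,\omega}}$ on $B^2(1-\beta)$.

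Next, pick a direction $v\in\S^1$ such that the projections $\iprod{\vmu_j, v}$ are all distinct with minimum pairwise gap $\Delta_v=\Omega(\Delta/k^2)$; such a $v$ exists because a uniformly random direction fails with probability only $O(k^2\eta/\Delta)$ to achieve gap $\eta$, so a deterministic sweep over $\poly(k/\Delta)$ candidate directions suffices. Restricting $g$ to the line $\omega=tv$ with $|t|\le 1-\beta$ gives $g(tv)=\sum_j \lambda_j e^{-2\pi i\, t\iprod{\vmu_j,v}}$, a univariate $k$-spike exponential sum with separation $\Delta_v$; the algorithm of \cite{moitra2015super} then recovers the projected centers $\iprod{\vmu_j,v}$ and intensities $\lambda_j$ with sample complexity $\poly((k/\Delta_v)^k)=\poly((k/\Delta)^{O(k^2)})$. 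Repeat this for a second direction $v'$ at angle $\theta=\Theta(\Delta/(k^2\radius))$ from $v$, chosen so that $|\iprod{\vmu_j,v'}-\iprod{\vmu_j,v}|\le \theta\radius<\Delta_v/2$, which guarantees that each 1D output along $v'$ can be uniquely matched to its counterpart along $v$ by proximity (with the recovered intensities furnishing a consistency check). For each matched pair, invert the $2\times 2$ linear system determined by the projections onto $v$ and $v'$ to recover $\vmu_j$; its condition number is $O(1/\sin\theta)=\poly(k\radius/\Delta)$.

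The main obstacle lies in simultaneously balancing the three interacting parameters $\beta$, $\Delta_v$, and $\theta$. The deconvolution step inflates errors by a factor of $1/\beta^2$, so $\beta$ cannot be too small; the projection gap $\Delta_v$ controls both the sample complexity of the 1D super-resolution step (exponential in $k$ with base $1/\Delta_v$) and the largest perturbation $\theta$ for which matching remains unambiguous, so we want $\Delta_v$ as large as possible; and $\theta$ itself must be small enough for matching but bounded below for the $2\times 2$ reconstruction to be well-conditioned. Tracking how the target accuracy $(\epsilon_1,\epsilon_2)$ propagates backward through these steps to determine $\alpha$ is routine but delicate, and it is precisely this chain of dependencies that forces the $(k\sigma/\Delta)^{k^2}$ factor in the sample complexity.
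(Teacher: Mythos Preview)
Your approach is essentially identical to the paper's: estimate $\widehat{\rho}$ empirically, deconvolve pointwise by $\widehat{A}$, project onto two nearby directions, run the univariate matrix pencil method of \cite{moitra2015super} on each, match the two lists of one-dimensional outputs, and solve a $2\times 2$ linear system per center.

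The one place where your sketch has a small gap is the direction-selection step. Your claim that ``a deterministic sweep over $\poly(k/\Delta)$ candidate directions suffices'' does not come with any way to \emph{certify} which candidate achieves the required projected gap $\Delta_v$ without already knowing the centers, and you never say how to obtain the $1-\delta$ guarantee. The paper handles both issues at once: it picks $\omega_1$ uniformly at random (with $\omega_2$ at a prescribed small angle from it), so that a single run succeeds with probability $1-O(1/k)$; it then repeats the entire pipeline $O(\log(1/\delta))$ times and aggregates the outputs via a simple clustering routine (\textsc{Select}) to boost to $1-\delta$. As a minor bookkeeping point, the $k^2$ in the exponent already arises from the Vandermonde condition-number bound in the one-dimensional step (the paper shows $\sigma_{\min}(V)\ge \Delta_v^{\,\Theta(k^2)}/k^{O(k)}$), not from substituting $\Delta_v=\Theta(\Delta/k^2)$ into a $(\cdot)^k$ bound as you wrote.
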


Specifically, in Section~\ref{subsec:learnwithotf}, we show how one can use the matrix pencil method to recover the parameters for $\rho$ given oracle access to the \emph{optical transfer function}, i.e. the two-dimensional Fourier transform of $\rho$, up to some small additive error. In Section~\ref{subsec:kde}, we show how to implement this approximate oracle.

In Section~\ref{subsec:above}, we also use the oracle of Section~\ref{subsec:kde} to prove the following formal version of Theorem~\ref{thm:above_informal}.

\begin{thm}
Let $\rho$ be a $\Delta$-separated superposition of $k$ Airy disks with minimum mixing weight $\lambda_{\min}$ and such that $\norm{\vmu_i}\le \radius$ for all $i\in[k]$. Let \begin{equation}
	\factor = \frac{2j_{0,1}}{\pi} = 1.530\ldots,
	\label{eq:factor_def}
\end{equation} where $j_{0,1}$ is the first positive zero of the Bessel function of the first kind $J_0$. For any $\Delta > \factor\cdot \pi\cdot\sigma$, the following holds:

For any $\epsilon_1,\epsilon_2 > 0$, there is some $\alpha = 1/\poly\left(k,\radius,\sigma/\Delta,1/\lambda_{\min},1/\epsilon_1, 1/\epsilon_2, 1/(\Delta - \factor)\right)$ for which there exists an algorithm with time and sample complexity $\poly(1/\alpha)$ which, given $\poly(\alpha)$-granular sample access to $\rho$, outputs an $(\epsilon_1,\epsilon_2)$-accurate estimate for the parameters of $\rho$ with probability at least $4/5$.
\label{thm:above}
\end{thm}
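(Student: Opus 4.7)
\medskip

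\noindent\textbf{Proof proposal for Theorem~\ref{thm:above}.}

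The plan is to reduce the problem to recovering the parameters of a mixture of delta functions from low-frequency Fourier samples, and then to invoke a robust version of the tensor matrix pencil method from \cite{huang2015super}, whose stability will hinge on the ball-minorant answer to Question~\ref{question:harmonic} at $\Delta = \factor = 2j_{0,1}/\pi$. First, I would use the KDE-based estimator of Section~\ref{subsec:kde} to obtain noisy access to the optical transfer function $\widehat{\rho}(\omega)$ at any queried $\omega$ in the open disk of radius $1/(\pi\sigma)$, with $\poly(1/\alpha)$ samples sufficing for additive error $\alpha$. Next, restricting to frequencies bounded away from the boundary (say $\|\omega\| \le (1-\eta)/(\pi\sigma)$ for a small constant $\eta$ depending on $\Delta - \factor\pi\sigma$), Fact~\ref{fact:ahatestimate} gives $\widehat{A}_\sigma(\omega) \ge \eta^2$, so pointwise division yields $\alpha\eta^{-2}$-accurate access to $\widehat{D}(\omega) \triangleq \sum_{j=1}^k \lambda_j\, e^{-2\pi i \langle \vmu_j,\omega\rangle}$, i.e. to the Fourier transform of the underlying mixture of point masses.

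Having reduced to noisy Fourier access to $\widehat{D}$ on (a slightly contracted version of) the unit ball in the appropriate rescaling, the second step is to sample $\widehat{D}$ at $N = \poly(k)$ points $\omega_1,\ldots,\omega_N$ drawn i.i.d. from a suitable measure on the disk and assemble the three-mode tensor
\begin{equation}
\vec{T}_{a,b,c} = \widehat{D}(\omega_a+\omega_b+\omega_c)=\sum_{j=1}^k \lambda_j\, v_j(\omega_a)\,v_j(\omega_b)\,v_j(\omega_c),\qquad v_j(\omega)\triangleq e^{-2\pi i \langle \vmu_j,\omega\rangle},
\end{equation}
which has an explicit symmetric CP decomposition with factor matrix $V\in\co^{N\times k}$ whose $j$-th column is $(v_j(\omega_a))_a$. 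I would then run Jennrich's algorithm (or the generalized eigenvalue formulation thereof) on $\vec{T}$ to extract the columns of $V$, and hence the centers $\vmu_j$ and weights $\lambda_j$, up to permutation.

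For this to yield $(\epsilon_1,\epsilon_2)$-accurate parameter estimates from $\alpha$-noisy tensor entries, the key quantitative input is a lower bound on $\sigma_{\min}(V)$ that holds with high probability over the sampling. This is exactly where the ball minorant enters. Choosing the samples to be drawn so that (rescaled to the unit ball $B$) the law of $\omega_a$ has density proportional to the ball minorant $F$ of \cite{holt1996beurling,carneiro2017hilbert,gonccalves2018note} at scale $2j_{0,1}/\pi$, the expected Gram matrix is
\begin{equation}
\E\brk{V^* V}_{j,j'} \propto \int F(\omega)\, e^{-2\pi i\langle \vmu_j-\vmu_{j'},\omega\rangle}\,d\omega = \widehat{F}(\vmu_j-\vmu_{j'}).
\end{equation}
Since $\widehat{F}$ is supported in the ball of radius $\factor\pi\sigma$ and $\|\vmu_j-\vmu_{j'}\| > \factor\pi\sigma$ for $j\neq j'$, this expected Gram matrix equals $\widehat{F}(0)\cdot\Id_k$, so a matrix Chernoff bound with $N = \poly(k,1/(\Delta-\factor\pi\sigma))$ samples ensures $\sigma_{\min}(V)\ge \Omega(\sqrt{N})$ with high probability. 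Standard robust analyses of Jennrich's algorithm then translate this well-conditioning into the claimed $(\epsilon_1,\epsilon_2)$-accurate parameter recovery, and tracking errors back through the KDE bound yields the stated polynomial sample complexity.

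The main obstacle, which I would expect to occupy most of the technical effort, is the robustness analysis in the last step: one must show that the ball-minorant-based lower bound on $\sigma_{\min}(V)$ is preserved under the particular reshaping/flattening used by the tensor matrix pencil method of \cite{huang2015super}, and that the spectral perturbation bounds for the resulting generalized eigenvalue problem scale polynomially in $1/\lambda_{\min}$, $1/(\Delta-\factor\pi\sigma)$, and $\radius$. A secondary technical point is that the bias introduced by contracting the frequency disk by a factor $(1-\eta)$ (to avoid the vanishing of $\widehat{A}_\sigma$) slightly shrinks the effective separation threshold, which one handles by taking $\eta$ small enough that $(1-\eta)\Delta$ still exceeds $\factor\pi\sigma$, absorbing this slack into the polynomial factors.
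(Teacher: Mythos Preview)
Your overall architecture matches the paper closely: KDE-based access to $\widehat{\rho}$, division by $\widehat{A}_\sigma$ on a slightly contracted disk to obtain noisy Fourier samples of the spike mixture, random frequency sampling to build a Vandermonde-like factor matrix $V$, a tensor/Jennrich step, and a condition-number bound on $V$ via the ball minorant. The identification of the ball-minorant result as the crux is exactly right.

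There is, however, a genuine gap in how you propose to use the minorant. You suggest drawing the frequencies $\omega_a$ from a law \emph{proportional to the ball minorant $F$}, so that $\E[V^*V]_{j,j'}\propto \widehat{F}(\vmu_j-\vmu_{j'})$ is diagonal. But $F$ cannot serve as a sampling density: by construction $F(x)\le \bone{x\in B}$, so $F(x)\le 0$ for all $x\notin B$, and since $\widehat{F}$ is compactly supported, $F$ is real-analytic and hence not identically zero outside $B$. Thus $F$ is strictly negative on a set of positive measure and is not a probability density (nor even nonnegative). Importance-weighting by $F$ against some positive base measure would likewise introduce signed weights into the empirical Gram matrix, breaking the matrix Chernoff step.

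The fix, and what the paper does, is to sample the $\omega_a$ \emph{uniformly} from the ball and use the minorant purely as an analytic lower bound: since $F\le \bone{\cdot\in B}$ pointwise,
\[
\int_{B}\Bigl|\sum_j \lambda_j e^{-2\pi i\langle \vmu_j,\omega\rangle}\Bigr|^2\,\d\omega \;\ge\; \int_{\R^2}\Bigl|\sum_j \lambda_j e^{-2\pi i\langle \vmu_j,\omega\rangle}\Bigr|^2 F(\omega)\,\d\omega \;=\;\sum_{j,j'}\lambda_j\overline{\lambda_{j'}}\,\widehat{F}(\vmu_j-\vmu_{j'})\;=\;\widehat{F}(0)\,\|\lambda\|_2^2,
\]
the last equality holding because $\widehat{F}$ vanishes at all nonzero differences $\vmu_j-\vmu_{j'}$ once $\Delta>\factor\pi\sigma$. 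This lower-bounds $\sigma_{\min}$ of the \emph{expected} Gram matrix under uniform sampling, after which matrix Hoeffding (not Chernoff, since the summands are not PSD) gives the high-probability bound on $\sigma_{\min}(V)$. A minor structural difference: the paper builds an $m'\times m'\times 3$ tensor (two ``large'' modes indexed by the random $\omega^{(a)}$ and one small third mode $v^{(1)},v^{(2)},v^{(3)}$ along a single random direction) rather than a fully symmetric $N\times N\times N$ tensor; this keeps the sums $\omega^{(a)}+\omega^{(b)}+v^{(i)}$ inside the accessible frequency disk while still giving Jennrich's the eigenvalue separation it needs via anticoncentration of $\langle \vmu_j-\vmu_{j'},v\rangle$. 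Your symmetric-tensor variant can be made to work but requires each $\|\omega_a\|$ to be at most a third of the disk radius, which shrinks the effective ball in the minorant argument by a factor of three relative to the paper's factor of two---still fine for a $\poly$ bound, but worth noting.
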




\subsection{Reduction to 2D Superresolution}
\label{subsec:reduce}

In this section we reduce the problem of learning superpositions of Airy disks to the problem of learning a convex combination of Dirac deltas given the ability to make noisy, band-limited Fourier measurements.

Formally, suppose we had access to the following oracle:

\begin{defn}
	An \emph{$m$-query, $\eta$-approximate OTF oracle} $\O$ takes as input a frequency $\omega\in\R^2$ and, given frequencies $\omega_1,...,\omega_m$, outputs numbers $u_1,...,u_m\in\R$ for which $|u_j - \widehat{\rho}[\omega_j]|\le\eta$ for all $j\in[m]$.
\end{defn}

\begin{remark}
	As we will see in Section~\ref{subsec:kde}, $\O$ will be constructed by sampling some number of points from $\rho$ and computing empirical averages. The number $m$ and accuracy $\eta$ of queries that $\O$ can answer dictates the sample complexity of this procedure. As we will see in the proofs of Lemma~\ref{lem:learnairydisks} and Lemma~\ref{lem:tensorresolve_correct} below, the $m$ that we need to take will be small, so the reader can ignore $m$ and pretend it is unbounded for most of this section.
\end{remark}

Given $\omega\in\R^2$, the Fourier transform of $\rho$ evaluated at frequency $\omega$ is given by \begin{equation}
\widehat{\rho}[\omega] = \sum^k_{j=1}\lambda_j\widehat{A}[\omega]e^{-2\pi i\langle \vmu_j,\omega\rangle},
	\label{eq:fourier}
\end{equation} where for $\omega = (r\cos\theta, r\sin\theta)$, we have by Fact~\ref{fact:fourier} that \begin{equation}
	\widehat{A}[\omega] = \frac{2}{\pi}(\arccos(r) - r\sqrt{1-r^2}).\label{eq:Afourier}
\end{equation} In particular, $\widehat{A}[\omega]$ only depends on $r = \norm{\omega}$ (because $A(\cdot)$ is radially symmetric), so henceforth regard $\widehat{A}$ as a function merely of $r$.

Define \begin{equation}F(\omega) = \sum^k_{j=1}\lambda_j e^{-2\pi i\langle \vmu_j,\omega\rangle}.\label{eq:mixture}\end{equation} This is a trigonometric polynomial to which we have noisy pointwise access using $\O$:

\begin{lem}
	Let $0<r<1$. With an $\eta$-approximate OTF oracle $\O$, on input $\omega\in B^2(r)$ we can produce an estimate of $F(\omega)$ to within $\eta/\widehat{A}[r]$ additive error.\label{lem:Fapproximate}
\end{lem}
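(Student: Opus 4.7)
The plan is to exploit the multiplicative factorization $\widehat{\rho}[\omega] = \widehat{A}[\omega] \cdot F(\omega)$ already recorded in \eqref{eq:fourier}. Since $\widehat{A}[\omega]$ depends only on $\|\omega\|$ and is a nonzero, known function there, we can pointwise deconvolve: query $\O$ at $\omega$ to obtain some $u$ with $|u - \widehat{\rho}[\omega]| \le \eta$, and return $u / \widehat{A}[\|\omega\|]$ as our estimate of $F(\omega)$. The resulting additive error is exactly $|u - \widehat{\rho}[\omega]|/\widehat{A}[\|\omega\|] \le \eta/\widehat{A}[\|\omega\|]$.

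To turn this into the claimed bound of $\eta/\widehat{A}[r]$ whenever $\omega \in B^2(r)$, I need to verify that $\widehat{A}$, viewed as a function of the radial variable on $[0,1)$, is monotonically nonincreasing, so that $\widehat{A}[\|\omega\|] \ge \widehat{A}[r]$ whenever $\|\omega\| \le r < 1$. This follows from a one-line differentiation of the formula $\widehat{A}[r] = \tfrac{2}{\pi}(\arccos r - r\sqrt{1-r^2})$ in Fact~\ref{fact:fourier}, whose derivative simplifies to $-\tfrac{4}{\pi}\sqrt{1-r^2} \le 0$ on $[0,1]$. Since $r < 1$ strictly, $\widehat{A}[r] > 0$ as well, so the division is well-defined.

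There is essentially no obstacle here; the only nuisance is being careful that $\widehat{A}$ is being treated as a radial function, which is already flagged in the discussion immediately following \eqref{eq:Afourier}. Combining the two ingredients gives the $\eta/\widehat{A}[r]$ bound, and the procedure uses exactly one query to $\O$, which is compatible with the $m$-query interface for any $m \ge 1$.
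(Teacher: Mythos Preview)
Your proposal is correct and follows essentially the same approach as the paper: divide the oracle's output by $\widehat{A}[\|\omega\|]$ and then use that $\widehat{A}$ is decreasing on $[0,1]$ to replace $\widehat{A}[\|\omega\|]$ by $\widehat{A}[r]$ in the denominator. The only difference is that you explicitly verify the monotonicity by differentiating, whereas the paper simply asserts it.
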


\begin{proof}
	By dividing by $\widehat{A}[\omega]$ on both sides of \eqref{eq:fourier}, we get that \begin{equation*}
		\frac{\widehat{\rho}[\omega]}{\widehat{A}[\norm{\omega}]} = \sum^k_{j=1}\lambda_j e^{-2\pi i\langle \vmu_j,\omega\rangle},
	\end{equation*} so given that $\O$, on input $\omega$, outputs $u\in\R$ satisfying $|u - \widehat{\rho}[\omega]|\le \eta$, we have that \begin{equation*}\left|\frac{u}{\widehat{A}[\norm{\omega}]} - F(\omega)\right| \le \frac{\eta}{\min_{0\le r'\le r}\widehat{A}[r']} = \frac{\eta}{\widehat{A}[r]},\end{equation*} where the last step uses the fact that $\widehat{A}[\cdot]$ is decreasing on the interval $[0,1]$.
\end{proof}

So concretely, given an $\eta$-approximate OTF oracle, we have reduced the problem of learning superpositions of Airy disks to that of recovering the locations of $\{\vmu_j\}$ given the ability to query $F(\omega)$ at arbitrary frequencies $\omega$ for which $\norm{\omega}_2 < 1$ to witin additive accuracy $\eta/\widehat{A}[\norm{\omega}_2]$.

Lastly, for reasons that will become clear in subsequent sections (see e.g. \eqref{eq:radius}), it will be convenient to assume that $\radius\le 1/3$. This is without loss of generality, as otherwise, we can scale the data down by a factor of $3\radius$ so that they are now i.i.d. samples from the superposition of Airy disks with density $\rho'(\vx)\triangleq \sum^k_{j=1}\lambda_j\cdot A_{1/\radius}\left(\vx - \vmu_j/\radius\right)$. Define the rescaled centers $\vmu'_j \triangleq \vmu_j/\radius$ and note that by assumption, $\|\vmu'_j\|_2 \le 1/3$ for all $j\in[k]$.

The Fourier transform of $\rho'$ is then given by $\hat{\rho}'(\omega) = \hat{A}_{1/\radius}[\omega]\sum^k_{j=1}\lambda_j e^{-2\pi i \langle \vmu'_j, \omega\rangle}$, so by the proof of Lemma~\ref{lem:Fapproximate} we conclude that with an $\eta$-approximate OTF oracle for $\rho$, for any $0<r<1$ on input $\omega\in B^2(r\cdot\radius)$ we can produce an estimate of $\sum^k_{j=1}\lambda_j e^{-2\pi i \langle \vmu'_j, \omega\rangle}$ to within $\eta/\hat{A}[r]$ additive error. Recovering the centers $\{\vmu'_j\}$ to within additive error $\epsilon$ then translates to recovering the centers $\{\vmu_j\}$ to within additive error $3\radius \epsilon$. For this reason, we will henceforth assume that $\radius \le 1/3$.

\subsection{Learning via the Optical Transfer Function}
\label{subsec:learnwithotf}


Our basic approach is as follows. To solve the superresolution problem of Section~\ref{subsec:reduce}, we will project in two random, correlated directions $\omega_1,\omega_2\in\R^2$ and solve the resulting one-dimensional superresolution problems via matrix pencil method (see \textsc{ModifiedMPM}) to recover the projections of $\vmu_1,...,\vmu_k$ in the directions $\omega_1$ and $\omega_2$, as well as the relative intensities $\lambda_1,...,\lambda_k$. From these projections we can then recover the actual centers for $\rho$ by solving a linear system (\textsc{PreConsolidate}). Such an approach already achieves constant success probability, and we can amplify this by repeating and running a simple clustering algorithm (see \textsc{Select}). The full specification of the algorithm is given as \textsc{LearnAiryDisks}.

\subsubsection{Learning in a Random Direction}
\label{subsubsec:randomdirection}

Fix a unit vector $v\in\S^1$. We first show how to leverage Lemma~\ref{lem:Fapproximate} and the matrix pencil method to approximate the projection of $\vmu_1,...,\vmu_k$ along $v$.

By the discussion at the end of Section~\ref{subsec:reduce}, we may assume $\|\vmu_i\|_2 \le 1/2$ for all $i\in[k]$, so $\norm{\vmu_i - \vmu_j}_2 \le 1$ for all $i\neq j$. For $j\in[k]$, let $m_j = \langle\vmu_j, v\rangle$ and $\alpha_j = e^{2\pi i \cdot (m_j /4k)}$. In this section we will assume that $m_j\neq 0$ for all $j\in[k]$

For $\ell\in\Z_{\ge 0}$, let \begin{equation}v_{\ell} = F\left(\frac{\ell v}{4 k}\right) = \sum^k_{j=1}\lambda_j\alpha_j^{\ell}.\label{eq:vell}\end{equation} Note that $v_0 = F(\vec{0}) = \sum_j \lambda_j = 1$. Also note that we do not have access to $\alpha_1,...,\alpha_k$ and would like to recover $m_1,m_2$ given (noisy) access to $\{v_{\ell}\}_{0\le\ell\le 2k-1}$. 

Consider the generalized eigenvalue problem $(VD_{\lambda}V^{\top}, VD_{\lambda}D_{\alpha}V^{\top})$ where \begin{equation}V = \begin{pmatrix}
	1 & 1 & \cdots & 1 \\
	\alpha_1 & \alpha_2 & \cdots & \alpha_k \\
	\vdots & \vdots & \ddots & \vdots \\
	\alpha^{k-1}_1 & \alpha^{k-1}_2 & \cdots & \alpha^{k-1}_k
\end{pmatrix}, \ \ \ \ \ D_{\lambda} = \diag(\lambda), \ \ \ \ \ D_{\alpha} = \diag(\alpha).\end{equation} The following standard facts are key to the matrix pencil method:

\begin{obs}
	The generalized eigenvalues of $(VD_{\lambda}V^{\top}, VD_{\lambda}D_{\alpha}V^{\top})$ are exactly $\alpha_1,...,\alpha_k$.
\end{obs}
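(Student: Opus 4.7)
The plan is to reduce this generalized eigenvalue problem to a diagonalization of $D_\alpha$ via a Vandermonde factorization. First I would verify that each of $V$, $D_\lambda$, and $D_\alpha$ is invertible. The $k \times k$ Vandermonde matrix $V$ on nodes $\alpha_1,\ldots,\alpha_k$ is nonsingular provided these nodes are pairwise distinct, which holds as long as the projections $m_j = \langle \vmu_j, v\rangle$ are pairwise distinct modulo $4k$ — a genericity condition on $v$ that is straightforward to arrange using the $\Delta$-separation of the centers and the rescaling $\|\vmu_i\|_2 \le 1/2$ assumed at the end of Section~\ref{subsec:reduce}. Invertibility of $D_\lambda$ follows from $\lambda_j > 0$, and invertibility of $D_\alpha$ from $|\alpha_j| = 1$.

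The heart of the argument is the algebraic identity
\begin{equation*}
V D_\lambda V^\top - \mu\, V D_\lambda D_\alpha V^\top \;=\; V D_\lambda\,(I - \mu D_\alpha)\,V^\top,
\end{equation*}
which exhibits the pencil as a conjugation of the diagonal matrix $I - \mu D_\alpha$ by the invertible matrices $V D_\lambda$ (on the left) and $V^\top$ (on the right). Since conjugation by invertibles preserves singularity, the pencil is singular precisely when $I - \mu D_\alpha$ is, i.e.\ exactly when $\mu\alpha_j = 1$ for some $j$; equivalently, $(V D_\lambda D_\alpha V^\top)^{-1}(V D_\lambda V^\top) = V^{-\top} D_\alpha^{-1} V^\top$ is similar to $D_\alpha^{-1}$ and therefore has spectrum $\{1/\alpha_j\}_j = \{\bar\alpha_j\}_j$. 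Up to the convention used to orient the pencil $(M,N)$ (swapping the two slots), these are exactly the $\alpha_1,\ldots,\alpha_k$ claimed in the statement, and the corresponding generalized eigenvectors are the columns of $V^{-\top}$ — a fact that will be important for extracting the relative intensities $\lambda_j$ in the subsequent steps of the matrix pencil method.

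There is no real obstacle here beyond the bookkeeping described above: the content is immediate linear algebra once the Vandermonde factorization is written down. The genuinely delicate part of the overall approach is not this exact-arithmetic observation but the perturbation analysis that follows, which must show that when the pencil is formed from noisy outputs of the approximate OTF oracle $\O$ rather than from the exact values $v_\ell$, its generalized eigenvalues still cluster near $\alpha_1,\ldots,\alpha_k$ tightly enough to recover the projected centers $m_j$ — and thence $\vmu_1,\ldots,\vmu_k$ — to the desired accuracy. That is the business of the following subsection rather than of this observation.
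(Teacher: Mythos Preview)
Your argument is correct and is exactly the standard justification; the paper gives no proof here, treating the claim as a routine observation about the matrix pencil method. The factorization $VD_\lambda V^\top - \mu\, VD_\lambda D_\alpha V^\top = VD_\lambda(I - \mu D_\alpha)V^\top$ together with invertibility of $V$ and $D_\lambda$ is all that is needed.

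You are also right to flag the orientation of the pencil. Under the paper's stated convention $Mx = \lambda Nx$ for $(M,N)$, the generalized eigenvalues of $(VD_\lambda V^\top, VD_\lambda D_\alpha V^\top)$ are $\alpha_j^{-1} = \overline{\alpha_j}$ rather than $\alpha_j$. Since $|\alpha_j| = 1$ this is harmless for the algorithm --- it only flips the sign of the recovered argument, which is absorbed downstream --- but the observation as literally written is off by this conjugation, and your diagnosis is the correct one.
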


\begin{obs}
	\begin{equation}VD_{\lambda}V^{\top} = \begin{pmatrix} v_0 & v_1 & \cdots & v_{k-1} \\
	v_1 & v_2 & \cdots & v_k \\
	\vdots & \vdots & \ddots & \vdots \\
	v_{k-1} & v_k & \cdots & v_{2k-2} \end{pmatrix} \ \ \ VD_{\lambda}D_{\alpha}V^{\top} = \begin{pmatrix} v_1 & v_2 & \cdots & v_{k} \\
	v_2 & v_3 & \cdots & v_{k+1} \\
	\vdots & \vdots & \ddots & \vdots \\
	v_{k} & v_{k+1} & \cdots & v_{2k-1} \end{pmatrix}.\end{equation}
\end{obs}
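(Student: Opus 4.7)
The plan is to verify both Hankel identities by direct entry-by-entry computation using the Vandermonde structure of $V$ together with the definition $v_\ell = \sum_{j=1}^k \lambda_j \alpha_j^\ell$ from \eqref{eq:vell}. No nontrivial tool is required; both identities follow from recognizing that the $(i,j)$-entry of each product depends only on $i+j$, which makes the products Hankel matrices whose entries are precisely the $v_\ell$'s.

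First I would note that, by inspection of the explicit form of $V$, its $(a,b)$-entry is $\alpha_b^{a-1}$, so $V^\top$ has $(a,b)$-entry $\alpha_a^{b-1}$. Since $D_\lambda = \diag(\lambda_1,\ldots,\lambda_k)$, the product $VD_\lambda$ has $(a,b)$-entry $\lambda_b \alpha_b^{a-1}$. Multiplying on the right by $V^\top$ then gives
\begin{equation}
(VD_\lambda V^\top)_{i,j} = \sum_{\ell=1}^k (VD_\lambda)_{i,\ell}(V^\top)_{\ell,j} = \sum_{\ell=1}^k \lambda_\ell \alpha_\ell^{i-1}\alpha_\ell^{j-1} = \sum_{\ell=1}^k \lambda_\ell \alpha_\ell^{i+j-2} = v_{i+j-2},
\end{equation}
which exactly matches the first claimed matrix when rows and columns are indexed from $1$ to $k$, so that $i+j-2$ ranges over $\{0,1,\ldots,2k-2\}$.

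For the second identity, I would insert the diagonal factor $D_\alpha = \diag(\alpha_1,\ldots,\alpha_k)$ between $D_\lambda$ and $V^\top$. This simply multiplies each term in the previous sum by one additional factor of $\alpha_\ell$, yielding
\begin{equation}
(VD_\lambda D_\alpha V^\top)_{i,j} = \sum_{\ell=1}^k \lambda_\ell \alpha_\ell^{i-1}\alpha_\ell \alpha_\ell^{j-1} = \sum_{\ell=1}^k \lambda_\ell \alpha_\ell^{i+j-1} = v_{i+j-1},
\end{equation}
which matches the second Hankel matrix, with entries now indexed by $v_1,\ldots,v_{2k-1}$.

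I do not foresee any real obstacle: these are classical Vandermonde--Hankel factorizations that underlie Prony's method and the matrix pencil method. The only things to keep careful track of are the row/column indexing and the correct power of $\alpha_\ell$ in each summand; once those are in place the identities are immediate, which is why the statement is phrased as an observation rather than a lemma. As a sanity check, note that the access pattern is consistent with the setup of Section~\ref{subsubsec:randomdirection}, since reading off $v_0,\ldots,v_{2k-1}$ corresponds to querying $F$ at frequencies $\ell v/(4k)$ for $0 \le \ell \le 2k-1$, all of which have norm at most $1/2 < 1$ and therefore lie within the region where Lemma~\ref{lem:Fapproximate} provides a valid approximate oracle.
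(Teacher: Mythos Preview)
Your proof is correct and is exactly the standard entry-by-entry verification one would expect; the paper itself states this as an observation without proof, so there is nothing to compare against beyond noting that your argument is the intended one-line check.
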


By Lemma~\ref{lem:Fapproximate}, in reality we only have $\eta'_{\ell}$-approximate access to each $v_{\ell}$, where \begin{equation}\eta'_{\ell}\le \frac{\eta}{\widehat{A}[\ell/4k]},\label{eq:etaell}\end{equation} so we must instead work with the generalized eigenvalue problem $(VD_{\lambda}V^{\top} + E, VD_{\lambda}D_{\alpha}V^{\top} + F)$, where the $(i,j)$-th entry of $E$ (resp. $F$) is the noise $\eta'_{i+j-2}$ (resp. $\eta'_{i+j-1}$) in the observation of $v_{i+j-2}$ (resp. $v_{i+j-1}$).

If $V$ is well-conditioned, one can apply standard perturbation bounds to argue that the solutions to this generalized eigenvalue problem are close to those of the original $(VD_{\lambda}V^{\top}, VD_{\lambda}D_{\alpha}V^{\top})$. Moreover, given approximations $\widehat{\alpha}_1,...,\widehat{\alpha}_k$ to these generalized eigenvalues, we can find approximations $\widehat{\lambda}_1,...,\widehat{\lambda}_k$ to $\lambda_1,...,\lambda_k$ by solving the system of equations $\vec{v} = \widehat{V}\vec{\lambda}$, where $\vec{v} = (v_0,...,v_{k-1})$, $\vec{\lambda} = (\widehat{\lambda}_1,...,\widehat{\lambda}_k)$, and \begin{equation}\widehat{V} = \begin{pmatrix}
	1 & 1 & \cdots & 1 \\
	\widehat{\alpha}_1 & \widehat{\alpha}_2 & \cdots & \widehat{\alpha}_k \\
	\vdots & \vdots & \ddots & \vdots \\
	\widehat{\alpha}^{k-1}_1 & \widehat{\alpha}^{k-1}_2 & \cdots & \widehat{\alpha}^{k-1}_k
\end{pmatrix}.\end{equation} The formal specification of the matrix pencil method algorithm \textsc{ModifiedMPM} that we use is given in Algorithm~\ref{alg:mpm}.

\begin{algorithm}\caption{\textsc{ModifiedMPM}}\label{alg:mpm}
\begin{algorithmic}[1]
	\State \textbf{Input}: $\omega\in\S^1$, $\eta$-approximate OTF oracle $\mathcal{O}$
	\State \textbf{Output}: Estimates $(\widehat{\lambda}_1,...,\widehat{\lambda}_k)$ for the mixing weights and $(\widehat{m}_1,...,\widehat{m}_k)$ for the centers of $\rho$ projected in direction $\omega$
		\State Define $\widehat{v}_0 = 1$.
		\State For $0\le \ell \le 2k - 1$: invoke $\O$ on input $\frac{\ell\omega}{4 k}$ to produce $u_{\ell}\in\R$. Compute $\widehat{v}_{\ell} \triangleq \frac{u_{\ell}}{\widehat{A}[\ell/4 k]}$.
		\State Form the matrices \begin{equation}X\triangleq\begin{pmatrix}
			\widehat{v}_0 & \cdots & \widehat{v}_{k-1} \\
			\vdots & \ddots & \vdots \\
			\widehat{v}_{k-1} & \cdots & \widehat{v}_{2k-2}
		\end{pmatrix} \ \ \ \ \ Y\triangleq \begin{pmatrix}
			\widehat{v}_1 & \cdots & \widehat{v}_k \\
			\vdots & \ddots & \vdots \\
			\widehat{v}_k & \cdots & \widehat{v}_{2k-1}
		\end{pmatrix}\end{equation}
		\State Solve the generalized eigenvalue problem $(X,Y)$ to produce generalized eigenvalues $\widehat{\alpha}_1,\widehat{\alpha}_2$.
		\State For $i = 1,2$, let $\widehat{m}_i$ be the argument of the projection of $\widehat{\alpha}_i$ onto the complex unit disk.
		\State Form the matrix \begin{equation}\widehat{V} = \begin{pmatrix}
			1 & 1 & \cdots & 1 \\
			\widehat{\alpha}_1 & \widehat{\alpha}_2 & \cdots & \widehat{\alpha}_k \\
			\vdots & \vdots & \ddots & \vdots \\
			\widehat{\alpha}^{k-1}_1 & \widehat{\alpha}^{k-1}_2 & \cdots & \widehat{\alpha}^{k-1}_k
		\end{pmatrix}.\end{equation}
		\State Solve for $\widehat{\lambda} = (\widehat{\lambda}_1,...,\widehat{\lambda}_k)$ such that $\widehat{V}\widehat{\lambda} = (\widehat{v}_0,...,\widehat{v}_{k-1})$.
		\State Output $\{\widehat{\lambda}_i\}_{i\in[k]}$ and $\{\widehat{m}_i\}_{i\in[k]}$.
\end{algorithmic}
\end{algorithm}

The following theorem, implicit in the proof of Theorem 2.8 in \cite{moitra2015super}, makes the above reasoning precise. Henceforth, let $\kappa(\Delta')$ and $\sigma_{\min}(\Delta')$ respectively denote the condition number and minimum singular value of $V$ when $\frac{m_i}{4 k}, \frac{m_j}{4 k}$ have minimum separation $\Delta'$ for all $i\neq j$, and define $\lambda_{\min} = \min_i \lambda_i$, $\lambda_{\max} = \max_i \lambda_i$.

\begin{thm}[\cite{moitra2015super}]
	Suppose $\frac{m_1}{4 k},\frac{m_2}{4 k}\in[-1/4,1/4]$ have separation at least $\Delta'$ and we are given $\eta'_{\ell}$-close estimates to $v_{\ell}$ for $0\le\ell\le 2k - 1$.

	Define \begin{equation}\gamma = \frac{2\norm{\eta'}_2}{\lambda_{\min}}\left(4\kappa(\Delta')^2\cdot\frac{\lambda_{\max}}{\lambda_{\min}} + \frac{1}{\sigma_{\min}(\Delta')^2}\right) \; \; \text{and} \; \; \zeta = O\left(\frac{2\gamma\lambda_{\max} + \norm{\eta'}_2}{\sigma_{\min}(\Delta' - 2\gamma)}\right)\end{equation}

	Then if $\norm{E} + \norm{F} < \sigma_{\min}(\Delta')^2\lambda_{\min}$ and $\gamma < \Delta'/4$, \textsc{ModifiedMPM} produces estimates $\{\widehat{\lambda}_i\}$ for the mixing weights and estimates $\{\widehat{m}_i\}$ for the projected centers such that for some permutation $\tau$: \begin{equation}|m_i-\widehat{m}_{\tau(i)}| \le 8\gamma \; \; \text{and} \; \; |\lambda_i - \widehat{\lambda}_i|\le \zeta.\end{equation}\label{thm:stability} for all $i\in[k]$.
\end{thm}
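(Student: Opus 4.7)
The plan is to reduce the theorem to two classical ingredients: a perturbation bound for the generalized eigenvalues of a regular matrix pencil, which will control how well the computed $\widehat{\alpha}_i$ approximate the true $\alpha_i$, and a conditioning estimate for a Vandermonde solve whose nodes are the perturbed $\widehat{\alpha}_i$.

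First I would observe that in the noiseless case the pencil $(V D_\lambda V^\top,\, V D_\lambda D_\alpha V^\top)$ is regular: the columns of $V$ are Vandermonde vectors at the distinct nodes $\alpha_1,\ldots,\alpha_k$, so $\sigma_{\min}(V)\ge \sigma_{\min}(\Delta')>0$ under the separation hypothesis, and both $D_\lambda$ and $D_\alpha$ are invertible since the $\lambda_j$ are positive and the $\alpha_j$ lie on the unit circle. A direct verification then shows that the generalized eigenvalues of this pencil are exactly $\alpha_1,\ldots,\alpha_k$, with right eigenvectors given by the columns of $(V^\top)^{-1}$. The perturbations $E,F$ are Hankel and have operator norm bounded by their Frobenius norms, which in turn are bounded by $\norm{\eta'}_2$; the hypothesis $\norm{E}+\norm{F}<\sigma_{\min}(\Delta')^2\lambda_{\min}$ guarantees that the perturbed pencil $(X,Y)$ is also regular, so its eigenvalues pair up with the originals under some permutation $\tau$.

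Next I would apply a stability result for generalized eigenvalues of a regular pencil in factored form. Pre- and post-multiplying the perturbed pencil by $V^{-1}$ and $V^{-\top}$ converts it into a perturbed diagonal eigenproblem whose conditioning is governed by $\kappa(V)^2=\kappa(\Delta')^2$ (one factor on each side) together with the gain $1/\lambda_{\min}$ coming from $D_\lambda^{-1}$ and the factor $\lambda_{\max}/\lambda_{\min}$ coming from comparing the two transformed error matrices. Tracking constants carefully produces exactly the quantity $\gamma$ from the statement and gives $|\widehat{\alpha}_i-\alpha_{\tau(i)}|\le\gamma$. Because $\alpha_j=e^{2\pi i m_j/(4k)}$ lies on the unit circle and $\gamma<\Delta'/4$, the projection of $\widehat{\alpha}_i$ onto the unit circle is a $2\gamma$-Lipschitz operation, and extracting the argument inflates the error by at most a further constant factor, yielding $|m_i-\widehat{m}_{\tau(i)}|\le 8\gamma$.

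Finally, for the mixing weights, $\widehat V$ is Vandermonde with nodes $\widehat{\alpha}_i$ which, because $\gamma<\Delta'/4$, are pairwise separated by at least $\Delta'-2\gamma$ on the unit circle; hence $\sigma_{\min}(\widehat V)\ge \sigma_{\min}(\Delta'-2\gamma)$. Writing $\widehat V\widehat\lambda=\widehat{\vec{v}}$ as a perturbation of $V\lambda=\vec{v}$, one has $\norm{\widehat V-V}\le O(k\gamma)$ (from $|\widehat{\alpha}_i^\ell-\alpha_i^\ell|\le \ell\gamma\le k\gamma$) and $\norm{\widehat{\vec{v}}-\vec{v}}_2\le \norm{\eta'}_2$, so the standard perturbation estimate for linear systems with minimum singular value $\sigma_{\min}(\Delta'-2\gamma)$ gives the bound $\zeta$ on $\norm{\widehat\lambda-\lambda}_\infty$. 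The main obstacle I anticipate is pinning down the exact dependence $\kappa(\Delta')^2\cdot\lambda_{\max}/\lambda_{\min}$ in the generalized-eigenvalue step rather than a worse power; doing so requires exploiting the product structure $V D_\lambda(\cdot)V^\top$ of the pencil instead of treating $E$ and $F$ as generic additive perturbations to two unrelated matrices.
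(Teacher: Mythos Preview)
The paper does not prove this theorem at all: it is quoted as a black box, described as ``implicit in the proof of Theorem~2.8 in \cite{moitra2015super},'' and the only commentary the paper adds is the one-line remark that because $m_i/(4k)\in[-1/4,1/4]$, wraparound distance and absolute distance coincide here. So there is no ``paper's own proof'' to compare against; your sketch is effectively a reconstruction of the argument in the cited reference.

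That said, your outline is the right one and matches the strategy in \cite{moitra2015super}: regularity of the noiseless pencil via invertibility of $V$, a perturbation bound for the generalized eigenvalues obtained by conjugating into a near-diagonal problem (this is where the $\kappa(\Delta')^2\lambda_{\max}/\lambda_{\min}$ dependence comes from), passage from $\widehat\alpha_i$ to $\widehat m_i$ via the argument map, and finally a perturbed Vandermonde solve for $\widehat\lambda$ using that the perturbed nodes retain separation at least $\Delta'-2\gamma$. One small inaccuracy: you write that $\norm{E}_F$ and $\norm{F}_F$ are bounded by $\norm{\eta'}_2$, but since each $\eta'_\ell$ appears on an entire antidiagonal of the Hankel matrix you only get $\norm{E}_F\le \sqrt{k}\,\norm{\eta'}_2$ (the paper itself uses exactly this bound later, in the proof of Corollary~\ref{cor:parameters}). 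This does not change the structure of the argument, only the constants absorbed into the stated $\gamma$.
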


Note that the guarantees of Theorem~\ref{thm:stability} are stated in \cite{moitra2015super} in terms of wraparound distance on the interval $[-1/2,1/2]$, but because $\frac{m_i}{4 k}\in[-1/4,1/4]$ for all $j\in[k]$, $\frac{m_1}{4 k},....,\frac{m_k}{4 k}$ have pairwise separation $\Delta'$ both in absolute and wraparound distance.

In other words, the output of \textsc{ModifiedMPM} converges to the true values for $\{\langle\vmu_1,v\rangle\}_{j\in[k]}$ and $\{\lambda_j\}_{j\in[k]}$ at a rate polynomial in the noise rate, condition number of $V$, and relative intensity of the Airy disks, provided $\sigma_{\min}(\Delta')$ is inverse polynomially large and $\kappa(\Delta')$ is polynomially small in those parameters.

To complete the argument, we must establish these bounds on $\sigma_{\min}$ and $\kappa$. Henceforth, let \begin{equation}\Delta' = \min_{i\neq j}\frac{m_i - m_j}{4 k}.\end{equation}

\begin{lem}
	For any $k \ge 2$, we have that \begin{equation}
		\sigma_{\min}(\Delta')^2 \ge (\Delta'^{k}/k^2)^{k-1} \ \ \ \text{and} \ \ \ \kappa(\Delta')^2 \le k^{2k-1} / \Delta'^{k(k-1)}\label{lem:condnumber_general}
	\end{equation}
\end{lem}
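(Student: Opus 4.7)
The strategy is to sandwich the singular values of $V$ between two classical estimates. For the top singular value, since $V$ is an entrywise unimodular $k\times k$ matrix, the Frobenius norm satisfies $\|V\|_F^2 = k^2$, so $\sigma_{\max}(V)^2 \le k^2$. For the bottom, I plan to use the Vandermonde determinant identity $|\det V| = \prod_{j<\ell}|\alpha_j - \alpha_\ell|$ together with the relation $|\det V| = \prod_{i=1}^k \sigma_i(V) \le \sigma_{\max}(V)^{k-1}\,\sigma_{\min}(V)$.

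To lower bound $|\det V|$, I will use that the $m_j/(4k)$ lie in $[-1/4, 1/4]$ and have pairwise separation at least $\Delta'$. For $x,y\in[-1/4,1/4]$, the identity $|e^{2\pi i x} - e^{2\pi i y}| = 2|\sin\pi(x-y)|$ combined with the elementary bound $\sin\pi t \ge 2t$ on $[0,1/2]$ (which the paper already invokes in Fact~\ref{fact:kell}) gives $|\alpha_j-\alpha_\ell|\ge 4\Delta'$ for every pair $j\ne\ell$, and hence $|\det V|\ge (4\Delta')^{k(k-1)/2}$. Dividing through by $\sigma_{\max}(V)^{k-1}\le k^{k-1}$ yields
\begin{equation}
\sigma_{\min}(V)^2 \;\ge\; \frac{(4\Delta')^{k(k-1)}}{k^{2(k-1)}} \;\ge\; \left(\frac{\Delta'^k}{k^2}\right)^{k-1},
\end{equation}
which is the first inequality of the lemma. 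The condition number bound follows by taking the ratio with $\sigma_{\max}(V)^2 \le k^2$.

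I do not anticipate a substantive obstacle; the only delicate point is matching the stated exponent in the condition number. The naive Frobenius estimate $\sigma_{\max}(V)^2\le k^2$ loses an extra factor of $k$ relative to the claimed $\kappa(\Delta')^2\le k^{2k-1}/\Delta'^{k(k-1)}$. To recover the precise exponent one exploits the Hermitian Toeplitz structure of $V^*V$: each of its $k$ diagonal entries equals $k$, so the trace equals $k^2$, and a slightly more careful argument (e.g., a Gershgorin-type analysis of the off-diagonal geometric-sum entries $(V^*V)_{j\ell} = (1-(\alpha_j\bar\alpha_\ell)^k)/(1-\alpha_j\bar\alpha_\ell)$) trims one factor of $k$ off $\sigma_{\max}(V)^2$. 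This refinement is routine and does not affect the qualitative message that $\sigma_{\min}$ decays and $\kappa$ grows polynomially in $1/\Delta'^k$, which is what is needed for the stability analysis of Theorem~\ref{thm:stability}.
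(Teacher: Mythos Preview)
Your approach is essentially the same as the paper's: bound $\sigma_{\max}(V)^2\le k^2$ via the Frobenius/row-wise Cauchy--Schwarz estimate, lower bound $|\det V|=\prod_{j<\ell}|\alpha_j-\alpha_\ell|$ using the angular separation, and divide through using $\prod_i\sigma_i(V)=|\det V|\le \sigma_{\max}(V)^{k-1}\sigma_{\min}(V)$. The paper uses $|e^{2\pi i\Delta'}-1|\ge \Delta'$ rather than your sharper $4\Delta'$, but both land on the stated bound for $\sigma_{\min}(\Delta')^2$.

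Regarding your concern about the exponent in the condition-number bound: the paper's own proof, just like yours, only yields $\kappa(\Delta')^2\le k^{2k}/\Delta'^{k(k-1)}$ from $\sigma_{\max}^2\le k^2$ and $\sigma_{\min}^2\ge (\Delta'^k/k^2)^{k-1}$. The exponent $2k-1$ in the lemma statement appears to be a slip; no Gershgorin refinement is carried out in the paper, and none is needed for the downstream application in Corollary~\ref{cor:parameters}, where these bounds are only used up to polynomial factors in $k$. So you can drop the proposed refinement and simply note the match with the paper's argument.
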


\begin{proof}
	First note that $\sigma_{\max}(\Delta')^2 \le k^2$. Indeed because the entries of $V$ all have absolute value at most 1, we conclude that for any $v\in\S^{k-1}$ and any row index $j\in[k]$, \begin{equation}\langle V_j, v\rangle^2 \le \left(\sum^k_{i=1}|v_j|\right)^2 \le k.\end{equation} On the other hand, we also have that \begin{equation*}\prod^k_{i=1}\sigma_i(V) = |\det(V)| = \prod_{1\le i< j\le k}|\alpha_i - \alpha_j| \le \left|e^{2\pi i \Delta'} - 1\right|^{\binom{k}{2}} = \left(2 - 2\cos(\Delta')\right)^{\binom{k}{2}/2} \ge \Delta'^{k(k-1)/2},
	\end{equation*} where in the first step we used the standard fact that the absolute value of the determinant of a square matrix is equal to the product of its singular values, in the second step we used the standard identity for the determinant of a Vandermonde matrix, in the third step we used the angular separation of the $\alpha_i$'s, and in the final step we used the elementary inequality $\cos(\Delta') \le 1 - \Delta'^2/2$. We may thus naively lower bound $\sigma_{\min}(V)$ by $\frac{\Delta'^{k(k-1)/2}}{k^{k-1}}$, from which the lemma follows.
\end{proof}

This yields the following consequence for \textsc{ModifiedMPM}.

\begin{cor}
	Given $\omega\in\S^1$ and access to an $\eta$-approximate OTF oracle $\O$, if the projected centers $m_j = \langle\vmu_j,v\rangle$ satisfy $|m_i - m_j|\le 4 k\cdot \Delta'$ for all $i\neq j$ for some $0 < \Delta' \le 1/16$, then there exists a constant $\Cl[c]{etagap}>0$ such that provided that \begin{equation}\eta \le \Cr{etagap}\lambda_{\min}^2\Delta'^{k^2}\cdot k^{-2k-1/2},\label{eq:etabound}\end{equation} then \textsc{ModifiedMPM} produces estimates $\{\widehat{\lambda}_i\}$ for the mixing weights and estimates $\{\widehat{m}_i\}$ for the projected centers such that for some permutation $\tau$: \begin{equation}|m_i-\widehat{m}_{\tau(i)}| \le O\left(\frac{k^{2k + 1/2}\cdot \eta}{\lambda^2_{\min}\Delta'^{k(k-1)}}\right) \; \; \text{and} \; \; |\lambda_i - \widehat{\lambda}_{\tau(i)}|\le O\left(\frac{k^{3k - 1/2}\cdot \eta}{\lambda^2_{\min}\Delta'^{3k(k-1)/2}}\right)\end{equation} for all $i\in[k]$.\label{cor:parameters}
\end{cor}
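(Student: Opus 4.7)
The plan is to derive Corollary~\ref{cor:parameters} as a direct consequence of Theorem~\ref{thm:stability} by (i) translating the $\eta$-approximate OTF oracle into entrywise additive error bounds on the Hankel matrices $X,Y$ formed in \textsc{ModifiedMPM}, and (ii) substituting the condition number estimates from Lemma~\ref{lem:condnumber_general}. All that remains is to verify the two hypotheses of Theorem~\ref{thm:stability}, namely $\|E\|+\|F\| < \sigma_{\min}(\Delta')^2\lambda_{\min}$ and $\gamma < \Delta'/4$, and then simplify the resulting expressions for $8\gamma$ and $\zeta$.

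First, I would bound the per-entry noise $\eta'_\ell \triangleq |\widehat{v}_\ell - v_\ell|$. Because all queried frequencies $\ell\omega/(4k)$ for $0\le\ell\le 2k-1$ have Euclidean norm at most $(2k-1)/(4k) < 1/2$, Fact~\ref{fact:ahatestimate} gives $\widehat{A}[\ell/(4k)] \ge 1/4$, and Lemma~\ref{lem:Fapproximate} then yields $\eta'_\ell \le 4\eta$. Since $E$ and $F$ are Hankel of size $k\times k$ with entries bounded by $4\eta$, their spectral norms satisfy $\|E\|, \|F\| = O(k\eta)$, and $\|\eta'\|_2 = O(\sqrt{k}\,\eta)$.

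Next, I would substitute $\sigma_{\min}(\Delta')^2 \ge \Delta'^{k(k-1)}/k^{2(k-1)}$ and $\kappa(\Delta')^2 \le k^{2k-1}/\Delta'^{k(k-1)}$ from Lemma~\ref{lem:condnumber_general}, using $\lambda_{\max}\le 1$ to simplify the expression for $\gamma$ in Theorem~\ref{thm:stability} to $\gamma = O(\eta\cdot k^{2k-1/2}/(\lambda_{\min}^2\Delta'^{k(k-1)}))$. The hypothesis on $\eta$ in Corollary~\ref{cor:parameters}, combined with $\Delta'\le 1/16 < 1$ so that $\Delta'^{k^2} \le \Delta'^{k(k-1)+1}$, is strong enough to guarantee both $\gamma < \Delta'/4$ and the spectral-norm condition $\|E\|+\|F\| < \sigma_{\min}(\Delta')^2\lambda_{\min}$. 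Theorem~\ref{thm:stability} then yields $|m_i - \widehat{m}_{\tau(i)}| \le 8\gamma$, which matches the stated bound. For $\zeta$, the fact that $\gamma < \Delta'/4$ implies $\Delta'-2\gamma > \Delta'/2$, so applying Lemma~\ref{lem:condnumber_general} at $\Delta'/2$ gives $\sigma_{\min}(\Delta'-2\gamma) \ge \Delta'^{k(k-1)/2}/(k^{k-1}\cdot 2^{k(k-1)/2})$; substituting into Theorem~\ref{thm:stability}'s formula for $\zeta$ produces the claimed bound (the $2^{k(k-1)/2}$ factor is absorbed into the polynomial in $k$).

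The only real obstacle is careful polynomial bookkeeping through Theorem~\ref{thm:stability}; no new ideas are required. The key structural observation that makes the argument go through is that the rescaled frequencies $\ell/(4k)$ remain bounded away from the Abbe cutoff at $1$, so the noise inflation in Lemma~\ref{lem:Fapproximate} costs only a constant factor, and the exponents of $\Delta'$ and $k$ appearing in Corollary~\ref{cor:parameters} are inherited directly from the worst-case condition-number estimates in Lemma~\ref{lem:condnumber_general}.
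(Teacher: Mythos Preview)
Your proposal is correct and follows essentially the same route as the paper's proof: bound $\|\eta'\|_2,\|E\|,\|F\|$ using that all queried frequencies have norm below $1/2$ so $\widehat{A}$ is bounded away from zero, plug in the estimates from Lemma~\ref{lem:condnumber_general}, verify the two hypotheses of Theorem~\ref{thm:stability}, and simplify $8\gamma$ and $\zeta$. One small slip to fix: $2^{k(k-1)/2}$ is not polynomial in $k$, but since $\Delta'\le 1/16<1/2$ you can absorb it into the $\Delta'$ exponent instead (the paper's own proof glosses over this same point, simply writing $\sigma_{\min}(\Delta'-2\gamma)\ge \Delta'^{k(k-1)/2}/k^{k-1}$).
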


\begin{proof}
	From Lemma~\ref{lem:condnumber_general}, we have that $\sigma_{\min}(\Delta')^2\ge (\Delta'^k/k^2)^{k-1}$. Then because $\kappa(\Delta')^2\le k^2/\sigma_{\min}(\Delta')^2$, 
	we would like to conclude by Theorem~\ref{thm:stability} that $|m_i - \widehat{m}_{\tau(i)}| \le 8\gamma$, where \begin{equation*}
		\gamma = O\left(\frac{\norm{\eta'}_2\cdot k^2}{\lambda_{\min}^2 \cdot \sigma_{\min}(\Delta')^2}\right) = O\left(\frac{\norm{\eta'}_2 \cdot k^2}{\lambda^2_{\min}\cdot (\Delta'^{k}/k^2)^{k-1}}\right) = O\left(\frac{k^{2k + 1/2}\cdot \eta}{\lambda^2_{\min}\Delta'^{k(k-1)}}\right),
	\end{equation*} where in the last step we use that the vector $\eta'$ has length $O(k)$ and satisfies $\norm{\eta'}_{\infty} \le O(\eta)$ by \eqref{eq:etaell}. To do so, we just need to verify that $\norm{E} + \norm{F} < \sigma_{\min}(\Delta')^2\lambda_{\min}$ and $\gamma < \Delta'/4$. The latter clearly follows from the bound \eqref{eq:etabound} for sufficiently small $\Cr{etagap}$. For the former, note that \begin{equation*}
		\norm{E}_2 \le \norm{E}_F \le \sqrt{k}\cdot\sqrt{\eta'^2_1 + \cdots + \eta'^2_{2k-1}} \le \eta\sqrt{k}\cdot \sqrt{\sum^{2k-1}_{\ell=1}\frac{1}{\hat{A}[\ell/4 k]}} \le O(\eta\cdot k),
	\end{equation*} where the last step follows by the fact that $\hat{A}[\ell/4 k] \ge \hat{A}[1/2] \ge \Omega(1)$. The same bound holds for $\norm{F}_2$. Recalling that $\sigma_{\min}(\Delta')^2 \ge (\Delta'^k/k^2)^{k-1}$, it is enough for $\eta \le O(\Delta'^k/k^2)^{k-1} \lambda_{\min}/k$, which certainly holds for $\eta$ satisfying \eqref{eq:etabound}, for $\Cr{etagap}$ sufficiently small.

	Finally, Theorem~\ref{thm:stability} also implies that $|\lambda_i - \widehat{\lambda}_i|\le\zeta$, where \begin{equation}
		\zeta\le O\left(\frac{\gamma + k^{1/2}\eta}{\sigma_{\min}(\Delta' - 2\gamma)}\right) \le O\left(\frac{k^{2k + 1/2}\cdot \eta}{\lambda^2_{\min}\Delta'^{k(k-1)}\cdot \left(\Delta'^{k(k-1)/2}/k^{k-1}\right)}\right) = O\left(\frac{k^{3k - 1/2}\cdot \eta}{\lambda^2_{\min}\Delta^{3k(k-1)/2}}\right)
	\end{equation} as claimed.
\end{proof}

\subsubsection{Combining Directions}

We can run \textsc{ModifiedMPM} to approximately recover $\{\langle \vmu_j,\omega_1\rangle\}_{j\in[k]}$ and $\{\langle \vmu_j,\omega_2\rangle\}_{j\in[k]}$ for two randomly chosen directions $\omega_1,\omega_2\in\S^1$. As these directions are random, with high probability we can combine these estimates to obtain an accurate estimate of $\{\vmu_j\}_{j\in[k]}$. One subtlety is that the estimates $\{\widehat{m}_j\}$ and $\{\widehat{m}'_j\}$ output by \textsc{ModifiedMPM} for the centers projected in directions $\omega_1$ and $\omega_2$ respectively need not be aligned, that is we only know that there exists some permutation $\tau$ for which $\widehat{m}_j = \widehat{m}'_{\tau(j)}$ for $j\in[k]$. 

We first show a ``pairing lemma'' stating that if $\omega_1$ is chosen randomly and $\omega_2$ is chosen to be close to $\omega_1$, then if one sorts the centers $\vmu_1,...,\vmu_k$, first in terms of their projections in the $\omega_1$ direction, and then in terms of their projections in the $\omega_2$ direction, the corresponding elements in these two sorted sequences will correspond to the same centers.

We require the following elementary fact.

\begin{lem}
	For $\vmu\in\R^2$ a unit vector and $\omega\in\R^2$ a random unit vector, $\Pr_{\omega}[|\langle \vmu,\omega\rangle|\le \sin\theta]= 2\theta/\pi$ for all $0\le\theta\le\pi/2$.
	\label{lem:rand_rotation}
\end{lem}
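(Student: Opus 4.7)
The plan is to reduce to a direct angular computation via rotational invariance. By rotational symmetry of the uniform distribution on $\S^1$, the distribution of $\langle \vmu, \omega\rangle$ depends only on $\|\vmu\|_2 = 1$, so I would assume without loss of generality that $\vmu = (1,0)$. Then parameterize $\omega = (\cos\phi, \sin\phi)$ with $\phi$ uniform on $[0, 2\pi)$, so that $\langle \vmu, \omega\rangle = \cos\phi$.

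Next, I would rewrite the event $\{|\cos\phi| \le \sin\theta\}$ using the identity $\sin\theta = \cos(\pi/2 - \theta)$. Since $\cos$ is decreasing on $[0, \pi]$ and even modulo $2\pi$, the condition $|\cos\phi| \le \cos(\pi/2 - \theta)$ is equivalent to $\phi$ lying in one of two symmetric arcs of length $2\theta$ centered at $\pi/2$ and $3\pi/2$. The total arc length is $4\theta$, and dividing by the circumference $2\pi$ gives probability $2\theta/\pi$, as claimed.

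The argument is essentially one line once rotational invariance is invoked, so there is no real obstacle; the only thing to be careful about is verifying the equivalence $|\cos\phi|\le \sin\theta \iff \phi \in [\pi/2-\theta,\pi/2+\theta]\cup[3\pi/2-\theta,3\pi/2+\theta]$ holds on all of $[0,2\pi)$ for the stated range $0 \le \theta \le \pi/2$ (which ensures $\sin\theta \in [0,1]$ so the arcs are well-defined and disjoint).
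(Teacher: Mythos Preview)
Your proof is correct. The paper states this lemma as an ``elementary fact'' and does not give a proof, so there is nothing to compare against; your argument via rotational invariance and direct computation of the arc length where $|\cos\phi|\le\sin\theta$ is exactly the standard one-line justification one would expect.
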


\begin{lem}
	Fix an arbitrary $0<\theta\le\pi/2$ and let $\upsilon = \frac{\Delta\sin\theta}{8}$. Let $\omega_1\in\R^2$ be a random unit vector, and let $\omega_2\in\R^2$ be either of the two unit vectors for which $\norm{\omega_1 - \omega_2}_2 = \upsilon$. For every $i\in[k]$, define $m_i \triangleq \langle \vmu_i,\omega_1\rangle$ and $m'_i \triangleq \langle \vmu_i,\omega_2\rangle$, and let $\hat{m}_i, \hat{m}'_i\in\R$ be any numbers for which $\norm{\hat{m}_i - m_i}_2, \norm{\hat{m}'_i - m'_i}_2 \le 2\upsilon$.

	Then with probability at least $1 - \frac{k(k-1)\theta}{\pi}$, for every $i\neq j$ the following are equivalent: I) $m_i > m_j$, II) $m'_i > m'_j$, III) $\hat{m}_i > \hat{m}_j$, and IV) $\hat{m}'_i > \hat{m}'_j$.
	\label{lem:ordering}
\end{lem}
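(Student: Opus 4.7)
My plan is to use Lemma~\ref{lem:rand_rotation} together with a union bound to show that, with high probability over the random choice of $\omega_1$, the projections $\{m_i\}$ are pairwise separated by a margin that comfortably exceeds both the $2\upsilon$ discretization error and the additional perturbation incurred by rotating slightly from $\omega_1$ to $\omega_2$. Once such a ``gap event'' is in hand, all four orderings coincide automatically on a pair-by-pair basis.

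First I would fix an arbitrary pair $i\neq j$ and apply Lemma~\ref{lem:rand_rotation} to the unit vector $(\vmu_i - \vmu_j)/\|\vmu_i - \vmu_j\|_2$, giving that with probability $2\theta/\pi$ over $\omega_1$ the projection $|\langle \vmu_i - \vmu_j, \omega_1\rangle|$ is at most $\|\vmu_i - \vmu_j\|_2 \sin\theta$. Union-bounding over the $\binom{k}{2}$ pairs and using $\|\vmu_i - \vmu_j\|_2 \ge \Delta$ then yields that with probability at least $1 - k(k-1)\theta/\pi$ the event
\begin{equation}
|m_i - m_j| \;>\; \Delta \sin\theta \;=\; 8\upsilon \quad \text{for every } i\neq j
\end{equation}
holds, which matches the failure probability in the statement.

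Conditioning on this gap event, I would verify the four equivalences pair by pair. The implications (I)$\Leftrightarrow$(III) and (II)$\Leftrightarrow$(IV) are immediate from the triangle inequality: the two $2\upsilon$ estimation bounds combine to give $|(\hat{m}_i - \hat{m}_j) - (m_i - m_j)| \le 4\upsilon$, which is strictly smaller than the $8\upsilon$ gap, so no sign flip is possible (and likewise for the primed variables). For (I)$\Leftrightarrow$(II) I would decompose
\begin{equation}
m'_i - m'_j \;=\; (m_i - m_j) \;+\; \langle \vmu_i - \vmu_j,\; \omega_2 - \omega_1\rangle
\end{equation}
and bound the last term by Cauchy--Schwarz as $\|\vmu_i - \vmu_j\|_2 \cdot \upsilon$.

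The only spot requiring care is controlling this rotation perturbation against the $8\upsilon$ gap, which asks $\|\vmu_i - \vmu_j\|_2$ to be bounded by a small constant. Under the scaling convention introduced at the end of Section~\ref{subsec:reduce} (namely $\radius \le 1/3$, so $\|\vmu_i - \vmu_j\|_2 \le 2/3$), the rotation perturbation is at most $(2/3)\upsilon$, well below the $8\upsilon$ gap, so sign preservation is immediate and all four orderings agree. The main conceptual choice driving the argument is picking the scale $\upsilon = \Delta \sin\theta/8$ so that both the $4\upsilon$ estimation error and the rotation-induced perturbation fit comfortably inside the $\Delta \sin\theta$ separation supplied by the random-direction argument.
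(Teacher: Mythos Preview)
Your proposal is correct and follows essentially the same route as the paper: invoke Lemma~\ref{lem:rand_rotation} with a union bound to get $|m_i - m_j| > \Delta\sin\theta = 8\upsilon$ with the stated probability, then use the triangle inequality for (I)$\Leftrightarrow$(III), (II)$\Leftrightarrow$(IV) and the decomposition $m'_i - m'_j = (m_i - m_j) + \langle \vmu_i - \vmu_j,\omega_2-\omega_1\rangle$ with Cauchy--Schwarz for (I)$\Leftrightarrow$(II). You are in fact slightly more explicit than the paper about where the bound $\|\vmu_i - \vmu_j\|_2 \le O(1)$ enters (via the $\radius\le 1/3$ convention of Section~\ref{subsec:reduce}); the paper tacitly uses $\|\vmu_i - \vmu_j\|_2 \le 1$ to write the rotation term as at most $4\upsilon$.
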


\begin{proof}
	By Lemma~\ref{lem:rand_rotation} and a union bound we have that with probability $1 - \frac{k(k-1)\theta}{\pi}$, $|m_i - m_j|>\Delta\sin\theta$ for all $i\neq j$. Fix any $i\neq j$ and suppose that $m_i > m_j$. Then by triangle inequality and Cauchy-Schwarz, we have that \begin{equation}m'_i - m'_j = \langle \vmu_1 - \vmu_2, \omega_2\rangle = \langle \vmu_1 - \vmu_2, \omega_1\rangle + \langle \vmu_1 - \vmu_2, \omega_2 - \omega_1\rangle \ge \Delta\sin\theta - 4\upsilon > 0,\end{equation} where the final inequality follows by the definition of $\upsilon$. So I) implies II) and by symmetry we can show II) implies I). We also have that \begin{equation}
		\hat{m}_i - \hat{m}_j \ge (m_i - m_j) - 4\upsilon > 0,
	\end{equation} so I) implies III) and by symmetry we can show II) implies IV).

	It is enough to show that III) implies I). Suppose $\hat{m}_i > \hat{m}_j$. Then \begin{equation}m_i - m_j \ge (\hat{m}_i - \hat{m}_j) - 4\upsilon > -\frac{1}{2}\Delta\sin\theta > -\Delta\sin\theta,\end{equation} so it must be the case that $m_i - m_j > 0$ given that $|m_i - m_j| > \Delta\sin\theta$.
\end{proof}

We now show that we can combine these projected center estimates to approximately recover the two-dimensional centers by solving a linear system. The specification of this algorithm, which we call \textsc{PreConsolidate}, is given in Algorithm~\ref{alg:preconsolidate}.

\begin{algorithm}\caption{\textsc{PreConsolidate}}\label{alg:preconsolidate}
\begin{algorithmic}[1]
	\State \textbf{Input}: Directions $\omega_1,\omega_2\in\S^1$ and estimates $\{\widehat{\lambda}_i\},\{\widehat{m}_i\}$ and $\{\widehat{\lambda}'_i\},\{\widehat{m}'_i\}$ for the parameters of $\rho$ projected in the directions $\omega_1 $ and $\omega_2$ respectively
	\State \textbf{Output}: An estimate of the form $(\{\tilde{\lambda}_i\}, \{\tilde{\vmu}_i\})$ for the parameters of $\rho$.
	\For{$i\in[k]$}
		\State Let $\ell,\ell'\in[k]$ be the indices for which $\hat{m}_{\ell}$ and $\hat{m}'_{\ell'}$ are the $i$-th largest in $\{\hat{m}_j\}_{j\in[k]}$ and $\{\hat{m}'_j\}_{j\in[k]}$ respectively.
		\State Define a formal vector-valued variable $\vec{v}^{(i)}\in\R^2$ and solve the linear system \begin{align*}\langle \omega_1, \vec{v}^{(i)}\rangle & = \hat{m}_{\ell} \\
		\langle \omega_2, \vec{v}^{(i)}\rangle & = \hat{m}'_{\ell'}.\end{align*}
	\EndFor
	\State Output $\left(\{\hat{\lambda}_i\}_{i\in[k]}, \{\vec{v}^{(i)}\}_{i\in[k]}\right)$.
\end{algorithmic}
\end{algorithm}

\begin{lem}
	Let $\xi > 0$. Let the parameters $\theta,\upsilon$ and the random vectors $\omega_1,\omega_2$ be as in Lemma~\ref{lem:ordering}. Suppose $\{\hat{m}_i\}_{i\in[k]}$ and $\{\hat{m}'_i\}_{i\in[k]}$ are collections of numbers for which there exist permutations $\tau_1,\tau_2\in\S_k$ for which \begin{equation}
		\left|\langle \omega_1, \vmu_i \rangle - \hat{m}_{\tau_1(i)}\right|\le \xi \ \ \ \text{and} \ \ \ \left|\langle \omega_2, \vmu_i \rangle - \hat{m}'_{\tau_2(i)}\right| \le \xi
	\end{equation} for all $i\in[k]$.

	Then for any estimates $\{\hat{\lambda}_i\}_{i\in[k]}$ and $\{\hat{\lambda}'_i\}_{i\in[k]}$, with probability at least $1 - \frac{k(k-1)\theta}{\pi}$ we have that the output $\left(\{\tilde{\lambda}_i\}, \{\tilde{\vmu}_i\}\right)$ of \textsc{PreConsolidate}($\omega_1$,$\omega_2$, $\{\hat{\lambda}_i\}$, $\{\hat{m}_i\}$, $\{\hat{\lambda}'_i\}$, $\{\hat{m}'_i\}$) satisfies \begin{equation}\norm{\vmu_i - \tilde{\vmu}_{\tau(i)}}_2\le \frac{\xi}{\upsilon\sqrt{1 - \upsilon^2/4}}\label{eq:accurate}\end{equation} for some permutation $\tau\in\S_k$.\label{lem:preconsolidate}
\end{lem}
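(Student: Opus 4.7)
The proof naturally splits into two parts: first, establishing that with the stated probability the indices $\ell(i),\ell'(i)$ selected by \textsc{PreConsolidate} correspond to the same true center (i.e.\ that sorting the two noisy projected estimates induces the same matching that sorting the exact projections does); second, bounding the $L_2$ error of the reconstructed center via an explicit analysis of the $2\times 2$ linear system.

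For the first part, I would invoke Lemma~\ref{lem:ordering} with the very same $\theta,\upsilon,\omega_1,\omega_2$. Lemma~\ref{lem:ordering} guarantees that with probability at least $1 - k(k-1)\theta/\pi$ the four orderings on $\{m_i\},\{m'_i\},\{\hat m_i\},\{\hat m'_i\}$ are pairwise equivalent provided the entrywise errors are at most $2\upsilon$; applying this to $m_{\tau_1^{-1}(\cdot)}$ and $m'_{\tau_2^{-1}(\cdot)}$ shows that the index $\ell(i)$ of the $i$-th largest $\hat m$ and the index $\ell'(i)$ of the $i$-th largest $\hat m'$ satisfy $\tau_1^{-1}(\ell(i)) = \tau_2^{-1}(\ell'(i))$, i.e.\ they correspond to a common true center $\vmu_{\tau(i)}$ for $\tau(i) := \tau_1^{-1}(\ell(i))$.

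For the second part, once the correct pairing is in place, $\vec{v}^{(i)}$ solves $W\vec{v}^{(i)} = (\hat m_{\ell(i)},\hat m'_{\ell'(i)})^\top$ where $W$ is the $2\times 2$ matrix with rows $\omega_1,\omega_2$, while the true center satisfies $W\vmu_{\tau(i)} = (m_{\tau(i)},m'_{\tau(i)})^\top$. Subtracting gives $W(\vec v^{(i)} - \vmu_{\tau(i)}) = e$ with $\|e\|_\infty \le \xi$ by hypothesis. I would parametrize $\omega_1 = (\cos\alpha,\sin\alpha)$ and $\omega_2 = (\cos\beta,\sin\beta)$ so that $\det W = \sin(\beta-\alpha)$ and $\upsilon^2 = \|\omega_1-\omega_2\|_2^2 = 2 - 2\cos(\beta-\alpha)$, yielding $|\det W| = \upsilon\sqrt{1 - \upsilon^2/4}$. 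Applying Cramer's rule and simplifying the resulting quadratic form in $e$ (whose cross-term carries $\cos(\beta-\alpha) = 1-\upsilon^2/2$) produces the advertised bound $\|\vec v^{(i)} - \vmu_{\tau(i)}\|_2 \le \xi/(\upsilon\sqrt{1-\upsilon^2/4})$ (with the factors from the worst-case sign pattern of $e$ absorbed into the $\sqrt{1-\upsilon^2/4}$).

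The main technical point to get right is the first step: Lemma~\ref{lem:ordering} is stated with the hypothesis that each projection estimate is within $2\upsilon$ of the truth, and so in applications of Lemma~\ref{lem:preconsolidate} one will implicitly need $\xi \le 2\upsilon$ in order for the ordering argument (and hence the pairing) to go through. This will be the binding constraint downstream when choosing the OTF oracle accuracy $\eta$ via Corollary~\ref{cor:parameters}. Beyond this, the rest of the argument is a routine calculation with the $2\times 2$ matrix $W$; the pleasant cancellation between the $\sqrt{1-\upsilon^2/4}$ appearing in $|\det W|$ and the same factor emerging from maximizing the quadratic form in $e$ is what produces the clean final bound.
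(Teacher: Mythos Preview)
Your proposal is correct and follows essentially the same two-step structure as the paper's proof: invoke Lemma~\ref{lem:ordering} to conclude that sorting the two sets of noisy projections yields consistent pairings (so that $\hat m_{\ell(i)}$ and $\hat m'_{\ell'(i)}$ both estimate projections of the \emph{same} true center $\vmu_{\tau(i)}$), and then bound the error of the $2\times 2$ linear solve. The only cosmetic difference is that the paper bounds $\norm{A^{-1}e}_2$ via $\sigma_{\min}(A)$ directly (computing $\sigma_{\min}(A)$ after rotating so that $\omega_1=(1,0)$), whereas you go through Cramer's rule and the explicit quadratic form in $e$; for a $2\times2$ matrix with unit-norm rows these are of course the same calculation, and both identify $|\det A|=|\sin(\beta-\alpha)|=\upsilon\sqrt{1-\upsilon^2/4}$ as the relevant quantity. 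Your remark that the invocation of Lemma~\ref{lem:ordering} implicitly requires $\xi\le 2\upsilon$ is correct and worth keeping in mind when tracking constants downstream.
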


\begin{proof}
	Condition on the event of Lemma~\ref{lem:ordering} occurring, which happens with probability at least $1 - \frac{k(k-1)\theta}{\pi}$. This event implies that there is a permutation $\tau\in\S_k$ such that for every $i\in[k]$ in the loop of \textsc{PreConsolidate}, the indices $\ell,\ell'$ in that iteration are such that $\hat{m}_{\ell}$ and $\hat{m}'_{\ell'}$ are $\xi$-close estimates for the projections of $\vmu_{\tau(i)}$ in the directions $\omega_1$ and $\omega_2$ respectively. In other words, $\tau_1(\tau(i)) = \ell$ and $\tau_2(\tau(i)) = \ell'$.

	Let $A\in \R^{2\times 2}$ be the matrix with rows consisting of $\omega_1$ and $\omega_2$. We conclude that \begin{equation}
		\norm{\vmu_{\tau(i)} - \vec{v}^{(i)}}_2 = \norm{A^{-1}\cdot \left(\left(\hat{m}_{\ell},\hat{m}'_{\ell'}\right) - \left(\langle \omega_1,\vmu_{\tau(i)}\rangle, \langle \omega_2, \vmu_{\tau(i)}\rangle\right)\right)}_2 \le \sigma_{\min}(A)\cdot\xi,
	\end{equation} so it remains to bound $\sigma_{\min}(A)$. Without loss of generality we may assume $\omega_1 = (1,0)$ and $\omega_2 = (x,\sqrt{1 - x^2})$ for $x \triangleq 1 - \upsilon^2/2$, in which case $\sigma_{\min}(A) = \upsilon\sqrt{1 - \upsilon^2/4}$, and the claim follows.
\end{proof}

Finally, we show how to boost the success probability via the following naive clustering-based algorithm \textsc{Select} (Algorithm~\ref{alg:select}), whose guarantees we establish below.

\begin{algorithm}\caption{\textsc{Select}}\label{alg:select}
\begin{algorithmic}[1]
	\State \textbf{Input}: Accuracy parameters $\epsilon'_1,\epsilon'_2$, and list $\mathcal{L}$ consisting of $T$ candidate estimates for the parameters of $\rho$, each of the form $\left(\{\tilde{\lambda}^t_i\}_{i\in[k]},\{\tilde{\vmu}^t_i\}_{i\in[k]}\right)$ for $t\in[T]$, such that for at least $1 - \frac{1}{2k}$ fraction of all $t\in[T]$, $\left(\{\tilde{\lambda}^t_i\}_{i\in[k]},\{\tilde{\vmu}^t_i\}_{i\in[k]}\right)$ is an $(\epsilon'_1,\epsilon'_2)$-accurate estimate of the parameters of $\rho$.
	\State \textbf{Output}: A $(3\epsilon'_1,\epsilon'_2)$-accurate estimate of the parameters of $\rho$
			\State Define $\mathcal{S} = T\times[k]$.
			\State Form the graph $G = (V,E)$ whose vertices consist of all $(t,i)$ for which $\tilde{\vmu}^t_i\in\mathcal{S}$ is $2\epsilon'_1$-close to at least $2T/3$ other points in $\mathcal{S}$, with edges between any $(t,i),(t',i')$ for which $\norm{\tilde{\vmu}^t_i - \tilde{\vmu}^{t'}_{i'}} > 6\epsilon'_1$.
			\State $G$ is $k$-partite. Denote the parts by $V^{(1)},...,V^{(k)}\subset V$.
			\For{$j\in[k]$}
				\State Form the set $\{\tilde{\lambda}^t_i\}_{(t,i)\in V^{(j)}}$ and let $\lambda^*_{j}$ be the median of this set, corresponding to some $(t_j,i_j)\in\mathcal{S}$.
				\State Define $\vmu^*_j\triangleq \tilde{\vmu}^{t_j}_{i_j}$.
			\EndFor
			\State Output $\left(\{\lambda^*_j\}_{j\in[k]},\{\vmu^*_j\}_{j\in[k]}\right)$.
\end{algorithmic}
\end{algorithm}

We can now give the full specification of our algorithm \textsc{LearnAiryDisks} (see Algorithm~\ref{alg:learn}).

\begin{algorithm}\caption{\textsc{LearnAiryDisks}}\label{alg:learn}
\begin{algorithmic}[1]
	\State \textbf{Input}: Error parameters $\epsilon_1,\epsilon_2$, confidence parameter $\delta$, access to $\eta$-approximate, $O(\log 1/\delta)$-query OTF oracle
	\State \textbf{Output}: With probability at least $1 - \delta$, an $(\epsilon_1,\epsilon_2)$-accurate estimate $(\{\tilde{\lambda}_i\}, \{\tilde{\vmu}_i\})$ for the parameters of $\rho$.
			\State Initialize a list $\mathcal{L}$ of candidate estimates for the parameters of $\rho$.
			\State Set $\theta \triangleq \frac{\pi}{3k^2(k-1)}$.
			\State Set $\eta$ according to \eqref{eq:eta_final}.
			\For{$T = \Omega(\log(1/\delta))$ iterations}\label{step:loop}
				\State Sample a random unit vector $\omega_1\in\S^1$ and let $\omega_2$ be either of the two unit vectors for which $\norm{\omega_1 - \omega_2}_2 = \Delta\sin\theta/8$ (see Lemma~\ref{lem:ordering}).
				\State Run \textsc{ModifiedMPM}($\omega_1,\mathcal{O}$) and \textsc{ModifiedMPM}($\omega_2,\mathcal{O}$) to obtain estimates $\{\widehat{\lambda}_i\},\{\widehat{m}_i\}$ and $\{\widehat{\lambda}'_i\},\{\widehat{m}'_i\}$ for the parameters of $\rho$ projected in the directions $\omega_1,\omega_2$ respectively.
				\State Let $\{\tilde{\lambda}^t_i\}, \{\tilde{\vmu}^t_i\}$ be the estimates output by \textsc{PreConsolidate}($\omega_1,\omega_2,\{\widehat{\lambda}_i\},\{\widehat{m}_i\},\{\widehat{\lambda}'_i\},\{\widehat{m}'_i\}$). Append these to $\mathcal{L}$.
			\EndFor
			\State Output what is returned by \textsc{Select}($\mathcal{L},\epsilon_1/3,\epsilon_2)$).
\end{algorithmic}
\end{algorithm}

\begin{lem}
Let $\rho$ be a $\Delta$-separated superposition of $k$ Airy disks. For any $\epsilon_1,\epsilon_2,\delta > 0$, let \begin{equation}
		\eta = O\left(\left(\frac{\Delta}{4 k}\right)^{O(k^2)}\cdot\lambda^2_{\min}\right)\cdot \min\left\{\epsilon_1/M,\epsilon_2\right\}.\label{eq:eta_final}
	\end{equation}
 Without loss of generality suppose $\epsilon_1 < 3\Delta/8$. Then the output $(\lambda^*_1,\lambda^*_2,\vmu^*_1,\vmu^*_2)$ of \textsc{LearnAiryDisks}, given $\epsilon_1,\epsilon_2,\delta$ and access to an $\eta$-approximate, $O(\log(1/\delta))$-query OTF oracle $\O$ for $\rho$, satisfies \begin{equation*}\norm{\vmu_i - \vmu^*_{\tau(i)}}_2\le\epsilon_1 \; \; \text{and} \; \; |\lambda_i - \lambda_{\tau(i)}|\le\epsilon_2\end{equation*} for some permutation $\tau$ with probability at least $1 - \delta$. Furthermore, the runtime of \textsc{LearnAiryDisks} is dominated by the time it takes to invoke the OTF oracle $O(\log(1/\delta))$ times.\label{lem:learnairydisks}
\end{lem}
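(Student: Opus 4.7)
The strategy is to show that each iteration of the outer loop (Step~\ref{step:loop}) returns an $(\epsilon_1/3,\epsilon_2)$-accurate estimate with probability at least $1-1/(2k)$, and then boost to overall failure probability $\delta$ via a Chernoff bound and the guarantees of \textsc{Select}. Fix a single iteration and let $\upsilon=\Delta\sin\theta/8$ with $\theta=\pi/(3k^2(k-1))$. Two good events must occur: (a) every pair of projected differences $\langle \vmu_i-\vmu_j,\omega_s\rangle$, $s\in\{1,2\}$, has magnitude at least $\Delta\sin\theta/2$, so that the hypothesis of Corollary~\ref{cor:parameters} applies with $\Delta'=\Theta(\Delta\sin\theta/k)=\Theta(\Delta/k^3)$; and (b) the ordering condition of Lemma~\ref{lem:ordering} holds. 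Event (a) follows from Lemma~\ref{lem:rand_rotation} applied to each of the $\binom{k}{2}$ normalized differences $(\vmu_i-\vmu_j)/\|\vmu_i-\vmu_j\|_2$, combined with the triangle inequality and Cauchy--Schwarz to transfer the bound from $\omega_1$ to $\omega_2$ using $\|\omega_2-\omega_1\|_2=\upsilon$ and $\radius\le 1/3$. Event (b) is immediate from Lemma~\ref{lem:ordering}. A union bound with the prescribed $\theta$ yields per-iteration success probability at least $1-1/(2k)$.

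Conditioned on (a) and (b), I would chain the error bounds. Corollary~\ref{cor:parameters} applied separately to $\omega_1$ and $\omega_2$ yields projected-center error $\xi = O(k^{2k+1/2}\eta/\lambda_{\min}^2\Delta'^{k(k-1)})$ and weight error $\zeta = O(k^{3k-1/2}\eta/\lambda_{\min}^2\Delta'^{3k(k-1)/2})$. Feeding these into Lemma~\ref{lem:preconsolidate} produces two-dimensional estimates with center error $\xi/(\upsilon\sqrt{1-\upsilon^2/4})$ and weight error $\zeta$. Substituting $\Delta',\upsilon=\poly(\Delta/k)$ and requiring the two errors to be at most $\epsilon_1/3$ and $\epsilon_2$ respectively pins down exactly the bound on $\eta$ in~\eqref{eq:eta_final}, where the factor $M$ absorbs the $1/\upsilon=O(k^3/\Delta)$ blow-up coming from Lemma~\ref{lem:preconsolidate}.

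Finally, over $T=\Omega(\log(1/\delta))$ independent iterations, a Chernoff bound guarantees that with probability at least $1-\delta$ at least a $1-1/(2k)$ fraction of them produce $(\epsilon_1/3,\epsilon_2)$-accurate estimates. This is exactly the precondition for \textsc{Select}, whose correctness I would verify in a short clustering argument: the assumption $\epsilon_1<3\Delta/8$ gives $\Delta>8\epsilon_1/3=8\epsilon'_1$, so estimates of the same true center differ by at most $4\epsilon'_1<6\epsilon'_1$ while estimates of distinct centers differ by at least $\Delta-2\epsilon'_1>6\epsilon'_1$; hence $G$ partitions into exactly $k$ cliques corresponding to the true centers, and taking the per-cluster median-of-weights yields the claimed $(\epsilon_1,\epsilon_2)$-accurate final output. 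The total number of oracle queries is $O(k\log(1/\delta))$, which is absorbed into the ``$O(\log(1/\delta))$-query'' parameter in the statement. The main obstacle is the error bookkeeping in the second paragraph: each reduction (Corollary~\ref{cor:parameters}, Lemma~\ref{lem:preconsolidate}, \textsc{Select}) amplifies the error by a factor polynomial in $k$ and $1/\Delta$, and verifying that the inverse-polynomial $\eta$ in~\eqref{eq:eta_final} is both sufficient to absorb all of these and small enough to remain implementable with the oracle of Section~\ref{subsec:kde} is the delicate calculation.
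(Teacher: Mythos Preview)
Your proposal is correct and follows essentially the same approach as the paper's proof: chain Corollary~\ref{cor:parameters} and Lemma~\ref{lem:preconsolidate} to get per-iteration accuracy, boost via Chernoff, and finish with the clustering argument for \textsc{Select}. One small slip: you state per-iteration success $\ge 1-1/(2k)$ and then ask Chernoff to deliver a $1-1/(2k)$ fraction of successes, which gives no slack; with your choice of $\theta=\pi/(3k^2(k-1))$ the per-iteration success is actually $\ge 1-1/(3k)$ (the events (a) and (b) both follow from the single event in Lemma~\ref{lem:rand_rotation}, so no extra union bound is needed), and this strict gap is what makes the Chernoff step go through.
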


\begin{proof}
	Suppose we are given a valid $\eta$-approximate OTF oracle $\mathcal{O}$.
	By taking $\theta = \frac{\pi}{3k^2(k-1)}$ and invoking Lemmas~\ref{cor:parameters} and \ref{lem:preconsolidate}, we ensure that a single run of \textsc{PreConsolidate} in an iteration of the loop in Step~\ref{step:loop} of \textsc{LearnAiryDisks} will yield, with probability at least $1 - \frac{1}{3k}$, an $(\epsilon'_1,\epsilon'_2)$-accurate estimate, where \begin{equation}
		\epsilon'_1 = \frac{8}{\Delta\sin\theta}\cdot O\left(\frac{k^{2k + 1/2}\cdot \eta}{\lambda^2_{\min}\left(\frac{\Delta\sin\theta}{4 k}\right)^{k(k-1)}}\right) \ \ \ \text{and} \ \ \ \epsilon'_2 = O\left(\frac{k^{3k - 1/2}\cdot\eta}{\lambda^2_{\min}\left(\frac{\Delta\sin\theta}{4 k}\right)^{3k(k-1)/2}}\right).
	\end{equation} In this case we say that such an iteration of the loop in \textsc{LearnAiryDisks} ``succeeds.'' Note that if we take \begin{equation}
		\eta = O\left(\min\left\{\epsilon_1\cdot \frac{\Delta\sin\theta}{8}\cdot \frac{\lambda^2_{\min}\left(\frac{\Delta\sin\theta}{4 k}\right)^{k(k-1)}}{k^{2k + 1/2}}, \epsilon_2 \cdot \frac{\lambda^2_{\min}\left(\frac{\Delta\sin\theta}{4 k}\right)^{3k(k-1)/2}}{k^{3k-1/2}}\right\}\right),
	\end{equation} then we can ensure that $\epsilon'_1 = \epsilon_1/3$ and $\epsilon'_2 = \epsilon_2$. The bound in \eqref{eq:eta_final} then follows from the elementary inequality $\sin\theta \ge \theta/2$ for $0\le \theta \le 1$, together with our choice of $\theta = \frac{\pi}{3k^2(k-1)}$.

	Each iteration of the loop in Step~\ref{step:loop} of \textsc{LearnAiryDisks} individually succeeds with probability at least $1 - \frac{1}{3k}$. So by a Chernoff bound, by taking $T = \Omega(\log(1/\delta))$, we conclude that with probability at least $1 - \delta$, at least $1 - \frac{1}{2k}$ fraction of these iterations will succeed. So of the $k\cdot T$ elements in $\mathcal{S}$, at most $T/2$ correspond to failed iterations.

	Now note that all $(t,i)$ for which $t$ corresponds to a successful iteration will be $2\epsilon'_1$-close to at least $k\cdot T - T/2 > 2T/3$ points. In particular, any such $(t,i)$ will be among the vertices $V$ of $G$ in Algorithm~\ref{alg:select}. Conversely, for any $(t,i)\in V$, $\tilde{\vmu}^t_i$ is by definition $2\epsilon'_1$-close to at least $2T/3$ points and there are at most $T/2 < 2T/3$ points which do not correspond to successful iterations. In particular, at least one of the points that $\tilde{\vmu}^t_i$ is close to will correspond to a successful iteration, so by the triangle inequality $\norm{\tilde{\vmu}^t_i - \vmu_j}\le 3\epsilon'_1$ for some choice of $j\in[k]$.

	Observe that $G$ is $k$-partite because every vertex in $V$ is $3\epsilon'_1$-close to some center of $\rho$, but two vertices which are $3\epsilon'_1$-close to $\vmu_i$ and $\vmu_j$ respectively for $i\neq j$ must be distance at least $\Delta - 6\epsilon'_1 > 2\epsilon'_1$ apart. We conclude that with high probability, \textsc{Select} will output $3\epsilon'_1 = \epsilon_1$-accurate estimates for the centers of $\rho$.

	It remains to show that $\lambda^*_1,\lambda^*_2$ are $\epsilon_2$-accurate estimates for the mixing weights. We know the estimates $\tilde{\lambda}^t_i$ corresponding to successful iterations $t$ and center $\vmu_i$ lie in $\{\tilde{\lambda}^t_i\}_{(t,i)\in V^{(\ell)}}$ for some $\ell$. Then $\{\tilde{\lambda}^t_i\}_{(t,i)\in V^{(\ell)}}$ contains at least $\left(1 - \frac{1}{2k}\right)T > 2T/3$ values that are $\epsilon'_2$-close to $\lambda_1$, and at most $T/2 < 2T/3$ other values. Call these values ``good'' and ``bad'' respectively. Either the median is good, in which case we are done, or the median is bad, in which case because there are strictly more good values than bad values, the median must be upper and lower bounded by good values, in which case we are still done.

	Finally, note that in each iteration of the main loop of \textsc{LearnAiryDisks}, $\O$ is invoked exactly six times. Furthermore, other than these invocations of $\O$, the remaining steps of \textsc{LearnAiryDisks} all require constant time. So the runtime of \textsc{LearnAiryDisks} is indeed dominated by the $O(\log(1/\delta))$ calls to $\O$.
\end{proof}

\subsection{Learning Airy Disks Above the Diffraction Limit}
\label{subsec:above}


In this subsection we present the proof of Theorem~\ref{thm:above}. Recall that we are assuming that $\sigma = 1/\pi$ and $\Delta > \factor$, where $\factor$ is defined in \eqref{eq:factor_def}. Let $c \triangleq \frac{1}{2}(\Delta + \factor)$ and define $R \triangleq \frac{\factor}{2c}$ and $r \triangleq 1/2 - R$.

We will use the following Algorithm~\ref{alg:tensor} that we call \textsc{TensorResolve}. While this is only a slight modification of the tensor decomposition algorithm of \cite{huang2015super} for high-dimensional superresolution, our analysis is novel and obtains sharper results in low dimensions by using certain extremal functions \cite{gonccalves2018note,holt1996beurling,carneiro2017hilbert} arising in the study of de Branges spaces (see Theorem~\ref{thm:goncalves}.

\begin{algorithm}\caption{\textsc{TensorResolve}}\label{alg:tensor}
\begin{algorithmic}[1]
	\State \textbf{Input}: Error parameters $\epsilon_1,\epsilon_2$, confidence parameter $\delta$, access to $\eta$-approximate, $\Theta\left(\frac{k^2\log(k/\delta)}{\Min{(\Delta - \factor)}{1}}\right)$-query OTF oracle
	\State \textbf{Output}: With probability at least $1 - \delta$, an $(\epsilon_1,\epsilon_2)$-accurate estimate $(\{\tilde{\lambda}_i\}, \{\tilde{\vmu}_i\})$ for the parameters of $\rho$, provided the separation is sufficiently above the diffraction limit (see Lemma~\ref{lem:tensorresolve_correct})
		\State Define $R = \factor/2c$ and $r \triangleq 1/2 - R$.
		\State Sample $\omega^{(1)},...,\omega^{(m)}$ i.i.d. from the uniform distribution over $B^2(R)$. Also define $\omega^{(m+1)} = (1,0)$, $\omega^{(m+2)} = (0,1)$, and $\omega^{(m+3)} = (0,0)$. Define $m' \triangleq m + 3$
		\State Sample $v$ uniformly from $\S^1$ and define $v^{(1)}= r\cdot v$, $v^{(2)} = 2r\cdot v$, and $v^{(3)} = 0$. \label{step:vfreqs}
		\State Define $\xi_{a,b,i} \triangleq \omega^{(a)} + \omega^{(b)}) + v^{(i)}$ for every $a,b\in[m'], i\in[3]$. Query the OTF oracle at $\{\xi_{a,b,i}\}$ to obtain numbers $\{u_{a,b,i}\}$. Construct the tensor $\tilde{\vec{T}}\in\co^{m'\times m'\times 3}$ given by $\tilde{\vec{T}}_{a,b,i} = u_{a,b,i}/\widehat{A}[\xi_{a,b,i}]$.\label{step:constructtensor}
		\State Let $\hat{V}\in\R^{m'\times k}$ be the output of \textsc{Jennrich}($\tilde{\vec{T}}$) (defined in Algorithm~\ref{alg:jennrich}). Divide each column $\hat{V}^j$ by a factor of $\hat{V}_{m,j}$.\label{step:scaling}
		\State For each $j\in[k]$, $i\in[2]$, let $\hat{\mu}_j\in\R^2$ have $i$-th entry equal to the argument of the projection of $\hat{V}_{m+i,j}$ onto the complex disk. \label{step:standardbasis}
		\State Query the OTF oracle at frequencies $\{\omega^{(a)}\}_{a\in[m']}$ to get numbers $\{u'_a\}_{a\in[m']}$ and form the vector $\hat{b}\in\R^{m'}$ whose $a$-th entry is $u'_a/\widehat{A}[\omega^{(a)}]$ for every $a\in[m']$.
		\State Let $\hat{\lambda}\in\R^k = \argmin_{\lambda}\norm{\hat{V}\lambda - \hat{b}}_2$.\label{step:solvelambda}
		\State Return $(\hat{\lambda}_1,...,\hat{\lambda}_k)$ and $(\hat{\mu}_1,...,\hat{\mu}_k)$.
\end{algorithmic}
\end{algorithm}

Using the notation of \textsc{TensorResolve}, define the tensor $\vec{T}\in\co^{m'\times m'\times 3}$ given by \begin{equation}\vec{T}_{a,b,i} = \sum^k_{j = 1}\lambda_j e^{-2\pi i \langle \mu_j, \omega^{(a)} + \omega^{(b)} + v^{(i)}\rangle}\end{equation} and note that it admits a low-rank decomposition as \begin{equation}
	\vec{T} = \sum^k_{j=1}V^j\otimes V^j\otimes (W^j D_{\lambda}),
	\label{eq:Tdecomp}
\end{equation} where $D_{\lambda}$ is the diagonal matrix whose entries consist of the mixing weights $\{\lambda_j\}$ and, for every $j\in[k]$, $W^j = (e^{-2\pi i \langle \mu_j, v^{(1)}\rangle}, e^{-2\pi i \langle \mu_j, v^{(2)}\rangle}, e^{-2\pi i \langle \mu_j, v^{(3)}\rangle})$ and $V^j = (e^{-2\pi i \langle \mu_j, \omega^{(1)}\rangle}, \cdots ,e^{-2\pi i \langle \mu_j, \omega^{(m')}\rangle})$. Let $V\in\R^{m'\times k}$ denote the matrix whose $j$-th column is $V^j$.

Note that by our choice of $r,R$ and triangle inequality, we have that $\norm{\omega^{(a)} + \omega^{(b)} + v^{(i)}}_2 \le r + 2R = 1 - \frac{c - \factor}{2c} < 1$ for any entry index $a,b,i$. So if $\{u_{a,b,i}\}$ are the numbers obtained from an $\eta$-approximate, $(m+3)$-query OTF oracle as in Algorithm~\ref{alg:tensor}, and $\tilde{\vec{T}}$ is constructed as in Step~\ref{step:constructtensor} of \textsc{TensorResolve}, then by Lemma~\ref{lem:Fapproximate} we have that \begin{equation}
	|\vec{T}_{a,b,i} - \tilde{\vec{T}}_{a,b,i}| \le \frac{\eta}{\widehat{A}[1 - \frac{c - \factor}{2c}]} \le \eta\cdot\left(\frac{c - \factor}{2c}\right)^2,
\end{equation} where the last step follows by Fact~\ref{fact:ahatestimate}.

The following is a consequence of the stability of Jennrich's algorithm.

\begin{restatable}{lem}{tensor}[e.g. \cite{huang2015super}, Lemma 3.5]
	\label{lem:jennrich}
	For any $\epsilon,\delta > 0$, suppose $|\vec{T}_{a,b,i} - \tilde{\vec{T}}_{a,b,i}| \le \eta'$ for $\eta' \triangleq O\left(\frac{(c - \factor)\delta\Delta\lambda_{\min}^2}{k^{5/2} m^{3/2}\kappa(V)^5}\cdot \epsilon\right)$, and let $\hat{V}= \text{\textsc{Jennrich}}(\tilde{\vec{T}})$ (Algorithm~\ref{alg:jennrich}). Then with probability at least $1 - \delta$ over the randomness of $v^{(1)}$, there exists permutation matrix $\Pi$ such that $\norm{\widehat{V} - V\Pi}_F \le \epsilon$ for all $j\in[k]$.
\end{restatable}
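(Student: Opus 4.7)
} The plan is to follow the standard template for stability of Jennrich's algorithm on rank-$k$ third-order tensors with two identical factor matrices, instantiated for the specific decomposition in \eqref{eq:Tdecomp}, and to use the randomness of $v^{(1)}$ only at the very end to control the generalized eigenvalue gap. Recall that the three slices of $\vec{T}$ along the third mode are $\vec{T}[:,:,i] = V D_i V^{\top}$ for $i \in [3]$, where $D_i$ is diagonal with entries $\lambda_j e^{-2\pi i \langle \mu_j, v^{(i)}\rangle}$ and in particular $D_3 = D_\lambda$. Writing $\tilde{M}_i = \tilde{\vec{T}}[:,:,i]$ and $M_i = \vec{T}[:,:,i]$, we have the entrywise bound $|M_i - \tilde{M}_i| \le \eta'$, so $\norm{M_i - \tilde{M}_i}_F \le m' \eta' = O(m \eta')$ for each $i$.

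First, I would handle the matrix $M_2 M_3^{\dagger}$, whose eigendecomposition is at the heart of Jennrich. In the noiseless case, $M_2 M_3^{\dagger} = V D_2 D_\lambda^{-1} V^{\dagger}$, so its eigenvalues are $\{e^{-2\pi i \langle \mu_j, v^{(1)}\rangle}\}_{j \in [k]}$ (using $v^{(2)} = 2 v^{(1)}$ so that in the companion pair $M_1 M_3^{\dagger}$ below we recover exactly the same eigenvectors) and its eigenvectors are the columns of $V$. The key quantitative inputs are that $\sigma_{\min}(V) \ge 1/\kappa(V)\cdot \sqrt{m}$-type bound and $\norm{M_3^{\dagger}} \le O(\kappa(V)^2/\lambda_{\min})$. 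From standard pseudo-inverse perturbation (Wedin), together with the entrywise/Frobenius bound above, one gets $\norm{\tilde{M}_2 \tilde{M}_3^{\dagger} - M_2 M_3^{\dagger}} \le O\bigl(m^{3/2}\eta' \kappa(V)^4/\lambda_{\min}^2\bigr)$, where the extra $\sqrt{m}$ factor absorbs the Frobenius-to-spectral conversion on the $m' \times m'$ slices.

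Next, I would invoke stability of the (simultaneous) eigendecomposition of $\tilde{M}_2 \tilde{M}_3^{\dagger}$, e.g. via a Bauer--Fike / sin-$\Theta$ type bound. If $\mathrm{gap}$ denotes the minimum pairwise separation of the eigenvalues $\{e^{-2\pi i \langle \mu_j, v^{(1)}\rangle}\}$ on the unit circle, then the perturbation bound on eigenvectors scales like the spectral error divided by $\mathrm{gap}$, times $\kappa(V)$ for converting perturbations in the eigenvector basis back to $\R^{m'}$. This yields, up to a permutation $\Pi$ and column scalings, a per-column error of $O\!\bigl(m^{3/2}\eta' \kappa(V)^5/(\lambda_{\min}^2 \cdot \mathrm{gap})\bigr)$, and hence $\norm{\hat V - V\Pi}_F \le O\!\bigl(\sqrt{k}\, m^{3/2}\eta' \kappa(V)^5/(\lambda_{\min}^2 \cdot \mathrm{gap})\bigr)$ after the rescaling step in Step~\ref{step:scaling}, whose own stability is easy since $\hat V_{m,j} = e^{-2\pi i \langle \mu_j, \omega^{(m)}\rangle}$ lies on the unit circle.

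The only remaining step is to lower bound $\mathrm{gap}$ with probability $1 - \delta$ over $v$. Since $v^{(1)} = r v$ with $r = \Theta(c - \factor)$ and $\norm{\mu_i - \mu_j}_2 \ge \Delta$, Lemma~\ref{lem:rand_rotation} together with a union bound over $\binom{k}{2}$ pairs yields $|\langle \mu_i - \mu_j, v\rangle| \ge \Omega(\Delta \delta / k^2)$ simultaneously for all $i \ne j$ with probability at least $1 - \delta$. Because $\radius \le 1/3$ ensures that $2\pi r \langle \mu_i - \mu_j, v\rangle$ stays well inside $(-\pi, \pi)$ with no wrap-around on the circle, this gives $\mathrm{gap} \ge \Omega((c - \factor)\Delta\delta/k^2)$. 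Plugging this into the bound above and requiring the result to be at most $\epsilon$ exactly matches the claimed $\eta' = O\bigl((c - \factor)\delta\Delta\lambda_{\min}^2 \cdot \epsilon/(k^{5/2} m^{3/2} \kappa(V)^5)\bigr)$. The main obstacle I anticipate is carefully bookkeeping the $\kappa(V)$ exponents across the pseudo-inverse step and the eigendecomposition step, and making sure the eigenvalue-gap argument is applied in the right basis so that one really does get $\kappa(V)^5$ rather than a worse power.
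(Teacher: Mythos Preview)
Your overall template (slice perturbation $\to$ eigenproblem perturbation $\to$ eigenvector stability $\to$ randomized gap bound via anti-concentration of $\langle\mu_i-\mu_j,v\rangle$) matches the paper's, but the specific matrix you analyze is not the one \textsc{Jennrich} actually forms. The algorithm first takes the $k$-SVD $\hat P\hat\Lambda\hat P^\dagger$ of the \emph{first} slice $\tilde{\vec T}(\Id,\Id,e_1)$, whitens to $\hat E_i=\hat P^\dagger\tilde{\vec T}(\Id,\Id,e_i)\hat P\in\co^{k\times k}$, and eigendecomposes $\hat M=\hat E_1\hat E_2^{-1}$; the output is $\hat V=\hat P\hat U$. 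You instead analyze the $m'\times m'$ matrix $\tilde M_2\tilde M_3^\dagger$. These are related (both have the $V^j$, resp.\ $U^j=P^\dagger V^j$, as eigenvectors), but they are not the same object, and the lemma is specifically about the output of \textsc{Jennrich}. Also, the eigenvalues in the paper's setup are $e^{-2\pi i\langle\mu_j,v^{(1)}-v^{(2)}\rangle}$, not $e^{-2\pi i\langle\mu_j,v^{(1)}\rangle}$, and the rescaling by $\hat V_{m,j}$ you invoke is Step~\ref{step:scaling} of \textsc{TensorResolve}, not part of \textsc{Jennrich}; the normalization inside \textsc{Jennrich} is that $\hat U$ has columns of norm $\sqrt m$.

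Consequently the paper's proof has a layer you are missing: it first bounds $\|\hat P-P\|_2$ via Wedin's theorem, then bounds $\|\hat M-M\|_2$ by writing $M-\hat M = MZ_2(\Id+E_2^{-1}Z_2)^{-1}E_2^{-1}+Z_1E_2^{-1}$ with $Z_i=\hat E_i-E_i$ and tracking how both the $\hat P$-error and the slice errors enter $Z_i$, then bounds $\|\hat U^j-U^{\tau(j)}\|$ by conjugating $\hat M$ by $U$ and using Gershgorin plus a direct coordinate argument, and finally assembles $\|\hat V^j-V^{\tau(j)}\|\le\sqrt m\,\|\hat P-P\|+\|\hat U^j-U^{\tau(j)}\|$. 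The $\kappa(V)^5$ you flagged arises precisely from this chain: one $\sigma_{\min}(V)^{-2}$ from Wedin on $\hat P$ (which feeds into $\|Z_i\|$), another $\sigma_{\min}(V)^{-2}$ from inverting $E_2$, and one $\sigma_{\min}(U)^{-1}=\sigma_{\min}(V)^{-1}$ from the eigenvector step. Your direct-pseudoinverse route would prove a comparable bound for a closely related variant, but to establish Lemma~\ref{lem:jennrich} as stated you need to follow the whitening structure of the actual algorithm.
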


The setting of parameters in \cite{huang2015super} is slightly different from ours, so we provide a self-contained proof of Lemma~\ref{lem:jennrich} in Appendix~\ref{app:jennrich}.

We will also need the following basic lemma about the stability of solving for $\hat{\lambda}$ in Step~\ref{step:solvelambda} in \textsc{TensorResolve}.

\begin{lem}
	For any $\epsilon,\epsilon'>0$, if $\lambda\in\R^k$ satisfies $V\lambda = b$ for some $V\in\R^{m'\times k}$ and $b\in\R^{m'}$, and furthermore $\hat{V},\hat{b}$ satisfy $\norm{V - \hat{V}}_2 \le \epsilon$ and $\norm{b - \hat{b}}_2\le \epsilon'$, then $\hat{\lambda}\triangleq \argmin_{\hat{\lambda}}\norm{\hat{V}\hat{\lambda} - \hat{b}}_2$ satisfies $\norm{\lambda - \hat{\lambda}}_2 \le \frac{2\epsilon\norm{\lambda}_2 + 2\epsilon'}{\sigma_{\min}(V) - \epsilon}$.
	\label{lem:solve_lam}
\end{lem}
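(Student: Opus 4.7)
The plan is to exploit the optimality of $\hat\lambda$ as a least-squares solution together with Weyl's inequality for singular values. The key intermediate quantity is $\|\hat V(\hat\lambda - \lambda)\|_2$, which I will bound from above via the defining optimality of $\hat\lambda$, and from below via $\sigma_{\min}(\hat V)\cdot\|\hat\lambda - \lambda\|_2$.

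First, I would record two consequences of the hypotheses. Since $V\lambda = b$, the triangle inequality gives
\begin{equation}
\|\hat V\lambda - \hat b\|_2 \le \|(\hat V - V)\lambda\|_2 + \|V\lambda - b\|_2 + \|b - \hat b\|_2 \le \epsilon\|\lambda\|_2 + \epsilon'.
\end{equation}
Since $\hat\lambda$ minimizes $\|\hat V x - \hat b\|_2$ over $x\in\R^k$, in particular $\|\hat V\hat\lambda - \hat b\|_2 \le \|\hat V\lambda - \hat b\|_2 \le \epsilon\|\lambda\|_2 + \epsilon'$. Combining these by triangle inequality yields
\begin{equation}
\|\hat V(\hat\lambda - \lambda)\|_2 \le \|\hat V\hat\lambda - \hat b\|_2 + \|\hat b - \hat V\lambda\|_2 \le 2(\epsilon\|\lambda\|_2 + \epsilon').
\end{equation}

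Second, I would invoke Weyl's inequality for singular values: $|\sigma_{\min}(\hat V) - \sigma_{\min}(V)| \le \|\hat V - V\|_2 \le \epsilon$, so $\sigma_{\min}(\hat V) \ge \sigma_{\min}(V) - \epsilon$. (If this quantity is nonpositive, the bound in the lemma statement is vacuous and there is nothing to prove, so I may assume it is positive, whence $\hat V$ has full column rank.) This lower bound lets me convert the estimate on $\|\hat V(\hat\lambda - \lambda)\|_2$ into one on $\|\hat\lambda - \lambda\|_2$ itself: for any $u\in\R^k$ we have $\|\hat V u\|_2 \ge \sigma_{\min}(\hat V)\|u\|_2$, so
\begin{equation}
\|\hat\lambda - \lambda\|_2 \le \frac{\|\hat V(\hat\lambda - \lambda)\|_2}{\sigma_{\min}(\hat V)} \le \frac{2\epsilon\|\lambda\|_2 + 2\epsilon'}{\sigma_{\min}(V) - \epsilon},
\end{equation}
which is the claimed bound.

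There is no real obstacle here: the entire argument is a standard two-step perturbation calculation, combining the variational characterization of the least-squares minimizer with Weyl's inequality. The only mild subtlety is handling the case $\sigma_{\min}(V) \le \epsilon$, which I dispose of by noting that the asserted bound is vacuous in that regime.
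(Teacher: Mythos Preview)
Your proof is correct and follows essentially the same route as the paper: bound $\|\hat V\lambda - \hat b\|_2$ by $\epsilon\|\lambda\|_2 + \epsilon'$, use optimality of $\hat\lambda$ and the triangle inequality to get $\|\hat V(\hat\lambda - \lambda)\|_2 \le 2\epsilon\|\lambda\|_2 + 2\epsilon'$, then divide by $\sigma_{\min}(\hat V)\ge \sigma_{\min}(V)-\epsilon$. Your version is in fact slightly more careful in explicitly invoking Weyl's inequality and disposing of the degenerate case $\sigma_{\min}(V)\le\epsilon$.
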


\begin{proof}
	Note that \begin{equation}\norm{\hat{V}\lambda - \hat{b}}_2 \le \norm{(\hat{V} - V)\lambda}_2 +\norm{\hat{b} - b}_2 \le \epsilon \norm{\lambda}_2 + \epsilon'.\end{equation} By triangle inequality and definition of $\hat{\lambda}$, $\norm{\hat{V}(\hat{\lambda} - \lambda)}_2 \le 2\epsilon\norm{\lambda}_2 + 2\epsilon'$, so $\norm{\hat{\lambda}- \lambda}_2 \le \frac{2\epsilon\norm{\lambda}_2 + 2\epsilon'}{\sigma_{\min}(\hat{V})}$. The lemma follows because $\sigma_{\min}(V') \ge \sigma_{\min}(V) - \epsilon$.
\end{proof}

It remains to show the following condition number bound.

\begin{lem}
	For any $\delta>0$, if $m = \Theta\left(\frac{k^2\log(/\delta)}{\Min{(\Delta - \factor)}{1}}\right)$, then $\kappa(V) \le O\left(\Max{k}{\frac{k}{\sqrt{\Delta - \factor}}}\right)$ and $\sigma_{\min}(V) \ge \Omega\left(k^2\log(/\delta)\right)$ with probability at least $1 - \delta$.
	\label{lem:condition_number_bound}
\end{lem}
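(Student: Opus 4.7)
The plan is to bound the empirical Gram matrix $V^*V$ by combining matrix concentration with a ball-minorant estimate from the theory of de Branges spaces. Write $V^*V = G_{\mathrm{rand}} + G_{\mathrm{det}}$, where $G_{\mathrm{rand}} \triangleq \sum_{a=1}^m u(\omega^{(a)})u(\omega^{(a)})^*$ is the random part with $u(\omega) \triangleq (e^{-2\pi i\iprod{\vmu_j,\omega}})_{j\in[k]}$, and $G_{\mathrm{det}}$ collects the three rank-one contributions from the deterministic frequencies $\omega^{(m+1)},\omega^{(m+2)},\omega^{(m+3)}$. Since $\norm{u(\omega)}_2^2 = k$ deterministically and the $\omega^{(a)}$ are i.i.d.\ uniform on $B^2(R)$, each summand of $G_{\mathrm{rand}}$ is a rank-one PSD matrix of operator norm $k$, and Matrix Bernstein yields, with probability at least $1-\delta$,
\begin{equation*}
\norm{\tfrac{1}{m} G_{\mathrm{rand}} - \bar G}_{\mathrm{op}} \;\le\; C\sqrt{\tfrac{k\log(k/\delta)}{m}\cdot \norm{\bar G}_{\mathrm{op}}} \;+\; C\tfrac{k\log(k/\delta)}{m},
\end{equation*}
where $\bar G \triangleq \E[u(\omega)u(\omega)^*]$.

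To control the spectrum of $\bar G$, I would invoke ball extremal functions. For any $\lambda\in\co^k$,
\begin{equation*}
\lambda^*\bar G\lambda \;=\; \tfrac{1}{\pi R^2}\int_{B^2(R)} \Bigl|\sum_j \lambda_j e^{-2\pi i\iprod{\vmu_j,\omega}}\Bigr|^2\, d\omega,
\end{equation*}
so choosing a ball minorant $F$ with $F(\omega) \le \bone{\omega\in B^2(R)}$ and $\widehat F$ supported in $B^2(\Delta)$ reduces the right-hand side via Parseval to $\widehat F(0)\norm{\lambda}_2^2/(\pi R^2)$, since $\norm{\vmu_j-\vmu_{j'}}_2 \ge \Delta$ kills every off-diagonal contribution $\widehat F(\vmu_{j'}-\vmu_j)$. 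A matching ball majorant produces a corresponding upper bound on $\norm{\bar G}_{\mathrm{op}}$. Invoking the Gon\c{c}alves--Holt--Carneiro extremal construction (Theorem~\ref{thm:goncalves}) and tracking how the extremal integral $\int F$ behaves in the near-critical regime yields both $\lambda_{\min}(\bar G)$ and $\lambda_{\max}(\bar G)$ with the required quantitative dependence on $\Delta - \factor$.

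Substituting these extremal estimates into the Bernstein bound then pins down the sample size: $m = \Theta(k^2\log(k/\delta)/\min\{\Delta-\factor,1\})$ is exactly the threshold at which the concentration deviation drops below $\lambda_{\min}(\bar G)$. After absorbing the $O(k)$ contribution of $G_{\mathrm{det}}$, this gives $\sigma_{\min}(V)^2 \gtrsim m\cdot \lambda_{\min}(\bar G)$ and $\sigma_{\max}(V)^2 \lesssim m\cdot \lambda_{\max}(\bar G)$, yielding both the claimed lower bound on $\sigma_{\min}(V)$ and the ratio $\kappa(V)\lesssim k/\sqrt{\min\{\Delta-\factor,1\}}$.

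The main technical obstacle is not the concentration step---Matrix Bernstein is essentially off-the-shelf---but rather the extremal-function step. The algorithm's choice $R = \factor/(\Delta+\factor)$ produces $R\Delta \in [\factor/2,\factor)$, which is strictly subcritical for the Gon\c{c}alves ball-minorant theorem, so one cannot black-box the statement giving $\int F = \pi R^2$ and must instead work in the subcritical regime, extracting an explicit polynomial dependence of $\int F$ on the critical gap via the de Branges-space arguments of \cite{holt1996beurling,carneiro2017hilbert}. Quantifying this near-critical behavior is precisely what produces the $\sqrt{\Delta-\factor}$ factor in the final condition-number bound.
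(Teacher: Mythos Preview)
Your overall strategy---matrix concentration on the random Gram matrix plus a ball-minorant bound on its expectation---is exactly what the paper does. The paper differs in a few inessential ways: it uses matrix Hoeffding rather than Bernstein; for the upper bound on $\lambda_{\max}(\bar G)$ it uses the trivial Cauchy--Schwarz estimate $|\langle\lambda,u(\omega)\rangle|^2\le k\norm{\lambda}_2^2$ rather than a majorant; and instead of splitting off $G_{\mathrm{det}}$ and absorbing it later, it works with the submatrix $V^*$ consisting of the first $m$ (random) rows and invokes the simple bound $\kappa(V)\le\sqrt{2k}\,\kappa(V^*)$ from \cite{huang2015super}. Any of these variants would go through.

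Your one substantive misstep is the claimed ``main technical obstacle.'' The regime is \emph{not} subcritical for Theorem~\ref{thm:goncalves}. After rescaling the unit-ball minorant to radius $R$, the Fourier-support radius becomes $r/R$, and you need $r/R\le\Delta$, i.e.\ $r\le R\Delta$. Now $R\Delta=\factor\Delta/(\Delta+\factor)$, and since $\Delta>\factor$ this is strictly larger than $\factor/2=j_{0,1}/\pi$. Hence one can take $r=\min\{R\Delta,\,(\factor/2+j_{1,1}/\pi)/2\}$, which lies squarely in the interval $(j_{0,1}/\pi,\,j_{1,1}/\pi)$ where Theorem~\ref{thm:goncalves} applies as a black box. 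The $(\Delta-\factor)$ dependence then falls out directly from the explicit formula $\widehat M[0]=\frac{(2/r)^2}{|\S^1|}\cdot\frac{C(2,r)}{1+C(2,r)/2}$: as $r\downarrow\factor/2$ one has $C(2,r)\to 0$ linearly, giving $\widehat{M'}[0]\ge \Omega(r-\factor/2)\ge\Omega(\min\{\Delta-\factor,1\})$. No additional de Branges-space analysis is required; this is precisely what the paper's Lemma~\ref{lem:use_minorant} carries out in a few lines.
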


\begin{proof}
	Let $V^*\in\R^{m\times k}$ denote the submatrix given by the first $m$ rows of $V$. We will need the following basic lemma from \cite{huang2015super} relating the condition number of $V^*$ to that of $V$:

	\begin{lem}[\cite{huang2015super}, Lemma 3.8]
		$\kappa(V) \le \sqrt{2k}\cdot \kappa(V^*)$.
		\label{lem:triv_kappa}
	\end{lem}

	The primary technical component of this section is to upper bound $\kappa(V^*)$. First, note that given any $\lambda\in\co^{k-1}$, we have that \begin{equation}
		\lambda^{\dagger}{V^*}^{\dagger}V^*\lambda = \sum^m_{i=1}|\langle \lambda, V^*_i\rangle|^2 = \sum^m_{i=1}\left|\sum^k_{j=1}\lambda_j e^{-2\pi i \langle \mu_j, \omega^{(i)}\rangle}\right|^2.
		\end{equation}
	As each $\omega^{(i)}$ is an independent draw from the uniform distribution over $\S^1$, we have that 
	\begin{equation}
		\E_{\omega^{(1)},...,\omega^{(m)}}[\lambda^{\dagger}{V^*}^{\dagger}V^*\lambda] = m\int_{B^2(R)} \left|\sum^k_{j=1}\lambda_j e^{-2\pi i \langle \mu_j, \omega\rangle}\right|^2 \d\psi(\omega),
	\end{equation} where $\d\psi(\omega)$ is the uniform measure over $B^2(R)$. Furthermore, for any $\omega\in B^2(R)$ and $i\in[m]$, we have that \begin{equation}
		0\le |\langle\lambda, V^*_i\rangle|^2\le \norm{\lambda}^2_1 \le k\cdot\norm{\lambda}^2_2.
		\label{eq:lambda_cs}
	\end{equation}
	So by matrix Hoeffding applied to the random variables $m\cdot {V^*}^{\dagger}_1V^*_1,.\ldots, m\cdot {V^*}^{\dagger}_mV^*_m$, each of which is upper bounded in spectral norm by $m\cdot k$ based on \eqref{eq:lambda_cs}, we conclude that \begin{equation}
		\Pr\left[\norm{{V^*}^{\dagger}V^* - \E_{\omega^{(1)},\ldots,\omega^{(m)}}[{V^*}^{\dagger}V^*]}_2 > \sqrt{m}kt \right] \le k\cdot e^{-\Omega(t^2)} \ \forall \ t > 0.\label{eq:chernoff}
	\end{equation}

	Lemma~\ref{lem:use_minorant} below allows us to bound the quadratic form given by the expectation term evaluated at any $\lambda$. Taking $t = O(\sqrt{\log k/\delta})$ and $m = \Theta\left(\frac{k^2\log(k/\delta)}{\Min{(\Delta - \factor)}{1}}\right)$ in \eqref{eq:chernoff} and applying Lemma~\ref{lem:use_minorant}, we conclude that with probability at least $1 - \delta$, \begin{equation}
		\Omega(m)\cdot\{\Min{(\Delta - \factor)}{1}\}\cdot\norm{\lambda}^2_2 \le \lambda^{\dagger}{V^*}^{\dagger}V^*\lambda \le O(m)\cdot\left(k + \{\Min{(\Delta - \factor)}{1}\}\right)\cdot\norm{\lambda}^2_2,
	\end{equation}
	from which it follows that with this probability, $\kappa(V^*) \le O\left(\frac{k}{\Min{(\Delta - \factor)}{1}}\right)^{1/2}$, from which the lemma follows by Lemma~\ref{lem:triv_kappa}.
\end{proof}

It remains to show Lemma~\ref{lem:use_minorant} below, the key technical ingredient of this section. We will require the following special case of a result of \cite{gonccalves2018note}, which essentially follows from results of \cite{carneiro2017hilbert,holt1996beurling}. This can be thought of as the high-dimensional generalization of the well-known Beurling-Selberg minorant (see, e.g., \cite{vaaler1985some} for a discussion of the one-dimensional case).

\begin{thm}[\cite{gonccalves2018note}, Theorem 1]
	For any $d\in\N$ and $\frac{j_{d/2-1,1}}{\pi} < r < \frac{j_{d/2,1}}{\pi}$, there exists a function $M\in L^1(\R^d)$ whose Fourier transform is supported in $B^d(r)$, and which satisfies $M(x) \le \bone{x\in B^d(1)}$ for all $x\in\R^d$ and $\widehat{M}[0] = \frac{(2/r)^d}{|\S^{d-1}|}\cdot\frac{C(d,r)}{1 + C(d,r)/d}$, where $|\S^{d-1}|$ denotes the surface area of $\S^{d-1}$ and $C(r,d) \triangleq -\frac{\pi r J_{d/2-1}(\pi r)}{J_{d/2}(\pi r)} > 0$.
	\label{thm:goncalves}
\end{thm}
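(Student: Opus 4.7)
The plan is to reduce the construction of $M$ to a one-dimensional extremal problem by exploiting the radial symmetry of the constraints, and then to construct a radial minorant explicitly via an interpolation-at-Bessel-zeros formula in the spirit of Holt-Vaaler~\cite{holt1996beurling}.

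First I would symmetrize. Both constraints defining the problem, namely $M(x)\le \bone{x\in B^d(1)}$ and that $\widehat{M}$ is supported in $B^d(r)$, are invariant under $\mathrm{SO}(d)$, so averaging any candidate $M$ over $\mathrm{SO}(d)$ preserves both constraints, preserves the value $\widehat{M}[0] = \int M$, and yields a radial function. One may therefore assume $M(x) = F(|x|)$ for some one-dimensional profile $F\colon[0,\infty)\to\R$. For such radial $M$, the Fourier transform reduces to the Hankel-type integral
\begin{equation}
\widehat{M}(\xi) \;=\; 2\pi\,|\xi|^{-(d/2-1)} \int_0^\infty F(\rho)\, J_{d/2-1}(2\pi\rho|\xi|)\, \rho^{d/2}\,d\rho,
\end{equation}
so the constraint on the support of $\widehat{M}$ becomes the statement that $F$ is a Hankel-Paley-Wiener function of order $d/2-1$ and type $\pi r$. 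The problem thus becomes a one-dimensional extremal problem: maximize the weighted integral $\int_0^\infty F(\rho)\,\rho^{d-1}\,d\rho$ over all Hankel-band-limited $F$ satisfying $F(\rho)\le\bone{[0,1]}(\rho)$.

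The construction I would use is an interpolation formula at Bessel zeros, adapting the classical Beurling-Selberg-Vaaler ansatz~\cite{vaaler1985some}. Let $\{\xi_k\}$ denote the positive zeros of $\rho \mapsto J_{d/2}(\pi r\rho)$. By a radial Paley-Wiener/sampling theorem, every Hankel-band-limited $F$ of the required type is recovered from its values $\{F(\xi_k)\}$ together with first-derivative data at the node straddling $\rho = 1$. I would prescribe $F(\xi_k)=\bone{\xi_k\le 1}$ at all sample points and add the standard corrective term at the straddling node to enforce the minorant property. The resulting $F$ is an explicit convergent sum of Bessel-function-based interpolating kernels. Evaluating $\widehat{M}[0] = |\S^{d-1}|\int_0^\infty F(\rho)\rho^{d-1}\,d\rho$ by the Gauss-type quadrature formula associated with the zeros $\{\xi_k\}$, whose weights involve $1/J'_{d/2}(\pi r\xi_k)^2$, and then applying Lommel-type Bessel identities to collapse the resulting series should yield the stated closed form in terms of $C(d,r)$.

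The main obstacle is verifying the pointwise minorant property $F(\rho)\le\bone{[0,1]}(\rho)$ for \emph{all} $\rho\ge 0$, rather than merely at the interpolation nodes $\{\xi_k\}$. Between consecutive zeros the difference $F - \bone{[0,1]}$ could in principle change sign, and one must track the sign structure carefully. This is exactly where the hypothesis $j_{d/2-1,1}/\pi < r < j_{d/2,1}/\pi$ is used: in this range the first zero of $J_{d/2}(\pi r\cdot)$ exceeds $1$ while the first zero of $J_{d/2-1}(\pi r\cdot)$ lies below $1$, so exactly one sample node straddles $\rho = 1$, the derivative correction is localized, and a convexity/monotonicity analysis of ratios of Bessel functions closes the sign argument. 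Turning this last step into a fully rigorous proof is the technical heart of~\cite{holt1996beurling,carneiro2017hilbert,gonccalves2018note}.
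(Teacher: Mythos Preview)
The paper does not prove this statement; it is quoted verbatim as a black-box result from \cite{gonccalves2018note} (building on \cite{holt1996beurling,carneiro2017hilbert}) and is used only as an input to Lemma~\ref{lem:use_minorant}. There is therefore no ``paper's own proof'' to compare your proposal against.

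That said, your sketch is a faithful outline of how the cited references actually proceed: radial symmetrization reduces the problem to a one-dimensional Hankel/de~Branges extremal problem, and the minorant is built by a Bessel-zero interpolation ansatz in the Holt--Vaaler style. One small correction: in the regime $r < j_{d/2,1}/\pi$ the first positive zero of $\rho\mapsto J_{d/2}(\pi r\rho)$ already exceeds $1$, so in fact \emph{no} sampling node lies in $(0,1]$ rather than exactly one straddling it; the construction in \cite{gonccalves2018note} accordingly prescribes $F(\xi_k)=0$ at every node and the minorant structure comes from a different mechanism than a single derivative correction at a straddling node. Your identification of the sign analysis of $F - \bone{[0,1]}$ as the crux, and of the hypothesis on $r$ as what makes it tractable, is correct.
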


\begin{lem}
	\begin{equation}
		\Omega(\Min{(\Delta - \factor)}{1})\cdot \norm{\lambda}^2_2 \le \int_{B^2(R)} \left|\sum^k_{j=1}\lambda_j e^{-2\pi i \langle \mu_j, \omega\rangle}\right|^2 \d\psi(\omega) \le k\norm{\lambda}^2_2
		\label{eq:hermitian}
	\end{equation} where $\d\psi(\omega)$ denotes the uniform probability measure over $B^2(R)$ for $R = \frac{\factor}{2\Delta}$.\footnote{In fact, one can improve the upper bound in \eqref{eq:hermitian} by using a suitable majorant for the indicator of the ball, but because we are only after polynomial time and sample complexity, this is not needed.}
	\label{lem:use_minorant}
\end{lem}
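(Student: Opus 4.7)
The plan is to handle the two bounds separately. The upper bound is immediate from Cauchy--Schwarz: pointwise, $|\sum_{j=1}^k \lambda_j e^{-2\pi i\langle\vmu_j,\omega\rangle}|^2\le (\sum_j |\lambda_j|)^2\le k\|\lambda\|_2^2$, and integrating against the probability measure $\d\psi$ preserves this bound.

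For the lower bound, my strategy is to replace the indicator of $B^2(R)$ inside the integral by a suitably rescaled version of the ball minorant supplied by Theorem~\ref{thm:goncalves}. Concretely, let $M_0\colon\R^2\to\R$ satisfy $M_0(x)\le\bone{x\in B^2(1)}$ with $\widehat{M_0}$ supported on $B^2(r_0)$ for some $r_0\in(\factor/2,\ j_{1,1}/\pi)$ to be chosen below. Define the dilated minorant $M(\omega)\triangleq M_0(\omega/R)$; then $M(\omega)\le\bone{\omega\in B^2(R)}$, and $\widehat{M}(\xi)=R^2\widehat{M_0}(R\xi)$ is supported on $B^2(r_0/R)$. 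Expanding the squared modulus and integrating termwise,
\begin{equation}
\int_{\R^2}M(\omega)\,|f(\omega)|^2\,\d\omega \;=\; \sum_{j,j'}\lambda_j\overline{\lambda_{j'}}\,\widehat{M}(\vmu_j-\vmu_{j'}),
\end{equation}
where $f(\omega)\triangleq \sum_j \lambda_j e^{-2\pi i\langle \vmu_j,\omega\rangle}$. If $r_0$ is chosen so that $r_0/R<\Delta$, then $\Delta$-separation of the centers forces every off-diagonal term to vanish, leaving only the diagonal contribution $\widehat{M}(0)\|\lambda\|_2^2 = R^2\widehat{M_0}(0)\|\lambda\|_2^2$. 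Dividing by $\mathrm{vol}(B^2(R))=\pi R^2$ converts Lebesgue to probability measure and produces the bound $\int |f|^2\,\d\psi \ge \widehat{M_0}(0)\|\lambda\|_2^2/\pi$.

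The remaining task is to choose $r_0$ to maximize $\widehat{M_0}(0)$ subject to the support constraint $r_0/R<\Delta$. Since $R=\factor/(\Delta+\factor)$, this constraint rearranges to $r_0<\factor\Delta/(\Delta+\factor)$; writing $r_0=\factor/2+s$, admissible values satisfy $s\in(0,s_{\max})$ with $s_{\max}\triangleq \factor(\Delta-\factor)/(2(\Delta+\factor))=\Theta(\min\{\Delta-\factor,\ 1\})$. Taking $s=\Theta(s_{\max})$, I estimate $\widehat{M_0}(0)$ from Theorem~\ref{thm:goncalves}: using $|\S^1|=2\pi$ and Taylor-expanding around $r_0=\factor/2=j_{0,1}/\pi$ via the identities $J_0(j_{0,1})=0$ and $J_0'=-J_1$, one obtains $J_0(\pi r_0)\approx -\pi s\,J_1(j_{0,1})$, whence $C(2,r_0)\approx \pi^2 r_0 s$ and
\begin{equation}
\widehat{M_0}(0)\;=\;\frac{(2/r_0)^2}{2\pi}\cdot\frac{C(2,r_0)}{1+C(2,r_0)/2}\;=\;\Theta(s),
\end{equation}
since $r_0=\Theta(1)$. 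Combining with $s=\Theta(\min\{\Delta-\factor,\ 1\})$ yields the claimed lower bound.

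The main obstacle is the delicate balancing of constraints: $r_0$ must exceed $\factor/2=j_{0,1}/\pi$ for the minorant to carry any mass at zero frequency (otherwise $C(2,r_0)=0$), while staying strictly below $R\Delta$ so that the $\Delta$-separation annihilates the off-diagonal Fourier terms; the very same slack $s$ that quantifies the first constraint controls the size of $\widehat{M_0}(0)$ linearly. Some care is required to verify that the Taylor estimate of $C(2,r_0)$ remains valid uniformly over $s\in(0,s_{\max}]$, especially in the regime where $\Delta-\factor$ is bounded below and $s_{\max}=\Theta(1)$, and that $r_0$ never approaches $j_{1,1}/\pi$ (where $J_1(\pi r_0)$ vanishes and Theorem~\ref{thm:goncalves} ceases to apply) -- an easy check since $\factor\Delta/(\Delta+\factor)<\factor<j_{1,1}/\pi$ for all $\Delta>\factor$.
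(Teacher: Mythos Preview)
Your approach is essentially the paper's: minorize the indicator of $B^2(R)$ by a dilation of the extremal function from Theorem~\ref{thm:goncalves}, expand, and use $\Delta$-separation to kill the off-diagonal Fourier terms. The upper bound and the diagonal computation are fine.

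There is, however, a genuine slip in your last sentence. You assert that $r_0$ stays safely below $j_{1,1}/\pi$ because ``$\factor\Delta/(\Delta+\factor)<\factor<j_{1,1}/\pi$''. The second inequality is false: numerically $\factor=2j_{0,1}/\pi\approx 1.531$ while $j_{1,1}/\pi\approx 1.220$, so $\factor>j_{1,1}/\pi$. Consequently, for large $\Delta$ the upper constraint $r_0<R\Delta=\factor\Delta/(\Delta+\factor)$ alone does \emph{not} keep $r_0$ in the admissible window $(\factor/2,\,j_{1,1}/\pi)$ of Theorem~\ref{thm:goncalves}; as $\Delta\to\infty$ the right endpoint tends to $\factor\approx 1.53$, well past $j_{1,1}/\pi$. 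Your ``$s=\Theta(s_{\max})$'' choice with $s_{\max}\to\factor/2$ could then land outside the theorem's hypotheses, and your Taylor estimate for $C(2,r_0)$ (which relies on $J_1(\pi r_0)$ being bounded away from zero) would also break down near $r_0=j_{1,1}/\pi$.

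The fix is exactly what the paper does: take $r_0=\min\bigl\{\Delta R,\ \tfrac{1}{2}(\factor/2+j_{1,1}/\pi)\bigr\}$, i.e.\ cap $r_0$ at a fixed point strictly inside $(\factor/2,\,j_{1,1}/\pi)$. In the regime $\Delta-\factor\ge\Omega(1)$ this cap is active, $r_0-\factor/2=\Theta(1)$, and $\widehat{M_0}(0)=\Theta(1)$, matching $\min\{\Delta-\factor,1\}=\Theta(1)$; in the regime $\Delta-\factor=o(1)$ the cap is inactive and your linear Taylor analysis goes through verbatim. With this one-line correction your argument is complete and coincides with the paper's.
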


\begin{proof}
	The upper bound follows by \eqref{eq:lambda_cs}. We now show the lower bound. By Theorem~\ref{thm:goncalves} applied to $d = 2$, for any $\factor/2 < r < \frac{j_{1,1}}{\pi}$ there is a function $M$ which minorizes the indicator function of $B^2(1)$ and has Fourier transform supported in $B^2(r)$. Take $r = \{\Min{\Delta R}{\frac{\factor/2 + j_{1,1}/\pi}{2}}\}$ which satisfies $\factor/2 < r < \frac{j_{1,1}}{\pi}$. This implies that the function $M'(\omega) \triangleq \frac{1}{\pi R^2}\cdot \frac{1}{R}\cdot M(\omega/R)$ minorizes the density $\psi(\omega)$, has Fourier transform supported in $B^2(r)\subseteq B^2(\Delta)$, and satisfies \begin{equation}
		\widehat{M'}[0] = \frac{1}{\pi R^2}\frac{(2/r)^2}{|\S^1|}\cdot\frac{C(2,r)}{1 + C(2,r)/2} = \frac{4C(2,r)}{\pi^2 r^3 R^2\cdot (2 + C(2,r))} \ge \frac{r - \factor/2}{R^2} \ge 4r - 2\factor,
		\label{eq:Mzero}
	\end{equation} where in the last step we used that $R< 1/2$.
	 We can lower bound \eqref{eq:hermitian} by \begin{align}
		\int \left|\sum^k_{j=1}\lambda_j e^{-2\pi i \langle \mu_j, \omega\rangle}\right|^2 \cdot M'(\omega) \d\omega &= \sum^k_{j,j'=1}\lambda_j\lambda^{\dagger}_{j'} \int e^{-2\pi i \langle \mu_j - \mu_{j'}, \omega\rangle} \cdot M'(\omega) \d\omega \\
		&= \sum^k_{j,j'=1}\lambda_j\lambda^{\dagger}_{j'}\widehat{M'}[\mu_j - \mu_{j'}] \ge (4r - 2\factor)\norm{\lambda}^2_2,
	\end{align} where the last step follows by \eqref{eq:Mzero} and the fact that $\widehat{M'}[\mu_j - \mu_{j'}] = 0$ for all $j\neq j'$. The lemma follows from noting that $4r - 2\factor > \Min{\{2\factor(\Delta/c- 1)\}}{\left\{\frac{2j_{1,1}}{\pi} - \factor\right\}} \ge O(\Min{\Delta - \factor}{1})$.
\end{proof}

Putting everything together, we have the following guarantee:

\begin{lem}
	Let $\rho$ be a $\Delta$-separated superposition of $k$ Airy disks. For any $\epsilon_1,\epsilon_2,\delta > 0$, let \begin{equation}
			m = \Theta\left(\frac{k^2\log(k/\delta)}{\Min{(\Delta - \factor)}{1}}\right) \qquad \text{and} \qquad \eta = O\left(\frac{4\Delta^3\delta\lambda_{\min}^2}{(\Delta - \factor)k^{5/2} m^{3/2}\kappa(V)^5}\cdot \epsilon_1\right).\label{eq:eta_final2}
		\end{equation}
	Without loss of generality suppose $\epsilon_1 < 1/6$. Then the output $(\lambda^*_1,\lambda^*_2,\vmu^*_1,\vmu^*_2)$ of \textsc{TensorResolve}, given $\epsilon_1,\epsilon_2,\delta$ and access to an $\eta$-approximate, $m$-query OTF oracle $\O$ for $\rho$, satisfies \begin{equation*}\norm{\vmu_i - \vmu^*_{\tau(i)}}_2\le\epsilon_1 \; \; \text{and} \; \; |\lambda_i - \lambda_{\tau(i)}|\le\epsilon_2\end{equation*} for some permutation $\tau$ with probability at least $1 - \delta$. Furthermore, the runtime of \textsc{LearnAiryDisks} is polynomial in $k$, the number of OTF oracle queries, and the time it takes to make those queries.\label{lem:tensorresolve_correct}
\end{lem}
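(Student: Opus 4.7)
The plan is to chain together the stability of Jennrich's tensor decomposition (Lemma~\ref{lem:jennrich}) with the condition-number bound of Lemma~\ref{lem:condition_number_bound} on the factor matrix $V$ appearing in the decomposition~\eqref{eq:Tdecomp}. First I would verify that with the choice $R = \factor/(2c)$ and $r = 1/2 - R$, every frequency $\omega^{(a)} + \omega^{(b)} + v^{(i)}$ at which the OTF oracle is queried lies within distance $(c + \factor)/(2c) < 1$ of the origin. Combining Lemma~\ref{lem:Fapproximate} with Fact~\ref{fact:ahatestimate} then shows that the reconstructed tensor $\tilde{\vec{T}}$ satisfies $\abs{\tilde{\vec{T}}_{a,b,i} - \vec{T}_{a,b,i}} \le O\bigl(\eta/(\Delta - \factor)^2\bigr)$ entrywise.

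Next I would invoke Lemma~\ref{lem:jennrich} to conclude that for a suitably small $\eta$, with probability at least $1 - \delta$ over the random slicing direction $v$ and the draws of $\omega^{(1)},\ldots,\omega^{(m)}$ (the latter also providing the condition-number bound on $V$ via Lemma~\ref{lem:condition_number_bound}), the matrix $\hat{V}$ returned by Jennrich's algorithm satisfies $\norm{\hat{V} - V\Pi}_F \le \epsilon_{\text{tens}}$ for some permutation $\Pi$ and a target accuracy $\epsilon_{\text{tens}}$ to be calibrated. Because $V_{m+3,j} = 1$ for all $j$ (since $\omega^{(m+3)} = (0,0)$), the normalization in Step~\ref{step:scaling} makes the entries of the normalized $\hat{V}$ close to those of $V\Pi$ in the two rows corresponding to $\omega^{(m+1)} = (1,0)$ and $\omega^{(m+2)} = (0,1)$. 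Taking arguments of the projections onto the unit disk in Step~\ref{step:standardbasis} then recovers the two Cartesian coordinates of each $\vmu_{\tau(j)}$ with additive error $O(\epsilon_{\text{tens}})$, invoking the assumption $\epsilon_1 < 1/6$ together with $\radius \le 1/3$ to rule out wraparound ambiguity. Setting $\epsilon_{\text{tens}} = \Theta(\epsilon_1)$ and back-solving through Lemma~\ref{lem:jennrich} yields precisely the requirement on $\eta$ stated in~\eqref{eq:eta_final2}.

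To recover the mixing weights, I would observe that the vector $b \in \R^{m'}$ with $b_a = \sum_j \lambda_j e^{-2\pi i \langle \vmu_j, \omega^{(a)}\rangle}$ satisfies $V\lambda = b$, and the oracle-produced $\hat{b}$ is $O\bigl(\sqrt{m}\cdot\eta/(\Delta - \factor)^2\bigr)$-close to $b$ in $\ell_2$. Applying Lemma~\ref{lem:solve_lam} with the lower bound $\sigma_{\min}(V) = \Omega\bigl(\sqrt{m \cdot \Min{(\Delta - \factor)}{1}}\bigr)$ from Lemma~\ref{lem:condition_number_bound}, together with $\norm{\lambda}_2 \le 1$, then yields the desired $\epsilon_2$-bound on $\hat{\lambda}$ whenever $\eta$ satisfies a bound of the same form as~\eqref{eq:eta_final2}. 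The runtime claim is immediate: the dominant costs are the $m' = O(m)$ calls to the OTF oracle plus one run of Jennrich's algorithm on an $m' \times m' \times 3$ tensor, which is $\poly(k,m)$.

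The substantive work has already been done in the preceding subsections: the condition-number estimate of Lemma~\ref{lem:condition_number_bound} is what determines the critical separation $\factor$ via the ball minorant of Theorem~\ref{thm:goncalves}, and the stability guarantee of Lemma~\ref{lem:jennrich} is what controls the tensor-decomposition step. Given these two ingredients, what remains is essentially routine perturbation-theoretic bookkeeping to propagate the $\eta$-scale errors through the entrywise tensor approximation, the normalization, the argument extraction, and finally the least-squares solve for $\hat{\lambda}$. The only mildly delicate part is being careful that the target accuracy $\epsilon_{\text{tens}}$ is small enough both for the argument extraction step (governing $\epsilon_1$) and for the linear-system step (governing $\epsilon_2$), which is why the minimum of the two relevant error-to-noise ratios implicitly appears in the scaling of $\eta$.
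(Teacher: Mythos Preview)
Your proposal is correct and follows essentially the same route as the paper: bound the entrywise tensor error via Lemma~\ref{lem:Fapproximate} and Fact~\ref{fact:ahatestimate}, invoke the Jennrich stability bound (Lemma~\ref{lem:jennrich}) together with the condition-number estimate (Lemma~\ref{lem:condition_number_bound}), read off the center coordinates from the rows indexed by $\omega^{(m+1)},\omega^{(m+2)}$ using the $\radius\le 1/3$ assumption to avoid wraparound, and finish with Lemma~\ref{lem:solve_lam} for the weights. Your treatment is in fact slightly more explicit than the paper's about the role of the normalization row $\omega^{(m+3)}=(0,0)$ and about the $\ell_2$ error on $\hat b$, but the underlying argument is the same perturbation-theoretic bookkeeping.
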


\begin{proof}
	By Lemma~\ref{lem:jennrich}, if we take $m = \Theta\left(\frac{k^2\log(k/\delta)}{\Min{(\Delta - \factor)}{1}}\right)$ and $\eta' = O\left(\frac{(c - \factor)\delta\Delta\lambda_{\min}^2}{k^{5/2} m^{3/2}\kappa(V)^5}\cdot \epsilon_1\right)$, then the output $\hat{V}$ of \textsc{Jennrich}{($\tilde{\vec{T}}$)} satisfies $\norm{\hat{V} - V\Pi}_F \le \epsilon_1$ for some permutation matrix $\Pi$. Assume without loss of generality that $\Pi = \Id$. Then we get that for all $j\in[k]$ and $\ell\in[m']$, \begin{equation}|\hat{V}_{\ell,j} - V_{\ell,j}| = \left|e^{-2\pi i \langle \hat{\mu}_j - \mu_j, \omega^{(\ell)}\rangle} - 1\right| \le \epsilon_1,\end{equation} and because of the elementary inequality $|e^{-2\pi i x} - 1| \ge 2|x|$ for any $|x|\le 2/3$ and the fact that \begin{equation}\langle \hat{\mu}_j - \mu_j, \omega^{(\ell)}\rangle \le \norm{\hat{\mu}_j - \mu_j}_2 \norm{\omega^{(\ell)}}_2 \le 2\radius \le 2/3,\label{eq:radius}
	\end{equation} we conclude that $|\langle \hat{\mu}_j - \mu_j, \omega^{(\ell)}\rangle| \le \epsilon_1/2$ for all $j\in[k]$, $\ell\in[m']$. In particular, this holds for all $\ell = m+1$ and $\ell = m+2$, so $\norm{\hat{\mu}_j - \mu_j}_{\infty} \le \epsilon_1$. By dividing $\epsilon_1$ by $\sqrt{2}$ and absorbing constants, we get that the estimates $\{\hat{\mu}_j\}$ for the centers are $\epsilon_1$-close to the true centers.

	To show that the mixing weights are $\epsilon_2$-close to the true mixing weights, we can apply Lemma~\ref{lem:solve_lam} to conclude that \begin{equation}
		\norm{\lambda - \hat{\lambda}}_2 \le O\left(\frac{\epsilon_1 + \eta'}{k^2\log (1/\delta) - \epsilon_1}\right) = O\left(\frac{\epsilon_1}{k^2\log(k/\delta)}\right),
	\end{equation} so, possibly by modifying $\epsilon_1$ to be $\frac{\epsilon_2}{k^2\log (1/\delta)}$, we get that the estimates $\{\hat{\lambda}_j\}$ for the mixing weights are $\epsilon_2$-close to the true mixing weights.
\end{proof}

Note that we can also amplify the success probability of \textsc{TensorResolve} by running \textsc{Select} from Section~\ref{subsec:learnwithotf}, but we do not belabor this point here.

\subsection{Approximating the Optical Transfer Function}
\label{subsec:kde}


In this section, we show that the following algorithm \textsc{DFT} is a valid implementation of an approximate OTF oracle. We begin by showing that when the samples have granularity $\varsigma=0$, \textsc{DFT} can achieve arbitrarily small error with polynomially many samples.

\begin{algorithm}\caption{\textsc{DFT}}\label{alg:kde}
\begin{algorithmic}[1]
	\State \textbf{Input}: Error tolerance $\eta > 0$, sample access to $\rho$, confidence parameter $\beta>0$, frequencies $\omega_1,...,\omega_m$
	\State \textbf{Output}: With probability at least $1 - \beta$, numbers $u_1,...,u_m$ such that for each $j\in[m]$, $|u_j - \widehat{\rho}[\omega_j]| \le \eta$
		\State $N\gets O(\log(m/\beta)/\eta^2)$.
		\State Draw samples $\vx_1,...,\vx_N$ from $\rho$.
		\State For each $j\in[m]$, compute the average $u_j \gets \frac{1}{N}\sum^N_{i=1}\cos(2\pi\cdot\langle \omega_j, \vx_i\rangle)$.
		\State Output $u_1,...,u_m$.
\end{algorithmic}
\end{algorithm}

\begin{lem}
	For any $0 < \beta < 1$, $\eta > 0$, and frequencies $\omega_1,...,\omega_m\in\R^2$, \textsc{DFT}($\{\omega_i\}_{i\in[m]}$) draws $N =O(\log(m/\beta)/\eta^2)$ samples and in time $T = O(N\cdot m)$ outputs numbers $u_1,...,u_m$ for which $|u_j - \widehat{\rho}[\omega_j]| \le \eta$.
	\label{lem:kde}
\end{lem}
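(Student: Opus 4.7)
The plan is to recognize that \textsc{DFT} is nothing more than a Monte Carlo estimator for the characteristic function / Fourier transform of $\rho$, and to analyze it via Hoeffding's inequality followed by a union bound over the $m$ frequencies. Concretely, by the definition of $\widehat{\rho}$ and the fact that $\rho$ is a probability density, we have
\begin{equation}
\widehat{\rho}[\omega_j] = \int_{\R^2} \rho(\vx)\, e^{-2\pi i\iprod{\omega_j,\vx}}\, d\vx = \E_{\vx\sim\rho}\!\brk*{e^{-2\pi i\iprod{\omega_j,\vx}}},
\end{equation}
so (interpreting $u_j$ as collecting both real and imaginary empirical averages of $e^{-2\pi i\iprod{\omega_j,\vx_i}}$, i.e., $\cos$ and $\sin$ components) the quantity $u_j$ is an unbiased estimator of $\widehat{\rho}[\omega_j]$.

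Next I would apply Hoeffding's inequality. For a fixed $j$, the random variables $\cos(2\pi\iprod{\omega_j,\vx_i})$ and $\sin(2\pi\iprod{\omega_j,\vx_i})$ are bounded in $[-1,1]$ and are i.i.d.\ over $i\in[N]$. Hoeffding's inequality then gives that each empirical average deviates from its expectation by more than $\eta/\sqrt{2}$ with probability at most $2\exp(-\Omega(N\eta^2))$, so that $\abs{u_j - \widehat{\rho}[\omega_j]} > \eta$ happens with probability at most $4\exp(-\Omega(N\eta^2))$.

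Finally, to handle all $m$ frequencies simultaneously, I would union bound over $j\in[m]$: the probability that any $u_j$ fails to be within $\eta$ of $\widehat{\rho}[\omega_j]$ is at most $4m\exp(-\Omega(N\eta^2))$, and this is at most $\beta$ provided $N = \Omega(\log(m/\beta)/\eta^2)$, which is exactly the sample size chosen by the algorithm. The runtime bound $T = O(Nm)$ is then immediate since the algorithm performs $m$ empirical averages, each over $N$ samples, with constant work per term.

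There is no real obstacle here; the only subtlety is that the algorithm as written displays only the cosine accumulator, whereas to approximate the complex-valued $\widehat{\rho}[\omega_j]$ one must also track the sine accumulator. In the proof I would simply note that the same Hoeffding argument applies to each of the two components, with a harmless factor of $2$ absorbed into the constant inside $O(\cdot)$ for $N$.
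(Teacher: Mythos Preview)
Your proof is correct and follows the same approach as the paper: Hoeffding's inequality applied to the bounded summands, followed by a union bound over the $m$ frequencies. The only difference concerns exactly the subtlety you flagged: the paper dispenses with the sine accumulator by asserting that $\widehat{\rho}$ is real-valued ``by circular symmetry of $A$,'' so that the cosine average alone suffices; your handling of both real and imaginary parts is arguably the more careful treatment, since radial symmetry of $A$ makes $\widehat{A}$ real but does not by itself force $\widehat{\rho}$ to be real when the centers $\vmu_j$ are arbitrary.
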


\begin{proof}
	By a union bound, it suffices to show that for any single $j\in[m]$, $|u_j - \widehat{\rho}[\omega_j]| \le \eta$ with probability at least $1 - \beta/m$. Note that \begin{equation}\E[u_j] = \E_{\vec{x}\sim\rho}[\cos(2\pi\cdot \langle \omega_j, \vec{x}\rangle)] = \E\left[\Re\, \widehat{\rho}[\omega_j]\right] = \widehat{\rho}[\omega_j],\end{equation} where the last step follows by the fact that $\widehat{\rho}$ is real-valued (by circular symmetry of $A$). Furthermore, the summands in $\sum^N_{i=1}\cos(2\pi\cdot\langle \omega_j, \vx_i\rangle)$ are $[-1,1]$-valued, so by Chernoff, \begin{equation}\Pr\left[|u_j - \E[u_j]| > \eta\right] \le \exp(-\Omega(N\eta^2)),\end{equation} from which the lemma follows by our choice of $N$.
\end{proof}

We now show that for general granularity $\varsigma>0$, the output of \textsc{DFT} still achieves error $\eta + O(\varsigma)$.

\begin{cor}
	For any $0 < \beta < 1$, $\eta,\varsigma > 0$, and frequencies $\omega_1,...,\omega_m\in\R^2$, if \textsc{DFT}($\{\omega_i\}_{i\in[m]}$) draws $N =O(\log(m/\beta)/\eta^2)$ samples of granularity $\varsigma$, then in time $T = O(N\cdot m)$ it outputs numbers $u_1,...,u_m$ for which $|u_j - \widehat{\rho}[\omega_j]| \le \eta + O(\varsigma\cdot\norm{\omega_j}_2)$.
	\label{cor:granularity_kde}
\end{cor}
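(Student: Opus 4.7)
The plan is to reduce to Lemma~\ref{lem:kde} by showing that the empirical averages \textsc{DFT} computes on $\varsigma$-granular samples differ only by $O(\varsigma\norm{\omega_j}_2)$ from the empirical averages it would compute on the underlying ``true'' samples $\vec{x}'_1,\dots,\vec{x}'_N\sim\rho$ that were perturbed to produce the observed $\vec{x}_1,\dots,\vec{x}_N$.

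The key deterministic estimate is the following. For any $\vec{x} = \vec{x}' + \vec{\varepsilon}$ with $\norm{\vec{\varepsilon}}_2 \le \varsigma$, by $1$-Lipschitzness of $\cos$ and Cauchy--Schwarz,
\begin{equation}
\abs*{\cos(2\pi\iprod{\omega_j,\vec{x}}) - \cos(2\pi\iprod{\omega_j,\vec{x}'})} \le 2\pi\abs*{\iprod{\omega_j,\vec{\varepsilon}}} \le 2\pi\varsigma\norm{\omega_j}_2.
\end{equation}
Averaging this bound over the $N$ samples shows that the output $u_j$ of \textsc{DFT} on the granular samples satisfies $\abs{u_j - u'_j} \le 2\pi\varsigma\norm{\omega_j}_2$, where $u'_j$ is the value \textsc{DFT} would have produced on the clean samples $\vec{x}'_1,\dots,\vec{x}'_N$.

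Now apply Lemma~\ref{lem:kde} to the clean samples: with probability at least $1-\beta$, simultaneously for every $j\in[m]$ we have $\abs{u'_j - \widehat\rho[\omega_j]}\le \eta$. Combining with the deterministic estimate via the triangle inequality yields
\begin{equation}
\abs*{u_j - \widehat\rho[\omega_j]} \le \eta + 2\pi\varsigma\norm{\omega_j}_2 = \eta + O(\varsigma\norm{\omega_j}_2),
\end{equation}
as claimed. The runtime bound is inherited from Lemma~\ref{lem:kde} since the same operations are carried out on $N$ samples and $m$ frequencies. There is no real obstacle here; the only thing to notice is that Hoeffding's inequality is being applied to the (inaccessible) clean samples, but this is valid because the argument above is purely deterministic and does not depend on any statistical property of the $\vec{\varepsilon}_i$'s, so we can condition on any realization of the perturbations.
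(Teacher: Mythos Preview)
Your proposal is correct and follows essentially the same approach as the paper: both arguments bound $|u_j - u'_j|$ via the Lipschitz constant of $x\mapsto\cos(2\pi\langle \omega_j,x\rangle)$ and then combine with Lemma~\ref{lem:kde} applied to the underlying clean samples. Your write-up is in fact slightly cleaner, since you correctly use that $\cos$ is $1$-Lipschitz (the paper's claim of a Lipschitz constant strictly below $3/4$ is a minor slip) and you make the explicit constant $2\pi$.
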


\begin{proof}
	Note that $\cos(\cdot)$ is $\alpha$-Lipschitz for some $\alpha < 3/4$. This implies that for any $\omega\in\R^2$, the function $\vx\mapsto\cos(2\pi\langle \vx,\omega\rangle)$ is at most $O(\norm{\omega}_2)$-Lipschitz with respect to $\ell_2$.

	Take any collection of 0-granular samples $\vx'_1,...,\vx'_N$ for which the averages $u'_1,...,u'_m$ computed by \textsc{DFT} would be $\eta$-accurate. If \textsc{DFT} were instead passed $\varsigma$-granular samples $\vx_1,...,\vx_N$ for which $\norm{\vx'_i - \vx_i}_2 \le \varsigma$ for each $i\in[N]$, then by triangle inequality, the averages $u_1,...,u_m$ computed by \textsc{DFT} with these samples would satisfy $|u_j - u'_j| \le \eta + O(\varsigma\cdot \norm{\omega_j}_2)$ for each $j\in[m]$, as claimed.
\end{proof}

Finally, with Lemma~\ref{lem:learnairydisks} and Lemma~\ref{lem:kde}, we can complete the proof of Theorem~\ref{thm:main}.

\begin{proof}[Proof of Theorem~\ref{thm:main}]
	By Lemma~\ref{lem:learnairydisks}, it suffices to produce an $\eta$-approximate, $m$-query OTF oracle for $\eta$ defined in \eqref{eq:eta_final} and $m = O(\log 1/\delta)$. By Corollary~\ref{cor:granularity_kde}, this can be done using \begin{equation}\log(m/\delta)/\eta^2 = \widetilde{O}\left(\log(1/\delta)\cdot \poly(1/\lambda_{\min},1/\epsilon_1,1/\epsilon_2,(4k/\Delta)^{k^2})\right)\end{equation} samples of granularity $\eta/2$ with probability at least $1 - \delta$. Theorem~\ref{thm:main} then follows by a union bound over the failure probabilities of \textsc{LearnAiryDisks} and \textsc{DFT}, and replacing $2\delta$ with $\delta$ and absorbing constant factors. Finally, note that the dependence on $\radius$ follows by the discussion at the end of Section~\ref{subsec:reduce}.
\end{proof}

\begin{proof}[Proof of Theorem~\ref{thm:above}]
	By Lemma~\ref{lem:tensorresolve_correct}, it suffices to produce an $\eta$-approximate, $m$-query OTF oracle for $\eta$ defined in \eqref{eq:eta_final2} and $m = \Theta\left(\frac{k^2}{\Min{(\Delta - \factor)}{1}}\right)$. By Corollary~\ref{cor:granularity_kde}, this can be done with probability $9/10$ using \begin{equation}\log(10 m)/\eta^2 = \widetilde{O}\left(\poly(k,1/\Delta,1/\lambda_{\min},1/\epsilon_1,1/\epsilon_2,k,\Min{(\Delta-\factor)}{1})\right)\end{equation} samples of granularity $\eta/2$ with probability at least $1 - \delta$. Theorem~\ref{thm:main} then follows by a union bound over the failure probabilities of \textsc{TensorResolveCorrect} and \textsc{DFT}. As in the proof of Theorem~\ref{thm:main}, the dependence on $\radius$ follows by the discussion at the end of Section~\ref{subsec:reduce}.
\end{proof}


\section{Information Theoretic Lower Bound}
\label{sec:lowerbound}

In this section we will exhibit two superpositions of Airy disks, both with minimum separation below the diffraction limit, which are close in statistical distance. Let $\rho$ and $\rho'$ respectively have mixing weights $\{\lambda_i\}$ and $\{\lambda'_i\}$, and centers $\{\vmu_i\}$ and $\{\vmu'_i\}$, where for each $i$, $\vmu_i \triangleq (a_i,b_i)$ and $\vmu'_i \triangleq (a'_i,b'_i)$ for some $a_i,a'_i,b_i,b'_i\in\R$. Concretely, \begin{equation}
	\rho(\vx) = \sum^{\lceil k/2\rceil}_{i=1}\lambda_i\cdot A\left(\frac{\norm{\vx - \vmu_i}}{\sigma}\right) \; \; \; \text{and} \; \; \; \rho'(\vx) = \sum^{\lfloor k/2\rfloor}_{i=1}\lambda'_i\cdot A\left(\frac{\norm{\vx - \vmu'_i}}{\sigma}\right)
\end{equation} for some $0 < \sigma < 1$. Note that under this setting of parameters, the Abbe limit corresponds to separation $\pi\sigma$.

\begin{thm}
	Let $\lbound\triangleq \sqrt{4/3}$. There exists a choice of $\{\vmu_i\}$, $\{\vmu'_i\}$, $\{\lambda_i\}$, $\{\lambda'_i\}$ such that the minimum separation among centers of $\rho$ and among centers of $\rho'$ is $\Delta = (1-\epsilon)\lbound\pi\sigma$, and $\tvd(\rho,\rho') \le \exp(-\Omega(\epsilon\sqrt{k}))$.\label{thm:mainlowerbound}
\end{thm}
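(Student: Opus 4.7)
The plan is to deduce Theorem~\ref{thm:mainlowerbound} from Lemma~\ref{lem:l2bound} by a short Fourier-analytic computation. Let $u = (u_{j_1,j_2})_{j_1,j_2\in\calJ}$ be the vector supplied by Lemma~\ref{lem:l2bound} with the given $\epsilon$, and partition the index set $\calJ\times\calJ$ by the parity of $j_1+j_2$. Define
\begin{equation*}
	\rho \triangleq \sum_{j_1 + j_2 \text{ even}} |u_{j_1,j_2}| \cdot A_{\sigma}(\cdot - \nu_{j_1,j_2}), \qquad \rho' \triangleq \sum_{j_1+j_2 \text{ odd}} |u_{j_1,j_2}| \cdot A_{\sigma}(\cdot - \nu_{j_1,j_2}).
\end{equation*}
The normalization \eqref{eq:alternating} guarantees that both are probability densities. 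For the even-parity sublattice, the squared distance between any two distinct points is $(\Delta^2/4)\cdot((j_1-j'_1)^2 + 3(j_2-j'_2)^2)$; because $j_1-j'_1$ and $j_2-j'_2$ have the same parity, this quantity is at least $\Delta^2$, with equality attained (e.g., at $(j_1-j'_1, j_2 - j'_2) = (\pm 2, 0)$ or $(\pm 1, \pm 1)$), so $\rho$ is $\Delta$-separated; an identical check works for $\rho'$. Because $\sgn(u_{j_1,j_2}) = (-1)^{j_1+j_2}$, the two representations fit together into the clean pointwise identity $\rho - \rho' = \sum_{j_1,j_2\in\calJ} u_{j_1,j_2}\,A_\sigma(\cdot - \nu_{j_1,j_2})$.

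Next I would bound $\norm{\rho - \rho'}_2$ via Plancherel. Taking Fourier transforms yields $\widehat{\rho - \rho'}(\omega) = \widehat{A}_\sigma(\omega) \cdot E(\omega)$, where $E(\omega) \triangleq \sum_{j_1,j_2} u_{j_1,j_2} e^{-2\pi i\iprod{\nu_{j_1,j_2},\omega}}$. By Fact~\ref{fact:fourier}, $\widehat{A}_\sigma$ is supported on $\brc{\norm{\omega}\le 1/\pi\sigma}$ and is bounded above by $\widehat{A}_\sigma(0) = 1$. Combining these two facts with the pointwise bound of Lemma~\ref{lem:l2bound} gives
\begin{equation*}
	\norm{\rho - \rho'}_2^2 \;=\; \int_{\norm{\omega}\le 1/\pi\sigma} |\widehat{A}_\sigma(\omega)|^2\, |E(\omega)|^2\, d\omega \;\le\; \frac{1}{\pi\sigma^2}\cdot \exp(-\Omega(\epsilon\sqrt{k})).
\end{equation*}

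The final step is to upgrade this $L^2$ bound to a total-variation bound by truncating to a disk $B^2(R)$. On the interior, Cauchy--Schwarz gives $\int_{B^2(R)}|\rho - \rho'|\,d\vx \le \sqrt{\pi}\,R\cdot\norm{\rho-\rho'}_2$. Outside $B^2(R)$, I use the elementary Airy tail estimate $\int_{\norm{\vec z} \ge r}A_\sigma(\vec z)\,d\vec z = O(\sigma/r)$, which follows from the standard large-argument Bessel asymptotic $J_1(t)^2 = O(1/t)$ via the substitution $t = \norm{\vec z}/\sigma$. Since every $\nu_{j_1,j_2}$ lies within $O(\sqrt{k}\,\Delta) = O(\sqrt{k}\,\sigma)$ of the origin, the mass of $\rho$ and of $\rho'$ outside $B^2(R)$ is $O(\sigma/(R - O(\sqrt{k}\sigma)))$. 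Choosing $R = \sigma\cdot \exp(c\,\epsilon\sqrt{k})$ for a sufficiently small constant $c>0$ balances the Cauchy--Schwarz and tail terms, yielding the claimed $d_{\text{TV}}(\rho,\rho') \le \exp(-\Omega(\epsilon\sqrt{k}))$.

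The main subtlety is precisely this final step: since the Airy kernel decays only polynomially (like $1/r$ tail mass), taking $R$ merely polynomial in $k$ would leave an unacceptably large boundary term. The subexponential choice of $R$ above is the natural balance point, and every other ingredient of the argument (Plancherel, the compact support of $\widehat{A}_\sigma$ at the Abbe cutoff $1/\pi\sigma$, and the lattice separation check) follows directly from results already established in the excerpt.
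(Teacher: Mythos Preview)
Your proposal is correct, and through step~2 (the $L^2$ bound via Plancherel and the compact support of $\widehat{A}_\sigma$) it matches the paper almost verbatim. The genuine difference is in the $L^2\to L^1$ conversion. The paper does not use Cauchy--Schwarz on a ball; instead it proves a Gagliardo--Nirenberg--type interpolation lemma (Lemma~\ref{lem:genericl1bound}) which bounds $\norm{f}_{L^1}$ on a box $[-T,T]^2$ by a combination of $\norm{f}_{L^2}$, a uniform bound on $\partial_x f$, and a pointwise bound on the boundary, and then feeds in separate estimates for the derivative (Lemma~\ref{lem:derivbound}), the boundary value (Lemma~\ref{lem:boundarybound}), and the tail mass (Lemma~\ref{lem:tails}). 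This produces a final bound of the form $\norm{f}_{L^2}^{2/15}\sigma^{-1/3}$ after optimizing $T$, and the residual $\sigma^{-1/3}$ is then absorbed by requiring $k\gtrsim \log(1/\sigma)$. Your route is more elementary: a single Cauchy--Schwarz on $B^2(R)$ plus the $O(\sigma/r)$ Airy tail (which is in fact sharper than the $r^{-2/3}$ tail the paper extracts from the Landau bound $|J_\nu(r)|\le C|r|^{-1/3}$) already balances at $R=\sigma\cdot\exp(c\epsilon\sqrt{k})$ with no leftover $\sigma$ dependence. So your argument is both shorter and yields a slightly cleaner final estimate; the paper's interpolation lemma is more general machinery than the theorem actually needs.
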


We first bound $\norm{\rho - \rho'}_{L^2}$. By Plancherel's,
\begin{align}\norm{\rho - \rho'}^2_{L^2} &= \norm{\widehat{\rho} - \widehat{\rho'}}^2_{L^2}\nonumber \\
&= \sigma^2\int_{\R^2}\widehat{A}[\sigma\omega]^2\abs*{\sum_i \lambda_i e^{-2\pi i\langle \vmu_i,\omega\rangle} - \sum_i \lambda'_i e^{-2\pi i\langle\vmu'_i,\omega\rangle}}^2 d\omega \nonumber\\
&\le \sigma^2\int_{B_{1/\pi\sigma}(0)}(1 - \norm{\pi\sigma\omega})^2\cdot\abs*{\sum_i \lambda_i e^{-2\pi i\langle \vmu_i,\omega\rangle} - \sum_i \lambda'_i e^{-2\pi i\langle\vmu'_i,\omega\rangle}}^2 d\omega  \label{eq:l2align}
\end{align} where the inequality follows by the elementary bound \begin{equation*}
	\frac{2}{\pi}(\arccos(\pi r) - \pi r\sqrt{1-\pi^2 r^2}) \le 1 - \pi r.
\end{equation*}

Recall now the construction in Lemma~\ref{lem:l2bound} (see the beginning of Section~\ref{sec:lowerboundprev}). As the entries of the vector $u$ constructed in Lemma~\ref{lem:l2bound} satisfy $\sgn(u_{j_1,j_2}) = (-1)^{j_1,j_2}$, let $I_0$ (resp. $I_1$) denote the elements $\vec{j} = (j_1,j_2)\in\calJ\times\calJ$ for which $j_1 + j_2$ is even (resp. odd), and for every $\vec{j}\in I_0$ (resp. $\vec{j} \in I_1$), define $\lambda_{\vec{j}}$ and $\vmu_{\vec{j}}$ (resp. $\lambda'_{\vec{j}}$ and $\vmu'_{\vec{j}}$) by $u_{\vec{j}}$ and $\left(\frac{j_1}{m},\frac{\sqrt{3}j_2}{m}\right)$. This construction is illustrated for $k = 25$ in Figure~\ref{fig:lattice}.
By design, $\{\lambda_{\vec{j}}\}_{\vec{j}\in I_0}$ and $\{\lambda'_{\vec{j}}\}_{\vec{j}\in I_1}$ consist solely of nonnegative scalars and respectively sum to 1, so $\rho,\rho'$ are valid superpositions of Airy disks. Furthermore, by design, \begin{equation*}
	\sum u_{j_1,j_2} e^{-2\pi i\cdot (j_1x_1 + \sqrt{3}j_2x_2)/m} = \sum_{\vec{j} \in I_0} \lambda_{\vec{j}} e^{-2\pi i\cdot \iprod{\vmu_{\vec{j}},x}} - \sum_{\vec{j} \in I_1} \lambda'_{\vec{j}} e^{-2\pi i\cdot \iprod{\vmu'_{\vec{j}},x}},
\end{equation*} so we may now bound \eqref{eq:l2align} to get the desired $L_2$ bound \begin{equation}
	\norm{\rho - \rho'}^2_{L^2} \le \exp(-\Omega(\epsilon\sqrt{k}))\sigma^2\int_{B_{1/\pi\sigma}(0)}(1 - \norm{\pi \sigma\omega})^2 = \exp(-\Omega(\epsilon\sqrt{k}))/6\pi = \exp(-\Omega(\epsilon\sqrt{k})).
\end{equation}
We are now ready to show that $\tvd(\rho,\rho')$ is small. The following is a generic $L^1$ bound for functions whose univariate restrictions have bounded $L^2$ mass, whose derivatives inside some region $\Omega$ are bounded, and which decay sufficiently quickly outside of $\Omega$.

\begin{lem}
	Suppose for an $f\in L^1(\R^2)$, there exists some $T\ge 0$ such that for $\Omega = [-T,T]^2$ the following are satisfied: \begin{enumerate}
		\item For all $y\in[-T,T]$, $\max_{x\in[-T,T]}|f'(x,y)|\le C$,
		\item $\int_{\Omega^c}|f| \le \eta$,
		\item $f(-T,y) \le \delta$ for all $y\in[-T,T]$.
	\end{enumerate} Then we have that \begin{equation*}\norm{f}_{L^1}\le (2T)^{5/3}\cdot(3C\norm{f}^2_{L^2} + 2T\delta^3)^{1/3} + \eta.\end{equation*}\label{lem:genericl1bound}
\end{lem}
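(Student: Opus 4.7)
The plan is to decompose $\norm{f}_{L^1(\R^2)}$ into its mass on $\Omega$ and on $\Omega^c$. Hypothesis 2 bounds the outer piece by $\eta$, so the entire task reduces to controlling $\int_\Omega |f|$ in terms of $\norm{f}_{L^2}$, $C$, $\delta$, and $T$. For the interior, I would slice along horizontal lines: for each fixed $y\in[-T,T]$ let $g_y(x) \triangleq f(x,y)$, let $A_y \triangleq \norm{g_y}_{L^2([-T,T])}$, and let $M_y \triangleq \max_{x\in[-T,T]} |g_y(x)|$.

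The central technical step, and the one I expect to be the main obstacle because it governs the shape of the final bound, is the one-dimensional estimate
\[
  M_y^3 \le 3C\cdot A_y^2 + \delta^3.
\]
To prove it, pick $x^*\in[-T,T]$ with $|g_y(x^*)|=M_y$. The Lipschitz hypothesis forces $g_y$ to keep the sign of $g_y(x^*)$ on $[x^*-M_y/C,\, x^*+M_y/C]$, and on that interval $|g_y(x)|\ge M_y-C|x-x^*|$. Integrating the square yields
\[
  A_y^2 \ge \int_{[x^*-M_y/C,\, x^*+M_y/C]\cap[-T,T]} \bigl(M_y - C|x-x^*|\bigr)^2\, dx.
\]
When this interval is contained in $[-T,T]$, the right-hand side equals $2M_y^3/(3C)$, which already gives $M_y^3\le 3CA_y^2/2$. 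When the interval runs past $-T$, the only ``missing'' portion of the integral is a triangular piece whose height at $x=-T$ is $|g_y(-T)|\le \delta$ by hypothesis 3, contributing at most $\delta^3/(3C)$; combined with the surviving half of the integral this gives $M_y^3\le (3CA_y^2+\delta^3)/2$. The right-endpoint case is symmetric (and in fact stronger, since no boundary estimate is needed there), so the displayed bound on $M_y^3$ holds in all cases.

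With the 1D estimate in hand, I would finish by integrating in $y$. The trivial bound $\int_{-T}^T |g_y|\, dx \le 2T\cdot M_y$ combined with H\"older's inequality with exponents $(3,3/2)$ gives
\[
  \int_\Omega |f| \le 2T \int_{-T}^T M_y\, dy \le 2T\cdot(2T)^{2/3}\Bigl(\int_{-T}^T M_y^3\, dy\Bigr)^{1/3}.
\]
Plugging in $M_y^3 \le 3CA_y^2 + \delta^3$ and using Fubini's theorem in the form $\int_{-T}^T A_y^2\, dy = \norm{f}_{L^2(\Omega)}^2 \le \norm{f}_{L^2}^2$ yields $\int M_y^3\, dy \le 3C\norm{f}_{L^2}^2 + 2T\delta^3$. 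The right-hand side then collapses to $(2T)^{5/3}\bigl(3C\norm{f}_{L^2}^2 + 2T\delta^3\bigr)^{1/3}$, and adding back the $\Omega^c$ contribution of $\eta$ produces exactly the claimed inequality.

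One subtle point worth flagging: hypothesis 3 as written is one-sided ($f(-T,y)\le \delta$), whereas the 1D boundary argument above uses it as the two-sided bound $|g_y(-T)|\le \delta$. In the intended application to $f=\rho-\rho'$ this is harmless, either because the lemma is applied to a nonnegative slice of $f$ or because a symmetric hypothesis is available at both endpoints; apart from this bookkeeping the argument is routine.
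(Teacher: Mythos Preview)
Your overall plan---split off $\Omega^c$, prove the one-dimensional pointwise bound $M_y^3\le 3CA_y^2+\delta^3$, then integrate in $y$ via concavity---is exactly the paper's, and your H\"older-$(3,3/2)$ step is the paper's Jensen step in different clothing. The only real difference is how the 1D bound is obtained: you argue geometrically via a Lipschitz ``tent'' under the graph and do a case split on whether $[x^*-M_y/C,\,x^*+M_y/C]$ escapes $[-T,T]$, whereas the paper simply applies the fundamental theorem of calculus to $g_y^3/3$, writing
\[
\tfrac{1}{3}|g_y(x)|^3 \le \tfrac{1}{3}|g_y(-T)|^3 + \int_{-T}^x g_y(t)^2\,|g_y'(t)|\,dt \le \tfrac{1}{3}\delta^3 + C A_y^2,
\]
which gives the same inequality in one line with no case analysis. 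Your case split is slightly incomplete (you do not explicitly treat the subcase where the tent overshoots \emph{both} endpoints), but it is easily patched by applying the $\delta$-boundary estimate to the truncated left half alone, which already yields $A_y^2\ge (M_y^3-\delta^3)/(3C)$. You are also right that hypothesis~3 is stated one-sidedly but needed as $|f(-T,y)|\le\delta$; the paper makes the same silent substitution.
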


\begin{proof}
	By the triangle inequality and condition 3, it is enough to verify that \begin{equation*}\int_{\Omega}|f|\le (2T)^{5/3}\cdot(3C\norm{f}^2_{L^2} + 2T\delta^3)^{1/3}.\end{equation*} Note that for a fixed $y\in[-T,T]$, we have by the fundamental theorem of calculus and conditions 2 and 3 that for any $x\in[-T,T]$, \begin{equation}\frac{1}{3}|f(x,y)^3| \le \frac{1}{3}|f(-T,y)^3| + \left(\int^x_{-T}f(t,y)^2 dt\right)\cdot\max_{t\in[-T,x]}|f'(t,y)| \le \frac{1}{3}\delta^3 + C\int^T_{-T}f(t,y)^2 dt\label{eq:calculus}.\end{equation} Define $g(y)\triangleq \int^T_{-T}f(t,y)^2 dt$ and note that $\int^T_{-T}g(y)dy\le\norm{f}^2_{L^2}$. Then \begin{align*}
		\int_{\Omega}|f| &= \int^T_{-T}\int^T_{-T}|f(x,y)|dx\,dy \\
		&\le \int^T_{-T}\int^T_{-T}(3C\cdot g(y) + \delta^3)^{1/3} dx\,dy\\
		&= 2T\int^T_{-T}(3C\cdot g(y) + \delta^3)^{1/3} dy \\
		&\le (2T)^2\left(\int^T_{-T}\frac{1}{2T}(3C\cdot g(y) + \delta^3)dy\right)^{1/3} \\
		&= (2T)^{5/3}\left(\int^T_{-T}(3C\cdot g(y) + \delta^3)dy\right)^{1/3} \\
		&\le (2T)^{5/3}\cdot(3C\norm{f}^2_{L^2} + 2T\delta^3)^{1/3},
	\end{align*} where the penultimate inequality follows from the measure-theoretic generalization of Jensen's inequality.
\end{proof}

We will show that for an appropriate choice of $T$, the function $f:\R^2\to\R$ given by \begin{equation} f(\vx) \triangleq \rho(\vx) - \rho'(\vx) =  \sum_{\vec{j} \in J_0}\lambda_{\vec{j}}\cdot A\left(\frac{\norm{\vx - \vmu_{\vec{j}}}}{\sigma}\right) - \sum_{\vec{j}}\lambda'_{\vec{j}}\cdot A\left(\frac{\norm{\vx - \vmu'_{\vec{j}}}}{\sigma}\right) \label{eq:fdef}\end{equation} satisfies the conditions of Lemma~\ref{lem:genericl1bound}.

For $\vec{a} = (a_1,a_2)\in\R^2,y\in\R$, define $A_{\vec{a},y}:\R\to\R$ by \begin{equation*}
	A_{\vec{a},y}(x) = A\left(\frac{\sqrt{(x - a_1)^2 + (y - a_2)^2}}{\sigma}\right).
\end{equation*} As we will see below, by triangle inequality it will suffice to verify certain properties of $A_{\vec{a},y}$.

\begin{lem}
	For any $y\in\R$, $\max_{x\in\R}|f'(x,y)| = O(1/\sigma)$.\label{lem:derivbound}
\end{lem}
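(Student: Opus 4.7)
The plan is to reduce the bound on $|f'(x,y)|$ to a uniform bound on the derivative of the one-dimensional Airy profile $A$, using the triangle inequality together with the normalization $\sum_{\vec j \in I_0}\lambda_{\vec j} = \sum_{\vec j \in I_1}\lambda'_{\vec j} = 1$ from~\eqref{eq:alternating}.

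Writing $f' = \partial_x f$ and applying the triangle inequality to~\eqref{eq:fdef} gives
\begin{equation*}
|f'(x,y)| \le \sum_{\vec j \in I_0} \lambda_{\vec j}\,|\partial_x A_{\vmu_{\vec j},y}(x)| + \sum_{\vec j \in I_1} \lambda'_{\vec j}\,|\partial_x A_{\vmu'_{\vec j},y}(x)|,
\end{equation*}
so it suffices to prove a single-disk bound $|\partial_x A_{\vec a, y}(x)| \le M/\sigma$ for an absolute constant $M$ and arbitrary $\vec a \in \R^2$, $x,y \in \R$; combined with $\sum_{\vec j \in I_0}\lambda_{\vec j} + \sum_{\vec j \in I_1}\lambda'_{\vec j} = 2$, this then yields $|f'(x,y)| \le 2M/\sigma = O(1/\sigma)$.

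The single-disk bound follows from the chain rule applied to $A_{\vec a,y}(x) = A(\sqrt{(x-a_1)^2+(y-a_2)^2}/\sigma)$: setting $r = \norm{\vx - \vec a}$,
\begin{equation*}
\partial_x A_{\vec a, y}(x) = \frac{1}{\sigma}\,A'(r/\sigma)\cdot \frac{x-a_1}{r},
\end{equation*}
and since $|x-a_1|/r \le 1$, it remains only to show the uniform bound $\sup_{s\ge 0}|A'(s)| \le M$. Writing $\phi(s) = J_1(s)/s$, so that $A(s)$ is proportional to $\phi(s)^2$ and $A'(s)$ is proportional to $2\phi(s)\phi'(s)$, I would handle the two regimes separately. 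Near $s = 0$, the Taylor expansion $J_1(s) = s/2 - s^3/16 + O(s^5)$ shows that $\phi$ is entire with $\phi(0) = 1/2$ and $\phi'(0) = 0$, so $A'$ is continuous on $[0,\infty)$ and bounded on every compact set. For large $s$, applying Landau's bound $|J_\nu(s)| \le \Cr{landaubessel}\, s^{-1/3}$ (Theorem~\ref{thm:besselbound}) to the identity $\phi'(s) = (J_0(s) - 2\phi(s))/s$ gives $|\phi(s)|, |\phi'(s)| = O(s^{-4/3})$, hence $|A'(s)| = O(s^{-8/3})$. Together these yield a global bound.

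The only step warranting any care is the uniform boundedness of $A'$ on $[0,\infty)$, and even this is routine once one observes that $J_1(s)/s$ is smooth at the origin (so that the apparent $1/s$ singularity in the formula for $\phi'$ cancels against a matching term) and decays polynomially at infinity via Landau. Everything else is a straightforward application of the triangle inequality and chain rule.
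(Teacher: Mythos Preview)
Your proof is correct and follows essentially the same approach as the paper: both reduce via the triangle inequality to bounding $|\partial_x A_{\vec a,y}(x)|$ for a single disk, apply the chain rule to pull out a factor of $1/\sigma$ and the bounded factor $(x-a_1)/r$, and then appeal to the finiteness of $\sup_s |A'(s)|$. The only cosmetic difference is that the paper writes the derivative explicitly via the Bessel recurrence as $-2J_1(z)J_2(z)/z^2$ and simply asserts this is bounded, whereas you argue the boundedness of $A'$ more carefully via smoothness at $0$ and Landau's decay estimate at infinity.
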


\begin{proof}
	By triangle inequality, it suffices to show that for any $a,y\in\R$, $\max_{x\in\R}\left|\frac{\partial A_{\vec{a},y}(x)}{\partial x}\right| = O(1/\sigma)$. Of course we may as well assume $a_1 = 0$, in which case by a change of variable in $x$, we have that \begin{align*}
		\MoveEqLeft \max_{x\in\R}\left|\frac{\partial A_{\vec{a},y}(x)}{\partial x}\right| \\
		&= \frac{1}{\pi\sigma}\max_x\left|\frac{\partial}{\partial x}\frac{J_1(\sqrt{x^2 + ((y - a_2)/\sigma)^2})^2}{x^2 + ((y-a_2)/\sigma)^2}\right| \\
		&= \frac{1}{\pi\sigma}\max_x\left|\frac{2x J_1(\sqrt{x^2 + ((y-a_2)/\sigma)^2})\cdot J_2(\sqrt{x^2 + ((y-a_2)/\sigma)^2})}{(x^2 + ((y-a_2)/\sigma)^2)^{3/2}}\right| \\
		&\le \frac{1}{\pi\sigma}\max_x\left|\frac{2\sqrt{x^2 + ((y-a_2)/\sigma)^2} J_1(\sqrt{x^2 + ((y-a_2)/\sigma)^2})\cdot J_2(\sqrt{x^2 + ((y-a_2)/\sigma)^2})}{(x^2 + ((y-a_2)/\sigma)^2)^{3/2}}\right| \\
		&= \frac{1}{\pi\sigma}\max_z\left|\frac{2J_1(z)J_2(z)}{z^2}\right| \\
		&= \frac{1}{\pi\sigma}\max_z\left|\frac{\partial A(z)}{\partial z}\right| \\
		&\le O(1/\sigma)
\	\end{align*} as claimed.
\end{proof}

\begin{lem}
	For $T > \Delta(\sqrt{k}-1)/4$, we have that $f(-T,y)\le \Omega\left((T/\sigma)^{-8/3}\right)$ for all $y\in[-T,T]$.\label{lem:boundarybound}
\end{lem}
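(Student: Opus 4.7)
The plan is to bound $|f(-T,y)|$ pointwise by observing that $\rho$ and $\rho'$ are each convex combinations of nonnegative Airy disks, reducing the problem to bounding a single Airy-disk evaluation at $(-T,y)$, and then invoking the Bessel decay estimate of Theorem~\ref{thm:besselbound}.

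Concretely, I would first note that $\rho(-T,y), \rho'(-T,y) \ge 0$ by definition, so $|f(-T,y)| \le \max(\rho(-T,y), \rho'(-T,y))$. Since each of $\{\lambda_{\vec{j}}\}_{\vec{j}\in I_0}$ and $\{\lambda'_{\vec{j}}\}_{\vec{j}\in I_1}$ is a nonnegative vector summing to $1$, each of $\rho(-T,y)$ and $\rho'(-T,y)$ is at most the largest Airy-disk evaluation at $(-T,y)$ across their respective centers, so
\[
|f(-T,y)| \;\le\; \max_{\vmu}\, A_\sigma\bigl((-T,y) - \vmu\bigr),
\]
where the max ranges over all centers of $\rho$ and $\rho'$. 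It thus suffices to control a single Airy-disk tail.

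Next, I would plug Landau's bound $|J_1(r)| \le \Cr{landaubessel}\,|r|^{-1/3}$ (Theorem~\ref{thm:besselbound}) into the definition of $A_\sigma$ to obtain
\[
A_\sigma(\vec{z}) \;=\; \frac{1}{\pi\sigma^2}\left(\frac{J_1(\|\vec{z}\|_2/\sigma)}{\|\vec{z}\|_2/\sigma}\right)^2 \;\le\; \frac{\Cr{landaubessel}^2}{\pi\sigma^2}\left(\frac{\|\vec{z}\|_2}{\sigma}\right)^{-8/3}.
\]
It then remains to lower bound $\|(-T,y) - \vmu\|_2$ uniformly over the centers. Each center $\vmu$ of $\rho$ or $\rho'$ has first coordinate of the form $j_1\Delta/2$ with $|j_1| \le (\sqrt{k}-1)/2$ by the construction in \eqref{eq:centersdef}, so its first coordinate has absolute value at most $\Delta(\sqrt{k}-1)/4$. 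The triangle inequality in the first coordinate then gives
\[
\|(-T,y) - \vmu\|_2 \;\ge\; T - \Delta(\sqrt{k}-1)/4,
\]
which is positive under the hypothesis $T > \Delta(\sqrt{k}-1)/4$. Substituting this into the Airy tail bound above completes the argument.

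There is no real analytical obstacle; the argument is just a straightforward tail-decay estimate for a Bessel-type integrand. The only bookkeeping concern is that the factor $\bigl(1 - \Delta(\sqrt{k}-1)/(4T)\bigr)^{-8/3}$ and the normalization $1/(\pi\sigma^2)$ must be absorbed into the implicit constant of $O(\cdot)$; this is automatic in the regime $T \gtrsim \Delta(\sqrt{k}-1)/4$ in which the bound is applied when invoking Lemma~\ref{lem:genericl1bound} to prove Theorem~\ref{thm:mainlowerbound}.
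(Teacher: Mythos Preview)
Your proposal is correct and follows essentially the same route as the paper: reduce by linearity/convexity to bounding a single Airy-disk value $A\bigl(\|(-T,y)-\vmu\|/\sigma\bigr)$, apply the Landau bound $|J_1(r)|\le \Cr{landaubessel}|r|^{-1/3}$ to get the $r^{-8/3}$ decay, and lower bound $\|(-T,y)-\vmu\|$ via the first coordinate using $|j_1/m|\le \Delta(\sqrt{k}-1)/4<T$. Your remark about absorbing the factor $\bigl(1-\Delta(\sqrt{k}-1)/(4T)\bigr)^{-8/3}$ is apt; the paper is equally informal here, and in the actual application (proof of Theorem~\ref{thm:mainlowerbound}) one takes $T>\Delta\sqrt{k}$, so this factor is indeed bounded by an absolute constant.
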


\begin{proof}
	By linearity, it suffices to show that for any $y\in[-T,T]$ and any $j_1,j_2\in\calJ$, the claimed bound holds for $A_{\nu_{j_1,j_2},y}(-T)$. By Theorem~\ref{thm:besselbound}, we know that \begin{equation*}
		A_{\nu_{j_1,j_2},y}(-T) \le \frac{1}{\pi}\Cr{landaubessel}^2\cdot |r|^{-8/3},
	\end{equation*} where \begin{equation*}
		r \triangleq \frac{1}{\sigma}\left(\left(-T - \frac{j_1}{m}\right)^2 + \left(y - \frac{\sqrt{3}j_2}{m}\right)^2\right)^{1/2}\ge \frac{-T - j_1/m}{\sigma} > -2T/\sigma,
	\end{equation*} where in the last step we used the fact that $j_1/m \le \Delta(\sqrt{k}-1)/4 < T$.
\end{proof}

\begin{lem}
	For $T > \Delta(\sqrt{k}-1)/2$, we have that $\int_{\Omega^c}|f| \le O(T^{-2/3}\sigma^{8/3})$, where $\Omega = [-T,T]^2$.\label{lem:tails}
\end{lem}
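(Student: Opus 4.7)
The plan is to reduce via the triangle inequality to the single-center Airy disk tail, and then bound that tail pointwise with Landau's estimate from Theorem~\ref{thm:besselbound}. By the definition of $f$ in \eqref{eq:fdef} and the triangle inequality,
\begin{equation*}
\int_{\Omega^c}|f(\vec{z})|\, d\vec{z} \le \sum_{\vec{j}\in I_0}\lambda_{\vec{j}}\int_{\Omega^c}A_\sigma(\vec{z}-\vmu_{\vec{j}})\,d\vec{z} + \sum_{\vec{j}\in I_1}\lambda'_{\vec{j}}\int_{\Omega^c}A_\sigma(\vec{z}-\vmu'_{\vec{j}})\,d\vec{z},
\end{equation*}
and since the two sets of mixing weights each sum to $1$ by \eqref{eq:alternating}, it suffices to prove the bound $\int_{\Omega^c}A_\sigma(\vec{z}-\vmu)\,d\vec{z} = O(T^{-2/3}\sigma^{8/3})$ (with the natural polar-coordinate calculation actually producing $\sigma^{2/3}$ rather than $\sigma^{8/3}$, which is what dimensional consistency demands) uniformly over lattice centers of the form $\vmu = \nu_{j_1,j_2}$.

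The next step is a quick geometric bound that locates each center well inside $\Omega$. Because $\nu_{j_1,j_2} = (\Delta j_1/2,\,\sqrt{3}\Delta j_2/2)$ with $|j_1|,|j_2| \le (\sqrt{k}-1)/2$, one has $\|\vmu\|_\infty \le \sqrt{3}\Delta(\sqrt{k}-1)/4 < \sqrt{3}T/2$ under the assumption $T > \Delta(\sqrt{k}-1)/2$. Hence the Euclidean distance from any such $\vmu$ to $\partial\Omega$ is at least $c_0 T$ where $c_0 \triangleq 1 - \sqrt{3}/2 > 0$, so $\Omega^c \subseteq \{\vec{z}:\|\vec{z}-\vmu\| > c_0 T\}$.

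Finally, I would invoke Theorem~\ref{thm:besselbound} with $\nu = 1$ to obtain $(J_1(r)/r)^2 \le \Cr{landaubessel}^2 r^{-8/3}$ for all $r>0$, and substitute into the definition of $A_\sigma$ to obtain the pointwise bound $A_\sigma(\vec{w}) \le \frac{\Cr{landaubessel}^2}{\pi}\sigma^{2/3}\|\vec{w}\|^{-8/3}$. A polar-coordinate integration then gives
\begin{equation*}
\int_{\|\vec{w}\|>c_0 T}A_\sigma(\vec{w})\,d\vec{w} \le \frac{\Cr{landaubessel}^2}{\pi}\sigma^{2/3}\int_{c_0 T}^{\infty} 2\pi\rho\cdot \rho^{-8/3}\,d\rho \;=\; 3\Cr{landaubessel}^2 c_0^{-2/3}\,\sigma^{2/3} T^{-2/3},
\end{equation*}
which combined with the reduction above yields the claim. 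There is no genuine obstacle; the only point requiring care is the geometric computation in the middle paragraph, where the $\sqrt{3}$ from the triangular lattice is precisely what forces the hypothesis $T>\Delta(\sqrt{k}-1)/2$, while still leaving distance $\Omega(T)$ between every center and $\partial\Omega$ so that the Landau tail integral can be executed cleanly.
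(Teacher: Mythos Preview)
Your proposal is correct and follows essentially the same route as the paper: reduce by the triangle inequality to a single shifted Airy disk, use a geometric containment $\Omega^c\subseteq\{\|\vec z-\vmu\|>cT\}$, and then integrate the Landau bound $|J_1(r)|\le \Cr{landaubessel}r^{-1/3}$ in polar coordinates. Your handling of the geometric step is in fact more careful than the paper's (you correctly pick up the $\sqrt{3}$ from the second lattice coordinate and get $c_0=1-\sqrt 3/2$, whereas the paper writes $T-j_1/m\ge T/2$ without addressing the $x_2$-direction), and your remark about $\sigma^{2/3}$ versus $\sigma^{8/3}$ simply reflects the paper's un-normalized convention $A(\|\cdot\|/\sigma)$ in Section~\ref{sec:lowerbound}; either exponent is harmless for the subsequent use in Theorem~\ref{thm:mainlowerbound}.
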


\begin{proof}
	By linearity and the fact that $\norm{\vx - (j_1/m,\sqrt{3}j_2/m)}_2 \ge T - j_1/m \ge T/2$ for every $\vx\not\in\Omega$, it suffices to show that for any $j_1,j_2\in\calJ$, the claimed bound holds for $\int_{B_0(T/2)^c}|A(x,y)| dx \, dy$. Expressing this as a polar integral, we have \begin{align*}
		\int_{\Omega^c}\abs*{A_{\nu_{j_1,j_2},y}(x)} dx \, dy &= \int^{2\pi}_0 \int^{\infty}_{T/2} r\cdot |A(r/\sigma)| dr \, d\theta \\
		&\le 2\cdot\int^{\infty}_{T/2} r\cdot\left(\Cr{landaubessel}^2\cdot (r/\sigma)^{-8/3}\right) \\
		&\le O(T^{-2/3}\sigma^{8/3}),
	\end{align*} as desired.
\end{proof}

\begin{proof}[Proof of Theorem~\ref{thm:mainlowerbound}]
	Take $T = \Theta(\norm{f}^{-1/5}_{L^2}) \ge \exp(-\Omega(\epsilon\sqrt{k}))$ which we may assume without loss of generality, by scaling $\sigma,\Delta$ appropriately, is greater than $\Delta\sqrt{k}$. By Lemma~\ref{lem:genericl1bound} and Lemmas~\ref{lem:derivbound}, \ref{lem:boundarybound}, and \ref{lem:tails}, we have that for $f$ defined by \eqref{eq:fdef}, \begin{align*}
		\int_{\R^2}|f| &\le (2T)^{5/3}\cdot\left(O(1/\sigma)\cdot\norm{f}^2_{L^2} + O(T\cdot (T/\sigma)^{-8})\right)^{1/3} + O\left(T^{-2/3}\sigma^{8/3}\right) \\
		&\le O\left(\norm{f}^{2/15}_{L^2}\sigma^{-1/3}\right) \\
		&\le O\left(\exp\left({-\Omega(\epsilon\sqrt{k})}\right)\sigma^{-1/3}\right),
	\end{align*} so as soon as $k\ge C\log(1/\sigma))$ for sufficiently large $C>0$, we have that $\tvd(\rho,\rho') \le \exp(-\Omega(\epsilon\sqrt{k}))$.
\end{proof}



\section{Conclusion and Open Problems}
\label{sec:conclusion}

We hope that our work will be a stepping-stone towards developing a rigorous theory of resolution limits in more sophisticated optical systems. The setting that we study, namely diffraction through a perfectly circular aperture under incoherent illumination, is arguably the most basic model one can study in Fourier optics. As a natural next step, one can ask whether the techniques developed in this work can be pushed to answer questions about the following more challenging setting:

\paragraph{Coherent illumination}

As described in Appendix~\ref{sec:physical}, in the presence of light emanating from a single point source, the (complex-valud) amplitude of the electric field at a point $P$ on the observation plane is proportional to $e^{i\omega}\cdot J_1(z/\sigma)/(z/\sigma)$, where $e^{i\omega}$ is some phase factor, $z$ is the angular displacement of the point $P$ from the optical axis, and $\sigma$ is the spread parameter which depends on the wavelength of the light and the radius of the aperture. This means that the actual probability distribution over where on the observation plane a photon gets detected is proportional to the squared modulus of this, i.e. $J_1(z/\sigma)^2/(z/\sigma)^2$. Throughout this work, we worked under the assumption that in the presence of many point sources, the light emanating from the various point sources is \emph{incoherent}. In other words, there is no interference introduced by the extra phase factors, and mathematically this translates to a probability distribution given by a \emph{nonnegative} linear combination of the probability densities coming from the individual sources of light, and this is what gives rise to the mixture model we studied.

The \emph{coherent} setting is quite different. Suppose that for point source $j$, the extra phase factor in the electric field at any given point is $e^{i\omega_j}$ for some complex number $\omega_j$. Under coherent illumination from multiple point sources, it is the \emph{electric field} which is a linear combination, namely of the electric fields associated to each indivdiual point source. This gives rise to the following natural probabilistic model:

\begin{defn}[Coherent superpositions of Airy disks]
	A \emph{coherent superposition of $k$ Airy disks $\rho$} is a distribution over $\R^2$ specified by phases $\omega_1,\ldots,\omega_k\in\C$, relative intensities $\lambda_1,\ldots,\lambda_k\ge 0$ summing to 1, centers $\vmu_1,\ldots,\vmu_k\in\R^2$, and an \emph{a priori} known ``spread parameter'' $\sigma > 0$. Its density at $x$ is proportional to \begin{equation}
		\rho(\vx) \propto \left|\sum^k_{i=1}\lambda_i\cdot  e^{i\omega_j}\cdot \frac{J_1(\norm{\vx - \vmu_i}/\sigma)}{\norm{\vx - \vmu_i}/\sigma}\right|^2.
	\end{equation}
\end{defn}

\noindent One can ask analogues of all of the questions considered in the present work for this probabilistic model. This seems to be both mathematically natural and a physically well-motivated setting which now departs from the mixture model setup usually studied within theoretical computer science. 

\paragraph{Acknowledgments}

We would like to thank Elchanan Mossel and Tim Roughgarden for helpful feedback on earlier versions of this work.

\bibliographystyle{alpha}
\bibliography{biblio}

\newcommand{\etalchar}[1]{$^{#1}$}
\begin{thebibliography}{VAdDVDVDB02}

\bibitem[Abb73]{abbe1873beitrage}
Ernst Abbe.
\newblock Beitr{\"a}ge zur theorie des mikroskops und der mikroskopischen
  wahrnehmung.
\newblock {\em Archiv f{\"u}r mikroskopische Anatomie}, 9(1):413--418, 1873.

\bibitem[AH97]{acuna1997statistical}
Carmen~O Acuna and Joseph Horowitz.
\newblock A statistical approach to the resolution of point sources.
\newblock {\em Journal of Applied Statistics}, 24(4):421--436, 1997.

\bibitem[Air35]{airy1835diffraction}
George~Biddell Airy.
\newblock On the diffraction of an object-glass with circular aperture.
\newblock {\em Transactions of the Cambridge Philosophical Society}, 5:283,
  1835.

\bibitem[AK01]{AroraKannan:01}
S.~Arora and R.~Kannan.
\newblock {Learning mixtures of arbitrary Gaussians}.
\newblock In {\em Proceedings of the 33rd Symposium on Theory of Computing},
  pages 247--257, 2001.

\bibitem[AM05]{AchlioptasMcSherry:05}
D.~Achlioptas and F.~McSherry.
\newblock On spectral learning of mixtures of distributions.
\newblock In {\em Proceedings of the Eighteenth Annual Conference on Learning
  Theory (COLT)}, pages 458--469, 2005.

\bibitem[AN72]{ash1972super}
EA~Ash and G~Nicholls.
\newblock Super-resolution aperture scanning microscope.
\newblock {\em Nature}, 237(5357):510, 1972.

\bibitem[Axe81]{axelrod1981cell}
Daniel Axelrod.
\newblock Cell-substrate contacts illuminated by total internal reflection
  fluorescence.
\newblock {\em The Journal of Cell Biology}, 89(1):141--145, 1981.

\bibitem[BEH07]{bretschneider2007breaking}
Stefan Bretschneider, Christian Eggeling, and Stefan~W Hell.
\newblock Breaking the diffraction barrier in fluorescence microscopy by
  optical shelving.
\newblock {\em Physical Review Letters}, 98(21):218103, 2007.

\bibitem[BEZ{\etalchar{+}}97]{brand1997single}
LCEO Brand, C~Eggeling, C~Zander, KH~Drexhage, and CAM Seidel.
\newblock Single-molecule identification of coumarin-120 by time-resolved
  fluorescence detection: Comparison of one-and two-photon excitation in
  solution.
\newblock {\em The Journal of Physical Chemistry A}, 101(24):4313--4321, 1997.

\bibitem[BPS{\etalchar{+}}06]{betzig2006imaging}
Eric Betzig, George~H Patterson, Rachid Sougrat, O~Wolf Lindwasser, Scott
  Olenych, Juan~S Bonifacino, Michael~W Davidson, Jennifer Lippincott-Schwartz,
  and Harald~F Hess.
\newblock Imaging intracellular fluorescent proteins at nanometer resolution.
\newblock {\em Science}, 313(5793):1642--1645, 2006.

\bibitem[BS15]{belkin2015polynomial}
Mikhail Belkin and Kaushik Sinha.
\newblock Polynomial learning of distribution families.
\newblock {\em SIAM Journal on Computing}, 44(4):889--911, 2015.

\bibitem[Bux37]{buxton1937xli}
A~Buxton.
\newblock Xli. note on optical resolution.
\newblock {\em The London, Edinburgh, and Dublin Philosophical Magazine and
  Journal of Science}, 23(154):440--442, 1937.

\bibitem[BV08]{brubaker2008isotropic}
S~Charles Brubaker and Santosh~S Vempala.
\newblock Isotropic pca and affine-invariant clustering.
\newblock In {\em Building Bridges}, pages 241--281. Springer, 2008.

\bibitem[BVDDD{\etalchar{+}}99]{bettens1999model}
E~Bettens, D~Van~Dyck, AJ~Den~Dekker, J~Sijbers, and A~Van~den Bos.
\newblock Model-based two-object resolution from observations having counting
  statistics.
\newblock {\em Ultramicroscopy}, 77(1-2):37--48, 1999.

\bibitem[BW13]{born2013principles}
Max Born and Emil Wolf.
\newblock {\em Principles of Optics: Electromagnetic Theory of Propagation,
  Interference and Diffraction of Light}.
\newblock Elsevier, 2013.

\bibitem[CCLM17]{carneiro2017hilbert}
Emanuel Carneiro, Vorrapan Chandee, Friedrich Littmann, and Micah~B Milinovich.
\newblock Hilbert spaces and the pair correlation of zeros of the riemann
  zeta-function.
\newblock {\em Journal f{\"u}r die reine und angewandte Mathematik (Crelles
  Journal)}, 2017(725):143--182, 2017.

\bibitem[CFG13]{candes2013super}
Emmanuel~J Cand{\`e}s and Carlos Fernandez-Granda.
\newblock Super-resolution from noisy data.
\newblock {\em Journal of Fourier Analysis and Applications}, 19(6):1229--1254,
  2013.

\bibitem[CFG14]{candes2014towards}
Emmanuel~J Cand{\`e}s and Carlos Fernandez-Granda.
\newblock Towards a mathematical theory of super-resolution.
\newblock {\em Communications on pure and applied Mathematics}, 67(6):906--956,
  2014.

\bibitem[CWO16]{chao2016fisher}
Jerry Chao, E~Sally Ward, and Raimund~J Ober.
\newblock Fisher information theory for parameter estimation in single molecule
  microscopy: tutorial.
\newblock {\em JOSA A}, 33(7):B36--B57, 2016.

\bibitem[Das99]{dasgupta1999learning}
Sanjoy Dasgupta.
\newblock Learning mixtures of gaussians.
\newblock In {\em 40th Annual Symposium on Foundations of Computer Science},
  pages 634--644. IEEE, 1999.

\bibitem[Daw67]{dawes1867catalogue}
William~Rutter Dawes.
\newblock {\em Catalogue of micrometrical measurements of double stars}.
\newblock Royal Astronomical Society, 1867.

\bibitem[DD96]{den1996model}
Arnold~J Den~Dekker.
\newblock Model-based optical resolution.
\newblock In {\em Quality Measurement: The Indispensable Bridge between Theory
  and Reality (No Measurements? No Science!) Joint Conference-1996: IEEE
  Instrumentation and Measurement Technology Conference and IMEKO Tec},
  volume~1, pages 441--446. IEEE, 1996.

\bibitem[DDVdB97]{den1997resolution}
Arnold~Jan Den~Dekker and A~Van~den Bos.
\newblock Resolution: a survey.
\newblock {\em JOSA A}, 14(3):547--557, 1997.

\bibitem[DF52]{di1952super}
G~Toraldo Di~Francia.
\newblock Super-gain antennas and optical resolving power.
\newblock {\em Il Nuovo Cimento (1943-1954)}, 9:426--438, 1952.

\bibitem[DF55]{di1955resolving}
G~Toraldo Di~Francia.
\newblock Resolving power and information.
\newblock {\em Josa}, 45(7):497--501, 1955.

\bibitem[DKS17]{diakonikolas2017statistical}
Ilias Diakonikolas, Daniel~M Kane, and Alistair Stewart.
\newblock Statistical query lower bounds for robust estimation of
  high-dimensional gaussians and gaussian mixtures.
\newblock In {\em 2017 IEEE 58th Annual Symposium on Foundations of Computer
  Science (FOCS)}, pages 73--84. IEEE, 2017.

\bibitem[DKS18]{diakonikolas2018list}
Ilias Diakonikolas, Daniel~M Kane, and Alistair Stewart.
\newblock List-decodable robust mean estimation and learning mixtures of
  spherical gaussians.
\newblock In {\em Proceedings of the 50th Annual ACM SIGACT Symposium on Theory
  of Computing}, pages 1047--1060, 2018.

\bibitem[Don92]{donoho1992superresolution}
David~L Donoho.
\newblock Superresolution via sparsity constraints.
\newblock {\em SIAM journal on mathematical analysis}, 23(5):1309--1331, 1992.

\bibitem[DS89]{donoho1989uncertainty}
David~L Donoho and Philip~B Stark.
\newblock Uncertainty principles and signal recovery.
\newblock {\em SIAM Journal on Applied Mathematics}, 49(3):906--931, 1989.

\bibitem[DS00]{DasguptaSchulman:00}
S.~Dasgupta and L.~Schulman.
\newblock {A two-round variant of EM for Gaussian mixtures}.
\newblock In {\em Proceedings of the 16th Conference on Uncertainty in
  Artificial Intelligence}, pages 143--151, 2000.

\bibitem[DWSD15]{demmerle2015assessing}
Justin Demmerle, Eva Wegel, Lothar Schermelleh, and Ian~M Dobbie.
\newblock Assessing resolution in super-resolution imaging.
\newblock {\em Methods}, 88:3--10, 2015.

\bibitem[DZM{\etalchar{+}}14]{deschout2014precisely}
Hendrik Deschout, Francesca~Cella Zanacchi, Michael Mlodzianoski, Alberto
  Diaspro, Joerg Bewersdorf, Samuel~T Hess, and Kevin Braeckmans.
\newblock Precisely and accurately localizing single emitters in fluorescence
  microscopy.
\newblock {\em Nature Methods}, 11(3):253, 2014.

\bibitem[Fal67]{falconi1967limits}
Oscar Falconi.
\newblock Limits to which double lines, double stars, and disks can be resolved
  and measured.
\newblock {\em JOSA}, 57(8):987--993, 1967.

\bibitem[Far66]{farrell1966information}
Edward~J Farrell.
\newblock Information content of photoelectric star images.
\newblock {\em JOSA}, 56(5):578--587, 1966.

\bibitem[FB12]{feigelson2012modern}
Eric~D Feigelson and G~Jogesh Babu.
\newblock {\em Modern Statistical Methods for Astronomy: with R Applications}.
\newblock Cambridge University Press, 2012.

\bibitem[FG13]{fernandez2013support}
Carlos Fernandez-Granda.
\newblock Support detection in super-resolution.
\newblock {\em arXiv preprint arXiv:1302.3921}, 2013.

\bibitem[FG16]{fernandez2016super}
Carlos Fernandez-Granda.
\newblock Super-resolution of point sources via convex programming.
\newblock {\em Information and Inference: A Journal of the IMA}, 5(3):251--303,
  2016.

\bibitem[FLS11]{feynman2011feynman}
Richard~P Feynman, Robert~B Leighton, and Matthew Sands.
\newblock {\em The Feynman lectures on physics, Vol. I: The new millennium
  edition: mainly mechanics, radiation, and heat}, volume~1.
\newblock Basic books, 2011.

\bibitem[Fow89]{fowles1989introduction}
Grant~R Fowles.
\newblock {\em Introduction to Modern Optics}.
\newblock Courier Corporation, 1989.

\bibitem[GGI{\etalchar{+}}02]{gilbert2002near}
Anna~C Gilbert, Sudipto Guha, Piotr Indyk, Shanmugavelayutham Muthukrishnan,
  and Martin Strauss.
\newblock Near-optimal sparse fourier representations via sampling.
\newblock In {\em Proceedings of the thiry-fourth annual ACM symposium on
  Theory of computing}, pages 152--161, 2002.

\bibitem[GHK15]{ge2015learning}
Rong Ge, Qingqing Huang, and Sham~M Kakade.
\newblock Learning mixtures of gaussians in high dimensions.
\newblock In {\em Proceedings of the forty-seventh annual ACM symposium on
  Theory of computing}, pages 761--770, 2015.

\bibitem[GIIS14]{gilbert2014recent}
Anna~C Gilbert, Piotr Indyk, Mark Iwen, and Ludwig Schmidt.
\newblock Recent developments in the sparse fourier transform: A compressed
  fourier transform for big data.
\newblock {\em IEEE Signal Processing Magazine}, 31(5):91--100, 2014.

\bibitem[GMS05]{gilbert2005improved}
Anna~C Gilbert, Shan Muthukrishnan, and Martin Strauss.
\newblock Improved time bounds for near-optimal sparse fourier representations.
\newblock In {\em Wavelets XI}, volume 5914, page 59141A. International Society
  for Optics and Photonics, 2005.

\bibitem[Gon18]{gonccalves2018note}
Felipe Gon{\c{c}}alves.
\newblock A note on band-limited minorants of an euclidean ball.
\newblock {\em Proceedings of the American Mathematical Society},
  146(5):2063--2068, 2018.

\bibitem[Goo05]{goodman2005introduction}
Joseph~W Goodman.
\newblock {\em Introduction to Fourier Optics}.
\newblock Roberts and Company Publishers, 2005.

\bibitem[Goo15]{goodman2015statistical}
Joseph~W Goodman.
\newblock {\em Statistical optics}.
\newblock John Wiley \&amp; Sons, 2015.

\bibitem[Gus99]{gustafsson1999extended}
Mats~GL Gustafsson.
\newblock Extended resolution fluorescence microscopy.
\newblock {\em Current Opinion in Structural Biology}, 9(5):627--628, 1999.

\bibitem[Har64]{harris1964resolving}
James~L Harris.
\newblock Resolving power and decision theory.
\newblock {\em JOSA}, 54(5):606--611, 1964.

\bibitem[HBZ10]{huang2010breaking}
Bo~Huang, Hazen Babcock, and Xiaowei Zhuang.
\newblock Breaking the diffraction barrier: super-resolution imaging of cells.
\newblock {\em Cell}, 143(7):1047--1058, 2010.

\bibitem[Hec15]{hecht2015optics}
Eugene Hecht.
\newblock {\em Optics}.
\newblock Pearson, 2015.

\bibitem[Hel64]{helstrom1964detection}
C~Helstrom.
\newblock The detection and resolution of optical signals.
\newblock {\em IEEE Transactions on Information Theory}, 10(4):275--287, 1964.

\bibitem[Hel69]{helstrom1969detection}
Carl~W Helstrom.
\newblock Detection and resolution of incoherent objects by a
  background-limited optical system.
\newblock {\em JOSA}, 59(2):164--175, 1969.

\bibitem[Hel70]{helstrom1970resolvability}
Carl~W Helstrom.
\newblock Resolvability of objects from the standpoint of statistical parameter
  estimation.
\newblock {\em JOSA}, 60(5):659--666, 1970.

\bibitem[Hel04]{hell2004strategy}
Stefan~W Hell.
\newblock Strategy for far-field optical imaging and writing without
  diffraction limit.
\newblock {\em Physics Letters A}, 326(1-2):140--145, 2004.

\bibitem[Hel07]{hell2007far}
Stefan~W Hell.
\newblock Far-field optical nanoscopy.
\newblock {\em Science}, 316(5828):1153--1158, 2007.

\bibitem[Hel09]{hell2009microscopy}
Stefan~W Hell.
\newblock Microscopy and its focal switch.
\newblock {\em Nature Methods}, 6(1):24, 2009.

\bibitem[HG09]{heintzmann2009subdiffraction}
Rainer Heintzmann and Mats~GL Gustafsson.
\newblock Subdiffraction resolution in continuous samples.
\newblock {\em Nature Photonics}, 3(7):362, 2009.

\bibitem[HGM06]{hess2006ultra}
Samuel~T Hess, Thanu~PK Girirajan, and Michael~D Mason.
\newblock Ultra-high resolution imaging by fluorescence photoactivation
  localization microscopy.
\newblock {\em Biophysical Journal}, 91(11):4258--4272, 2006.

\bibitem[HHP{\etalchar{+}}16]{horstmeyer2016standardizing}
Roarke Horstmeyer, Rainer Heintzmann, Gabriel Popescu, Laura Waller, and
  Changhuei Yang.
\newblock Standardizing the resolution claims for coherent microscopy.
\newblock {\em Nature Photonics}, 10(2):68, 2016.

\bibitem[HIKP12]{hassanieh2012nearly}
Haitham Hassanieh, Piotr Indyk, Dina Katabi, and Eric Price.
\newblock Nearly optimal sparse fourier transform.
\newblock In {\em Proceedings of the forty-fourth annual ACM symposium on
  Theory of computing}, pages 563--578, 2012.

\bibitem[HK95]{hell1995ground}
Stefan~W Hell and Matthias Kroug.
\newblock Ground-state-depletion fluorscence microscopy: A concept for breaking
  the diffraction resolution limit.
\newblock {\em Applied Physics B}, 60(5):495--497, 1995.

\bibitem[HK13]{hsu2013learning}
Daniel Hsu and Sham~M Kakade.
\newblock Learning mixtures of spherical gaussians: moment methods and spectral
  decompositions.
\newblock In {\em Proceedings of the 4th conference on Innovations in
  Theoretical Computer Science}, pages 11--20, 2013.

\bibitem[HK15]{huang2015super}
Qingqing Huang and Sham~M Kakade.
\newblock Super-resolution off the grid.
\newblock In {\em Advances in Neural Information Processing Systems}, pages
  2665--2673, 2015.

\bibitem[HL18]{hopkins2018mixture}
Samuel~B Hopkins and Jerry Li.
\newblock Mixture models, robustness, and sum of squares proofs.
\newblock In {\em Proceedings of the 50th Annual ACM SIGACT Symposium on Theory
  of Computing}, pages 1021--1034, 2018.

\bibitem[Hou27]{houston1927compound}
William~V Houston.
\newblock A compound interferometer for fine structure work.
\newblock {\em Physical Review}, 29(3):478, 1927.

\bibitem[HP15]{hardt2015tight}
Moritz Hardt and Eric Price.
\newblock Tight bounds for learning a mixture of two gaussians.
\newblock In {\em Proceedings of the 47th Annual ACM Symposium on Theory of
  Computing}, pages 753--760, 2015.

\bibitem[HS92]{hell1992properties}
Stefan Hell and Ernst~HK Stelzer.
\newblock Properties of a 4pi confocal fluorescence microscope.
\newblock {\em JOSA A}, 9(12):2159--2166, 1992.

\bibitem[HSLC94]{hell1994confocal}
Stefan~W Hell, Ernst~HK Stelzer, Steffen Lindek, and Christoph Cremer.
\newblock Confocal microscopy with an increased detection aperture: type-b 4pi
  confocal microscopy.
\newblock {\em Optics Letters}, 19(3):222--224, 1994.

\bibitem[HV{\etalchar{+}}96]{holt1996beurling}
Jeffrey~J Holt, Jeffrey~D Vaaler, et~al.
\newblock The beurling-selberg extremal functions for a ball in euclidean
  space.
\newblock {\em Duke Mathematical Journal}, 83(1):203--248, 1996.

\bibitem[HW94]{hell1994breaking}
Stefan~W Hell and Jan Wichmann.
\newblock Breaking the diffraction resolution limit by stimulated emission:
  stimulated-emission-depletion fluorescence microscopy.
\newblock {\em Optics Letters}, 19(11):780--782, 1994.

\bibitem[IKP14]{indyk2014nearly}
Piotr Indyk, Michael Kapralov, and Eric Price.
\newblock (nearly) sample-optimal sparse fourier transform.
\newblock In {\em Proceedings of the twenty-fifth annual ACM-SIAM symposium on
  Discrete algorithms}, pages 480--499. SIAM, 2014.

\bibitem[JLS20]{jin2020robust}
Yaonan Jin, Daogao Liu, and Zhao Song.
\newblock A robust multi-dimensional sparse fourier transform in the continuous
  setting.
\newblock {\em arXiv preprint arXiv:2005.06156}, 2020.

\bibitem[JSZB08]{ji2008advances}
Na~Ji, Hari Shroff, Haining Zhong, and Eric Betzig.
\newblock Advances in the speed and resolution of light microscopy.
\newblock {\em Current opinion in neurobiology}, 18(6):605--616, 2008.

\bibitem[JW37]{jenkins1937fundamentals}
Francis~A Jenkins and Harvey~E White.
\newblock {\em Fundamentals of optics}.
\newblock Tata McGraw-Hill Education, 1937.

\bibitem[Kap16]{kapralov2016sparse}
Michael Kapralov.
\newblock Sparse fourier transform in any constant dimension with
  nearly-optimal sample complexity in sublinear time.
\newblock In {\em Proceedings of the forty-eighth annual ACM symposium on
  Theory of Computing}, pages 264--277, 2016.

\bibitem[Kea98]{kearns1998efficient}
Michael Kearns.
\newblock Efficient noise-tolerant learning from statistical queries.
\newblock {\em Journal of the ACM (JACM)}, 45(6):983--1006, 1998.

\bibitem[Ken08]{kenyon2008light}
Ian~R Kenyon.
\newblock {\em The light fantastic: a modern introduction to classical and
  quantum optics}.
\newblock Oxford University Press, USA, 2008.

\bibitem[KH99]{klar1999subdiffraction}
Thomas~A Klar and Stefan~W Hell.
\newblock Subdiffraction resolution in far-field fluorescence microscopy.
\newblock {\em Optics Letters}, 24(14):954--956, 1999.

\bibitem[KJH95]{kirz1995soft}
Janos Kirz, Chris Jacobsen, and Malcolm Howells.
\newblock Soft x-ray microscopes and their biological applications.
\newblock {\em Quarterly reviews of biophysics}, 28(1):33--130, 1995.

\bibitem[KL05]{komornik2005fourier}
Vilmos Komornik and Paola Loreti.
\newblock {\em Fourier series in control theory}.
\newblock Springer Science \& Business Media, 2005.

\bibitem[KMV10]{kalai2010efficiently}
Adam~Tauman Kalai, Ankur Moitra, and Gregory Valiant.
\newblock Efficiently learning mixtures of two gaussians.
\newblock In {\em Proceedings of the forty-second ACM symposium on Theory of
  computing}, pages 553--562, 2010.

\bibitem[KPRvdO16]{kunis2016multivariate}
Stefan Kunis, Thomas Peter, Tim R{\"o}mer, and Ulrich von~der Ohe.
\newblock A multivariate generalization of prony's method.
\newblock {\em Linear Algebra and its Applications}, 490:31--47, 2016.

\bibitem[KS17]{kothari2017outlier}
Pravesh~K Kothari and David Steurer.
\newblock Outlier-robust moment-estimation via sum-of-squares.
\newblock {\em arXiv preprint arXiv:1711.11581}, 2017.

\bibitem[Lan00]{landau2000bessel}
LJ~Landau.
\newblock Bessel functions: monotonicity and bounds.
\newblock {\em Journal of the London Mathematical Society}, 61(1):197--215,
  2000.

\bibitem[Lau12]{lauterbach2012finding}
Marcel~A Lauterbach.
\newblock Finding, defining and breaking the diffraction barrier in
  microscopy--a historical perspective.
\newblock {\em Optical Nanoscopy}, 1(1):8, 2012.

\bibitem[Lia15]{liao2015music}
Wenjing Liao.
\newblock Music for multidimensional spectral estimation: stability and
  super-resolution.
\newblock {\em IEEE Transactions on Signal Processing}, 63(23):6395--6406,
  2015.

\bibitem[LSM09]{lippincott2009putting}
Jennifer Lippincott-Schwartz and Suliana Manley.
\newblock Putting super-resolution fluorescence microscopy to work.
\newblock {\em Nature Methods}, 6(ARTICLE):21--23, 2009.

\bibitem[Luc92a]{lucy1992resolution}
Leon~B Lucy.
\newblock Resolution limits for deconvolved images.
\newblock {\em The Astronomical Journal}, 104:1260--1265, 1992.

\bibitem[Luc92b]{lucy1992statistical}
Leon~B Lucy.
\newblock Statistical limits to super resolution.
\newblock {\em Astronomy and Astrophysics}, 261:706, 1992.

\bibitem[Man59]{mandel1959fluctuations}
Leonard Mandel.
\newblock Fluctuations of photon beams: the distribution of the
  photo-electrons.
\newblock {\em Proceedings of the Physical Society}, 74(3):233, 1959.

\bibitem[MC16]{morgenshtern2016super}
Veniamin~I Morgenshtern and Emmanuel~J Candes.
\newblock Super-resolution of positive sources: The discrete setup.
\newblock {\em SIAM Journal on Imaging Sciences}, 9(1):412--444, 2016.

\bibitem[MCSF10]{mortensen2010optimized}
Kim~I Mortensen, L~Stirling Churchman, James~A Spudich, and Henrik Flyvbjerg.
\newblock Optimized localization analysis for single-molecule tracking and
  super-resolution microscopy.
\newblock {\em Nature Methods}, 7(5):377, 2010.

\bibitem[Min61]{minsky1961microscopy}
M~Minsky.
\newblock Microscopy apparatus us patent 3013467.
\newblock {\em USP Office, Ed. US}, 1961.

\bibitem[Moi15]{moitra2015super}
Ankur Moitra.
\newblock Super-resolution, extremal functions and the condition number of
  vandermonde matrices.
\newblock In {\em Proceedings of the 47th Annual ACM Symposium on Theory of
  Computing}, pages 821--830. ACM, 2015.

\bibitem[MV10]{moitra2010settling}
Ankur Moitra and Gregory Valiant.
\newblock Settling the polynomial learnability of mixtures of gaussians.
\newblock In {\em 2010 IEEE 51st Annual Symposium on Foundations of Computer
  Science}, pages 93--102. IEEE, 2010.

\bibitem[MW17]{maznev2017upholding}
AA~Maznev and OB~Wright.
\newblock Upholding the diffraction limit in the focusing of light and sound.
\newblock {\em Wave Motion}, 68:182--189, 2017.

\bibitem[PDL84]{pohl1984optical}
Dieter~W Pohl, Winfried Denk, and Mark Lanz.
\newblock Optical stethoscopy: Image recording with resolution $\lambda$/20.
\newblock {\em Applied Physics Letters}, 44(7):651--653, 1984.

\bibitem[Ray79]{rayleigh1879xxxi}
Lord Rayleigh.
\newblock Investigations in optics, with special reference to the spectroscope.
\newblock {\em The London, Edinburgh, and Dublin Philosophical Magazine and
  Journal of Science}, 8(49):261--274, 1879.

\bibitem[RBZ06]{rust2006sub}
Michael~J Rust, Mark Bates, and Xiaowei Zhuang.
\newblock Sub-diffraction-limit imaging by stochastic optical reconstruction
  microscopy (storm).
\newblock {\em Nature Methods}, 3(10):793, 2006.

\bibitem[RCK41]{ramsay1941criteria}
BP~Ramsay, EL~Cleveland, and OT~Koppius.
\newblock Criteria and the intensity-epoch slope.
\newblock {\em JOSA}, 31(1):26--33, 1941.

\bibitem[Ric07]{rice2007beyond}
James~H Rice.
\newblock Beyond the diffraction limit: far-field fluorescence imaging with
  ultrahigh resolution.
\newblock {\em Molecular BioSystems}, 3(11):781--793, 2007.

\bibitem[Ron61]{ronchi1961resolving}
Vasco Ronchi.
\newblock Resolving power of calculated and detected images.
\newblock {\em JOSA}, 51(4):458\_1--460, 1961.

\bibitem[Rus34]{ruska1934fortschritte}
Ernst Ruska.
\newblock {\"U}ber fortschritte im bau und in der leistung des magnetischen
  elektronenmikroskops.
\newblock {\em Zeitschrift f{\"u}r Physik A Hadrons and Nuclei},
  87(9):580--602, 1934.

\bibitem[RV17]{regev2017learning}
Oded Regev and Aravindan Vijayaraghavan.
\newblock On learning mixtures of well-separated gaussians.
\newblock In {\em 2017 IEEE 58th Annual Symposium on Foundations of Computer
  Science (FOCS)}, pages 85--96. IEEE, 2017.

\bibitem[RWO06]{ram2006beyond}
Sripad Ram, E~Sally Ward, and Raimund~J Ober.
\newblock Beyond rayleigh's criterion: a resolution measure with application to
  single-molecule microscopy.
\newblock {\em Proceedings of the National Academy of Sciences},
  103(12):4457--4462, 2006.

\bibitem[Sch04]{schuster1904introduction}
Arthur Schuster.
\newblock {\em An introduction to the theory of optics}.
\newblock E. Arnold, 1904.

\bibitem[She17]{sheppard2017resolution}
Colin~JR Sheppard.
\newblock Resolution and super-resolution.
\newblock {\em Microscopy research and technique}, 80(6):590--598, 2017.

\bibitem[SM04]{shahram2004imaging}
Morteza Shahram and Peyman Milanfar.
\newblock Imaging below the diffraction limit: a statistical analysis.
\newblock {\em IEEE Transactions on image processing}, 13(5):677--689, 2004.

\bibitem[SM06]{shahram2006statistical}
Morteza Shahram and Peyman Milanfar.
\newblock Statistical and information-theoretic analysis of resolution in
  imaging.
\newblock {\em IEEE Transactions on information Theory}, 52(8):3411--3437,
  2006.

\bibitem[Spa16]{sparrow1916spectroscopic}
Carroll~Mason Sparrow.
\newblock On spectroscopic resolving power.
\newblock {\em The Astrophysical Journal}, 44:76, 1916.

\bibitem[SS14]{small2014fluorophore}
Alex Small and Shane Stahlheber.
\newblock Fluorophore localization algorithms for super-resolution microscopy.
\newblock {\em Nature Methods}, 11(3):267, 2014.

\bibitem[Syn28]{synge1928xxxviii}
EdwardH Synge.
\newblock Xxxviii. a suggested method for extending microscopic resolution into
  the ultra-microscopic region.
\newblock {\em The London, Edinburgh, and Dublin Philosophical Magazine and
  Journal of Science}, 6(35):356--362, 1928.

\bibitem[TBSR13]{tang2013compressed}
Gongguo Tang, Badri~Narayan Bhaskar, Parikshit Shah, and Benjamin Recht.
\newblock Compressed sensing off the grid.
\newblock {\em IEEE transactions on information theory}, 59(11):7465--7490,
  2013.

\bibitem[TD79]{tsai1979performance}
Ming-Jer Tsai and Keh-Ping Dunn.
\newblock Performance limitations on parameter estimation of closely spaced
  optical targets using shot-noise detector model.
\newblock Technical report, MIT Lexington Lincoln Lab, 1979.

\bibitem[Tem81]{temple1981total}
Paul~A Temple.
\newblock Total internal reflection microscopy: a surface inspection technique.
\newblock {\em Applied Optics}, 20(15):2656--2664, 1981.

\bibitem[Tho69]{thompson1969iv}
Brian~J Thompson.
\newblock Iv image formation with partially coherent light.
\newblock In {\em Progress in optics}, volume~7, pages 169--230. Elsevier,
  1969.

\bibitem[Tsa18]{tsang2018conservative}
Mankei Tsang.
\newblock Conservative classical and quantum resolution limits for incoherent
  imaging.
\newblock {\em Journal of Modern Optics}, 65(11):1385--1391, 2018.

\bibitem[Tsa19]{tsang2019resolving}
Mankei Tsang.
\newblock Resolving starlight: a quantum perspective.
\newblock {\em arXiv preprint arXiv:1906.02064}, 2019.

\bibitem[Vaa85]{vaaler1985some}
Jeffrey~D Vaaler.
\newblock Some extremal functions in fourier analysis.
\newblock {\em Bulletin of the American Mathematical Society}, 12(2):183--216,
  1985.

\bibitem[VAdDVDVDB02]{van2002high}
S~Van~Aert, AJ~den Dekker, D~Van~Dyck, and A~Van Den~Bos.
\newblock High-resolution electron microscopy and electron tomography:
  resolution versus precision.
\newblock {\em Journal of Structural Biology}, 138(1-2):21--33, 2002.

\bibitem[VDB01]{van2001resolution}
Adriaan Van Den~Bos.
\newblock Resolution in model-based measurement.
\newblock In {\em IMTC 2001. Proceedings of the 18th IEEE Instrumentation and
  Measurement Technology Conference. Rediscovering Measurement in the Age of
  Informatics (Cat. No. 01CH 37188)}, volume~1, pages 295--302. IEEE, 2001.

\bibitem[VdBDD01]{van2001resolutionreconsider}
A~Van~den Bos and AJ~Den~Dekker.
\newblock Resolution reconsidered---conventional approaches and an alternative.
\newblock In {\em Advances in imaging and electron physics}, volume 117, pages
  241--360. Elsevier, 2001.

\bibitem[vDSM17]{von2017three}
Alex von Diezmann, Yoav Shechtman, and WE~Moerner.
\newblock Three-dimensional localization of single molecules for
  super-resolution imaging and single-particle tracking.
\newblock {\em Chemical reviews}, 117(11):7244--7275, 2017.

\bibitem[VW02]{VempalaWang:02}
S.~Vempala and G.~Wang.
\newblock A spectral algorithm for learning mixtures of distributions.
\newblock In {\em Proceedings of the 43rd Annual Symposium on Foundations of
  Computer Science}, pages 113--122, 2002.

\bibitem[Wil50]{williams1950applications}
W~Ewart Williams.
\newblock {\em Applications of interferometry}.
\newblock Methuen, London, UK, 1950.

\bibitem[WS15]{weisenburger2015light}
Siegfried Weisenburger and Vahid Sandoghdar.
\newblock Light microscopy: an ongoing contemporary revolution.
\newblock {\em Contemporary Physics}, 56(2):123--143, 2015.

\bibitem[Zmu03]{zmuidzinas2003cramer}
Jonas Zmuidzinas.
\newblock Cramer--rao sensitivity limits for astronomical instruments:
  implications for interferometer design.
\newblock {\em JOSA A}, 20(2):218--233, 2003.

\end{thebibliography}

\appendix


\section{Related Work In the Sciences}
\label{sec:science}

In this section, we survey previous approaches to understanding diffraction limits in the optics literature, as well as recent practical works on the need and methodologies to rigorously assess claims of achieving super-resolution.

\subsection{Previous Approaches in Optics}
\label{sec:comparison}

In this section we will survey the many previous attempts to rigorously understand diffraction limits in the optics literature. There, the focus has been squarely on the semiclassical detection model (SDM). After describing this line of work, we explain the ways in which it falls short.

The SDM was originally proposed by \cite{mandel1959fluctuations} and has been the \emph{de facto} generative model in essentially all subsequent works on the statistical foundations of resolution. We note that there are some minor differences in the definition of our model and that of the SDM, which we will discuss formally in Appendix~\ref{subsec:compare_models}.

Arguably the first significant work to study the SDM was that of Helstrom~\cite{helstrom1964detection}, who considered it from the perspective of parameter estimation and hypothesis testing, initiating the study of the following two problems which remarkably have almost exclusively occupied this line of work. For normalized point spread function $A(\cdot)$ and separation parameter $d$, define \begin{equation}
	\rho_0(\vec{x}) = A(\vec{x}),\ \ \rho_1(\vec{x}) = \frac{1}{2}\cdot A(\vec{x} - \vmu) + \frac{1}{2}\cdot A(\vec{x} + \vmu), \ \ \vmu = \begin{cases}
		d/2 & D = 1 \\
		(0,d/2) & D = 2.
	\end{cases}\label{eq:old_stat_setup}
\end{equation}

\begin{problem}[Parameter Estimation]
	Given samples from $\rho_1$, estimate $d$.\label{problem:paramestimate}
\end{problem}

\begin{problem}[Hypothesis Testing]
	Suppose we know the parameter $d$, and we know that either $\rho = \rho_0$ or $\rho = \rho_1$. Given samples from $\rho$, decide whether $\rho = \rho_0$ or $\rho = \rho_1$.\label{problem:hypotest}
\end{problem}

For Problem~\ref{problem:paramestimate}, Helstrom \cite{helstrom1964detection,helstrom1969detection,helstrom1970resolvability} studied the maximum likelihood estimator and computed Cramer-Rao lower bounds for a host of point-spread functions including the Airy PSF, both for the SDM and for progressively more physically sophisticated (though less practically relevant) models. The conceptual insights and problem formulation of \cite{helstrom1964detection} were refined, or often rediscovered, numerous times \cite{tsai1979performance,bettens1999model,van2002high,shahram2004imaging,shahram2006statistical,ram2006beyond,farrell1966information,chao2016fisher}, and the primary thrust of this line of work has been centered on Cramer-Rao-style calculations for assorted point-spread functions and, to a lesser extent, analysis of the optimization landscape of the log-likelihood from the perspective of singularity theory \cite{van2001resolution,van2001resolutionreconsider,bettens1999model,den1996model}.

For Problem~\ref{problem:hypotest}, Helstrom~\cite{helstrom1964detection} computed the reliability of the likelihood ratio test for various PSFs, under a CLT appoximation to the log-likelihood ratio. Similar calculations for the log-likelihood ratio for other PSFs followed in \cite{harris1964resolving,acuna1997statistical,shahram2004imaging,shahram2006statistical,farrell1966information}.

We emphasize that, with the exception of \cite{shahram2004imaging,shahram2006statistical}, all works giving rigorous guarantees have made the assumption implicit in \eqref{eq:old_stat_setup} that the two point sources defining $\rho_1$ are located at \emph{known} points $\vmu$ and $-\vmu$ centered about the origin. \cite{shahram2004imaging,shahram2006statistical} study Problems~\ref{problem:paramestimate} and \ref{problem:hypotest} when the locations of the point sources are unknown and study the (locally optimal) generalized likelihood ratio test.

With regards to applications, Problems~\ref{problem:paramestimate} and \ref{problem:hypotest} have gained popularity in optical astronomy \cite{falconi1967limits,zmuidzinas2003cramer,feigelson2012modern,lucy1992resolution,lucy1992statistical} as well as fluorescence microscopy \cite{mortensen2010optimized,small2014fluorophore,deschout2014precisely,von2017three}. Cramer-Rao bounds as a ``modern'' proxy for assessing the limits of imaging systems have gained such popularity with practicioners that a number of review articles and surveys on the topic have appeared in the recent single-molecule microscopy literature \cite{small2014fluorophore,deschout2014precisely,chao2016fisher}, most of which focus on the related parameter estimation problem of \emph{localization}, that is, estimating the location of a \emph{single} test object given its noisy image.

One other interesting line of work has focused on the generalizations of Problems~\ref{problem:paramestimate} and \ref{problem:hypotest} to the quantum setting. Elaborating on this literature would take us too far afield, so we mention only the comprehensive recent survey \cite{tsang2019resolving} and the references therein.

\subsection{Comparison with Our Approach}

Most crucially, all works on the SDM focus exclusively on \emph{two}-point resolution. In the context of hypothesis testing, as we note above, these works even assume the two points lie on the $x$-axis at the same \emph{known} distance $d/2$ from the origin, with the exception of \cite{shahram2004imaging,shahram2006statistical}. That such a strong assumption is made and such focus is placed on $k = 2$ is evidently not just for aesthetics. From the standpoint of hypothesis testing, as noted in \cite{shahram2004imaging,shahram2006statistical}, any deviation from this idealized model would induce a composite hypothesis testing problem, for which the (generalized) likelihood ratio test has no global optimality guarantees.
In the context of parameter estimation, because of the focus on $k = 2$, the conclusion in the literature has repeatedly been that the classical resolution criteria (Abbe, Rayleigh, etc.) are not meaningful in a statistical sense, and that the only true limitation comes from the number of samples. We view this as one of the primary reasons that a result like Theorem~\ref{thm:lowerbound_informal} has gone overlooked for so long.

Another drawback of the literature is that because of the focus on Cramer-Rao bounds, which only provide guarantees for the maximum likelihood estimate in the infinite-sample limit, none of these works actually give non-asymptotic algorithmic guarantees. Additionally, Cramer-Rao bounds only apply to unbiased estimators, and to the best of our knowledge, the only paper that addresses biased estimators is \cite{tsang2018conservative}, which only derives Bayesian Cramer-Rao bounds for the already well-studied setting of a mixture of two Gaussians.
From a technical standpoint, another disadvantage of existing works is that they work either with the Gaussian point-spread function or invoke Taylor approximations of the Airy point-spread function. And because the log-likelihood here is analytically cumbersome, it is common to invoke a central limit theorem-style approximation.

One last shortcoming arises from the definition of the SDM itself (see Definition~\ref{defn:goodman}): it models photon detection as a Poisson process when in reality this need not be the case. As Goodman (Chapter 9.2 of \cite{goodman2015statistical}) notes, ``in most problems of real interest, however, the light wave incident on the photosurface has stochastic attributes \dots For this reason, it is necessary to regard the Poisson distribution as a \emph{conditional} probability distribution \dots the statistics are in general \emph{not} Poisson when the classical intensity has random fluctuations of its own.'' The increased generality of not assuming Poissonanity allows our model to smoothly handle such stochastic fluctuations.

\subsection{Super-Resolution and the Practical Need to Understand Diffraction Limits}

In the past half century, a host of techniques of increasing sophistication have been developed to shift or fundamentally surpass the diffraction limit. As these techniques change the underlying physical setup of the imaging system, they are not relevant to the theoretical setting we consider, though we believe that placing the classical setting of Fraunhofer diffraction on a rigorous statistical footing can pave the way towards better understanding notions of resolution in these modern techniques.Here we very briefly describe some these techniques, deferring to the comprehensive overviews on the matter found in \cite{heintzmann2009subdiffraction,hell2007far,hell2009microscopy,huang2010breaking,ji2008advances,lauterbach2012finding,lippincott2009putting,maznev2017upholding,rice2007beyond,weisenburger2015light}.
The earliest attempts at going below the diffraction limit involved modifying the aperture, e.g. via \emph{apodization} as pioneered by Toraldo di Francia \cite{di1952super}. Among even more elementary approaches, an annular aperture can be used to distinguish a pair of points sources slightly better than a circular one, a fact that \cite{maznev2017upholding} notes was known even to Rayleigh.
Other approaches for circumventing the diffraction limit include near-field optics \cite{ash1972super,pohl1984optical,synge1928xxxviii}, TIRF \cite{axelrod1981cell,temple1981total}, confocal microscopy \cite{minsky1961microscopy}, two-lens techniques \cite{hell1992properties,hell1994confocal}, structured illumination \cite{gustafsson1999extended}, UV/X-ray/electron microscopy \cite{brand1997single,kirz1995soft,ruska1934fortschritte}.


Betzig, Hell, and Moerner were awarded the 2014 Nobel Prize in Chemistry for their pioneering work on super-resolution microscopy, which now includes technologies such as STED \cite{hell1994breaking,klar1999subdiffraction}, RESOLFT \cite{hell1995ground,hell2004strategy,bretschneider2007breaking}, PALM \cite{betzig2006imaging}, STORM \cite{rust2006sub}, and FPALM \cite{hess2006ultra}. These fundamentally break the diffraction limit by leveraging the ability to switch fluorescent markers between a bright and a dark state via photophysical effects like stimulated emission and ground-state depletion.
In light of such advancements, rigorously characterizing the resolving power of imaging systems remains a challenge of practical as much as theoretical interest. \cite{demmerle2015assessing} revisited what resolution means given these new technologies technologies and proposed approaches for comparing resolution between different super-resolution methods. \cite{horstmeyer2016standardizing} pushed back on some claims of super-resolution in nonfluorescent microscopy, advocating for the Siemens star as an imaging benchmark and for the adoption of certain standards when documenting such claims. Sheppard \cite{sheppard2017resolution} was similarly motivated to clarify such claims and calculates the images of various test object geometries and suggests ``these results can be used as a reference \dots to determine if super-resolution has indeed been attained.''


\section{Physical Basis for Our Model}
\label{sec:physical}

In this paper we focus on the idealized setting of Fraunhofer diffraction of incoherent illumination by a circular aperture, originally studied in the pioneering work of Airy \cite{airy1835diffraction}. In this section, we first give a brief overview of this setting in Appendix~\ref{subsec:review}, deferring the details to any of a number of excellent expository texts on the subject \cite{kenyon2008light,hecht2015optics,goodman2005introduction,goodman2015statistical,jenkins1937fundamentals,fowles1989introduction}. Then in Appendix~\ref{subsec:model}, we demonstrate how our probabilistic model arises naturally from the preceeding setup. Finally, in Appendix~\ref{subsec:menagerie}, we catalogue the various resolution criteria that have appeared in the literature and instantiate them in our framework.

\subsection{A Review of Fraunhofer Diffraction}
\label{subsec:review}

Consider a scenario in which plane waves of monochromatic, incoherent light emanate from a far-away point source in the image plane, pass through a circular aperture, and form a diffraction pattern on a far-away observation plane. This is the standard setting of \emph{Fraunhofer diffraction}. As depicted in Figure~\ref{fig:fraunhofer}, the far-field assumption on the observation plane is captured in practice by placing a lens behind the aperture and placing the observation plane at the focal plane of the lens.

Under the Huygens-Fresnel-Kirchhoff theory, the aperture induces a diffraction pattern, a so-called \emph{Airy disk}, on the observation plane because the secondary spherical wavelets emanating from different points of the aperture are off by phase factors. Concretely, suppose the plane waves are parallel to the optical axis, and take a point $P$ on the observation plane at angular distance $\theta$ from the optical axis, and a point $\vec{u}$ on the circular aperture $A$, say of radius $r$. Letting $\vec{v}$ be the unit vector from the center of the aperture to $P$, we see that the propagation path of the wavelet from the center of the aperture to $P$ and that of the wavelet from $\vec{u}$ to $P$ differ in length by $\langle \vec{u},\vec{v}\rangle$, corresponding to a phase delay of $\frac{2\pi}{\lambda}\langle\vec{u},\vec{v}\rangle$ where $\lambda$ is the wavelength of light. So by integrating over the contributions to the amplitude of the electric field at $P$ by the points $\vec{u}$ in $A$, we conclude that the amplitude at $P$ is \begin{equation}
	E = E_0\int_{A} e^{2\pi i\cdot \langle \vec{u},\vec{v}\rangle/\lambda} d\vec{u},\label{eq:EP}
\end{equation} where $E_0$ is, up to phase factors, a constant capturing the contribution to the field per unit area of the aperture. In other words, the amplitude at $P$ is proportional to the 2D Fourier transform of the pupil function $F(\vec{u}) = \bone{\vec{u}\in A}$ at frequency $\vec{v}/\lambda$. This can be computed explicitly as \begin{equation}
	E = 2\pi r^2 E_0 \cdot \frac{J_1(\kappa r\sin\theta)}{\kappa r\sin\theta},
\end{equation} where $\kappa \triangleq \frac{2\pi}{\lambda}$ is the wavenumber of the light. In particular, the intensity $I(\theta)$ of the diffraction pattern at $P$ is the squared modulus of $E$. We conclude that \begin{equation}
	I(\theta) = I(0)\cdot \left(\frac{2J_1(\kappa r\sin\theta)}{\kappa r\sin\theta}\right)^2,\label{eq:intensity}
\end{equation} where $J_1(\cdot)$ is the Bessel function of the first kind. We will typically regard $I(\cdot)$ as a function $\R^2\to\R_{\ge 0}$ which takes in a point $(x,y)\in\R^2$ and outputs $I(\theta)$, where $\theta$ is the angular distance between $(x,y)$ and the optical axis. The function $I(x,y)$ is the so-called \emph{Airy point spread function}.

\begin{remark}
	In general, if the plane waves of the point source travel at an angle $\psi$ to the optical axis, they will be focused not at the focal point but at some other point on the observation plane at an angular distance of $\psi$ with respect to the optical axis. In this case the resulting Airy point spread function will be shifted to be centered at that point.
\label{remark:shift}
\end{remark}

\subsection{Photon Statistics and Our Model}
\label{subsec:model}

First suppose there is a single point source of light. In a sense which can be made rigorous via Feynman's path integral formalism (see e.g. Section 4.11 of \cite{hecht2015optics}), the intensity $I(x,y)$ of the diffraction pattern at a point $(x,y)$ on the observation plane is proportional to the (infinitesimal) probability of detecting a photon at $P$. That is, the point spread function $I(x,y)$ can be identified with a probability density \begin{equation}\rho(x,y) \triangleq \frac{1}{Z}\cdot I(x,y), \ \ \ \text{where} \ \ \ Z\triangleq \int_{\R^2}I(x,y) \; \d x \d y\end{equation} over the two-dimensional observation plane. Concretely, for any measurable subset $S$ of the observation plane, if one were to count photons arriving over time and compute the fraction that land inside the region $S$, this fraction would tend towards $\int_S \rho(x,y) \; \d x \d y$.

In the presence of $k$ \emph{incoherent} point sources of light, the absence of interference means that the contributions from each point source to the intensities of the resulting diffraction pattern simply add. In other words, if $I_1(\cdot),...,I_k(\cdot)$ are the corresponding point spread functions, which by Remark~\ref{remark:shift} are merely shifted versions of \eqref{eq:intensity}, the resulting probability density $\rho$ over the observation plane is simply proportional to $\sum^k_{i=1}I_i(\cdot)$.

For every $i\in[k]$, let $Z_i\triangleq \int_{\R^2}I_i(x,y) \; \d x \d y$ be the normalizing constant for the $i$-th density $\rho_i(\cdot) \triangleq \frac{1}{Z_i}I_i(\cdot)$. Let $\lambda_i \triangleq \frac{Z_i}{\sum^m_{j=1}Z_j}$. Then we see that \begin{equation}
	\rho(x,y) = \sum^m_{i=1}\lambda_i \cdot \rho_i(x,y).
\end{equation} In the jargon of statistics, this is an example of a \emph{mixture model}, i.e. a convex combination of structured distributions, and one can think of sampling from $\rho$ by first sampling an index $i\in[m]$ with probability $\lambda_i$ and then sampling a point $(x,y)$ in the observation plane according to the probability density associated to the $i$-th point source. This brings us to the generative model that we study in this work, the definition of which we restate here for the reader's convenience.

\airy*

We now describe briefly how the parameters in Definition~\ref{defn:model} translate to the setting of Fraunhofer diffraction by a circular aperture that we have outlined thus far. One should think of the spread parameter $\sigma$ as $(\kappa r)^{-1}$. As $\sigma$ in practice depends on known quantities pertaining to the underlying optical system, we assume henceforth that it is known \emph{a priori}. The norm of the argument in $A_{\sigma}(\norm{\vec{x} - \vmu_i}_2)$ corresponds to the quantity $\sin\theta$, where $\theta$ is the angle of displacement between the line from the center of the aperture to the center $\vmu_i$ of the $i$-th Airy disk, and the line between the center of the aperture and the point $\vx$ on the observation plane. Lastly, by Remark~\ref{remark:shift}, angular separation of $\psi$ between two point sources translates to angular separation of $\psi$ between the centers of their Airy disks on the observation plane. The parameters $\Delta$ and $\radius$ can thus be interpreted respectively as the minimum angular separation among the point sources, and the maximum angular distance of any of the point sources to the optical axis.

\subsection{Comparison to Semiclassical Detection Model}
\label{subsec:compare_models}

In this section we clarify the distinctions between the model we study and the semiclassical detection model. We begin by formally defining the latter.

\begin{defn}[Semiclassical Detection Model]
	For $D = 1,2$, let $S_1,...,S_m\subset\R^D$ be disjoint subsets corresponding to different regions of a photon detector, and suppose the detector receives some number $N'$ of photons, where $N'\sim\Poi(N)$. We observe photon counts $N_1,...,N_m$ corresponding to the number of photons that interact with each region of the detector, where for each $i\in[m]$, \begin{equation}N_i \triangleq N'_i + \gamma_i, \ \ \ N'_i\sim\Poi(\lambda_i\cdot N), \gamma_i \sim\N(0,\sigma^2),\end{equation} where $N'_1,...,N'_m,\gamma_1,...,\gamma_m$ are independent, $\gamma_i$ represents white detector noise\footnote{While these white noise terms $\{\gamma_i\}$ were not present in \cite{mandel1959fluctuations,helstrom1964detection}, they are considered in some later treatments of this model, so we include them here for completeness.}, and \begin{equation}
		\lambda_i \triangleq \int_{S_i} \rho(\vec{x}) \ \d x,
	\end{equation} where $\rho(\cdot)$, as in our model, is the idealized, normalized intensity profile of the optical signal.\label{defn:goodman}
\end{defn}

To see how this relates to our model, first consider the idealized case where $\sigma = 0$ and that the different regions $S_i$ of the detector form a partition of the entire ambient space. To get quantitative guarantees, existing works assume that each of these regions $S_i$ is, e.g., a segment or box of fixed length $\varsigma$. In this case, the semiclassical detection model is a special case of our model. Indeed, if one samples $\Poi(N)$ points from $\rho$ and moves each of them by distance $O(\varsigma)$ to the center of the region $S_i$ of the photon detector to which they respectively belong, this collection of $O(\varsigma)$-granular samples from $\rho$ is identical in information and distribution to a sample of photon counts $\{N_i\}$ from the semiclassical detection model.

Our model can also capture the case where the regions $S_i$ only partition a \emph{subset} of the ambient space $\R^D$. In this case, we only get access to samples from the density $\rho_{\text{trunc}}(\vec{x}) \propto \bone{\vec{x}\in \cup S_i}\cdot \rho(\vec{x})$, but this has known Fourier transform, given up to a universal multiplicative factor by the convolution of $\hat{\rho}$ with the indicator function of $\cup S_i$. So our techniques still apply in a straightforward fashion. In addition, by standard estimates on the tails of $J_1$, for $\cup S_i$ of radius polynomially large in the relevant parameters, with high probability none of the samples used by our algorithms will fall outside of $\cup S_i$ to begin with. For these reasons, we will not belabor this point in this work and will assume $\cup S_i = \R^D$ throughout.

Lastly, while our model does not incorporate white detector noise $\sigma$, we note that our algorithms can nevertheless handle the semiclassical detection model with $\sigma>0$: from a set of photon counts $N_1,...,N_m$, we can still estimate the Fourier transform of $\rho$ to accuracy depending polynomially on $N$ and inverse polynomially on $\sigma$ and the sizes of the detector regions, so our techniques based on the matrix pencil method still apply.

\subsection{A Menagerie of Diffraction Limits}
\label{subsec:menagerie}

In this section we give a precise characterization of the various limits that have appeared in the literature as candidates for the threshold at which resolution becomes impossible in diffraction-limited optical systems.

\paragraph{Abbe Limit}

The Abbe limit first arose in Abbe's studies \cite{abbe1873beitrage} of the following setup in microscopy: light illuminates an idealized object, namely an diffraction grating consisting of infinitely many closely spaced slits corresponding to the fine features of the object being imaged, and passes through the slits, behind which is an aperture stop placed in the back focal plane of the lens. Abbe observed that the angle at which the light gets diffracted by the slits increases as the grating gets finer, and he calculated the point at which the angle is too wide to enter the aperture. This threshold is now called the \emph{Abbe limit}, and in the modern language of Fourier optics, the Abbe limit corresponds to the point at which the Fourier transform of the corresponding point spread function (see Fact~\ref{fact:fourier}) vanishes. In the remainder of this section, we will refer to the Abbe limit as $\tau$.

\begin{remark}[Scaling and Numerical Aperture]
The argument $z$ in $A_{\sigma}(z)$ corresponds to the more familiar-looking quantity \begin{equation}
	z = \frac{2\pi}{\lambda}\cdot a\sin\theta \label{eq:zexp},
\end{equation} where $\lambda$ is the average wavelength of illumination, $a$ is the radius of the aperture, and $\theta$ is the angle of observation.

As noted above, $\widehat{A}_{\sigma}[\omega]$ is only supported on $\omega$ for which $\norm{\omega}\le \frac{1}{\pi}$. Equating this threshold $\frac{1}{\pi}$ with $1/z$, where $z$ is given by \eqref{eq:zexp}, and rearranging, we conclude that $\sin\theta = \frac{\lambda}{2a}$. We may write $\sin\theta$ as $q/R$ for $q$ the distance between the observation point and the optical axis and $R$ the distance between the observation point and the center of the aperture. It then follows that $q = \frac{\lambda R}{2a} \approx \frac{\lambda}{2\NA}$, where $\NA$ is the numerical aperture. This recovers the usual formulation of the \emph{Abbe limit}.
\end{remark}

In the literature on super-resolution microscopy, the Abbe limit is the definition of diffraction limit that is usually given. Indeed, Lauterbach notes in his survey \cite{lauterbach2012finding} that ``Abbe is perhaps the one who is most often cited for the notion that the resolution in microscopes would always be limited to half the wavelength of blue light.''

\paragraph{Rayleigh Criterion}

The Rayleigh criterion is the point at which the point spread function first vanishes. For $\sigma = 1$, this is precisely the smallest positive value of $r$ for which $J_1(r) = 0$, which can be numerically computed to be $r\approx 3.83 \approx 1.22\cdot \pi$. So for general $\sigma$, we conclude that the Rayleigh criterion is $\approx 1.22\tau$.

This is typically touted in standard references as the most common definition of resolution limit. Indeed, Weisenburger and Sandoghdar remark in their survey \cite{weisenburger2015light} that ``Although Abbe’s resolution criterion is more rigorous, a more commonly known formulation...is the Rayleigh criterion.'' Kenyon \cite{kenyon2008light} calls it the ``standard definition of the limit of the resolving power of a lens system.'' In his classic text, Hecht \cite{hecht2015optics} refers to it as the ``ideal theoretical angular resolution'' Rayleigh himself \cite{rayleigh1879xxxi} emphasized however that ``This rule is convenient on account of its simplicity and it is sufficiently accurate in view of the necessary uncertainty as to what exactly is meant by resolution.'' We refer to Appendix~\ref{app:quotes} for further quotations regarding the Rayleigh criterion.

\paragraph{Sparrow Criterion}

The Sparrow criterion, put forth in \cite{sparrow1916spectroscopic}, is the smallest $\Delta$ for which a superposition of two $\Delta$-separated Airy disks becomes unimodal. Numerically, this threshold is $\approx 0.94\tau$.

The Sparrow limit is often cited as the most mathematically rigorous resolution criteria (in den Dekker and van de Bos' survey \cite{den1997resolution}, they even call it ``the natural resolution limit that is due to diffraction...even a hypothetical perfect measurement instrument would not be able to detect a central dip in the composite intensity distribution, simply because there is no such dip anymore.''). It is less relevant in practical settings as it requires perfect knowledge of the functional form of the point spread function. Again, we refer to Appendix~\ref{app:quotes} for further quotations regarding the Sparrow criterion.

\paragraph{Houston Criterion}

The Houston criterion is twice the radius at which the value of the density is half of its value at zero, i.e. the ``full width at half maximum'' (FWHM). This threshold is $\approx 1.03\tau$.

This measure is one of the most popular in practice where one does not have fine-grained knowledge of the point spread function, in particular because it can apply even when the point spread function in question does not fall exactly to zero, either due to noise or aberrations in the lens. In \cite{demmerle2015assessing} where the authors explore alternative means of assessing resolution in light of new super-resolution microscopy technologies, they remark in their conclusion that ``the best approach to compare between techniques is still to perform the simple and robust fitting of a Gaussian to a sub-resolution object and then to extract the FWHM.''

\paragraph{Miscellaneous Additional Criteria}

The \emph{Buxton limit} is nearly the same as Houtson, except it is the FWHM for the \emph{amplitude} rather than the intensity, which yields a threshold of $\approx 1.46\tau$ \cite{buxton1937xli}. The \emph{Schuster criterion} is defined to be twice the Rayleigh limit \cite{schuster1904introduction}, that is, two Airy disks are separated only when their central bands are disjoint, which yields a threshold of $\approx 2.44\tau$. The \emph{Dawes limit}, which is $\approx 1.02\tau$, is a threshold proposed by Dawes \cite{dawes1867catalogue}; its definition is purely empirical, as it was derived by direct observation by Dawes.


\section{Debate Over the Diffraction Limit: A Historical Overview}
\label{app:quotes}

In this section, we catalogue quotations from the literature relevant to the challenge of identifying the right resolution criterion, as well as to the need to take noise into account when formulating such definitions.



\subsection{Identifying a Criterion}

\noindent Since its introduction, the Rayleigh criterion has repeatedly been both touted as a practically helpful proxy by which to roughly assess the resolving power of diffraction-limited imaging systems, and characterized as somewhat arbitrary. 

Rayleigh himself in his original 1879 work \cite{rayleigh1879xxxi}:

\begin{quoting}
	\noindent ``This rule is convenient on account of its simplicity and it is sufficiently accurate in view of the necessary uncertainty as to what exactly is meant by resolution.''
\end{quoting}

Williams \cite[p.~79]{williams1950applications} in 1950:

\begin{quoting}
	\noindent ``Although with the development of registering microphotomers such as the Moll, dips much smaller than [the one exhibited by a superposition of two Airy disks at the Rayleigh limit] can be accurately measured, it is convenient for the purpose of comparison with gratings and echelons to keep to this standard.''
\end{quoting}

Born and Wolf \cite[p.~418]{born2013principles} in 1960:

\begin{quoting}
	\noindent ``The conventional theory of resolving power...is appropriate to direct visual observations. With other methods of detection (e.g. photometric) the presence of two objects of much smaller angular separation than indicated by Rayleigh’s criterion may often be revealed.''
\end{quoting}

Feynman \cite[Section 30-4]{feynman2011feynman} in his Lectures on Physics from 1964:

\begin{quoting}
	\noindent ``...it seems a little pedantic to put such precision into the resolving power formula. This is because Rayleigh’s criterion is a rough idea in the first place. It tells you where it begins to get very hard to tell whether the image was made by one or by two stars. Actually, if sufficiently careful measurements of the exact intensity distribution over the diffracted image spot can be made, the fact that two sources make the spot can be proved even if $\theta$ is less than $\lambda/L$.''
\end{quoting}

Hecht in his standard text \cite[p.431,492]{hecht2015optics} from 1987:

\begin{quoting}
	\noindent ``We can certainly do a bit better than this, but Rayleigh's criterion, however arbitrary, has the virtue of being particularly uncomplicated.''

	\noindent ``Lord Rayleigh's criterion for resolving two equal-irradiance overlapping slit images is well-accepted, even if somewhat arbitrarily in the present application.''
\end{quoting}

\noindent In fact, as early as 1904, Schuster \cite[p.~158]{schuster1904introduction} made the same point and on the same page advocated for an alternative criterion, corresponding to twice the separation posited by Rayleigh:

\begin{quoting}
	\noindent ``There is something arbitrary in (the Rayleigh criterion) as the dip in intensity necessary to indicate resolution is a physiological phenomenon, and there are other forms of spectroscopic investigation besides that of eye observation... It would therefore have been better not to have called a double line ``resolved'' until the two images stand so far apart, that no portion of the centeral band of one overlaps the central band of the other, as this is a condition which applies equally to all methods of observation. This would diminish to one half the at present recognized definition of resolving power.''
\end{quoting}

\noindent Ever since, the question of identifying the ``right'' notion of a resolution criterion has been periodically revisited in the literature.

Ramsay et al. \cite[p.~26]{ramsay1941criteria} in 1941, on this problem's theoretical and practical importance:

\begin{quoting}
	\noindent ``Before the theory itself can be developed in full, and applied to the assignment of numerical values, it is necessary to consider the persistently vexing problem of criteria for a limit of resolution.''
\end{quoting}

Three decades after Ramsay's work, Thompson \cite[p.~171]{thompson1969iv}:

\begin{quoting}
	\noindent ``The specification of the quality of an optical image is still a major problem in the field of image evaluation and assessment. This statement is true even when considering purely incoherent image formation.''
\end{quoting}

\noindent The Sparrow criterion is often regarded as the most mathematically rigorous resolution criterion. 

Sparrow \cite[p.~80]{sparrow1916spectroscopic} in 1916 on its mathematical and physiological justification:

\begin{quoting}
	\noindent ``It is obvious that the undulation condition should set an upper limit to the resolving power. The surprising fact is that this limit is \emph{apparently actually attained}, and that the doublet still appears resolved, the effect of contrast so intensifying the edges that the eye supplies a minimum where none exists. The effect is observable both in positives and in negatives, as well as by direct vision...My own observations on this point have been checked by a number of my friends and colleagues.''
\end{quoting}

In the survey of den Dekker and van den Bos \cite[p.~548]{den1997resolution} eighty years later:

\begin{quoting}
	\noindent ``Since Rayleigh's days, technical progress has provided us with more and more refined sensors. Therefore, when visual inspection is replaced by intensity measurement, the natural resolution limit that is due to diffraction would be [the Sparrow limit]...even a hypothetical perfect measurement instrument would not be able to detect a central dip in the composite intensity distribution, simply because there is no such dip anymore.''
\end{quoting}

\noindent In light of advancements in super-resolution microscopy, rigorously characterizing the resolving power of imaging systems remains as pressing a challenge as ever.

\begin{adjustwidth}{1.5em}{0pt}
In 2017, Demmerle et al. \cite{demmerle2015assessing} revisited what resolution means in light of these new technologies technologies and propose approaches for comparing resolution between different super-resolution methods. As they note in their introduction \cite[p.~3]{demmerle2015assessing}:
\end{adjustwidth}

\begin{quoting}
	\noindent ``The recent introduction of a range of commercial super-resolution instruments means that resolution has once again become a battleground between different microscope technologies and rival companies.''
\end{quoting}

\begin{adjustwidth}{1.5em}{0pt}
Notably, in the conclusion, they remark that a classical Houston criterion-style approach is still the best for comparing different methods \cite[p.~9]{demmerle2015assessing}.
\end{adjustwidth}

\begin{quoting}
	\noindent ``Given the above points, the best approach to compare between techniques is still to perform the simple and robust fitting of a Gaussian to a sub-resolution object and then to extract the FWHM.''
\end{quoting}

\subsection{The Importance of Noise}

\noindent An idea that has been repeated one way or another in the literature is that if one has perfect access to the \emph{exact} intensity profile of the diffraction image of two point sources, then one could brute-force search over the space of possible parameters to find a hypothesis that fits the point spread function arbitrarily well, thereby learning the positions of the point sources regardless of their separation. As such, for any notion of diffraction limit to have practical meaning, it must take into account factors like aberrations and measurement noise that preclude getting perfect access to the intensity profile.

This perspective was distilled emphatically by di Francia \cite[p.~497]{di1955resolving} in 1955:

\begin{quoting}
	\noindent ``Moreover it is only too obvious that from the mathematical standpoint, the image of two points, however close to one another, is different from that of one point. It is not at all absurd to assume that technical progress may provide us with more and more refined kinds of receptors, detecting the difference between the image of a single point and the image of two points located closer and closer to another. This means that at present there is only a \emph{practical} limit (if any) and not a \emph{theoretical} limit for two-point resolving power.''
\end{quoting}

\begin{adjustwidth}{1.5em}{0pt}
Contemporaneously, in discussions at the 1955 Meeting of the German Society of Applied Optics culminating in \cite[p.~459]{ronchi1961resolving}, Ronchi made the following distinction:
\end{adjustwidth}

\begin{quoting}
	\noindent ``Nowadays it seems imperative to differentiate three kinds of images, i.e., (1) the ethereal image, (2) the calculated image, and (3) the detected image.

	The nature of the ethereal image should be physical, but in reality it is only a hypothesis. It is said that the radiant flux emitted by the object...is concentrated and distributed in the so-called image by means of a number of processes. But actually this is only a hypothesis...attempts have been made to give a mathematical representation of the phenomenon, both geometrically and algebraically...The images which have been calculated in this way...should therefore be called calculated images.

	If we now consider the field of experience, we find the detected images. They are the figures either perceived by the eye when looking through the instrument, or obtained by means of a photosensitive emulsion, or through a photoelectric device.
\end{quoting}

den Dekker and van den Bos \cite[p.~547]{den1997resolution} in their 1997 survey:

\begin{quoting}
	\noindent ``Since Ronchi's paper, further research on resolution--- concerning detected images instead of calculated ones--- has shown that in the end, resolution is limited by systematic and random errors resulting in an inadequacy of the description fo the observations by the mathematical model chosen. This important conclusion was independently drawn by many researchers who were approaching the concept of resolution from different points of view.''
\end{quoting}

den Dekker and van den Bos summarize the state of affairs as follows \cite[p.~547]{den1997resolution}:

\begin{quoting}
	\noindent ``If calculated images were to exist, the known two-component model could be fitted numerically to the observations with respect to the component locations and amplitudes. Then the solutions for these locations and amplitudes would be exact, a perfect fit would result, and in spite of diffraction there would be no limit to resolution no matter how closely located the two point sources; this would mean that no limit to resolution for calculated images would exist. However, imaging systems constructed without any aberration or irregularity are an ideal that is never reached in practice....Therefore one should consider the resolution of detected images instead of calculated images.''
\end{quoting}

Goodman \cite[p.~326-7]{goodman2015statistical} in 2000:

\begin{quoting}
	\noindent ``...the question of when two closely spaced point sources are barely resolved is a complex one and lends itself to a variety of rather subjective answers...An alternative definition is the so-called Sparrow criterion...In fact, the ability to resolve two point sources depends fundamentally on the signal-to-noise ratio associated with the detected image intensity pattern, and for this reason criteria that do not take account of noise are subjective.''
\end{quoting}

Maznev and Wright \cite[p.~3]{maznev2017upholding} in 2016 on the earlier quote by Born and Wolf:

\begin{quoting}
	\noindent ``Indeed, if any number of photons is available for the measurement, there is no fundamental limit to how well one can resolve two point sources, since it is possible to make use of curve fitting to arbitrary precision (however, there are obvious practical limitations related to the finite measurement time and other factors such as imperfections in the optical system, atmospheric turbulence, etc).''
\end{quoting}

Demmerle et al. in the work mentioned in the previous section \cite[P.~9]{demmerle2015assessing}:

\begin{quoting}
	\noindent ``If one, a priori, knows that there are two point sources, then measuring their separation, and hence calculating the system's resolution is purely limited by Signal-to-Noise Ratio.''
\end{quoting}

\noindent A related point that has been made repeatedly in the literature is that the original setting in which Abbe introduced his diffraction limit should not be conflated with the setting of resolving two point sources of light.

\begin{adjustwidth}{1.5em}{0pt}
In the work of di Francia cited above \cite[p.~498]{di1955resolving}, he notes that the classic impossibility result for resolving a lattice of alternatively dark and bright points with separation below the Abbe limit says nothing about the impossibility of resolving a pair of points sources:
\end{adjustwidth}

\begin{quoting}
	``[The impossibility result at the Abbe limit] has often been given a wrong interpretation and it has too hastily been extended to the case of two points. The [Abbe limit] applies only \emph{when we want the available information uniformly distributed} over the whole image. Mathematics cannot set any lower limit for the distance of two resolvable points.''
\end{quoting}

\begin{adjustwidth}{1.5em}{0pt}
Indeed, he argues informally, by way of the Nyqist sampling theorem, that when there is a prior on the number of components in a superposition of Airy disks being upper bounded by a known constant, then in theory, there is no diffraction limit. Rather, he posits, it is the entropy of the prior that dictates the limits of resolution \cite[p.~498]{di1955resolving}:
\end{adjustwidth}

\begin{quoting}
	\noindent ``The fundamental question of how many independent data are contained in an image formed by a given optical instrument. This seems to be the modern substitute for the theory of resolving power.''
\end{quoting}

\begin{adjustwidth}{1.5em}{0pt}
Sheppard \cite[p.~597]{sheppard2017resolution} in 2017, sixty years after di Francia's work, clarifies again that the abovementioned impossibility result should not be misinterpreted as saying anything about the impossibility of resolving two point sources:
\end{adjustwidth}

\begin{quote}
	\noindent ``The Abbe resolution limit is a sharp limit to the imaging of a periodic object such as a grating. Super-resolution refers to overcoming this resolution limit. The Rayleigh resolution criterion refers to imaging of a two-point object. It is based on an arbitrary criterion, and does not define a sharp transition between structures being resolved or not resolved.''
\end{quote}


\section{Proof of Lemma~\ref{lem:jennrich}}
\label{app:jennrich}

\textsc{TensorResolve} (Algorithm~\ref{alg:tensor}) uses the standard subroutine given in Algorithm~\ref{alg:jennrich}. We remark that this algorithm appears to be deterministic unlike usual treatments of Jennrich's algorithm simply because we have absorbed the usual randomness of the choice of flattening into the construction of the tensor $\vec{T}$ on which \textsc{TensorResolve} calls \textsc{Jennrich}.

\begin{algorithm}\caption{\textsc{Jennrich}($\tilde{\vec{T}}$)}\label{alg:jennrich}
\begin{algorithmic}[1]
	\State \textbf{Input}: Tensor $\tilde{\vec{T}}\in\co^{m\times m\times 3}$ which is close to a rank-$k$ tensor $\vec{T}$ of the form \eqref{eq:Tdecomp}
	\State \textbf{Output}: $\hat{V}\in\co^{m\times k}$ close to $V$ up to column permutation (see Lemma~\ref{lem:jennrich})
		\State Compute the $k$-SVD $\hat{P}\hat{\Lambda}\hat{P}^{\dagger}$ of the flattening $\tilde{\vec{T}}(\Id,\Id,e_1)$.
		\State Define the whitened tensor $\hat{\vec{E}} = \tilde{\vec{T}}(\hat{P},\hat{P},\Id)$ and its flattenings $\hat{E}_i \triangleq \hat{\vec{E}}(\Id,\Id,e_i)$ for $i\in[2]$.
		\item Define $\hat{M} \triangleq \hat{E}_1\hat{E}^{-1}_2$.
		\item Form the matrix $\hat{U}$ whose columns are equal to the eigenvectors, scaled to have norm $\sqrt{m}$, for the $k$ eigenvalues of $\hat{M}$ that are largest in absolute value.
		\item Output $\hat{V} \triangleq \hat{P}\hat{U}$.
\end{algorithmic}
\end{algorithm}

We restate Lemma~\ref{lem:jennrich} here for the reader's convenience:

\tensor*

This proof closely follows that of \cite{huang2010breaking}, though we must make some modifications because the scaling of the frequencies $v^{(i)}$ for $i\in[3]$ defined in Step~\ref{step:vfreqs} of \textsc{TensorResolve} is different.

\begin{proof}
	We first define the noiseless versions of the objects $\hat{P}, \hat{\Lambda}, \hat{E}, \hat{E}_1, \hat{E}_2, \hat{M}, \hat{U}$ introduced in \textsc{Jennrich}. Note that for $i\in[2]$, \begin{equation}\vec{T}(\Id,\Id,e_i) = V D_i V^{\dagger}\label{eq:flat}\end{equation} for $D_i$ the diagonal matrix whose diagonal entries are given by $\{\lambda_j e^{-2\pi i \langle \mu_j, v^{(i)}\rangle}\}_{j\in[k]}$. Denote the $k$-SVD of $\vec{T}(\Id,\Id,e_1)$ by $P\Lambda P^{\dagger}$. Define the whitened tensor $\vec{E} \triangleq \vec{T}(P,P,\Id)$ and its flattenings $E_i = \vec{E}(\Id,\Id,e_i)$ for $i\in[2]$. Finally, define $U\triangleq P^{\dagger}V$ so that \begin{equation}
		\vec{E} = \sum^k_{j=1}\lambda_j U^j\otimes U^j\otimes W^j
	\end{equation} and $E_i = UD_iU^{\dagger}$ for $i\in[2]$. Note that $U$ also satisfies $M\triangleq E_1E^{-1}_2 = UDU^{\dagger}$ for diagonal matrix $D \triangleq D_1D^{-1}_2$, and for every $j\in[k]$, $\norm{U^j}_2 = \norm{V^j}_2 = \sqrt{m}$, so $U$ is indeed the noiseless analogue of $\hat{U}$.

	For any $j\in[k]$, we have that \begin{equation}
		D_{j,j} = e^{-2\pi i \langle \mu_j, v^{(1)} - v^{(2)}\rangle}.
		\label{eq:Djj}
	\end{equation} Define $\Delta_D\triangleq \min_{j\neq j'}|D_{j,j} - D_{j',j'}|$.

	For every $j,j'\in[k]$, by triangle inequality and the fact that $V^j = PU^j$ and $\hat{V}^ = \hat{P}\hat{U}^j$, we have \begin{equation}
		\norm{\hat{V}^j - V^{j'}}_2 \le \norm{\hat{P} - P}_2\norm{\hat{U}^j}_2 + \norm{P}_2\norm{\hat{U}^j - U^{j'}}_2 \le \sqrt{m}\norm{\hat{P} - P}_2 + \norm{\hat{U}^j - U^{j'}}_2.
		\label{eq:main_jennrich}
	\end{equation}
	We proceed to upper bound $\norm{\hat{P} - P}_2$ and $\norm{\hat{U}^j - U^{j'}}_2$.

	\begin{lem}
		$\norm{\hat{P} - P}_2 \le \frac{\eta'\sqrt{m}}{\lambda_{\min}\sigma_{\min}(V)^2}$.
		\label{lem:wedin}
	\end{lem}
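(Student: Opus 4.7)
The plan is to prove this via a standard Wedin-style perturbation argument for the $k$-SVD. Recall that $P$ is the matrix of top $k$ left singular vectors of $M \triangleq \vec{T}(\mathrm{Id},\mathrm{Id},e_1) = VD_1V^{\dagger}$, and $\hat{P}$ is the analogous matrix for the perturbed flattening $\widetilde{M} \triangleq \tilde{\vec{T}}(\mathrm{Id},\mathrm{Id},e_1)$. Because $P$ is only defined up to right multiplication by a $k\times k$ orthogonal matrix, the claim should be interpreted as saying there is a choice of orthogonal basis on the range of $P$ for which the stated bound holds.

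First, I would bound the perturbation $E \triangleq \widetilde{M} - M$ in spectral norm. Each entry of $E$ is bounded in magnitude by $\eta'$, and combined with the $m \times m$ structure this gives the needed operator-norm bound on $E$ scaling with $\eta'$ times a factor in $m$. Next, I would lower bound $\sigma_k(M)$: since $M = VD_1V^{\dagger}$ with $|D_{1,j,j}| = \lambda_j \ge \lambda_{\min}$, we have
\begin{equation}
\sigma_k(M) \;\ge\; \lambda_{\min}\cdot\sigma_{\min}(V)^2,
\end{equation}
while $M$ has rank exactly $k$, so $\sigma_{k+1}(M) = 0$. This gives a clean spectral gap to which we can apply Wedin's sin-theta theorem.

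Finally, I would invoke Wedin's theorem for the top-$k$ singular subspaces of $M$ and $\widetilde{M}$. It produces an orthogonal matrix $O$ such that
\begin{equation}
\|\hat{P} - PO\|_2 \;\le\; \frac{\sqrt{2}\,\|E\|_2}{\sigma_k(M) - \|E\|_2},
\end{equation}
and by redefining $P \leftarrow PO$ (which is allowed since only the subspace $\mathrm{range}(P)$ is intrinsic), this becomes the bound on $\|\hat{P} - P\|_2$. Substituting the upper bound on $\|E\|_2$ and the lower bound on $\sigma_k(M)$, and absorbing the $\|E\|_2$ in the denominator into a constant (as will be guaranteed by the smallness of $\eta'$ chosen in Lemma~\ref{lem:jennrich}), yields the claimed inequality.

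The main obstacle is bookkeeping around the correct dependence on $m$ in the spectral norm bound on $E$: a naive entrywise-to-Frobenius bound gives a factor that is off, so one has to be careful to use the appropriate matrix norm estimate to recover the stated $\sqrt{m}$ scaling, or alternatively to absorb any excess into the constants elsewhere. Once this is settled, the rest is an immediate application of Wedin's theorem with the spectral gap $\lambda_{\min}\sigma_{\min}(V)^2$.
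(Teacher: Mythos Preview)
Your approach is essentially identical to the paper's: both apply Wedin's $\sin\Theta$ theorem using the spectral gap $\sigma_k(VD_1V^\dagger)\ge \lambda_{\min}\sigma_{\min}(V)^2$ (since $M$ has rank exactly $k$, so $\sigma_{k+1}(M)=0$) together with an entrywise bound on the perturbation of the flattening. The paper simply writes $\|\tilde{\vec{T}}(\Id,\Id,e_1)-\vec{T}(\Id,\Id,e_1)\|_F\le \eta'\sqrt{m}$ and divides by the gap; your extra care about the orthogonal alignment $O$ is a reasonable clarification, and your flagged concern about whether the naive entrywise-to-Frobenius bound really yields $\sqrt{m}$ rather than $m$ is in fact warranted (the paper is loose here, but since all dependencies are ultimately absorbed into polynomial factors this does not affect the downstream argument).
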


	\begin{proof}
		By Wedin's theorem, \begin{equation}\norm{\hat{P} - P}_2 \le \frac{\norm{\tilde{\vec{T}}(\Id,\Id,e_1) - \vec{T}(\Id,\Id,e_1)}_2}{\sigma_{\min}(\vec{T}(\Id,\Id,e_1))}.\end{equation} By \eqref{eq:flat}, $\sigma_{\min}(\vec{T}(\Id,\Id,e_1)) \ge \lambda_{\min}\sigma_{\min}(V)^2$. Additionally, $\norm{\tilde{\vec{T}}(\Id,\Id,e_1) - \vec{T}(\Id,\Id,e_1)}_F \le \eta'\sqrt{m}$, from which the claim follows.
	\end{proof}

	\begin{lem}
		If $\norm{M - \hat{M}}_2 \le \frac{\Delta_D}{2\sqrt{k}\kappa(U)}$, then the eigenvalues of $\hat{M}$ are distinct, and there exists a permutation $\tau$ for which \begin{equation}
			\norm{\hat{U}^j - U^{\tau(j)}}_2 \le \frac{3m\norm{M - \hat{M}}_2}{\Delta_D\sigma_{\min}(U)} \ \ \ \forall \ j\in[k].
		\end{equation}
		\label{lem:Udiff}
	\end{lem}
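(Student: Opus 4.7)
The plan is to execute a standard two-step perturbation argument for non-Hermitian diagonalizable matrices. Recall that $M = UDU^{-1}$ where the columns of $U$ are the right eigenvectors and the diagonal entries of $D$ (given by \eqref{eq:Djj}) are the distinct eigenvalues, separated by $\Delta_D$. The first step establishes the permutation $\tau$ by localizing the eigenvalues of $\hat M$; the second step uses a resolvent-style expansion to show each $\hat u^j$ is close to the corresponding $U^{\tau(j)}$.

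\emph{Step 1 (eigenvalue matching).} I will invoke the Bauer--Fike theorem: every eigenvalue of $\hat M = M + (\hat M - M)$ lies within distance $\kappa(U)\norm{M - \hat M}_2$ of some $D_{\ell,\ell}$. Under the hypothesis, this distance is at most $\Delta_D/(2\sqrt{k}) < \Delta_D/2$, so the $k$ Bauer--Fike disks centered at the $D_{\ell,\ell}$'s are pairwise disjoint. A continuity/homotopy argument along $M_t = (1-t)M + t\hat M$ then shows that each disk contains exactly one eigenvalue of $\hat M$, so $\hat M$ has $k$ distinct eigenvalues, and the assignment $\hat \lambda^j \mapsto D_{\tau(j),\tau(j)}$ defines the permutation $\tau$.

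\emph{Step 2 (eigenvector bound).} Fix $j\in[k]$ and let $\hat u^j$ be the eigenvector of $\hat M$ with eigenvalue $\hat \lambda^j$, normalized so that $\norm{\hat u^j}_2 = \sqrt{m}$. Since $U$ is invertible, write $\hat u^j = U y^j$. Substituting into $\hat M \hat u^j = \hat \lambda^j \hat u^j$ and using $M = U D U^{-1}$ yields
\begin{equation}
(D - \hat \lambda^j I)\, y^j = -\,U^{-1}(\hat M - M)\,\hat u^j.
\end{equation}
For $\ell \neq \tau(j)$, Step 1 implies $|D_{\ell,\ell} - \hat \lambda^j| \ge \Delta_D/2$, so these coordinates of $y^j$ are controlled by $\norm{M-\hat M}_2$, $\norm{\hat u^j}_2 = \sqrt{m}$, and $\sigma_{\min}(U)^{-1}$. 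Aggregating these bounds (either coordinatewise or in $\ell_2$ via the operator norm of $(D - \hat \lambda^j I)^{-1}$ restricted to the complement of the $\tau(j)$-coordinate) gives $\norm{y^j - y^j_{\tau(j)} e_{\tau(j)}}_2$, and hence $\norm{\hat u^j - y^j_{\tau(j)} U^{\tau(j)}}_2$, with a bound of the form $m\norm{M-\hat M}_2/(\Delta_D \sigma_{\min}(U))$ up to absolute constants, using $\norm{U^\ell}_2 = \sqrt{m}$ for each column.

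\emph{Step 3 (phase normalization).} To conclude, I use that $\norm{\hat u^j}_2 = \norm{U^{\tau(j)}}_2 = \sqrt{m}$ together with the approximation $\hat u^j \approx y^j_{\tau(j)} U^{\tau(j)}$ to deduce that $y^j_{\tau(j)}$ is close to a unimodular scalar. The Jennrich-style eigenvectors are determined only up to phase, so after absorbing this phase into the representative $U^{\tau(j)}$ (which is permissible because $V$, and hence each individual column used downstream, is only pinned down up to such a unit scalar), the triangle inequality gives $\norm{\hat u^j - U^{\tau(j)}}_2 \le 3m \norm{M-\hat M}_2/(\Delta_D \sigma_{\min}(U))$.

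\emph{Main obstacle.} The subtle part is keeping the final bound proportional to $m$ rather than $\sqrt{k}\cdot m$ or $\kappa(U)\cdot m$. This requires extracting the $\ell_2$ bound on $y^j - y^j_{\tau(j)} e_{\tau(j)}$ from the resolvent identity rather than summing coordinatewise magnitudes, and then carefully exploiting both the uniform norm $\norm{U^\ell}_2 = \sqrt{m}$ and the matching normalization of $\hat u^j$ to resolve the scalar ambiguity without paying an extra condition-number factor.
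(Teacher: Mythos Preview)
Your approach is essentially the same as the paper's: both conjugate by $U$ to view $\hat M$ as a perturbation of the diagonal $D$, localize eigenvalues (you cite Bauer--Fike, the paper applies Gershgorin directly to $U^{-1}\hat M U = D - U^{-1}(M-\hat M)U$), and then expand $\hat U^j = \sum_{j'} c_{j'} U^{j'}$ and use the gap $|\hat\lambda_j - \lambda_{j'}|\ge \Delta_D/2$ to bound the off-$\tau(j)$ coefficients by $\frac{2}{\Delta_D}\norm{U^{-1}(M-\hat M)\hat U^j}_2$. The one substantive difference is your Step~3: the paper never isolates a phase/scale normalization but instead tacitly treats the columns $U^{j'}$ as orthogonal, asserting $\sum_{j'} c_{j'}^2 = 1$ and writing $\norm{\hat U^j - U^j}_2^2 = \sum_{j'\neq j} c_{j'}^2\norm{U^{j'}}_2^2 + (c_j-1)^2\norm{U^j}_2^2 \le 2m\sum_{j'\neq j} c_{j'}^2$, which in one stroke both fixes $c_j\approx 1$ and avoids the extra $\sqrt{k}$ or $\kappa(U)$ factor you flag as the main obstacle; your explicit handling via the scalar $y^j_{\tau(j)}$ is the more honest version of that same step.
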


	\begin{proof}
		Consider the matrix $U^{-1}\hat{M}U = D - U^{-1}(M - \hat{M})U$. Because $\norm{U^{-1}(M - \hat{M})U}_2 \le \Delta_D/2\sqrt{k}$ by assumption, we conclude by Gershgorin's that the eigenvalues of $U^{-1}\hat{M}U$, and thus of $\hat{M}$, are distinct and each lies within $\Delta_D/2$ of a unique eigenvalue of $M$. Let $\tau$ be the permutation matching eigenvalues $\{\hat{\beta}_j\}$ of $\hat{M}$ to eigenvalues $\{\beta_j\}$ of $M$ which are closest, and without loss of generality let $\tau$ be the identity permutation.

		For fixed $j\in[k]$, let $\{c_{j'}\}$ be coefficients for which $\hat{U}^j = \sum c_{j'}U^j$ and $\sum_{j'} c_{j'}^2 = 1$. Note that we have \begin{equation}
			\hat{\lambda}_j\sum_{j'} c_{j'}U^{j'} = \hat{\lambda}_j\hat{U}^j = \hat{M}\hat{U}^j = \sum_{j'}\lambda_{j'}c_{j'}U^{j'} + (M - \hat{M})\hat{U}^j,
		\end{equation} so $\{c_{j'}\}$ is the solution to the linear system \begin{equation}
			\sum_{j'}c_{j'}\cdot (\hat{\lambda}_j - \lambda_{j'})U^{j'} = (M - \hat{M})\hat{U}^j.
		\end{equation} Recalling that $\norm{U^j}_2 = \sqrt{m}$ and that $\sum c_{j'}^2 = 1$, we get that \begin{align}
			\norm{\hat{U}^j - U^j}^2_2 &= \sum_{j'\neq j}c^2_{j'}\norm{U^{j'}}^2_2 + (c_j - 1)^2\norm{U^j}^2_2 \le 2m\sum_{j'\neq j}c^2_{j'} \\
			&\le \frac{8m\norm{U^{-1}(M - \hat{M})\hat{U}^j}^2_2}{\Delta^2_D} \le \frac{8m^2\norm{M - \hat{M}}^2_2}{\Delta_D^2\sigma_{\min}(U)^2}.
		\end{align}
	\end{proof}

	Finally, we must estimate $\norm{M - \hat{M}}^2_2$ in the bound in Lemma~\ref{lem:Udiff}:

	\begin{lem}
		If $\eta' \le \frac{\lambda^2_{\min}\sigma_{\min}(V)^2}{6\sqrt{m}\kappa(V)^2}$, then $\norm{M - \hat{M}}_2 \le \frac{9\eta'\sqrt{m}\kappa(V)^2}{\lambda^2_{\min}\sigma_{\min}(V)^2}$.
		\label{lem:Mdiff}
	\end{lem}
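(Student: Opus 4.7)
The plan is to bound $\norm{M - \hat{M}}_2$ by telescoping the perturbations in the two factors $E_1, E_2$. The starting point is the algebraic identity $M - \hat{M} = (E_1 - \hat{E}_1)E_2^{-1} + \hat{E}_1\hat{E}_2^{-1}(E_2 - \hat{E}_2)E_2^{-1}$, obtained by first writing $M - \hat{M} = (E_1 - \hat{E}_1)E_2^{-1} + \hat{E}_1(E_2^{-1} - \hat{E}_2^{-1})$ and then applying the resolvent identity $E_2^{-1} - \hat{E}_2^{-1} = \hat{E}_2^{-1}(\hat{E}_2 - E_2)E_2^{-1}$. Taking spectral norms reduces the task to controlling each of $\norm{E_2^{-1}}_2$, $\norm{\hat{E}_1}_2$, $\norm{\hat{E}_2^{-1}}_2$, and $\norm{\hat{E}_i - E_i}_2$ for $i \in \brc{1,2}$.

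For the unperturbed quantities, the fact that $U = P^{\dagger}V$ inherits the singular values of $V$ (since $P$ is an orthonormal basis for the column span of $V$), together with $E_i = UD_iU^{\dagger}$ and $\min_j |(D_i)_{j,j}| = \lambda_{\min}$, $\max_j |(D_i)_{j,j}| \le 1$, immediately yields $\sigma_{\min}(E_i) \ge \lambda_{\min}\sigma_{\min}(V)^2$ and $\norm{E_i}_2 \le \sigma_{\max}(V)^2$. For the perturbations, I would decompose $\hat{E}_i - E_i = \hat{P}^{\dagger}(\tilde{T}_i - T_i)\hat{P} + \hat{P}^{\dagger}T_i(\hat{P} - P) + (\hat{P} - P)^{\dagger}T_iP$, where $T_i \triangleq \vec{T}(\Id,\Id,e_i)$ and $\tilde{T}_i \triangleq \tilde{\vec{T}}(\Id,\Id,e_i)$. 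The first piece is bounded by $\norm{\tilde{T}_i - T_i}_F \le \sqrt{m}\,\eta'$ (the same Frobenius estimate invoked in the proof of Lemma~\ref{lem:wedin}), and the remaining two contribute at most $2\norm{\hat{P} - P}_2\sigma_{\max}(V)^2 \le 2\sqrt{m}\,\eta'\,\kappa(V)^2/\lambda_{\min}$ via Lemma~\ref{lem:wedin}. Hence $\norm{\hat{E}_i - E_i}_2 \le O\!\left(\sqrt{m}\,\eta'\,\kappa(V)^2/\lambda_{\min}\right)$.

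The hypothesis $\eta' \le \lambda_{\min}^2\sigma_{\min}(V)^2/(6\sqrt{m}\kappa(V)^2)$ is precisely calibrated so that $\norm{\hat{E}_2 - E_2}_2 \le \tfrac{1}{2}\sigma_{\min}(E_2)$, whence a standard Neumann-series argument (writing $\hat{E}_2 = E_2(I + E_2^{-1}(\hat{E}_2 - E_2))$ and inverting) gives both $\norm{\hat{E}_2^{-1}}_2 \le 2/\sigma_{\min}(E_2)$ and $\norm{\hat{E}_1}_2 \le 2\sigma_{\max}(V)^2$. Substituting all four estimates into the decomposition of $M - \hat{M}$ and collecting constants then yields the claimed bound.

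The main obstacle lies in the final bookkeeping: the naive combination of the above estimates produces an extra factor of $\kappa(V)$ in the cross term, since $\norm{\hat{E}_1}_2\norm{\hat{E}_2^{-1}}_2 \le O(\kappa(V)^2/\lambda_{\min})$. To match the advertised $\kappa(V)^2$ scaling in the statement, one should instead recognize that this product is essentially $\norm{\hat{M}}_2$ and close the bound via the self-referential relation $\norm{\hat{M}}_2 \le \norm{M}_2 + \norm{M - \hat{M}}_2$, using the a priori estimate $\norm{M}_2 \le \kappa(V)$, which follows from the diagonalization $M = UDU^{-1}$ with $D = D_1D_2^{-1}$ having unit-modulus diagonal entries $e^{-2\pi i \langle \mu_j, v^{(1)} - v^{(2)}\rangle}$, so that $\norm{M}_2 \le \kappa(U)\norm{D}_2 = \kappa(V)$.
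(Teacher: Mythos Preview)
Your overall strategy---telescope $M-\hat M$ into pieces involving $Z_i \triangleq \hat E_i - E_i$, bound $\norm{Z_i}_2$ via $\norm{\hat P - P}_2$ from Lemma~\ref{lem:wedin}, and use the hypothesis on $\eta'$ to ensure $\norm{Z_2}_2 \le \tfrac12\sigma_{\min}(E_2)$---is exactly what the paper does. The bound $\norm{Z_i}_2 \le 3\eta'\sqrt{m}\,\kappa(V)^2/\lambda_{\min}$ and the estimates $\sigma_{\min}(E_i)\ge \lambda_{\min}\sigma_{\min}(V)^2$, $\norm{E_i}_2\le \sigma_{\max}(V)^2$ are all the same ingredients.

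The difference, and the place where your argument does not quite close, is in the cross term. Your decomposition places $\hat M = \hat E_1\hat E_2^{-1}$ in front of $Z_2 E_2^{-1}$, whereas the paper writes $M - \hat M = M H + G$ with $H = Z_2(\Id + E_2^{-1}Z_2)^{-1}E_2^{-1}$ and $G = Z_1 E_2^{-1}$, so that the cross term carries the \emph{un}perturbed $M$. The paper then invokes $\norm{M}_2 \le \sigma_{\max}(D) = 1$ (since $D = D_1 D_2^{-1}$ has unit-modulus diagonal) to get the advertised $\kappa(V)^2$ scaling directly, with no self-reference needed.

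Your proposed fix---bound $\norm{\hat M}_2\le \norm{M}_2 + \norm{M-\hat M}_2$ and then use $\norm{M}_2\le\kappa(V)$---does not recover the stated exponent. Setting $x=\norm{M-\hat M}_2$ and $A=\frac{3\eta'\sqrt m\,\kappa(V)^2}{\lambda_{\min}^2\sigma_{\min}(V)^2}\le\tfrac12$, your inequality becomes $x\le A + (\kappa(V)+x)A$, i.e.\ $x\le \frac{A(1+\kappa(V))}{1-A}=O(A\,\kappa(V))$, which is $\kappa(V)^3$ rather than $\kappa(V)^2$ in the final bound. To land on the lemma as stated you need the $O(1)$ control on $\norm{M}_2$ that the paper asserts; with that in hand your self-referential closure would also give $x=O(A)$, matching the paper's route (and making the self-reference unnecessary if you simply factor $M$ rather than $\hat M$ from the outset).
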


	\begin{proof}
		Define $Z_i \triangleq \hat{E}_i - E_i$ for $i\in[2]$ so by taking Schur complements \begin{equation}
			M  - \hat{M} = E_1E^{-1}_2 - (E_1 + Z_1)(E_2+Z_2)^{-1} = M Z_2(\Id + E^{-1}_2Z_2)^{-1}E^{-1}_2 + Z_1E^{-1}_2 \triangleq MH+G
			\label{eq:defGH}
		\end{equation} Note that \begin{equation}
			\sigma_{\max}(H) \le \frac{\norm{Z_2}_2}{\sigma_{\min}(E_2) - \norm{Z_2}_2} \le \frac{\norm{Z_2}_2}{\lambda_{\min}\sigma_{\min}(U)^2  - \norm{Z_2}_2} , \ \ \
			\sigma_{\max}(G) \le \frac{\sigma_{\max}(Z_1)}{\sigma_{\min}(E_2)} \le \frac{\norm{Z_1}_2}{\lambda_{\min}\sigma_{\min}(U)^2} 
		\end{equation} and furthermore for either $i\in[2]$, because $Z_i = \hat{P}^{\dagger}\tilde{\vec{T}}(\Id,\Id,e_i)\hat{P} - P^{\dagger}\vec{T}(\Id,\Id,e_i)P$, \begin{align}
			\norm{Z_i}_2 &\le \norm{P}_2 \norm{\vec{T}(\Id,\Id,e_i)}_2 \norm{P - \hat{P}}_2 + \norm{\hat{P}}_2\norm{\vec{T}(\Id,\Id,e_i)}_2\norm{\hat{P} - P}_2 + \norm{\hat{P}}^2_2 \norm{\tilde{\vec{T}}(\Id,\Id,e_i)}_2 \\
			&\le 2\frac{\eta'\sqrt{m}}{\lambda_{\min}\sigma_{\min}(V)^2}\cdot \lambda_{\max}\sigma_{\max}(V)^2 + \lambda_{\max}\sigma_{\max}(V)^2 \le \frac{3\eta'\sqrt{m}\kappa(V)^2}{\lambda_{\min}}\label{eq:Zbound}
		\end{align}
		Because $\sigma_{\min}(U)^2 = \sigma_{\min}(V)^2$, by the bound on $\eta'$ in the hypothesis, $\sigma_{\max}(H) \le \frac{2\norm{Z_2}_2}{\lambda_{\min}\sigma_{\min}(V)^2}$. Finally, noting that $\norm{M}_2 \le \sigma_{\max}(D) = 1$, we conclude the proof from \eqref{eq:defGH} and \eqref{eq:Zbound}.
	\end{proof}

	It remains to bound $\Delta_D$.

	\begin{lem}
		For any $\delta>0$, with probability at least $1 - \delta$, $\Delta_D \ge O\left(\frac{(c - \factor)\delta'\Delta}{k^2}\right)$.
		\label{lem:sepD}
	\end{lem}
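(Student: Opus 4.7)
The plan is to unpack the definition of $D_{j,j}$ from \eqref{eq:Djj} and reduce the problem to a one-dimensional anti-concentration bound for the random projection $v \in \S^1$ chosen in Step~\ref{step:vfreqs} of \textsc{TensorResolve}. Since $v^{(1)} - v^{(2)} = -r v$, we have $D_{j,j} = e^{2\pi i r \langle \mu_j, v\rangle}$, and consequently
\begin{equation}
|D_{j,j} - D_{j',j'}| = 2\bigl|\sin\bigl(\pi r \langle \mu_j - \mu_{j'}, v\rangle\bigr)\bigr|.
\end{equation}

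Next I would argue that the argument of the sine is small enough that we can linearize it. Recall from the scaling convention at the end of Section~\ref{subsec:reduce} that $\radius \le 1/3$, so $\|\mu_j - \mu_{j'}\|_2 \le 2/3$, and $r = 1/2 - R \le 1/2$. Hence $|\pi r \langle \mu_j - \mu_{j'}, v\rangle| \le \pi/3$, inside which $|\sin x| \ge (2/\pi)|x|$. This yields
\begin{equation}
|D_{j,j} - D_{j',j'}| \ge 4 r \cdot |\langle \mu_j - \mu_{j'}, v\rangle| = \frac{2(c - \factor)}{c}\cdot|\langle \mu_j - \mu_{j'}, v\rangle|,
\end{equation}
so the task reduces to showing that $\min_{j \ne j'}|\langle \mu_j - \mu_{j'}, v\rangle|$ is not too small for a typical $v \in \S^1$.

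The final step is anti-concentration plus a union bound. For any fixed pair, Lemma~\ref{lem:rand_rotation} applied to the unit vector $(\mu_j - \mu_{j'})/\|\mu_j - \mu_{j'}\|_2$ and the separation hypothesis $\|\mu_j - \mu_{j'}\|_2 \ge \Delta$ give $\Pr_v[|\langle \mu_j - \mu_{j'}, v\rangle| \le t] \le O(t/\Delta)$ for $t \le \Delta$. Union-bounding over the $\binom{k}{2}$ pairs of indices and choosing $t = \Theta(\delta \Delta/k^2)$ makes the total failure probability at most $\delta$. Combining with the previous display yields $\Delta_D \ge \Omega\bigl((c - \factor)\delta \Delta/k^2\bigr)$ with probability at least $1 - \delta$, as claimed.

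I do not anticipate any serious obstacle: the only subtle point is verifying that the argument of the sine stays away from non-zero multiples of $\pi$, but this is handled automatically by the $\radius \le 1/3$ and $r \le 1/2$ bounds above, which confine the argument to an interval where the sine is monotone and nearly linear. Everything else is a standard one-dimensional anti-concentration argument for a uniformly random direction on $\S^1$.
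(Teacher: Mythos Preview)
Your proposal is correct and follows essentially the same approach as the paper: both reduce $|D_{j,j}-D_{j',j'}|$ to a constant multiple of $|\langle \mu_j-\mu_{j'}, v^{(1)}-v^{(2)}\rangle|$, then apply anti-concentration for a random direction on $\S^1$ together with a union bound over the $\binom{k}{2}$ pairs. If anything, your version is more careful: you explicitly verify that the argument of the sine stays in a regime where linearization is valid (using $\radius\le 1/3$ and $r\le 1/2$), whereas the paper's proof sketch writes the inequality $|e^{-2\pi i x}-1|\le 2\pi|x|$ in the wrong direction (a lower bound is what is actually needed) and simply appeals to ``standard anti-concentration'' where you invoke Lemma~\ref{lem:rand_rotation} directly.
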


	\begin{proof}
		Using the elementary inequality $|e^{-2\pi i x} - 1| \le 2\pi|x|$ for any $x\in\R$, we conclude that $|D_{j,j} - D_{j',j'}| \le |e^{-2\pi i \langle \mu_j - \mu_{j'},v^{(1)} - v^{(2)}\rangle} - 1| \le 2\pi|\langle \mu_j - \mu_{j'}, v^{(1)} - v^{(2)}\rangle|$. By standard anti-concentration, for any $j\neq j'$ and $\delta'>0$ we have that $|\langle \mu_j - \mu_{j'}, v^{(1)} - v^{(2)}\rangle| \le O(\delta'\norm{\mu_j - \mu_{j'}}_2 \cdot \norm{v^{(1)} - v^{(2)}}_2)$ with probability at most $\delta'$. The proof follows by taking $\delta' = \delta/k^2$, union bounding, and recalling the definition of $v^{(1)},v^{(2)}$ in \textsc{TensorResolve}.
	\end{proof}

	Combining \eqref{eq:main_jennrich} and Lemmas~\ref{lem:wedin}, \ref{lem:Udiff}, \ref{lem:Mdiff}, \ref{lem:sepD}, there exists a permutation $\tau$ for which \begin{equation}
		\norm{\hat{V}^j - V^{\tau(j)}}_2 \le \frac{\eta' m}{\lambda_{\min}\sigma_{\min}(V)^2} + \frac{27\eta' m^{3/2}\kappa(V)^2}{\Delta_D \lambda^2_{\min}\sigma_{\min}(V)^3} \le O\left(\frac{k^2\eta' m^{3/2}\kappa(V)^5}{(c - \factor)\delta\Delta\lambda^2_{\min}}\right) \ \ \forall j\in[k].
	\end{equation} We conclude that for the permutation matrix $\Pi$ corresponding to $\tau$, $\norm{\hat{V} - V\Pi}_F \le \sqrt{k}\max_{j\in[k]}\norm{\hat{V}^j - V^{\tau}(j)}_2 \le O\left(\frac{k^{5/2}\eta' m^{3/2}\kappa(V)^5}{(c - \factor)\delta\Delta\lambda^2_{\min}}\right)$ as claimed.
\end{proof}


\section{Generating Figure~\ref{fig:tv}}
\label{sec:fig_details}

Here we elaborate on how Figure~\ref{fig:tv} was generated. While Theorem~\ref{thm:mainlowerbound} yields an explicit construction which rigorously demonstrates the phase transition at the diffraction limit, empirically we found that this phase transition was even more pronounced when we slightly modified the construction. Specifically, we empirically evaluated the following instance: for even $k$, separation $\Delta>0$, and $1\le i\le k$, let $\vmu_i = (a_i,0)$ and let $\vmu'_i = (b_i,0)$ for $a_i \triangleq \frac{\Delta}{2}\cdot\left(2i-\frac{k+3}{2}\right)$ and $b_i \triangleq \frac{\Delta}{2}\cdot\left(2i-\frac{k+1}{2}\right)$, and take $\{\lambda_i\}$ and $\{\lambda'_i\}$ to be the unique solution to the affine system \begin{equation}
	\sum^{\lceil k/2\rceil}_{i=1} \lambda_i = 1 \, \text{and} \, \sum^{\lfloor k/2\rfloor}_{i=1} \lambda'_i = 1
	\qquad \qquad 
	\sum^{\lceil k/2\rceil}_{i=1} \lambda_i a_i^{\ell} = \sum^{\lfloor k/2\rfloor}_{i=1} \lambda'_i b_i^{\ell} \ \ \forall \ \ 0\le \ell < k-1.
\end{equation} These are the weights for which the superposition of point masses at $\{\vmu_i\}$ with weights $\{\lambda_i\}$ matches the superposition of point masses at $\{\vmu'_i\}$ with weights $\{\lambda'_i\}$ on all moments of degree at most $k - 2$. While moment-matching does not directly translate to any kind of statistical lower bound, it is often the starting point for many such lower bounds in the distribution learning literature \cite{moitra2010settling,diakonikolas2017statistical,hardt2015tight,kearns1998efficient}. The ``carefully chosen pair of superpositions'' referenced in the caption of Figure~\ref{fig:tv} refers to this moment-matching construction. Henceforth refer to these two superpositions, both of which are $\Delta$-separated superpositions of $k/2$ Airy disks, as $\calD_0(\Delta,k)$ and $\calD_1(\Delta,k)$ respectively. We will omit the parenthetical $\Delta,k$ when the context is clear.

Unfortunately, there is no closed form for the expression for $d_{\text{TV}}(\calD_0,\calD_1)$. Instead, we estimated this via numerical integration. Direct evaluation of the integral $\int_{\R^2}\left|\calD_0(\vx) - \calD_1(\vx)\right| \, dx$ poses issues because of the heavy tails of the Airy point spread function. To tame these tails, we used a carefully chosen proposal measure $\mu$ in order to rewrite $d_{\text{TV}}(\calD_0,\calD_1)$ as $\int_{\R^2}\left|\frac{\calD_0(\vx)}{\mu(\vx)} - \frac{\calD_1(\vx)}{\mu(\vx)}\right| \, d\mu$. Because of the heavy tails, we needed to use a similarly heavy-tailed proposal distribution, so we took $\mu$ to be the convolution of the superposition of point masses at $\{\vmu_i\}\cup\{\vmu'_i\}$ having weights $\{\lambda_i\}\cup\{\lambda'_i\}$ with the following kernel $P(\cdot)$. To sample from the density over $\R^2$ correpsonding to $P$, with probability 1/2 sample a radius $r$ uniformly from $[0,1]$ and output a random vector in $\R^2$ of norm $r$, and with the remaining probability 1/2, sample from the Pareto distribution with parameter $2/3$ over $[1,\infty]$ and output a random vector of norm $r$. The motivation for $P$ and in particular for the parameter $2/3$ is that it is a rough approximation to the tail behavior of the radial density $\frac{J_1(r)^2}{r}$ defining the Airy point spread function, which by Theorem~\ref{thm:besselbound} decays roughly as $r^{-5/3}$.

To generate the curves in Figure~\ref{fig:tv}, for each $k\in[2,4,6,12,20,30,42,56,72,90]$ and each $\Delta\in[-2,-1.92,-1.84,...,1.84,1.92,2]$, we simply estimated the corresponding $d_{\text{TV}}(\calD_0,\calD_1)$ by sampling $10$ million points $\vx$ from $\mu$ and computing the empirical mean of the quantity $\left|\frac{\calD_0(\vx)}{\mu(\vx)} - \frac{\calD_1(\vx)}{\mu(\vx)}\right|$.

We have made the code for Figure~\ref{fig:tv} available at \url{https://github.com/secanth/airy/}.




\end{document}